\newcommand\COMP{\hbox{C\kern -.58em {\raise .54ex \hbox{$\scriptscriptstyle |$}}
\kern-.55em {\raise .53ex \hbox{$\scriptscriptstyle |$}} }}
\newcommand\NN{\hbox{I\kern-.2em\hbox{N}}}
\newcommand\RR{\hbox{I\kern-.2em\hbox{R}}}
\newcommand\sRR{{\it \hbox{I\kern-.2em\hbox{R}}}}
\newcommand\QQ{\hbox{I\kern-.53em\hbox{Q}}}
\newcommand\PP{\hbox{I\kern-.53em\hbox{P}}}
\newcommand\EE{\hbox{I\kern-.53em\hbox{E}}}
\newcommand\ZZ{{{\rm Z}\kern-.28em{\rm Z}}}
\newcommand\be{\begin{equation}}
\newcommand\ee{\end{equation}}
\newtheorem{theorem}{Theorem}[section]
\newtheorem{proposition}[theorem]{Proposition}
\newtheorem{remark}[theorem]{Remark}
\newtheorem{lemma}[theorem]{Lemma}
\newtheorem{definition}[theorem]{Definitions}
\newtheorem{corollary}[theorem]{Corollary}
\newcommand{\mm}{m}
\newcommand{\mg}{\widehat M}
\newcommand{\sh}{\widehat S}
\newcommand{\E}{\mathbb E}
\def\comf#1{\left ( #1\right )\!^{p,\mathbb F}}
\def\comg#1{\left ( #1\right )\!^{p,\mathbb G}}
\def\prof#1{ \phantom{l}^{p,\mathbb F}\!\left ( #1\right )}
\def\prog#1{ \phantom{l}^{p,\mathbb G}\!\left ( #1\right )}
\def \Rbrack {[\![}
\def \Lbrack {]\!]}
\def\crof#1{\langle #1\rangle^{\mathbb F}}
\def\dro#1{[ #1 ]}
\def \Rbrack {[\![}
\def \Lbrack {]\!]}
\newcommand{\Lopt}{L}
\newcommand{\is}{\centerdot}
\begin{document}
\title{Non-Arbitrage up to Random Horizon  for Semimartingale Models\thanks{The research of Tahir Choulli  and Jun Deng is supported financially by the
Natural Sciences and Engineering Research Council of Canada,
through Grant G121210818. The research of Anna Aksamit and Monique Jeanblanc is supported
by Chaire Markets in transition, French Banking Federation.}}

\author[2]{Anna  Aksamit}
\author[1]{Tahir Choulli\thanks{corresponding author, {Email: tchoulli@ualberta.ca} }}
\author[1]{Jun Deng}
\author[2]{Monique Jeanblanc}
\affil[1]{Mathematical and Statistical Sciences Depart., University of Alberta, Edmonton, Canada}
\affil[2]{Laboratoire Analyse et Probabilit\'es,
Universit\'e d'Evry  Val d'Essonne,  Evry, France \newline  \newline
\textbf{This version develops in details the ideas and the results of the previous version and fixes a glitch in the quasi-left-continuous case.}}



\maketitle

\begin{abstract}
This paper addresses the question of how an arbitrage-free
semimartingale model is affected when stopped at a random horizon.
We focus on No-Unbounded-Profit-with-Bounded-Risk (called NUPBR
hereafter) concept, which is also known in the literature as the
first kind of non-arbitrage. For this non-arbitrage notion, we
obtain two principal results. The first result lies in describing
the pairs of market model and random time for which the resulting
stopped model fulfills NUPBR condition. The second main result
characterises the random time models that preserve the NUPBR
property after stopping for any market model. These results are
elaborated in a very general market model, and we also pay
attention to some particular and practical models. The analysis
that drives these results is based on new stochastic developments
in semimartingale theory with progressive enlargement.
Furthermore, we construct explicit martingale densities
(deflators) for some classes of local martingales when stopped at
random time.
\end{abstract}


\section{Introduction}

Since the seminal work of Arrow-Debreu for discrete markets (see
\cite{arrow/debreu}, \cite{duffiebook2010}, \cite{debreu1959}  and
\cite{dybvigross1989}) where the authors stated what is known in
the financial literature as the fundamental theorem of asset
pricing (FTAP hereafter), arbitrage becomes one of the fundamental
concepts in modern finance. This importance was strengthen by
establishing the link between  arbitrage and asset pricing
theories through the Arrow-Debreu model (see \cite{duffiebook2010}
Chapter 7), the Black and Scholes formula (see
\cite{blackscholes1973}), and the Cox and Ross linear pricing
model (see \cite{coxross76}). These works have been formalized in
general framework by many researchers such as Duffie,  Harrison,
Huang, Kreps and Pliska (see \cite{harrisonkreps1979},
\cite{harrisonpliska1981} and \cite{duffiehuang86}). Other links
between arbitrage theory and other financial/economical concepts
have been discovered, elaborated, and explored further such as
stochastic dominance, market's equilibrium, market's viability,
portfolio analysis, num\'eraire portfolio,and so on. Arbitrage
plays crucial role in the analysis of securities markets, because
its effect is to bring prices to
fundamental values and to keep markets efficient.\\

\noindent In both mathematical finance and financial economics, arbitrage
(or equivalently absence of arbitrage) has numerous definitions such as no-free-lunch
(NFL), no-free-lunch-with-vanishing-risk (NFLVR), cheap thrill, free snack,
No-Unbounded-Profit-with-Bounded-Risk (NUPBR), and so on (see \cite{debreu1959},
\cite{delbaenschameyer1994}, \cite{delbaenschameyer1998}, \cite{kabanov}, \cite{kreps1981},
 \cite{loewensteinwillard00} and the references therein to cite a few). It is worthwhile to
 mention that the majority of these arbitrage concepts coincides for discrete markets
 (i.e, markets with finite number of scenarios and finite number of trading times),
 and at some extent in discrete-time markets with finite horizon. Philosophically, an
 arbitrage opportunity is a transaction with no cash outlay that results in a sure profit.
 The  mathematical formulation of this philosophy varies substantially from  a model to another.\\

\noindent Recently there has been an upsurge interest in studying
the effect of additional information on the NFLVR concept as well
as on the utility maximization problem. For these topics we refer
the reader to  \cite{amendingerimkellerschweizer98},
\cite{kohatsusulem06},  \cite{pikovskykaratzas96} among others.
Herein, we focus only on the NUPBR concept  for two main reasons.
Firstly, the NUPBR property is the non-arbitrage concept that is
intimately related to the weakest forms of markets' viability (see
\cite{choulli/deng/ma} and \cite{kardaras12} for details about
this issue, and \cite{aksamit/choulli/deng/jeanblanc} for many
examples of market models violating NFLVR and fulfilling NUPBR).
It has been clear recently that for a model violating the NUPBR,
the optimal portfolio will not exist even locally, and the pricing
rules fail as well. Secondly, due to \cite{Takaoka} and again
\cite{choulli/deng/ma}, the NUPBR property  is mathematically very
attractive and possesses the 'dynamic/localization' feature that
the NFLVR and other arbitrage concepts lack to possess. By
localization feature, we mean that if the property holds locally
(i.e. the property holds for the stopped models with a sequence of
stopping times that increases to infinity), then it holds
globally.\\

\noindent  In this paper we consider a general  semimartingale
model $S$ satisfying the NUPBR property under the ``public
information" and an
arbitrary random time $\tau$ and  we answer   to the following questions:\\

\centerline{ \hspace*{0.5cm} For which pairs   $(S,\tau)$, does
the NUPBR property hold for $S^{\tau}$?  \hfill   {{\bf (P1) } } }
\vspace{0.15cm}

\noindent In Theorem \ref{main6} we characterize  pairs of initial
market $S$ and   of random time $\tau$, for which {\bf (P1) } has
a positive answer. Our second
main question consists of\\

\centerline{ \hspace*{0.5cm} For which $\tau$, is   NUPBR
preserved for any $S$ after stopping at $\tau$?\hfill   {{\bf (P2) }
} }

\noindent To deepen our understanding of the precise interplay
between the initial market model and the random time model, we
address these two principal questions separately  in the case of
quasi-left-continuous models, and then in the case  of  thin
processes with predictable jumps. Afterwards, we combine the two
cases and state the results for the most general framework. The
results for the quasi-left-continuous models are Theorem
\ref{main2} and Proposition \ref{main1}, where the questions {\bf
{(P1)}} and {\bf {(P2)}}  are fully answered respectively. For the
case of thin processes with predictable jumps, our main result is
Theorem \ref{main4}. Then, the general case follows by splitting
the process $S$ into a quasi
continuous process and a  thin  process with predictable jumps.\\

\noindent This problem was studied  in the literature, in the
particular case of continuous filtration, under the hypothesis
that $\tau$ avoids $\mathbb F$-stopping times in \cite{fjs} and
that the market is complete. Many explicit examples can be found
in \cite{aksamit/choulli/deng/jeanblanc}. It is also possible to
derive some of the main results using a new optional decomposition
formula, established in \cite{aksamit/choulli/jeanblanc}. After
that  a  first version  of our results has been available some
months ago, Acciaio et al. \cite{afk}  proved some of our results
and ideas, using a different method.
\\

\noindent This paper is organized as follows. The next section
(Section \ref{mainresults}) presents our  main results in
different contexts, and discusses their meaning and/or  their
economical interpretations and their consequences as well. Section
\ref{newstochastic} develops new stochastic results, -that are the
key mathematical ideas behind the answers to {\bf {(P1)}}-{\bf
{(P2)}}.  Section \ref{SectionQLCprocesses} gives an explicit form
for the deflator in the case where $S$ is quasi-left continuous.
Section \ref{proofs} contains the proofs of the main theorems
announced, without proofs, in Section \ref{mainresults}. The paper
concludes with an Appendix, where some classical results on the
predictable characteristics of a semimartingales and other related
results are recalled. Some technical proofs are also postponed to
the Appendix, for the ease of the reader.


%
%

\section{Main Results and their Interpretations}\label{mainresults}

This section is devoted to the presentation of our main results
and their immediate consequences. To this end, we start specifying
our mathematical setting and the economical concepts that we will
address.
\subsection{Notations and Preliminary Results on NUPBR}
 \noindent We consider a stochastic basis $(\Omega, {\cal G},  {\mathbb  F }=({\cal F}_t)_{t\geq 0},  P)$,  where
${\mathbb  F }$ is a filtration
satisfying the usual hypotheses (i.e., right continuity and completeness), and $ {\cal F}_\infty
 \subseteq {\cal {G}}$. Financially speaking, the filtration $\mathbb F$ represents the flow of public information through time. On this basis, we consider an arbitrary but fixed $d$-dimensional c\`adl\`ag semimartingale $S$. This represents the discounted price processes of $d$-stocks, while the riskless asset's price is assumed to be constant.\\
 Beside  the initial  model
 $\left(\Omega,{\cal G}, \mathbb F, P, S\right)$, we  consider a random time $\tau$, i.e., a non-negative ${\cal G}$-measurable random variable. To  this random time, we associate  the process $D$ and the
filtration $\mathbb G$ given by
$$\label{AandfiltrationG}
D:= I_{\Rbrack\tau,+\infty\Rbrack},\ \ \ \mathbb G=\left({\cal G}_t\right)_{t\geq 0},\ \ \ {\cal G}_t =
\displaystyle\bigcap_{s>t}\Bigl({\cal F}_s\vee \sigma(D_u, u\leq s)\Bigr).$$
The filtration $\mathbb G$ is the smallest
right-continuous filtration which contains ${\mathbb  F }$ and makes $\tau$ a stopping time. In the probabilistic literature, $\mathbb G$ is called the progressive enlargement of $\mathbb F$ with $\tau$. In addition to $\mathbb G$ and $D$, we associate to $\tau$ two important
$\mathbb F$-supermartingales given by
\begin{equation}\label{ZandZtilde}
Z_t := P\left(\tau >t\ \big|\ {\cal F}_t\right)\   \mbox{and}\ \ \ \widetilde Z_t:=P\left(\tau\geq t\ \Big|\ {\cal F}_t\right).
\end{equation}
The supermartingale $Z$ is right-continuous with left limits and
coincides with the $\mathbb F$-optional projection of $I_{\Lbrack
0, \tau\Rbrack}$, while $\widetilde Z$ admits right limits and
left limits only and is the $\mathbb F$-optional projection of
$I_{\Lbrack 0, \tau\Lbrack}$.  The decomposition of $Z$ leads to
an important $\mathbb F$-martingale  $m$,   given by
 \begin{equation}\label{processmZ}
m := Z+D^{o,\mathbb F},\end{equation}
 where $D^{o,\mathbb F}$ is the $\mathbb F$-dual optional projection of $D$
 (See  \cite{Jeu} for more details).\\

\noindent In what follows, $\mathbb H$ is a filtration satisfying
the usual hypotheses and $Q$ a probability measure  on the
filtered probability space $(\Omega, \mathbb H)$. The set of
martingales for the filtration $\mathbb H$ under $Q$ is denoted by
${\cal M}(\mathbb H, Q)$. When $Q=P$, we simply denote ${\cal
M}(\mathbb H)$. As usual, ${\cal A}^+(\mathbb H)$ denotes the set
of increasing,
right-continuous, $\mathbb H$-adapted and integrable processes.\\
\noindent If ${\cal C}(\mathbb H)$ is a class of $\mathbb H$
adapted processes,
 we denote by ${\cal C}_0(\mathbb H)$ the set of processes $X\in {\cal C}(\mathbb H)$ with $X_0=0$, and
 by ${\cal C}_{loc}$
 the set  of processes $X$
 such that there exists a sequence $(T_n)_{n\geq 1}$ of $\mathbb H$-stopping times  that increases to $+\infty$ and the
 stopped processes $X^{T_n}$ belong
to ${\cal C}(\mathbb H)$. We put $ {\cal C}_{0,loc}(\mathbb H)={\cal
C}_0(\mathbb H)\cap{{\cal
C}}_{loc}(\mathbb H)$.\\
\noindent For a process $K$ with $\mathbb H$-locally integrable variation, we denote by $K^{o,\mathbb H}$  its dual optional projection. The dual predictable projection of $K$ (also called the $\mathbb H$-predictable dual projection) is denoted   $K^{p,\mathbb H}$. For a process $X$, we denote $^{o, \mathbb H \!} X$ (resp.$\,^{p, \mathbb H\!} X$ ) its optional (resp. predictable) projection with respect to $\mathbb H$.\\
\noindent For an $\mathbb H$- semi-martingale $Y$, the set $L(Y, \mathbb H)$ is the set of $\mathbb H$ predictable processes
integrable w.r.t. $Y$ and  for $H\in L(Y, \mathbb H )$, we denote $H\centerdot  Y_t:= \int_0^t H_sdY_s$.\\
As usual, for a process $X$ and a random time $\vartheta$, we
denote by $X^\vartheta$ the stopped process. To distinguish the
effect of filtration, we will denote $\langle ., .\rangle^{\mathbb
F},$ or $\langle ., . \rangle^{\mathbb G}$
 the sharp bracket (predictable covariation process) calculated in the filtration ${\mathbb F}$ or  ${\mathbb G},$ if confusion may rise. We recall that, for general semi-martingales $X$ and $Y$, the
sharp bracket is (if it exists) the dual predictable projection of
the covariation process $[X,Y]$.\\

\noindent We introduce the non-arbitrage notion that will be addressed in this paper.

\begin{definition}\label{DefinitionofNUPBR} An $\mathbb H$-semimartingale
$X$ satisfies the {\it No-Unbounded-Profit-with-Bounded-Risk}
condition under $(\mathbb H,Q)$ (hereafter called NUPBR($\mathbb
H, Q)$) if for any $T\in (0,+\infty)$ the set
$$\label{boundedset}
{\cal K}_{T} (X,\mathbb H):=\displaystyle \Bigl\{(H\centerdot
 S)_T \ \big|\ \
\ H \in L(X,\mathbb H),\ \mbox{and}\ H\is X\geq -1\ \Bigr\}$$ is
bounded in probability under $Q$. When $Q\sim P$, we simply write,
with an abuse of language, $X$ satisfies NUPBR$(\mathbb H)$.
\end{definition}

 \begin{remark}\label{remark1}
 (i) It is important to notice that this definition for NUPBR condition appeared first in \cite{karadarasHonest2010} (up to our knowledge), and it {  differs when the time horizon is
 infinite} from that of the literature given in Delbaen and Schachermayer \cite{DelbaenSchachermayer1994},
 Kabanov \cite{kabanov} and Karatzas and Kardaras\cite{KaratzasKardaras2007}. It is  obvious that, when the horizon is deterministic and finite, the current
  NUPBR  condition coincides with that of the literature. We could name the current NUPBR as NUPBR$_{loc}$, but for the sake of simplifying notation, we opted for the usual terminology.\\
 (ii) In general, when the horizon is infinite, the NUPBR condition
 of the literature implies the NUPBR  condition defined above. However, the reverse implication may not hold in general. In fact if we consider $S_t=\exp(W_t+t),\ t\geq 0$, then it is clear that $S$ satisfies our
  NUPBR$(\mathbb H)$, while the NUPBR$(\mathbb H)$ of the literature is violated. To see this last claim, it is enough to remark that $$\lim_{t\longrightarrow +\infty} (S_t-1)=+\infty\ \  \ P-a.s.\ \ \  S^t-1=H\is S\geq -1\ \ \ H:=I_{\Lbrack 0,t\Lbrack} .$$
 \end{remark}

\noindent The following proposition slightly generalizes Takaoka's
results obtained for a finite horizon (see Theorem 2.6 in
\cite{Takaoka}) to our NUPBR context.
\begin{proposition}\label{charaterisationofNUPBRloc} Let $X$ be an $\mathbb H$-semimartingale. Then the
 following assertions are equivalent.\\
{\rm{(a)}}  $X$ satisfies NUPBR$(\mathbb H)$.\\
  {\rm{(b)}} There exist a positive $\mathbb H$-local martingale, $Y$ and an
  $\mathbb H$-predictable process $\theta$ satisfying $0<\theta\leq 1$ and $Y(\theta\is X)$
   is a local martingale.
\end{proposition}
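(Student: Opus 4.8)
The plan is to reduce the statement to Takaoka's finite-horizon equivalence (Theorem~2.6 in \cite{Takaoka}) by exploiting the localization feature built into the NUPBR notion of Definition~\ref{DefinitionofNUPBR}. The key preliminary observation is that, for every $T\in(0,+\infty)$ and every $H\in L(X,\mathbb H)$, replacing $H$ by $H\id_{\{s\leq T\}}$ leaves $(H\is X)_T$ unchanged and turns the constraint $H\is X\geq-1$ into $(H\is X)^T\geq-1$; hence $\mathcal K_T(X,\mathbb H)=\mathcal K_T(X^T,\mathbb H)$. Since, for a deterministic finite horizon, our NUPBR condition is the classical one (Remark~\ref{remark1}(i)), this yields the equivalence: \emph{$X$ satisfies NUPBR$(\mathbb H)$ if and only if, for every $n\in\NN$, the stopped process $X^n$ satisfies the classical NUPBR condition on the finite horizon $[0,n]$.}

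For $\mathrm{(b)}\Rightarrow\mathrm{(a)}$ I would simply restrict the witnesses: given a positive $\mathbb H$-local martingale $Y$ and an $\mathbb H$-predictable $\theta$ with $0<\theta\leq1$ and $Y(\theta\is X)$ a local martingale, the restrictions of $Y$, $\theta$ and $Y(\theta\is X)$ to $[0,n]$ witness, through the easy direction of Takaoka's theorem, that $X^n$ satisfies NUPBR on $[0,n]$ for every $n$; the reduction above then gives NUPBR$(\mathbb H)$. (A self-contained argument also works: writing $V:=H\is X\geq-1$, an integration by parts using the local martingale property of $Y(\theta\is X)$ and the boundedness and strict positivity of $\theta$ shows that $Y(1+V)$ is a nonnegative $\sigma$-martingale, hence a supermartingale; after normalizing $Y_0\equiv1$ this gives $\E[Y_T(1+V_T)]\leq1$, whence $\mathcal K_T(X,\mathbb H)$ is bounded in probability through $P(V_T>\lambda)\leq(\varepsilon\lambda)^{-1}\E[Y_T(1+V_T)]+P(Y_T<\varepsilon)\leq(\varepsilon\lambda)^{-1}+P(Y_T<\varepsilon)$, taking $\varepsilon$ small by $Y_T>0$ a.s.\ and then $\lambda$ large, uniformly in $H$.)

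For $\mathrm{(a)}\Rightarrow\mathrm{(b)}$, the reduction turns the hypothesis into ``$X^n$ satisfies NUPBR on $[0,n]$ for every $n$'', so the hard direction of Takaoka's theorem supplies, for each $n$, a positive $\mathbb H$-local martingale $Y^{(n)}$ on $[0,n]$ and an $\mathbb H$-predictable $\theta^{(n)}$ with $0<\theta^{(n)}\leq1$ and $Y^{(n)}(\theta^{(n)}\is X)$ a local martingale on $[0,n]$. It remains to assemble these into one global pair $(Y,\theta)$, which I would do recursively over the intervals $[n,n+1]$. Having built $(Y,\theta)$ on $[0,n]$, I observe that the increments of $X$ over $[n,n+1]$ (in the shifted filtration) again satisfy NUPBR on a finite horizon --- because any strategy supported on $[n,n+1]$ with gains bounded below by $-1$ extends by $0$ to a strategy on $[0,n+1]$ whose terminal value lies in the bounded-in-probability set $\mathcal K_{n+1}(X,\mathbb H)$ --- apply Takaoka's theorem on that interval to get a pair $(\bar Y,\bar\theta)$ with $\bar Y_n=1$, and extend by $Y_t:=Y_n\bar Y_t$ and $\theta_t:=\bar\theta_t$ for $t\in[n,n+1]$. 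Since $Y_n$ and $(\theta\is X)_n$ are $\mathcal H_n$-measurable (writing $\mathbb H=(\mathcal H_t)_{t\geq0}$), one checks that the extended $Y$ is again a positive local martingale on $[0,n+1]$ and that $Y(\theta\is X)$ is again a local martingale there; letting $n\to\infty$ and using that a process which is a local martingale on each $[0,n]$ is a local martingale on $[0,+\infty)$ yields the required pair.

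The step I expect to be the main obstacle is this gluing in $\mathrm{(a)}\Rightarrow\mathrm{(b)}$: one must verify carefully that concatenating the finite-horizon local martingales across the junctions $n\in\NN$ --- with the $\mathcal H_n$-measurable rescaling $Y_n$, and the possibly negative $\mathcal H_n$-measurable multiplier $(\theta\is X)_n$, inserted at each junction --- produces genuine local martingales $Y$ and $Y(\theta\is X)$ on the whole half-line, i.e.\ that a single localizing sequence exists. This rests on the stability of the local martingale property under stochastic integration of $\mathcal H_n$-measurable integrands and under concatenation, and ultimately on the ``dynamic/localization'' feature of NUPBR highlighted in the Introduction; once it is in place, the proposition reduces entirely to the finite-horizon result of \cite{Takaoka}.
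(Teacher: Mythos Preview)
Your approach is correct and essentially the same as the paper's: both directions reduce to Takaoka's finite-horizon result, and for (a)$\Rightarrow$(b) both glue the finite-horizon deflators over the intervals $\Lbrack n-1,n\Lbrack$. The paper, however, carries out the gluing more directly: instead of a recursive construction, it applies Takaoka to $X^n$ for each $n$ to obtain $(Y^{(n)},\theta_n)$, and then writes down the global pair at once via
\[
N:=\sum_{n\geq 1} I_{\Lbrack n-1,n\Lbrack}\,(Y^{(n)}_-)^{-1}\is Y^{(n)},\qquad Y:={\cal E}(N),\qquad \theta:=\sum_{n\geq 1} I_{\Lbrack n-1,n\Lbrack}\,\theta_n.
\]
The stochastic exponential handles the multiplicative concatenation of the $Y^{(n)}$ automatically, and the deterministic times $n$ themselves serve as a global localizing sequence, which sidesteps the ``main obstacle'' you flag. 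Your recursive argument is also valid but requires the junction bookkeeping you describe; the paper's formulation is essentially what one arrives at after streamlining that recursion.
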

\begin{proof} The proof of the implication (b)$\Rightarrow$ (a)
is based on \cite{Takaoka} and is  omitted. Thus, we focus on
proving the reverse implication and suppose that assertion (a)
holds. Therefore, a direct application of Theorem 2.6 in
\cite{Takaoka} to each $(S_{t\wedge n})_{t\geq 0}$, we obtain the
existence of a positive $\mathbb H$-local martingale $Y^{(n)}$ and
an $\mathbb H$-predictable process $\theta_n$ such that
$0<\theta_n\leq 1$ and $Y^{(n)}(\theta_n\is S^n)$ is a local
martingale. Then, it is obvious that the process
$$
N:=\sum_{n=1}^{+\infty} I_{\Lbrack n-1,
n\Lbrack}(Y^{(n)}_{-})^{-1}\is Y ^{(n)}$$ is  a local martingale
and $Y:={\cal E}(N)>0$. On the other hand, the $\mathbb
H$-predictable process $\theta:=\sum_{n\geq 1} I_{\Lbrack n-1,
n\Lbrack}\theta_n$ satisfies $0<\theta\leq 1$ and $Y(\theta\is S)$
is a local martingale. This ends the proof of the
proposition.\end{proof}

\noindent For any $\mathbb H$-semimartingale $X$, the local martingales fulfilling the assertion (b) of Proposition \ref{charaterisationofNUPBRloc} are
 called   $\sigma$-martingale densities for $X$. The set of these $\sigma$-martingale densities will be denoted throughout the paper by
\begin{equation}\label{LFS}
{\cal L}({\mathbb H},X):=\left\{ Y\in {\cal M}_{loc}(\mathbb
H)\big|\ Y>0,\  \exists \theta \in {\cal {P}}(\mathbb H),\,
0<\theta\leq 1,\
 Y(\theta \centerdot X)\in {\cal M}_{loc}(\mathbb
H) \right\}\end{equation} where, as usual, ${\cal {P}}(\mathbb H)$
stands for predictable processes.  We state, without proof, an
obvious lemma.
\begin{lemma}\label{LY} For any $\mathbb H$-semimartingale $X$ and any $Y\in {\cal L}({\mathbb H},X)$, one has
$ ^{p,\mathbb H} (Y  \vert \Delta X \vert )<\infty$ and  $
^{p,\mathbb H}(Y \Delta X
  )=0$
\end{lemma}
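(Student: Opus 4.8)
The idea is to realise $Y\theta\,\Delta X$ as the jump process of a suitable $\mathbb H$-local martingale and then invoke the classical fact that the jumps of a local martingale have a finite, vanishing predictable projection.

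Fix $Y\in\mathcal L(\mathbb H,X)$ and let $\theta\in\mathcal P(\mathbb H)$ with $0<\theta\le1$ be a process as in (\ref{LFS}), so that $A:=\theta\is X$ is a well-defined $\mathbb H$-semimartingale and $YA=Y(\theta\is X)\in\mathcal M_{loc}(\mathbb H)$. The process $A_{-}$ is $\mathbb H$-predictable and locally bounded (being left-continuous with right limits), hence $A_{-}\is Y\in\mathcal M_{loc}(\mathbb H)$, and integration by parts yields
$$\bar M:=Y_{-}\is A+[Y,A]=YA-A_{-}\is Y\in\mathcal M_{loc}(\mathbb H).$$
Taking jumps and using $\Delta A=\theta\,\Delta X$ together with $\Delta[Y,A]=\Delta Y\,\Delta A$, we get $\Delta\bar M=Y_{-}\Delta A+\Delta Y\,\Delta A=Y\,\Delta A=Y\theta\,\Delta X$.

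Next I would use the following classical property of local martingales: for every $L\in\mathcal M_{loc}(\mathbb H)$ one has $|\Delta L|\le[L]^{1/2}\in\mathcal A^{+}_{loc}(\mathbb H)$, so that $^{p,\mathbb H}(|\Delta L|)\le{}^{p,\mathbb H}([L]^{1/2})<\infty$; and moreover $^{p,\mathbb H}(\Delta L)=0$, which one checks by stopping to reduce to the case where $L$ is of class $(\mathrm D)$, and then using that $L_{T-}$ is the conditional expectation of $L_{T}$ given the strict past of $T$ at every predictable time $T$. Applied to $L=\bar M$, this gives $^{p,\mathbb H}(Y\theta|\Delta X|)<\infty$ and $^{p,\mathbb H}(Y\theta\,\Delta X)=0$.

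It remains to remove the factor $\theta$. Since $\theta$ is $\mathbb H$-predictable, the predictable projection commutes with multiplication by it, so $^{p,\mathbb H}(Y\theta|\Delta X|)=\theta\,{}^{p,\mathbb H}(Y|\Delta X|)$; as $\theta>0$, finiteness of the left-hand side forces $^{p,\mathbb H}(Y|\Delta X|)<\infty$. In particular $^{p,\mathbb H}(Y\,\Delta X)$ is then well defined, and $^{p,\mathbb H}(Y\,\Delta X)=\theta^{-1}\,{}^{p,\mathbb H}(Y\theta\,\Delta X)=0$. The only non-routine ingredient is the classical local-martingale statement recalled in the middle step; the rest is integration by parts and elementary manipulation of predictable projections, which is presumably why the authors regard the lemma as obvious.
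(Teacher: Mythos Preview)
Your argument is correct. The paper does not give a proof at all---the authors state the lemma ``without proof'' as ``obvious''---so there is nothing to compare against; your integration-by-parts computation identifying $Y\theta\,\Delta X$ as the jump of a local martingale, followed by the standard facts $^{p,\mathbb H}(\Delta L)=0$ and $^{p,\mathbb H}(|\Delta L|)<\infty$ for $L\in\mathcal M_{loc}$, and the removal of the predictable factor $\theta>0$, is exactly the sort of routine verification the authors had in mind.
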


\begin{remark} Proposition \ref{charaterisationofNUPBRloc} implies that
for any process $X$ and any {\bf finite} stopping time $\sigma$,
the two concepts of NUPBR$(\mathbb H)$ (the current concept and
the one of the literature) coincide for $X^{\sigma}$.
\end{remark}

\noindent Below, we  prove that,  in the  case of infinite
horizon, the current NUPBR  condition  is stable under
localization, while this is not the case  for  the NUPBR condition
 defined in the literature.

\begin{proposition}\label{NUPBRLocalization}
Let $X$ be an $\mathbb H$ adapted   process. Then, the following assertions are equivalent.\\
{\rm{(a)}}  There exists a sequence   $(T_n)_{n\geq 1}$ of
$\mathbb H$-stopping times that increases to $+\infty$, such that
for each $n\geq 1$, there exists a probability $Q_n$ on $(\Omega,
{\mathbb H}_{T_n})$ such that $Q_n\sim P$ and $X^{T_n}$ satisfies
 NUPBR$(\mathbb H)$ under $Q_n$.\\
{\rm{(b)}}  $X$ satisfies   NUPBR$(\mathbb H)$.\\
{\rm{(c)}} There exists an $\mathbb H$-predictable process $\phi$,
such that $0<\phi\leq 1$ and  $(\phi\is X)$ satisfies
NUPBR$(\mathbb H)$.
\end{proposition}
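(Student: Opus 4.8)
The plan is to establish the three implications (b)$\,\Rightarrow\,$(a), (a)$\,\Rightarrow\,$(b), and (b)$\,\Leftrightarrow\,$(c), arguing throughout directly from Definition \ref{DefinitionofNUPBR}, i.e.\ from the boundedness in probability of the sets ${\cal K}_T$, rather than through $\sigma$-martingale densities. The only external facts I need are: the identity $H\is Y^{\sigma}=(H\is Y)^{\sigma}$ together with $L(Y,\mathbb H)\subseteq L(Y^{\sigma},\mathbb H)$ for a stopping time $\sigma$; the associativity of the stochastic integral with respect to a bounded scalar predictable multiplier; and the fact that boundedness in probability is invariant under an equivalent change of probability.

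The two easy implications come first. For (b)$\,\Rightarrow\,$(a) I would take $T_n:=n$ and $Q_n:=P$; fixing a deterministic $T\in(0,+\infty)$, the identity $H\is X^{n}=(H\is X)^{n}$, combined with the observation that, $n$ being deterministic, a stopped stochastic integral $(H\is X)^{n}$ is again of the form $G\is X$ for some $\mathbb H$-predictable $G$ with $G\is X=(H\is X)^{n}$, gives ${\cal K}_T(X^{n},\mathbb H)\subseteq {\cal K}_T(X,\mathbb H)$, and the latter is bounded in probability under $P$ by (b). For (b)$\,\Leftrightarrow\,$(c) I would prove the sharper identity ${\cal K}_T(\phi\is X,\mathbb H)={\cal K}_T(X,\mathbb H)$, valid for every $\mathbb H$-predictable $\phi$ with $0<\phi\leq1$ and every $T$: given $G\in L(X,\mathbb H)$ with $G\is X\geq-1$, the process $H:=G/\phi$ is $\mathbb H$-predictable, and by associativity $H\in L(\phi\is X,\mathbb H)$ with $H\is(\phi\is X)=G\is X$, so $(G\is X)_T\in {\cal K}_T(\phi\is X,\mathbb H)$; reading the computation backwards gives the reverse inclusion. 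Hence (b) and (c) are equivalent, and in (c) any such $\phi$ works (e.g.\ $\phi\equiv1$).

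The substance of the statement is (a)$\,\Rightarrow\,$(b). Assume (a), fix a deterministic $T\in(0,+\infty)$, and let $f=(H\is X)_T\in {\cal K}_T(X,\mathbb H)$, so $H\is X\geq-1$. From $(H\is X^{T_n})_T=(H\is X)_{T\wedge T_n}$ one gets, on the event $\{T_n\geq T\}$, that $f=(H\is X^{T_n})_T=:g_n$; moreover $g_n\in {\cal K}_T(X^{T_n},\mathbb H)$ because $H\in L(X^{T_n},\mathbb H)$ and $H\is X^{T_n}=(H\is X)^{T_n}\geq-1$. By hypothesis ${\cal K}_T(X^{T_n},\mathbb H)$ is bounded in probability under $Q_n$, and since $Q_n\sim P$ on ${\mathbb H}_{T_n}$ while the random variables involved are ${\mathbb H}_{T_n}$-measurable, it is bounded in probability under $P$ as well. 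Given $\varepsilon>0$, choose first an $n$ with $P(T_n<T)<\varepsilon/2$ (possible since $T_n\uparrow+\infty$), then a $K$ with $\sup_{g\in {\cal K}_T(X^{T_n},\mathbb H)}P(|g|>K)<\varepsilon/2$; then, for every $f\in {\cal K}_T(X,\mathbb H)$,
\[
P(|f|>K)\ \leq\ P(T_n<T)+P\big(\{|f|>K\}\cap\{T_n\geq T\}\big)\ \leq\ P(T_n<T)+P(|g_n|>K)\ <\ \varepsilon .
\]
Thus ${\cal K}_T(X,\mathbb H)$ is bounded in probability, and as $T$ was arbitrary, $X$ satisfies NUPBR$(\mathbb H)$.

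I expect the crux to be precisely this last step: reconciling the deterministic horizon $T$ of Definition \ref{DefinitionofNUPBR} with the random localizing times $T_n$, which is resolved by splitting on $\{T_n\geq T\}$ and exploiting $P(T_n<T)\to0$. The attendant care lies in verifying $H\is X^{T_n}=(H\is X)^{T_n}$ and $L(X,\mathbb H)\subseteq L(X^{T_n},\mathbb H)$ for the \emph{arbitrary} stopping time $T_n$, and in checking that equivalence $Q_n\sim P$ on the $\sigma$-field ${\mathbb H}_{T_n}$ alone suffices to transport boundedness in probability from $Q_n$ to $P$. An alternative route would invoke the $\sigma$-martingale density characterisation of Proposition \ref{charaterisationofNUPBRloc} and paste the densities as in its proof, but for random $T_n$ the patching intervals are no longer predictable, which is exactly the obstruction the ${\cal K}_T$-argument above avoids; modulo that, everything is bookkeeping.
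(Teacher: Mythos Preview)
Your proof is correct and takes a genuinely different route from the paper. The paper's argument is essentially a citation: for (a)$\Leftrightarrow$(b) it invokes Takaoka's stability of NUPBR under localisation together with invariance under equivalent probability change, and for (c)$\Rightarrow$(b) it passes through the $\sigma$-martingale density characterisation of Proposition~\ref{charaterisationofNUPBRloc}. You instead work entirely at the level of the sets ${\cal K}_T$, never invoking deflators. Your (b)$\Leftrightarrow$(c) via the identity ${\cal K}_T(\phi\is X,\mathbb H)={\cal K}_T(X,\mathbb H)$ is cleaner and more informative than the paper's density argument, and your (a)$\Rightarrow$(b) splitting on $\{T_n\geq T\}$ makes explicit what the paper hides behind the reference to Takaoka. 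The only point worth recording (and you flag it yourself) is that the transfer of boundedness in probability from $Q_n$ to $P$ uses the $\delta$--$\varepsilon$ form of absolute continuity $P\ll Q_n$ on ${\mathbb H}_{T_n}$, together with the observation that every element of ${\cal K}_T(X^{T_n},\mathbb H)$ is ${\mathbb H}_{T\wedge T_n}$-measurable and hence ${\mathbb H}_{T_n}$-measurable; this is correct as stated. Your approach buys a self-contained, elementary proof that does not depend on Proposition~\ref{charaterisationofNUPBRloc}; the paper's approach buys brevity at the cost of a black-box citation.
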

\begin{proof}
The proof for (a)$\Longleftrightarrow$(b) follows from the
stability of NUPBR condition for a finite horizon under
localization which is due to \cite{Takaoka} (see also
\cite{Choulli2012} for further discussion about this issue), and
the fact that the NUPBR condition is stable under any equivalent
probability change.\\The proof of (b)$\Rightarrow$(c) is trivial
and is omitted. To prove the reverse, we assume that (c) holds.
Then   Proposition \ref{charaterisationofNUPBRloc}  implies the
existence of an
  $\mathbb H$-predictable process $\psi$ such that
$0<\psi\leq 1$ and a positive $\mathbb H$-local martingale
$Z={\cal E}(N)$ such that $Z(\psi\phi\is X)$ is a local
martingale. Since $\psi\phi$ is predictable and $0<\psi\phi\leq
1$, we deduce that $S$ satisfies   NUPBR$(\mathbb H)$. This ends
the proof of the proposition.
\end{proof}

We end this section with a simple, but useful result for
predictable process with finite variation.
\begin{lemma}\label{NUPBRforPredictableProcesses}
Let $X$ be an $\mathbb H$-predictable process with finite variation. Then $X$ satisfies NUPBR$(\mathbb H)$ if and only if $X\equiv X_0$ (i.e. the process $X$ is constant).\end{lemma}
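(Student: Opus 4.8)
The plan is to prove both directions. The ``if'' direction is immediate: if $X \equiv X_0$, then $H \centerdot X \equiv 0$ for every $H \in L(X,\mathbb H)$, so ${\cal K}_T(X,\mathbb H) = \{0\}$ is trivially bounded in probability, and NUPBR$(\mathbb H)$ holds. The substance is in the ``only if'' direction, so assume $X$ is $\mathbb H$-predictable with finite variation and satisfies NUPBR$(\mathbb H)$; I want to conclude $X = X_0$.

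First I would reduce to the case $X_0 = 0$ by replacing $X$ with $X - X_0$ (this changes neither predictability, finite variation, nor the stochastic integrals $H \centerdot X$, hence not NUPBR). By Proposition \ref{charaterisationofNUPBRloc}, there exist a positive $\mathbb H$-local martingale $Y$ and an $\mathbb H$-predictable process $\theta$ with $0 < \theta \leq 1$ such that $Y(\theta \centerdot X)$ is an $\mathbb H$-local martingale. Since $X$ is predictable with finite variation and $\theta$ is predictable and bounded, $\theta \centerdot X$ is again predictable with finite variation. Now the key point is that $Y(\theta \centerdot X)$, being a local martingale, can be analyzed via integration by parts: writing $V := \theta \centerdot X$, integration by parts gives $YV = Y_- \centerdot V + V_- \centerdot Y + [Y,V]$. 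Because $V$ is of finite variation, $[Y,V] = \sum \Delta Y \Delta V = \Delta Y \centerdot V$ (a pathwise Stieltjes-type identity), so $YV = V_- \centerdot Y + Y \centerdot V$ — wait, more carefully, $Y_- \centerdot V + [Y,V] = Y_- \centerdot V + \Delta Y \centerdot V = Y \centerdot V$ where the last integral is a pathwise (Stieltjes) integral against the finite-variation $V$. Thus $YV - V_- \centerdot Y = Y \centerdot V$. The left side is a local martingale (difference of $YV$ and a stochastic integral of the locally bounded $V_-$ against the local martingale $Y$), while the right side $Y \centerdot V$ is a predictable process of finite variation. A predictable local martingale of finite variation, null at zero, is identically zero; hence $Y \centerdot V \equiv 0$. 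Since $Y > 0$, this forces $V = \theta \centerdot X \equiv 0$ pathwise, and since $0 < \theta \leq 1$ we get $X \equiv 0 = X_0$.

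The step I expect to require the most care is the integration-by-parts identity and the justification that $Y \centerdot V$ (the pathwise Stieltjes integral) is a genuine finite-variation \emph{predictable} process: one needs $V_-$ locally bounded so that $V_- \centerdot Y$ is a local martingale, which holds since $V$ is càdlàg adapted; and one needs to know $Y|dV|$ is locally integrable, which follows because $Y(\theta\centerdot X)$ is a special semimartingale (a local martingale) and Lemma \ref{LY} gives $\,^{p,\mathbb H}(Y|\Delta X|) < \infty$. An alternative, perhaps cleaner route avoiding integration by parts: apply Lemma \ref{NUPBRforPredictableProcesses}'s hypothesis together with Proposition \ref{charaterisationofNUPBRloc} to note that $\theta \centerdot X$ inherits NUPBR (as a further predictable FV process obtained by a bounded predictable integrand), then invoke directly that a nonzero predictable FV local-martingale-deflatable process cannot exist — but this is essentially circular, so I would commit to the integration-by-parts argument above as the clean proof.
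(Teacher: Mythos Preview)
Your approach via integration by parts is close but contains a genuine gap: the process $Y\centerdot V$ (the pathwise Stieltjes integral of $Y$ against $V=\theta\centerdot X$) is \emph{not} predictable in general. Indeed, $V$ is predictable, so it jumps only at predictable times $T$, but at such a time $\Delta(Y\centerdot V)_T = Y_T\,\Delta V_T$ involves $Y_T$, which is not ${\cal F}_{T-}$-measurable for a nontrivial local martingale $Y$. Thus the sentence ``the right side $Y\centerdot V$ is a predictable process of finite variation'' is unjustified, and the step ``predictable finite-variation local martingale null at zero is zero'' does not apply as stated. The paragraph you flag as ``requiring the most care'' addresses local integrability, which is not the issue; predictability is.

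The fix is short: since $V$ is predictable and $Y$ is a positive local martingale with ${}^{p,\mathbb H}Y = Y_-$, the dual predictable projection of $Y\centerdot V$ is $Y_-\centerdot V$. Your integration-by-parts identity shows $Y\centerdot V$ is a local martingale, hence its compensator $Y_-\centerdot V$ must vanish; as $Y_->0$, this forces $V\equiv 0$. Alternatively, and this is what the paper does, localize so that $Y^{T_n}$ and $Y^{T_n}(\theta\centerdot X)^{T_n}$ are true martingales, pass to the equivalent measure $Q_n := (Y_{T_n}/Y_0)\cdot P$, and observe that $(\theta\centerdot X)^{T_n}$ is then a $Q_n$-martingale that is predictable with finite variation, hence identically zero. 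The change-of-measure route sidesteps the integration-by-parts bookkeeping entirely and is arguably cleaner; your route, once corrected to work with $Y_-\centerdot V$, is a legitimate alternative that stays under $P$.
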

\begin{proof}
It is obvious that if $X\equiv X_0$, then $X$ satisfies
NUPBR$(\mathbb H)$. Suppose that $X$ satisfies NUPBR$(\mathbb H)$.
Consider  a positive $\mathbb H$-local martingale $Y$, and an
$\mathbb H$-predictable process $\theta $ such that $0<\theta\leq
1$ and $Y(\theta\is X)$ is a local martingale. Let $(T_n)_{n\geq
1}$ be a sequence of $\mathbb H$-stopping times that increases to
$+\infty$ such that $Y^{T_n}$ and $Y^{T_n}(\theta\is X)^{T_n}$ are
true martingales. Then, for each $n\geq 1$,  define
$Q_n:=\left(Y_{T_n}/Y_0\right)\cdot P.$ Since $X$ is predictable,
then  $(\theta\is X)^{T_n}$ is also predictable with finite
variation and is a $Q_n$-martingale. Thus, we deduce that
$(\theta\is X)^{T_n}\equiv 0$ for each $n\geq 1$. Therefore, we
deduce that $X$ is constant (since $X^{T_n}-X_0=\theta^{-1}\is
(\theta\is X)^{T_n}\equiv 0$). This ends the proof of the
lemma.\end{proof}



\subsection{The Quasi-Left-Continuous Processes}
In this subsection, we present our two main results on the NUPBR condition
under stopping at $\tau$ for quasi-left-continuous processes.
The first result  consists of characterizing the pairs
$\left(S,\tau\right)$ of market   and random time models,  for
which $S^{\tau}$   fulfills the NUPBR condition. The second  result
focuses on determining the models of random times $\tau$  such that,  for any semi-martingale $S$ enjoying NUPBR$(\mathbb F)$, the stopped process $S^\tau$ enjoys NUPBR$(\mathbb G)$ .
\\

\noindent We start by recalling some general notation. For any filtration
$\mathbb H$,   we denote
\begin{equation}\label{sigmaFields}
\widetilde {\cal O}(\mathbb H):={\cal O}(\mathbb H)\otimes {\cal
B}({\mathbb R}^d),\ \ \ \ \ \widetilde{\cal P}(\mathbb H):= {\cal
P}(\mathbb H)\otimes {\cal B}({\mathbb R}^d),
\end{equation}
where ${\cal B}({\mathbb R}^d)$ is the Borel $\sigma$-field on
${\mathbb R}^d$. The jump measure of $S$ is denoted by $\mu$, and
given by
$$
\mu(dt,dx)=\sum_{u>0} I_{\{\Delta S_u \neq 0\}}\delta_{(u,\Delta
S_u)}(dt,dx),$$ For a product-measurable functional $W\geq 0$ on
$\Omega\times[0,+\infty[\times{\mathbb R}^d$, we denote
$W\star\mu$ (or sometimes, with abuse of notation $W(x)\star\mu$)
the process
\begin{equation}\label{Wstarmu}
(W\star\mu)_t:=\int_0^t \int W(u,x)\mu(du,dx)=\sum_{0<u\leq t}
W(u,\Delta S_u) I_{\{ \Delta S_u\not=0\}}.\end{equation} Also on
$\Omega\times[0,+\infty[\times{\mathbb R}^d$, we define the
measure $M^P_{\mu}:=P\otimes\mu$ by $\int W
dM^P_{\mu}:=E\left[W\star\mu(\infty)\right]$ (when the integrals
are well defined). The conditional ``expectation" given $
\widetilde{\cal P}(\mathbb H)$ of a product-measurable functional
$W$, is the unique $ \widetilde{\cal P}(\mathbb H)$-measurable
functional $\widetilde W$ satisfying
$$
E\left[W I_{\Sigma}\star\mu(\infty)\right]=E\left[{\widetilde W}
I_{\Sigma}\star\mu(\infty)\right],\ \ \ \mbox{for all}\
\Sigma\in\widetilde{\cal P}(\mathbb H).$$

 The following theorem gives a characterization  of $\mathbb F$-quasi-left
continuous processes that satisfy NUPBR$(\mathbb G)$ after stopping with $\tau$. The proof
of this theorem will be given in Subsection \ref{main2section}, while its statement is based on the following $\mathbb F$-semimartingale
\begin{equation}\label{Szero}
  S^{(0)} := xI_{\left\{\psi = 0<Z_{-}\right\}} \star\mu, \ \ \ \ \ \ \ \mbox{where}\ \ \ \  \
  \psi:= M_\mu^P\left( I_{\{\widetilde Z>0\}} \Big| \widetilde{\cal P}\left( \mathbb{F} \right)\right).
 \end{equation}

\begin{theorem}\label{main2}Suppose that $S$ is $\mathbb F$-quasi-left-continuous. Then, the following assertions are equivalent.\\
{\rm(a)} $S^{\tau}$ satisfies NUPBR$(\mathbb G)$.\\
{\rm(b)} For any $\delta>0$, the process
\begin{equation}\label{main2/1} I_{\{ Z_{-}\geq
\delta\}}\is\left(S-S^{(0)}\right)\ \ \ \ \mbox{satisfies
NUPBR($\mathbb F$).}\end{equation}
{\rm(c)} For any $ n\geq 1 $, the process $  (S-S^{(0)} )^{\sigma_n}$ satisfies
NUPBR$(\mathbb F)$, where
 $$
 \sigma_n:=\inf\{t\geq 0\ :\ Z_{t }<1/n\}.$$
\end{theorem}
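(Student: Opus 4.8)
The plan is to establish the chain (a)$\Rightarrow$(c)$\Rightarrow$(b)$\Rightarrow$(a), exploiting the localization properties of NUPBR (Proposition \ref{NUPBRLocalization}) together with the interplay between $\mathbb F$- and $\mathbb G$-semimartingales on $\Rbrack 0,\tau\Lbrack$. The role of the auxiliary process $S^{(0)}$ is to isolate exactly those jumps of $S$ which, after stopping at $\tau$, land on the set $\{\psi=0<Z_-\}$: heuristically, $\psi$ measures the $\mathbb F$-conditional probability (given $\widetilde{\cal P}(\mathbb F)$) of $\{\widetilde Z>0\}$, so on $\{\psi=0\}$ the time $\tau$ cannot ``jump across'' such a jump time of $S$, and these jumps therefore disappear in the $\mathbb G$-stopped picture while the price process still moves in $\mathbb F$; this is precisely the obstruction to lifting NUPBR$(\mathbb F)$ to NUPBR$(\mathbb G)$ after stopping, and it explains why one subtracts $S^{(0)}$.

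First I would prove (b)$\Leftrightarrow$(c). Both conditions say that $S-S^{(0)}$ satisfies NUPBR$(\mathbb F)$ ``up to the time $Z$ hits zero'', expressed once via the predictable reduction $I_{\{Z_-\ge\delta\}}$ and once via the stopping times $\sigma_n$. Since $Z$ is a c\`adl\`ag supermartingale with $Z_0\le 1$, the sets $\{Z_-\ge 1/n\}$ and the stochastic intervals $\Rbrack 0,\sigma_n\Rbrack$ are comparable up to a predictable set where $Z_-$ and $Z$ straddle $1/n$, and on $\{Z_-=0\}$ one has $S^{(0)}$ and $S$ agreeing in the relevant jumps; the equivalence then follows from Lemma \ref{NUPBRforPredictableProcesses} applied to the finite-variation predictable discrepancy, combined with the localization equivalence (a)$\Leftrightarrow$(c) of Proposition \ref{NUPBRLocalization} with $T_n=\sigma_n$ (noting $\sigma_n\uparrow\infty$ because $\{Z_-=0\}\subseteq\Rbrack\tau,\infty\Lbrack$ up to evanescence and $Z$ is a true supermartingale, so $Z$ does not hit $0$ before any finite time with positive probability uniformly — more precisely $\sigma_n$ increases to the time $Z_-$ vanishes, which dominates $\tau$ on its support).

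The substantive implication is (a)$\Rightarrow$(c), and this is where I expect the main difficulty. Starting from a $\sigma$-martingale density for $S^\tau$ in $\mathbb G$ (Proposition \ref{charaterisationofNUPBRloc}), one must produce a $\sigma$-martingale density in $\mathbb F$ for $(S-S^{(0)})^{\sigma_n}$. The standard device is to write any $\mathbb G$-local martingale stopped at $\tau$ in terms of $\mathbb F$-objects using the formula $Y^\tau = \text{(}\mathbb F\text{-local martingale)}$ after multiplying by the ``hazard'' correction $1/Z_-$ on $\Rbrack 0,\tau\Lbrack$ — the Jeulin–Yor type decomposition, with the martingale $m=Z+D^{o,\mathbb F}$ of \eqref{processmZ} governing the bracket terms. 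The delicate point is that the $\mathbb G$-predictable integrand $\theta$ and the $\mathbb G$-local martingale $Y$ need not come from $\mathbb F$-predictable, $\mathbb F$-adapted data; one must take appropriate $\mathbb F$-(pre)optional projections, check that positivity is preserved (this forces the restriction to $\{Z_-\ge 1/n\}$, i.e. the stopping at $\sigma_n$, so that division by $Z_-$ is legitimate), and verify via Lemma \ref{LY} that the projected jump terms vanish in $\mathbb F$ after the jumps on $\{\psi=0<Z_-\}$ have been removed. For the converse (b)$\Rightarrow$(a), one runs this construction in reverse: given $\mathbb F$-$\sigma$-martingale densities for each $I_{\{Z_-\ge\delta\}}\is(S-S^{(0)})$, one multiplies by a carefully chosen positive $\mathbb G$-local martingale supported on $\Rbrack 0,\tau\Rbrack$ (built from $1/Z_-$ and from a deflator that absorbs precisely the $S^{(0)}$-jumps, using that on $\{\psi>0\}$ such jumps survive with a computable compensator) to obtain a $\sigma$-martingale density for $S^\tau$ in $\mathbb G$, then conclude by the localization Proposition \ref{NUPBRLocalization} letting $\delta\downarrow 0$. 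The crux throughout is the bookkeeping of which jump-times of $S$ are ``seen'' by $\tau$ and the measurability/positivity gymnastics needed to pass the deflator between the two filtrations.
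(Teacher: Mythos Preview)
Your high-level intuition for (a)$\Leftrightarrow$(b) --- passing a $\sigma$-martingale density between $\mathbb G$ and $\mathbb F$ via Jeulin--Yor type corrections and projections --- is on the right track, and this is indeed what the paper does. However, the paper does not argue by abstract optional/predictable projection of the deflator itself; it works at the level of the Jacod parameters. Using Theorem~\ref{theosigmadensityiff}, the $\mathbb G$-density is written as ${\cal E}(\beta^{\mathbb G}\is\widehat{S^c}+(f^{\mathbb G}-1)\star(\mu^{\mathbb G}-\nu^{\mathbb G}))$, the pair $(\beta^{\mathbb G},f^{\mathbb G})$ is replaced by $\mathbb F$-predictable versions via Lemma~\ref{lemma:predsetFG}, and then the $\mathbb F$-density is built by the explicit formulas $\beta=(\beta_m/Z_-+\beta^{\mathbb F})I_{\{Z_-\ge\delta\}}$ and $f=f^{\mathbb F}(1+f_m/Z_-)I_{\{Z_-\ge\delta,\,\psi>0\}}+I_{\{\psi=0\ \text{or}\ Z_-<\delta\}}$, with a substantial amount of work devoted to showing $(f-1)\in{\cal G}^1_{loc}(\mu,\mathbb F)$. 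The reverse direction is the mirror construction. Your sketch does not touch on any of this, and in particular does not explain why the martingale equation transforms correctly (the key algebraic step is that the term $\int h(x)I_{\{\psi=0\}}F(dx)$ appears exactly as the drift of $S^{(0)}$).

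There is a concrete error in your (b)$\Leftrightarrow$(c) argument. You claim $\sigma_n\uparrow\infty$; this is false. The supermartingale $Z$ can reach zero in finite time, and $\sigma_n$ increases only to $\sigma_\infty:=\inf\{t:Z_t=0\}$, which may be finite with positive probability. Consequently Proposition~\ref{NUPBRLocalization} is not directly applicable. The paper instead argues by boundedness in probability: for $H_\delta:=HI_{\{Z_-\ge\delta\}}$ one has $(H_\delta\is X)_T=(H_\delta\is X)_{T\wedge\tau_\delta}$ with $\tau_\delta:=\sup\{t:Z_{t-}\ge\delta\}$, and then splits $P((H_\delta\is X)_T>c)$ according to whether $\sigma_n\ge\tau_\delta$ or not, showing $P(\sigma_n<\tau_\delta\wedge T)\to 0$. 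Your proposed use of Lemma~\ref{NUPBRforPredictableProcesses} here is also misplaced: the processes $I_{\{Z_-\ge\delta\}}\is X$ and $X^{\sigma_n}$ do not differ by a predictable finite-variation piece in any useful sense, and that lemma (which forces such a piece to vanish) plays no role in this equivalence.
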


\begin{remark}\label{remakrformain1}
 1) From assertion (c) one can understand that  the NUPBR$(\mathbb G)$ property for $S^{\tau}$ can be verified by checking
$S-S^{(0)}$ satisfies  NUPBR$(\mathbb F)$   up to $\sigma_{\infty}:=\sup_n\sigma_n$. This is also equivalent to   NUPBR$(\mathbb F)$ of the same process on the predictable sets $\{ Z_{-}\geq \delta\}$, $\delta>0$. \\
 2)  The functionals $\psi$ and $Z_{-}+f_m:=M^P_{\mu}(\widetilde Z\big|\widetilde{\cal P}(\mathbb F))$ satisfy
 \begin{equation}\label{psifm}
\{\psi=0\}=\{Z_{-}+f_m=0\}\subset\{ \widetilde Z=0\},\ \ \
M^P_{\mu}-a.e.\end{equation} Indeed, due to $\widetilde Z\leq
I_{\{ \widetilde Z>0\}}$, we have
$$0\leq Z_{-}+f_m=M^P_{\mu}\left(\widetilde Z\big|\ \widetilde{\cal P}(\mathbb F)\right)\leq \psi.$$
Thus, we get $\{\psi=0\}\subset\{ Z_{-}+f_m=0\}\subset\{\widetilde Z=0\}\ \ M^P_{\mu}-a.e.$ on the one hand. On the other hand, the reverse inclusion follows from
$$
0=M^P_{\mu}\left(I_{\{ Z_{-}+f_m=0\}}I_{\{\widetilde
Z=0\}}\right)=M^P_{\mu}\left(I_{\{ Z_{-}+f_m=0\}}\psi\right).$$ 3)
As a  result of   remark 2) above and $\{ \widetilde
Z=0<Z_{-}\}\subset \Rbrack\sigma_{\infty}\Lbrack$, we deduce that
$S^{(0)}$ is a c\`adl\`ag $\mathbb F$-adapted process with finite
variation with $var(S^{(0)})_{\infty}\leq \vert \Delta
S_{\sigma_{\infty}}\vert I_{\{\sigma_{\infty}<+\infty\}}.$
Furthermore, it can be written as $$S^{(0)}:=\Delta
S_{\sigma_{\infty}}I_{\{ \widetilde
Z_{\sigma_{\infty}}=0=\psi(\sigma_{\infty}, \Delta
S_{\sigma_{\infty}})\ \&\
Z_{\sigma_{\infty}-}>0\}}I_{\Rbrack\sigma_{\infty},+\infty\Rbrack}.$$
This proves the claim stated before Theorem \ref{main2} about the
process $S^{(0)}$.
\end{remark}

\noindent   The following corollary is useful for studying the problem ({\bf {P2}}), and it describes examples of $\mathbb F$-quasi-left-continuous model $S$ that fulfill
(\ref{Szero}) as well.

\begin{corollary}\label{corollary1} Suppose that $S$ is $\mathbb F$-quasi-left-continuous and satisfies NUPBR$(\mathbb F)$.  Then, the following assertions hold.\\
{\rm{(a)}} If $\left(S, S^{(0}\right)$ satisfies NUPBR$(\mathbb F)$, then $S^{\tau}$ satisfies NUPBR$(\mathbb G)$.\\
{\rm{(b)}} If $S^{(0)}\equiv  0$, then the process $S^{\tau}$ satisfies NUPBR$(\mathbb G)$.\\
{\rm{(c)}} If $\{\Delta S\not=  0\}\cap\{\widetilde Z=0<Z_{-}\}=\emptyset$, then $S^{\tau}$ satisfies NUPBR$(\mathbb G)$.\\{\rm{(d)}} If $ \widetilde Z>0 $ (equivalently $Z>0$ or $Z_{-}>0$), then $S^{\tau}$ satisfies NUPBR$(\mathbb G)$.
\end{corollary}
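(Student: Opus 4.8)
The plan is to derive all four assertions of Corollary~\ref{corollary1} directly from Theorem~\ref{main2}, specifically from the equivalence (a)$\Leftrightarrow$(b), by showing in each case that the process $I_{\{Z_-\geq\delta\}}\is(S-S^{(0)})$ satisfies NUPBR$(\mathbb F)$ for every $\delta>0$. The unifying observation is that $I_{\{Z_-\geq\delta\}}$ is an $\mathbb F$-predictable process taking values in $\{0,1\}$, so it is of the form considered in Proposition~\ref{charaterisationofNUPBRloc}(c) and Proposition~\ref{NUPBRLocalization}(c); hence if a process $Y$ satisfies NUPBR$(\mathbb F)$ then so does $I_{\{Z_-\geq\delta\}}\is Y$ (one may take $\phi=I_{\{Z_-\geq\delta\}}\vee I_{\{Z_-<\delta\}}\cdot\varepsilon$ or argue directly that a $\sigma$-martingale density for $Y$ serves for the stopped/thinned integral). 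Thus in each of (a)--(d) it suffices to show that $S-S^{(0)}$ itself, or a suitable localized version, inherits NUPBR$(\mathbb F)$ from the hypotheses.

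First I would treat (a): if $(S,S^{(0)})$ satisfies NUPBR$(\mathbb F)$, then by Proposition~\ref{charaterisationofNUPBRloc} there is a positive $\mathbb F$-local martingale $Y$ and a predictable $\theta$ with $0<\theta\leq1$ such that $Y(\theta\is(S,S^{(0)}))$ is a local martingale componentwise; applying the integrand $(\,\theta,-\theta\,)$ — or rather noting that $\theta\is S-\theta\is S^{(0)}=\theta\is(S-S^{(0)})$ — the same $Y$ and $\theta$ witness NUPBR$(\mathbb F)$ for $S-S^{(0)}$. Then $I_{\{Z_-\geq\delta\}}\is(S-S^{(0)})$ satisfies NUPBR$(\mathbb F)$ by the thinning remark above, and Theorem~\ref{main2}(b)$\Rightarrow$(a) gives the conclusion. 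Assertion (b) is the special case $S^{(0)}\equiv0$ of (a), since then $(S,S^{(0)})=(S,0)$ trivially satisfies NUPBR$(\mathbb F)$ whenever $S$ does (a $\sigma$-martingale density for $S$ works, as the second component is constant). For (c), I would invoke Remark~\ref{remakrformain1}.3: there $S^{(0)}$ is shown to equal $\Delta S_{\sigma_\infty}I_{\{\widetilde Z_{\sigma_\infty}=0<Z_{\sigma_\infty-}\}}I_{\Rbrack\sigma_\infty,+\infty\Rbrack}$, and the hypothesis $\{\Delta S\neq0\}\cap\{\widetilde Z=0<Z_-\}=\emptyset$ forces this jump to vanish, so $S^{(0)}\equiv0$ and (c) reduces to (b). Finally (d): if $\widetilde Z>0$ then the set $\{\widetilde Z=0<Z_-\}$ is empty, a fortiori its intersection with $\{\Delta S\neq0\}$ is empty, so (d) follows from (c). The parenthetical equivalences $\widetilde Z>0 \Leftrightarrow Z>0 \Leftrightarrow Z_->0$ should be justified from the supermartingale inequalities $0\leq Z\leq\widetilde Z$ and $Z_-\geq\widetilde Z$ together with standard facts on the support of a nonnegative supermartingale (if $Z_{t_0}=0$ then $Z_t=0$ for $t\geq t_0$, and $\widetilde Z=\,^{o,\mathbb F}I_{\Lbrack0,\tau\Lbrack}$ vanishes there too).

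The only genuine point requiring care — and the main obstacle — is the thinning step: verifying cleanly that NUPBR$(\mathbb F)$ of a semimartingale $Y$ implies NUPBR$(\mathbb F)$ of $I_A\is Y$ for a predictable set $A$. This is slightly more delicate than Proposition~\ref{NUPBRLocalization}(c), whose integrand $\phi$ is required to be strictly positive, whereas $I_{\{Z_-\geq\delta\}}$ vanishes on a predictable set. The fix is to note directly from Proposition~\ref{charaterisationofNUPBRloc} that if $(Y,\theta)$ witnesses NUPBR$(\mathbb F)$ for the full process $S-S^{(0)}$, then the \emph{same} pair witnesses it for $I_A\is(S-S^{(0)})$, because $Y\big(\theta\is(I_A\is(S-S^{(0)}))\big)=Y\big((\theta I_A)\is(S-S^{(0)})\big)$ and $\theta I_A$ is predictable with $0\leq\theta I_A\leq1$; one then only needs $Y(\theta I_A\is(S-S^{(0)}))$ to be a local martingale, which follows since $I_A\is$ of a local martingale is a local martingale. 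Strictly, Proposition~\ref{charaterisationofNUPBRloc}(b) demands $\theta>0$, so to land exactly in that statement one replaces $\theta I_A$ by $\theta I_A+\varepsilon I_{A^c}$ for small $\varepsilon>0$ and checks the integral is unchanged on $A^c$ only up to a term $\varepsilon\is(S-S^{(0)})I_{A^c}$ — here it is cleanest instead to appeal to the \emph{definition} of NUPBR via boundedness in probability of ${\cal K}_T$, since $H\is(I_A\is(S-S^{(0)}))=(HI_A)\is(S-S^{(0)})$ shows ${\cal K}_T(I_A\is(S-S^{(0)}),\mathbb F)\subseteq{\cal K}_T(S-S^{(0)},\mathbb F)$, so boundedness in probability is inherited immediately. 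That containment argument is the robust route and I would use it throughout.
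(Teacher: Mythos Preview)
Your proof is correct and follows essentially the same route as the paper's: reduce everything to Theorem~\ref{main2} by showing $S-S^{(0)}$ (or its predictable thinnings) satisfies NUPBR$(\mathbb F)$, with (b)$\Rightarrow$(a) as a special case, (c)$\Rightarrow$(b) via $S^{(0)}\equiv0$, and (d)$\Rightarrow$(c) trivially. You are in fact more careful than the paper about the thinning step; your observation that ${\cal K}_T(I_A\is Y,\mathbb F)\subseteq{\cal K}_T(Y,\mathbb F)$ is the clean way to do it (alternatively, one could use Theorem~\ref{main2}(c) instead of (b), since stopping at $\sigma_n$ obviously preserves NUPBR).

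One small correction in your side remark on the parenthetical equivalences in (d): the inequality $Z_-\geq\widetilde Z$ is false in general, since $\widetilde Z=Z_-+\Delta m$ and $\Delta m$ has no sign. The correct facts are $Z\leq\widetilde Z$ (as $\widetilde Z_t-Z_t=P(\tau=t\mid{\cal F}_t)\geq0$) together with the standard result that for a nonnegative supermartingale the sets $\{Z=0\}$ and $\{Z_-=0\}$ have the same d\'ebut $\widehat R$ and are both absorbing; see Lemma~\ref{proposition1}(a) in the paper, which establishes exactly that $\{\widetilde Z=0\}$, $\{Z=0\}$, $\{Z_-=0\}$ share the same d\'ebut, from which the three positivity conditions are equivalent.
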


\begin{proof} (a) Suppose that $\left(S, S^{(0}\right)$ satisfies NUPBR$(\mathbb F)$. Then, it is obvious that $S-S^{(0)}$ satisfies NUPBR$(\mathbb F)$, and assertion (a) follows from Theorem \ref{main2}. \\
(b) Since $S$ satisfies NUPBR$(\mathbb F)$ and $S^{(0)}\equiv  0$, then $\left(S, S^{(0}\right)\equiv\left(S, 0\right)$ satisfies NUPBR$(\mathbb F)$, and assertion (b) follows from assertion (a).\\
(c) It is easy to see that $\{\Delta S\not=  0\}\cap\{\widetilde Z=0<Z_{-}\}=\emptyset$ implies that $S^{(0)}\equiv  0$ (due to (\ref{psifm})). Hence, assertions (c) and (d) follow from assertion (b), and the proof of the corollary is completed.\end{proof}

\begin{remark}
It is worth mentioning that $X-Y$ may satisfy NUPBR$(\mathbb H)$,
while $(X,Y)$ may not satisfy  NUPBR$(\mathbb H)$. For a  non
trivial example,    consider $X_t=B_t+\lambda t$ and $Y_t=N_t$
where $B$ is a standard Brownian motion and $N$ is the Poisson
process with  intensity $\lambda$.\end{remark}

\noindent We now  give an answer to the second problem {\bf
{(P2)}} for the quasi-left-continuous semimartingales. Later on (in
Theorem \ref{main5}) we will generalize this result.
\begin{proposition}\label{main1}
The following assertions are equivalent:\\
{\rm{(a)}} The thin set $\left\{ \widetilde Z=0\ \&\ Z_{-}>0\right\}$ is accessible.\\
{\rm{(b)}} For any (bounded) $S$ that is $\mathbb
F$-quasi-left-continuous and satisfies NUPBR$(\mathbb F )$, the
process $ S^{\tau}$ satisfies NUPBR$(\mathbb G)$.
\end{proposition}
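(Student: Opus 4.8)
The plan is to route both implications through Theorem \ref{main2} and Corollary \ref{corollary1}, using the process $S^{(0)}$ as the bridge between the fine structure of $\tau$ (encoded in the optional thin set $\{\widetilde Z=0<Z_-\}\subset\Rbrack\sigma_\infty\Lbrack$) and the stability of NUPBR under stopping at $\tau$.

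\textbf{From (a) to (b).} Let $S$ be $\mathbb F$-quasi-left-continuous with NUPBR$(\mathbb F)$. I would first show $S^{(0)}\equiv 0$. Quasi-left-continuity means that the thin set $\{\Delta S\neq 0\}$ is exhausted by totally inaccessible $\mathbb F$-stopping times. By (\ref{psifm}), every jump time contributing to $S^{(0)}=xI_{\{\psi=0<Z_-\}}\star\mu$ lies, $M^P_\mu$-a.e., in $\{\widetilde Z=0<Z_-\}$, which (a) declares accessible. Since an accessible thin set and a totally inaccessible one meet only on an evanescent set, $\mu(\{\psi=0<Z_-\})\equiv 0$, hence $S^{(0)}\equiv 0$. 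Corollary \ref{corollary1}(b) then gives NUPBR$(\mathbb G)$ for $S^\tau$, with no boundedness assumption on $S$, which is the (formally stronger) assertion (b).

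\textbf{From (b) to (a), by contraposition.} Assume $\{\widetilde Z=0<Z_-\}$ is not accessible. Then its totally inaccessible component is non-evanescent, so there is a totally inaccessible $\mathbb F$-stopping time $T$ with $P(T<\infty)>0$ and $\Rbrack T\Lbrack\subset\{\widetilde Z=0<Z_-\}$. Put $\Lambda:=\bigl(I_{\Rbrack T,+\infty\Rbrack}\bigr)^{p,\mathbb F}$; it is continuous (as $T$ is totally inaccessible) and a.s.\ finite, so there is a predictable stopping time $R$ (e.g.\ $R=\inf\{t:\Lambda_t\geq n\}$ for large $n$) with $\Lambda^R$ bounded and $P(T\leq R)>0$; set $T^*:=T_{\{T\leq R\}}$, a totally inaccessible stopping time with $\Rbrack T^*\Lbrack\subset\{\widetilde Z=0<Z_-\}$. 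Then $S:=\bigl(I_{\Rbrack T,+\infty\Rbrack}-\Lambda\bigr)^{R}$ is a bounded $\mathbb F$-martingale, hence satisfies NUPBR$(\mathbb F)$, and its only jump (of size $1$) occurs at $T^*$ because $P(T=R)=0$ (totally inaccessible versus predictable); thus $S$ is $\mathbb F$-quasi-left-continuous and admissible for (b). It remains to show $S^\tau$ fails NUPBR$(\mathbb G)$. The jump measure $\mu$ of $S$ is carried by $\Rbrack T^*\Lbrack\times\{1\}$, on which $\widetilde Z=0$, so $I_{\{\widetilde Z>0\}}\star\mu\equiv 0$; plugging $\Sigma=\{\psi>0\}\in\widetilde{\cal P}(\mathbb F)$ into the defining identity of $\psi$ forces $\psi=0$ $M^P_\mu$-a.e., whence $S^{(0)}=xI_{\{Z_->0\}}\star\mu=I_{\Rbrack T^*,+\infty\Rbrack}$ (using $Z_{T^*-}>0$ on $\{T^*<\infty\}$). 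Therefore $S-S^{(0)}=-\Lambda^R$, a predictable process of finite variation, and by Lemma \ref{NUPBRforPredictableProcesses}, for each $\delta>0$ the process $I_{\{Z_-\geq\delta\}}\is\bigl(S-S^{(0)}\bigr)=-I_{\{Z_-\geq\delta\}}\is\Lambda^R$ satisfies NUPBR$(\mathbb F)$ iff it is constant, i.e.\ $I_{\{Z_-\geq\delta\}}\is\Lambda^R\equiv 0$. But $\Lambda^R$ is the $\mathbb F$-dual predictable projection of $I_{\Rbrack T^*,+\infty\Rbrack}$, so
$$
E\bigl[(I_{\{Z_-\geq\delta\}}\is\Lambda^R)_\infty\bigr]=E\bigl[I_{\{Z_{T^*-}\geq\delta\}}I_{\{T^*<\infty\}}\bigr]\ \longrightarrow\ P(T^*<\infty)>0\ \ \ (\delta\downarrow 0),
$$
since $Z_{T^*-}>0$ on $\{T^*<\infty\}$. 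Hence for small $\delta$ this process is non-constant, and Theorem \ref{main2} shows $S^\tau$ does not satisfy NUPBR$(\mathbb G)$, contradicting (b).

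\textbf{Main obstacle.} The delicate part is the second implication: building the counterexample with the right features (boundedness and quasi-left-continuity, obtained through the auxiliary predictable stopping time $R$) and, above all, computing $S^{(0)}$ for it — that is, establishing $\psi=0$ $M^P_\mu$-a.e.\ and hence the clean identity $S-S^{(0)}=-\Lambda^R$. Once that identity is available, recognizing $I_{\{Z_-\geq\delta\}}\is(S-S^{(0)})$ as a non-constant predictable finite-variation process (which rests on $Z_->0$ along $\Rbrack T\Lbrack$) and invoking Lemma \ref{NUPBRforPredictableProcesses} together with Theorem \ref{main2} closes the argument quickly.
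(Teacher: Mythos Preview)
Your proof is correct. For (a)$\Rightarrow$(b) you argue exactly as the paper does (the paper phrases it via Corollary~\ref{corollary1}(c), you via (b), but the underlying reason---jumps of a quasi-left-continuous process are totally inaccessible and cannot meet the accessible set $\{\widetilde Z=0<Z_-\}$---is identical).

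For (b)$\Rightarrow$(a) the two arguments diverge. The paper's route is shorter and stays in $\mathbb G$: given a stopping time $\sigma$ with $\Rbrack\sigma\Lbrack\subset\{\widetilde Z=0<Z_-\}$ and totally inaccessible part $\sigma^i$, it forms the quasi-left-continuous martingale $M=I_{\Rbrack\sigma^i,+\infty\Rbrack}-\widetilde V$, observes directly from (\ref{disjoint}) that $\tau<\sigma^i$ and hence $M^\tau=-\widetilde V^\tau$ is $\mathbb G$-predictable with finite variation, and applies Lemma~\ref{NUPBRforPredictableProcesses} \emph{in $\mathbb G$} to conclude $\widetilde V^\tau\equiv 0$, whence $E(Z_{\sigma^i-}I_{\{\sigma^i<\infty\}})=0$. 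Your route instead translates the question back to $\mathbb F$ via Theorem~\ref{main2}: you compute $S^{(0)}$ for your test martingale, identify $S-S^{(0)}=-\Lambda^R$, and apply Lemma~\ref{NUPBRforPredictableProcesses} \emph{in $\mathbb F$} to $I_{\{Z_-\geq\delta\}}\is(S-S^{(0)})$. The extra stopping at the predictable time $R$ that you introduce is a nice touch, as it yields a genuinely \emph{bounded} counterexample; the paper's $M$ is not obviously bounded, so your argument is slightly sharper with respect to the parenthetical ``(bounded)'' in the statement. On the other hand, your approach relies on Theorem~\ref{main2}, a substantially heavier result, whereas the paper's argument is self-contained given only Lemma~\ref{NUPBRforPredictableProcesses} and the elementary fact (\ref{disjoint}).
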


\begin{proof} The implication  (a)$\Rightarrow$(b) follows from Corollary \ref{corollary1}--(c), since we have $$\{\Delta S\not=0\}\cap\{\widetilde Z=0<Z_{-}\}=\emptyset.$$

We now  focus on proving the reverse implication. To this end, we
suppose that assertion (b) holds, and we consider an $\mathbb
F$-stopping time $\sigma$ such that $\Rbrack\sigma\Lbrack\subset
\{\widetilde Z=0<Z_{-}\}$. It is known that $\sigma$ can be
decomposed into a totally inaccessible part ${\sigma}^i$ and an
accessible part ${\sigma}^{a}$ such that $\sigma={\sigma}^i\wedge
{\sigma}^a$. Consider the quasi-left-continuous $\mathbb
F$-martingale
$$M=V-\widetilde V\in {\cal M}_{0,loc}(\mathbb F)$$ where $  V:=I_{\Rbrack {\sigma}^i,+\infty\Rbrack}\ \mbox{and} \ \ \widetilde V:=(V)^{p,\mathbb F}.$
It is known from  \cite[paragraph 14, Chapter XX]{DMM},  that
\begin{equation}\label{disjoint}\{\widetilde Z=0\} \mbox{~and~}
\{ Z_{-}=0\} \mbox{~are~ disjoint~ from~}  \Lbrack 0,\tau\Lbrack
\,.
\end{equation}  This implies that $\tau<\sigma\leq \sigma^i\ \
P-a.s.$. Hence, we get
\begin{equation}\label{predictableMtau}
M^{\tau}=- \widetilde V ^{\tau}\ \ \ \mbox{is $\mathbb G$-predictable}.\end{equation}
Since $M^{\tau}$ satisfies NUPBR$(\mathbb G)$, then we conclude that this process is null (i.e. $ \widetilde V ^{\tau}=0$) due to Lemma \ref{NUPBRforPredictableProcesses}. Thus, we get
$$
0=E\left(\widetilde V_{\tau}\right)=E\left(\int_0^{+\infty} Z_{s-} d\widetilde V_s\right)=E\left(Z_{{\sigma}^i-} I_{\{ \sigma^i<+\infty\}}\right),$$ or equivalently $Z_{{\sigma}^i-} I_{\{ \sigma^i<+\infty\}}=0\ \ \ P-a.s.$ This is possible only if $\sigma^i=+\infty\ \ P-a.s.$ since on $\{\sigma^i<+\infty\}\subset \{ \sigma=\sigma^i<+\infty\}$ we have $Z_{\sigma^i-}=Z_{\sigma-}>0$. This proves that $\sigma$ is an accessible stopping time. Since $\{\widetilde Z=0<Z_{-}\}$ is an optional thin set, assertion (a) follows immediately. This ends the proof of the proposition.

\end{proof}

\subsection{Thin Processes with Predictable Jump Times}
In this subsection, we  outline the main results on the NUPBR
condition for the stopped accessible parts of $\mathbb
F$-semimartingales with a random time. This boils down to consider
thin semimartingales with predictable jump times only. We start by
addressing question {\bf{(P1)}} in the case of single jump process
with predictable jump time.

\begin{theorem}\label{main3} Consider an $\mathbb F$-predictable stopping time $T$
and an ${\cal F}_T$-measurable random variable $\xi$
 such that $E(\vert \xi \vert\big|\ {\cal F}_{T-})<+\infty\ P-a.s.$.\\ If $S:= \xi  I_{\{ Z_{T-}>0\}}I_{\Rbrack T,+\infty\Rbrack}$, then the following assertions are equivalent:\\
{\rm{(a)}} $S^{\tau}$ satisfies NUPBR$(\mathbb G)$,\\
{\rm{(b)}} The process $\widetilde S:=\xi I_{\{ \widetilde Z_T>0\}}I_{\Rbrack T,+\infty\Rbrack}=
I_{\{ \widetilde Z>0\}}\centerdot S$ satisfies NUPBR$(\mathbb F)$.\\
{\rm{(c)}} There exists a probability measure on $(\Omega, {\cal F}_T)$, denoted by $Q_T$,
such that $Q_T$ is absolutely continuous with respect to $P$, and $S$ satisfies   NUPBR$(\mathbb F, Q_T)$.\\
\end{theorem}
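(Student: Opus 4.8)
First I would strip the statement down to a ``one--period'' question at the predictable time $T$. The process $S$ is a single--jump $\mathbb F$--semimartingale: it vanishes on $\Rbrack 0,T\Rbrack$ and $\Delta S_T=\zeta:=\xi\,\id_{\{Z_{T-}>0\}}$, which is $\mathcal F_T$--measurable and (by hypothesis) $\mathcal F_{T-}$--conditionally integrable. Since $T$ is $\mathbb F$--predictable it is a fortiori $\mathbb G$--predictable, and because $S^{\tau}$ is constant on $\Rbrack 0,T\Rbrack$ while the trace of $\Rbrack 0,\tau\Lbrack$ at $T$ is $\{T\le\tau\}$, the stopped process $S^{\tau}$ is the single--jump $\mathbb G$--semimartingale with $\Delta(S^{\tau})_T=\zeta\,\id_{\{T\le\tau\}}$ on $\{T<+\infty\}$. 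By Proposition \ref{charaterisationofNUPBRloc} together with Lemma \ref{LY}, a single--jump $\mathbb H$--semimartingale $X$ jumping at a predictable time $R$ satisfies NUPBR$(\mathbb H,Q)$ if and only if there is a strictly positive $\mathbb H$--local martingale $Y$ with $Y_R\Delta X_R$ conditionally $Q$--integrable and $E_Q[\,Y_R\Delta X_R\mid\mathcal H_{R-}\,]=0$ (the auxiliary $\theta\in(0,1]$ of Proposition \ref{charaterisationofNUPBRloc} being absorbed by Lemma \ref{LY}). Thus (a), (b), (c) become three statements about existence of strictly positive ``one--period deflators''.

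Next I would record the enlargement identities near $T$. Since $T$ is $\mathbb F$--predictable, $E[\widetilde Z_T\mid\mathcal F_{T-}]=Z_{T-}$ (both equal $P(\tau\ge T\mid\mathcal F_{T-})$); in particular $\{Z_{T-}=0\}\subseteq\{\widetilde Z_T=0\}$, i.e. $\{\widetilde Z_T>0\}\subseteq\{Z_{T-}>0\}$, which is exactly what makes $\xi\,\id_{\{\widetilde Z_T>0\}}=\zeta\,\id_{\{\widetilde Z_T>0\}}$ and so reconciles the two descriptions of $\widetilde S$ in (b). Consequently, for every $\mathcal F_T$--measurable $W\ge0$ one has $E[\,W\,\id_{\{\tau\ge T\}}\mid\mathcal F_{T-}\,]=E[\,W\,\widetilde Z_T\mid\mathcal F_{T-}\,]$. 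Finally, $\mathbb G$--predictable processes coincide with $\mathbb F$--predictable ones on $\Rbrack 0,\tau\Lbrack$, and $\{\tau\ge T\}\in\mathcal G_{T-}$ (since $D_-$ is $\mathbb G$--predictable), so the $\mathbb G$--one--period data attached to $S^{\tau}$ are, on $\{\tau\ge T\}$, carried by $\mathcal F$--measurable quantities.

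With these in hand I would prove $(a)\Leftrightarrow(b)$ as follows. Given a $\mathbb G$--deflator $Y>0$ for $S^{\tau}$, so $E[\,Y_T\,\xi\,\id_{\{\tau\ge T\}}\mid\mathcal G_{T-}\,]=0$, conditioning down to $\mathcal F_{T-}$ and pulling out the $\mathcal F_T$--measurable factor $\xi$ shows that $V:=E[\,Y_T\,\id_{\{\tau\ge T\}}\mid\mathcal F_T\,]$ satisfies $E[\,\xi V\mid\mathcal F_{T-}\,]=0$; strict positivity of $Y$ and the identity above force $V>0$ a.s. on $\{\widetilde Z_T>0\}$ and $V=0$ a.s. on $\{\widetilde Z_T=0\}$, so $\xi V=\xi\,\id_{\{\widetilde Z_T>0\}}V$, and replacing $V$ by $V+\id_{\{\widetilde Z_T=0\}}$ (which does not touch the support of $\widetilde S$) yields the strictly positive one--period deflator for $\widetilde S$. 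Conversely, from an $\mathbb F$--deflator $\widetilde V>0$ of $\widetilde S$ I would set, at $T$, the $\mathcal G_T$--measurable variable $Y_T:=\rho\,(\widetilde V/\widetilde Z_T)\,\id_{\{\tau\ge T\}}+\id_{\{\tau<T\}}$, with $\rho$ an $\mathcal F_{T-}$--measurable normaliser chosen (using $E[\widetilde Z_T\mid\mathcal F_{T-}]=Z_{T-}>0$ on $\{\tau\ge T\}$) so that $E[Y_T\mid\mathcal G_{T-}]=1$; extending $Y$ by $1$ on $\Rbrack 0,T\Rbrack$ and by $Y_T$ afterwards gives a strictly positive $\mathbb G$--local martingale deflating $S^{\tau}$. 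The equivalence with (c) is a change of measure on $\mathcal F_T$: for $(b)\Rightarrow(c)$ take $Q_T$ with bounded density proportional to $\id_{\{\widetilde Z_T>0\}\cup\{Z_{T-}=0\}}$ (a set of positive $P$--measure, because $\widetilde Z_T=0$ a.s. would give $Z_{T-}=E[\widetilde Z_T\mid\mathcal F_{T-}]=0$ a.s.), on which $S$ and $\widetilde S$ agree, and use that boundedness in probability is stable under a bounded absolutely continuous change; for $(c)\Rightarrow(b)$, multiply the $Q_T$--deflator of $S$ by $dQ_T/dP$ and repair it where the density vanishes, again via the identities above, to recover the $P$--deflator of $\widetilde S$.

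The step I expect to be the real obstacle is the $\mathbb G\to\mathbb F$ passage inside $(a)\Rightarrow(b)$: one must show that the $\mathbb F$--optional projection $V$ of the $\mathbb G$--deflator is strictly positive on \emph{all} of $\{\widetilde Z_T>0\}$ (not merely on some smaller predictable set), so that it genuinely deflates $\widetilde S$ and not a further truncation of it; this is precisely where the interplay between $\tau$ and the supermartingales $Z,\widetilde Z$ (through $E[\widetilde Z_T\mid\mathcal F_{T-}]=Z_{T-}$ and, for the geometry of the sets involved, through (\ref{disjoint})) is used essentially. A persistent secondary nuisance is the conditional--integrability bookkeeping needed to turn the ``zero conditional mean'' conditions back into bona fide local--martingale deflators: this is what the hypothesis $E(|\xi|\mid\mathcal F_{T-})<+\infty$ and the process $\theta$ of Proposition \ref{charaterisationofNUPBRloc} are there for.
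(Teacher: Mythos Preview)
Your reduction to a one--period deflator problem at the predictable time $T$ is correct and is exactly the paper's viewpoint. The two proofs differ mainly in packaging. For $(a)\Rightarrow(b)$ you take the $\mathbb F$--projection $V=E[Y_T\id_{\{\tau\ge T\}}\mid\mathcal F_T]$; the paper instead invokes the structural Lemma~\ref{lemma:predsetFG} to produce a strictly positive $\mathcal F_T$--measurable $Y^{\mathbb F}$ with $Y^{\mathbb G}\id_{\{T\le\tau\}}=Y^{\mathbb F}\id_{\{T\le\tau\}}$, and then conditions on $\mathcal F_{T-}$. Your ``real obstacle'' is in fact harmless: since $Y_T>0$ and $\widetilde Z_T=P(\tau\ge T\mid\mathcal F_T)$, one has $V>0$ exactly on $\{\widetilde Z_T>0\}$ and $V=0$ on its complement, which is precisely what you need. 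For $(b)\Rightarrow(a)$ your construction $Y_T=\rho(\widetilde V/\widetilde Z_T)\id_{\{\tau\ge T\}}+\id_{\{\tau<T\}}$ is the content of the paper's Proposition~\ref{cruciallemma1}, where the key density $U^{\mathbb G}(T)=\id_{\{T>\tau\}}+\id_{\{T\le\tau\}}Z_{T-}/\widetilde Z_T$ is isolated and shown to turn $\mathbb F$--martingales satisfying~(\ref{zeroequationbeforetau}) into $\mathbb G$--martingales after stopping at $\tau$.

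The one place where the paper is more explicit is the link with (c). Rather than arguing $(c)\Leftrightarrow(b)$ for an arbitrary $Q_T\ll P$, the paper introduces two concrete measures, $\widetilde Q_T$ with density $\widetilde Z_T/Z_{T-}$ on $\{Z_{T-}>0\}$ and $Q_T$ with density proportional to $\id_{\{\widetilde Z_T>0\}}/P(\widetilde Z_T>0\mid\mathcal F_{T-})$ on $\{Z_{T-}>0\}$, and proves the longer chain $(b)\Rightarrow(a)\Rightarrow(d)\Leftrightarrow(e)\Leftrightarrow(b)$ (Theorem~\ref{main3bis}); assertion (c) then drops out as the existence statement witnessed by either of these. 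Your sketch ``multiply the $Q_T$--deflator by $dQ_T/dP$ and repair where the density vanishes'' for $(c)\Rightarrow(b)$ is the only genuinely incomplete step: for a generic $Q_T\ll P$ it is not clear why the repaired product annihilates $\xi\id_{\{\widetilde Z_T>0\}}$ rather than $\xi\id_{\{Z_{T-}>0\}}$ in $\mathcal F_{T-}$--conditional mean, and the paper sidesteps this by working only with the specific $Q_T$'s above.
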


\noindent The proof of this theorem is long and requires intermediary results that are
interesting in themselves. Thus, this proof will be given later in Section  \ref{proofs}.\\

\begin{remark} \label{remark214}1) The importance of Theorem \ref{main3} goes beyond its vital role,
as a building block for  the more general result. In fact,  Theorem \ref{main3} provides two different characterizations for   NUPBR$\left(\mathbb G\right)$ of $S^{\tau}$. The first characterization is expressed in term of   NUPBR$\left(\mathbb F\right)$ of $S$ under absolute continuous change of measure, while the second characterization uses transformation of $S$ without any change of measure. Furthermore, Theorem \ref{main3} can be easily extended to the case of countably many ordered predictable jump times $T_0=0\leq T_1\leq T_2\leq...$ with $\sup_n T_n=+\infty\ P-a.s.$.\\
2) In Theorem \ref{main3}, the choice of $S$ having the form
$S:= \xi I_{\{ Z_{T-}>0\}}I_{\Rbrack T,+\infty\Rbrack}$ is not
restrictive. This can be understood from the fact that any single
jump process $S$ can be decomposed as follows
$$S:= \xi  I_{\Rbrack T,+\infty\Rbrack}= \xi  I_{\{ Z_{T-}>0\}}I_{\Rbrack T,+\infty\Rbrack}+ \xi  I_{\{ Z_{T-}=0\}}I_{\Rbrack T,+\infty\Rbrack}
=:{\overline S}+{\widehat S}.$$ Thanks to $\{T\leq\tau\}\subset\{
Z_{T-}>0\}$, we have ${\widehat S}^\tau=\xi  I_{\{
Z_{T-}=0\}}I_{\{T\leq\tau\}}I_{\Rbrack T,+\infty\Rbrack}\equiv 0$
is (obviously) a $\mathbb G$-martingale. Thus, the only part of
$S$ that requires careful attention is ${\overline S}:=\xi I_{\{
Z_{T-}>0\}}I_{\Rbrack T,+\infty\Rbrack}$.
\end{remark}

\noindent The following result is a complete answer to {\bf
{(P2)}} in the case of predictable single jump processes.

\begin{proposition}\label{corollaryofmain3}
Let $T$ be an $\mathbb F$-predictable stopping time. Then, the following assertions are equivalent:\\
{\rm{(a)}} On $\left\{ T<+\infty\right\}$, we have
\begin{equation}\label{equation1111}
 \left\{ \widetilde Z_T=0\right\}\subset\displaystyle\Bigl\{ Z_{T-}=0\Bigr\}.\end{equation}
{\rm{(b)}} For any $M:=\xi I_{\Rbrack T,+\infty\Rbrack}$ where $\xi\in L^{\infty}({\cal F}_T)$ such that $E(\xi|{\cal F}_{T-})=0$, $M^{\tau}$ satisfies NUPBR$(\mathbb G)$.
\end{proposition}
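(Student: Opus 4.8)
The plan is to prove Proposition \ref{corollaryofmain3} by invoking Theorem \ref{main3} (or rather Remark \ref{remark214}(2)) to reduce $M^\tau$ to its ``meaningful'' part, and then to exploit Lemma \ref{NUPBRforPredictableProcesses}, exactly as in the proof of Proposition \ref{main1}, to force a rigidity condition.

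\textbf{Proof of (a)$\Rightarrow$(b).} Fix $M=\xi I_{\Rbrack T,+\infty\Rbrack}$ with $\xi\in L^\infty({\cal F}_T)$ and $E(\xi|{\cal F}_{T-})=0$. As in Remark \ref{remark214}(2), decompose $M=\overline M+\widehat M$ with $\overline M=\xi I_{\{Z_{T-}>0\}}I_{\Rbrack T,+\infty\Rbrack}$ and $\widehat M=\xi I_{\{Z_{T-}=0\}}I_{\Rbrack T,+\infty\Rbrack}$; since $\{T\le\tau\}\subset\{Z_{T-}>0\}$, one has $\widehat M^\tau\equiv 0$, so it suffices to show $\overline M^\tau$ satisfies NUPBR$(\mathbb G)$. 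Applying Theorem \ref{main3} to $S:=\overline M$ (which has the required form $\xi I_{\{Z_{T-}>0\}}I_{\Rbrack T,+\infty\Rbrack}$), it is enough to check that $\widetilde S:=\xi I_{\{\widetilde Z_T>0\}}I_{\Rbrack T,+\infty\Rbrack}$ satisfies NUPBR$(\mathbb F)$. Now the key point is that condition (a), i.e. $\{\widetilde Z_T=0\}\subset\{Z_{T-}=0\}$ on $\{T<+\infty\}$, gives $\{\widetilde Z_T>0\}\cap\{Z_{T-}>0\}=\{Z_{T-}>0\}$ on $\{T<\infty\}$, hence $\widetilde S=\xi I_{\{\widetilde Z_T>0\}}I_{\{Z_{T-}>0\}}I_{\Rbrack T,+\infty\Rbrack}+\xi I_{\{\widetilde Z_T>0\}}I_{\{Z_{T-}=0\}}I_{\Rbrack T,+\infty\Rbrack}$, and the second summand vanishes by (a). So $\widetilde S=\overline M$, which is an $\mathbb F$-martingale (since $T$ is predictable and $E(\xi|{\cal F}_{T-})=0$ makes $\xi I_{\{Z_{T-}>0\}}$ still centered given ${\cal F}_{T-}$), and therefore satisfies NUPBR$(\mathbb F)$ trivially. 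This proves (b).

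\textbf{Proof of (b)$\Rightarrow$(a).} Argue by contradiction and in the spirit of the converse in Proposition \ref{main1}. Suppose (a) fails, so the ${\cal F}_T$-measurable set $A:=\{T<+\infty\}\cap\{\widetilde Z_T=0\}\cap\{Z_{T-}>0\}$ has positive probability. Since $T$ is $\mathbb F$-predictable and $A\in{\cal F}_T$, one builds a suitable bounded ${\cal F}_T$-measurable $\xi$, centered given ${\cal F}_{T-}$, ``concentrated'' on $A$: the natural choice is $\xi:=I_A-{}^{p,\mathbb F}(I_A)_T=I_A-E(I_A|{\cal F}_{T-})$, so that $M=\xi I_{\Rbrack T,+\infty\Rbrack}$ is a bounded $\mathbb F$-martingale with $E(\xi|{\cal F}_{T-})=0$. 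On the set $\{T\le\tau\}$ one has $\widetilde Z_T>0$ (from \eqref{disjoint}, $\{\widetilde Z=0\}$ is disjoint from $\Lbrack 0,\tau\Lbrack$, hence $\tau<T$ on $\{\widetilde Z_T=0,T<\infty\}$), so $A\cap\{T\le\tau\}=\emptyset$, i.e. $I_A I_{\{T\le\tau\}}=0$. Consequently $M^\tau=(\xi I_{\Rbrack T,+\infty\Rbrack})^\tau=-E(I_A|{\cal F}_{T-})I_{\{T\le\tau\}}I_{\Rbrack T,+\infty\Rbrack}$, which is a $\mathbb G$-predictable process of finite variation (as in \eqref{predictableMtau}, since $E(I_A|{\cal F}_{T-})I_{\Rbrack T,+\infty\Rbrack}$ is $\mathbb F$-predictable and stopping it at $\tau$ keeps it $\mathbb G$-predictable). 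By hypothesis (b), $M^\tau$ satisfies NUPBR$(\mathbb G)$, so Lemma \ref{NUPBRforPredictableProcesses} forces $M^\tau\equiv 0$, i.e. $E(I_A|{\cal F}_{T-})I_{\{T\le\tau\}}=0$ $P$-a.s. Taking expectations and using the dual predictable projection, $0=E\big(E(I_A|{\cal F}_{T-})I_{\{T\le\tau\}}\big)=E\big(Z_{T-}I_A I_{\{T<\infty\}}\big)$ after conditioning (here $I_{\{T\le\tau\}}$ has $\mathbb F$-predictable projection $Z_{T-}$ at the predictable time $T$, and $A\in{\cal F}_T$ forces passing to ${\cal F}_{T-}$ via a further projection, which is the place to be careful). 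Since $Z_{T-}>0$ on $A$, this yields $P(A)=0$, the desired contradiction. Hence (a) holds.

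\textbf{Main obstacle.} The delicate point is the projection bookkeeping in the converse: $A=\{\widetilde Z_T=0<Z_{T-}\}$ is ${\cal F}_T$-measurable but not ${\cal F}_{T-}$-measurable, so one must carefully combine the optional projection $\,^{o,\mathbb F}(I_{\Lbrack0,\tau\Rbrack})=Z$ (to read off $E(I_{\{T\le\tau\}}|{\cal F}_T)=\widetilde Z_T$ at a predictable $T$) with the predictable projection at $T$ (to get $Z_{T-}$), and to verify that the surviving $\mathbb G$-stopped process is genuinely $\mathbb G$-predictable so that Lemma \ref{NUPBRforPredictableProcesses} applies. One also has to confirm that the specific $\xi$ chosen indeed satisfies $E(\xi|{\cal F}_{T-})=0$ and is bounded; the choice $\xi=I_A-E(I_A|{\cal F}_{T-})$ handles this, but one should double-check that no additional jump of $S$ off $A$ spoils the vanishing of $\widetilde S$ in the direct implication — this is exactly where assumption (a) is used and where the argument would break without it.
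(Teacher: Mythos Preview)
Your proof is correct and follows the same route as the paper's: Theorem \ref{main3} (via Remark \ref{remark214}(2)) for (a)$\Rightarrow$(b), and for (b)$\Rightarrow$(a) a bounded centered test variable supported on $\{\widetilde Z_T=0\}$ so that $M^\tau$ becomes $\mathbb G$-predictable with finite variation and is forced to vanish by Lemma \ref{NUPBRforPredictableProcesses}. The only cosmetic difference is the choice of test variable---the paper takes $\xi=I_{\{\widetilde Z_T=0\}}-P(\widetilde Z_T=0\mid{\cal F}_{T-})$ rather than your $\xi=I_A-E(I_A\mid{\cal F}_{T-})$ with $A=\{\widetilde Z_T=0<Z_{T-},\,T<\infty\}$---and one harmless slip: the second summand $\xi I_{\{\widetilde Z_T>0,\,Z_{T-}=0\}}I_{\Rbrack T,+\infty\Rbrack}$ vanishes unconditionally (since $E[\widetilde Z_T\mid{\cal F}_{T-}]=Z_{T-}$), not ``by (a)''.
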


\begin{proof} We start by  proving ${\rm{(a)}} \Rightarrow {\rm{(b)}} $.
Suppose that (\ref{equation1111}) holds; due to the above remark
\ref{remark214}2), we can restrict our attention to the case
  $M:=\xi I_{\{ Z_{T-}>0\}}I_{\Rbrack T,+\infty\Rbrack}$ where $\xi\in L^{\infty}({\cal F}_T)$ such that $E(\xi|{\cal F}_{T-})=0$. Since assertion (a) is equivalent to $\Rbrack T\Lbrack\cap\{\widetilde Z=0\ \&\ Z_{-}>0\}=\emptyset$, we deduce that
  $$\widetilde M:=\xi I_{\{ {\widetilde Z}_{T}>0\}}I_{\{ Z_{T-}>0\}}I_{\Rbrack T,+\infty\Rbrack}=M\ \ \ \ \ \mbox{is an $\mathbb F$-martingale}.$$
  Therefore, a direct application of Theorem \ref{main3} (to $M$)  allows us to conclude that $M^{\tau}$ satisfies the NUPBR$(\mathbb G)$.
  This ends the proof of (a)$\Rightarrow $ (b). To prove the reverse implication, we suppose that assertion (b) holds and consider
$$
M:=\xi I_{\Rbrack T,+\infty\Rbrack},\ \ \ \ \mbox{where}\ \ \xi:=I_{\{ \widetilde Z_T=0\}}-P( \widetilde Z_T=0|{\cal F}_{T-}).$$
From (\ref{disjoint}), we obtain $\{T\leq \tau\}\subset\{\widetilde Z_T>0\}\subset\{Z_{T-}>0\}$ which implies that
$$
M^{\tau}=-P( \widetilde Z_T=0|{\cal F}_{T-})I_{\{ T\leq\tau\}}I_{\Rbrack T,+\infty\Rbrack}\ \ \mbox{is $\mathbb G$-predictable}.$$
Therefore, $M^{\tau}$ satisfies NUPBR$(\mathbb G)$ if and only if it is a constant process equal to $M_0=0$ (see Lemma \ref{NUPBRforPredictableProcesses}). This is equivalent to
$$
0=E\Bigl[P( \widetilde Z_T=0|{\cal F}_{T-})I_{\{ T\leq\tau\}}I_{\Rbrack T,+\infty\Rbrack}\Bigr]=E\left(Z_{T-}I_{\{ \widetilde Z_T=0\ \&\ T<+\infty\}}\right).$$ It is obvious that this equality is equivalent to (\ref{equation1111}), and assertion (a) follows. This ends the proof of the theorem.
\end{proof}

\noindent We now state the following version of Theorem
\ref{main3}, which provides, as already said, an answer to
{\bf{(P1)}} in  the case where there are countable many arbitrary
predictable jumps. The proof of this theorem will be given in
Subsection \ref{proofmain3}.

\begin{theorem}\label{main4}
Let $S$ be a thin process with predictable jump times only and satisfying NUPBR$(\mathbb F)$.
Then, the following assertions are equivalent.\\
{\rm{(a)}}  The process $S^{\tau}$ satisfies NUPBR$(\mathbb G)$.\\
{\rm{(b)}}  For any $\delta>0$, there exists a positive $\mathbb
F$-local martingale, $Y$, such that $\ ^{p,\mathbb F}\left(Y\vert \Delta S\vert\right)<+\infty$ and
\begin{equation}\label{mainassumptionbeforetau}
 ^{p,\mathbb F}\left(Y{\Delta S} I_{\{ \widetilde Z>0\ \&\ Z_{-}\geq\delta\}}\right)=0.\end{equation}
\end{theorem}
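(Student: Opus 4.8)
The plan is to reduce the statement about a countable family of predictable jumps to the single-jump case already handled in Theorem \ref{main3}, and then to repackage the conclusion of Theorem \ref{main3} in the analytic form (\ref{mainassumptionbeforetau}). Since $S$ is thin with predictable jump times, write $S = \sum_{n\ge 1} \Delta S_{T_n} I_{\Rbrack T_n,+\infty\Rbrack}$ where $(T_n)$ is a sequence of $\mathbb F$-predictable stopping times with disjoint graphs. By Remark \ref{remark214}-2), each single-jump block may be replaced, without affecting the $\mathbb G$-stopped process, by its truncation on $\{Z_{T_n-}>0\}$; so I may assume $\Delta S_{T_n} = \xi_n I_{\{Z_{T_n-}>0\}}$ with $\xi_n$ being ${\cal F}_{T_n}$-measurable. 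The first step is to show that NUPBR$(\mathbb G)$ of $S^\tau$ is equivalent to NUPBR$(\mathbb G)$ of each $(\xi_n I_{\{Z_{T_n-}>0\}} I_{\Rbrack T_n,+\infty\Rbrack})^\tau$ plus a uniform control allowing the pieces to be glued; the gluing uses Proposition \ref{NUPBRLocalization} (localization along the stopping times $R_k := \inf\{t : \sum_{n\le N(t)}\cdots\}$, or more simply along an exhausting sequence of the thin set) in exactly the spirit of the proof of Proposition \ref{charaterisationofNUPBRloc}, where $\sigma$-martingale densities built on the blocks $\Rbrack T_{n-1},T_n\Lbrack$ are concatenated via a stochastic exponential.

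Once the reduction is in place, I would invoke Theorem \ref{main3}: $S^\tau$ satisfies NUPBR$(\mathbb G)$ iff for every $n$ the transformed process $\widetilde S^{(n)} := \xi_n I_{\{\widetilde Z_{T_n}>0\}} I_{\{Z_{T_n-}>0\}} I_{\Rbrack T_n,+\infty\Rbrack} = I_{\{\widetilde Z>0\}}\is S^{(n)}$ satisfies NUPBR$(\mathbb F)$. Summing over $n$ (again with the localization argument of Proposition \ref{NUPBRLocalization}) this is equivalent to $I_{\{\widetilde Z>0\}}\is S$ satisfying NUPBR$(\mathbb F)$, where the restriction to $\{Z_{-}>0\}$ is free because $\{\Delta S\ne 0\}\subset\bigcup_n\Rbrack T_n\Lbrack$ and we truncated there. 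Now apply Proposition \ref{charaterisationofNUPBRloc} to $I_{\{\widetilde Z>0\}}\is S$: NUPBR$(\mathbb F)$ holds iff there is a positive $\mathbb F$-local martingale $Y$ and predictable $\theta$, $0<\theta\le 1$, with $Y\bigl(\theta I_{\{\widetilde Z>0\}}\is S\bigr)\in{\cal M}_{loc}(\mathbb F)$. For a thin process with predictable jumps, "$Y(\theta I_{\{\widetilde Z>0\}}\is S)$ is a local martingale'' is equivalent, via Lemma \ref{LY} and the predictable-projection characterization of martingales among thin processes, to $^{p,\mathbb F}\bigl(Y\theta\,\Delta S\,I_{\{\widetilde Z>0\}}\bigr)=0$ together with $^{p,\mathbb F}\bigl(Y\theta|\Delta S|I_{\{\widetilde Z>0\}}\bigr)<\infty$; absorbing the strictly positive bounded predictable factor $\theta$ into $Y$ (replace $Y$ by the local martingale $Y$ and note the predictable projection is $\theta$ times the relevant quantity) and then localizing on the predictable sets $\{Z_{-}\ge\delta\}$ to get the stated $\delta$-dependent form yields exactly condition (b), $^{p,\mathbb F}\bigl(Y\Delta S\, I_{\{\widetilde Z>0\ \&\ Z_{-}\ge\delta\}}\bigr)=0$ with $^{p,\mathbb F}(Y|\Delta S|)<\infty$. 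Conversely, given such a $Y$ for every $\delta$, a patching argument over $\delta=1/k$ (localizing at $\sigma_k := \inf\{t: Z_{t-}<1/k\}$ as in Theorem \ref{main2}) reconstructs a genuine $\sigma$-martingale density for $I_{\{\widetilde Z>0\}}\is S$, giving NUPBR$(\mathbb F)$ and hence (a).

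The main obstacle I anticipate is the passage between the "global'' object $S^\tau$ (respectively $I_{\{\widetilde Z>0\}}\is S$) and the countable family of single-jump constituents: one must verify that the $\sigma$-martingale densities produced blockwise by Theorem \ref{main3} can be concatenated into a single positive $\mathbb F$-local martingale, which requires the disjointness of the graphs $\Rbrack T_n\Lbrack$, the finiteness statement $^{p,\mathbb F}(Y|\Delta S|)<\infty$ to make the concatenated stochastic exponential well defined and strictly positive, and a careful use of the localization equivalence in Proposition \ref{NUPBRLocalization} together with the stability of NUPBR under equivalent measure change. A secondary technical point is handling the sets $\{\widetilde Z=0\}$ and $\{Z_{-}=0\}$: by (\ref{disjoint}) these meet $\Lbrack 0,\tau\Lbrack$ trivially, so the jumps of $S$ there disappear after stopping at $\tau$, which is precisely what lets the indicator $I_{\{\widetilde Z>0\}}$ appear on the $\mathbb F$-side; making this identification rigorous jump-by-jump, exactly as in Remark \ref{remark214}-2), is where the proof's bookkeeping concentrates.
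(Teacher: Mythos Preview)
Your reduction-to-single-jumps strategy is natural, but the concatenation step you flag as ``the main obstacle'' is in fact the crux, and your proposal does not resolve it. The paper opens Section~\ref{proofmain4} by making exactly this point: for predictable jump times that are not ordered, the single-jump $\mathbb G$-densities $U^{\mathbb G}(T_n)/E(U^{\mathbb G}(T_n)\mid{\cal G}_{T_n-})$ do \emph{not} multiply to $1+\Delta L_{T_n}$, so one cannot patch them into a positive $\mathbb G$-local martingale density via a stochastic exponential as in Proposition~\ref{charaterisationofNUPBRloc}. Your sketch assumes this patching works ``in the spirit of'' that proposition, but there the blocks are ordered intervals $\Rbrack n-1,n\Lbrack$; here the graphs $\Rbrack T_n\Lbrack$ are merely disjoint, and the resulting product need not be a local martingale at all.

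The paper circumvents this by changing the target object: for (b)$\Rightarrow$(a) it does not build a $\sigma$-martingale density but a \emph{supermartingale deflator}. Using Proposition~\ref{cruciallemma1} (not Theorem~\ref{main3}) it shows that each $(1+\Delta L_{T_n}-\Delta V^{\mathbb G}_{T_n})\Delta S_{T_n}I_{\{T_n\le\tau\}}I_{\Rbrack T_n,+\infty\Rbrack}$ is a $\mathbb G$-martingale, assembles these via Yor's exponential formula into ${\cal E}(X)$ with $X=I_{\Gamma}\is L - I_{\Gamma}\is V^{\mathbb G}+\cdots$, proves ${\cal E}(X)$ is a positive $\sigma$-supermartingale using a carefully designed predictable $\phi$, and then invokes Kallsen's result that positive $\sigma$-supermartingales are supermartingales together with Lemma~\ref{DeflatorNUPBR} (the Rokhlin-type equivalence of deflators and NUPBR). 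For (a)$\Rightarrow$(b) the paper does not pass through Theorem~\ref{main3} either: it works directly with the random-measure representation of the $\mathbb G$-density, pulls back to $\mathbb F$ via Lemma~\ref{lemma:predsetFG}, and constructs $Y^0={\cal E}(N^0)$ explicitly from a functional $g$ built out of $f$, $f_m$, and $h_0=M^P_\mu(I_{\{\widetilde Z>0\}}\mid\widetilde{\cal P})$, with the bulk of the work being the integrability estimate $\sqrt{(g-1)^2\star\mu_0}\in{\cal A}^+_{loc}(\mathbb F)$. Your route via ``$I_{\{\widetilde Z>0\}}\is S$ satisfies NUPBR$(\mathbb F)$'' would need a separate argument that this global statement follows from the blockwise conclusions of Theorem~\ref{main3}, and that argument is not supplied.
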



\begin{remark}\label{remarkformmain4}
1) Suppose that $S$ is a thin process with predictable jumps only, satisfying NUPBR$(\mathbb F)$,
 and that $\{\widetilde Z=0\ \&\ Z_{-}>0\}\cap\{\Delta S\not=0\}=\emptyset$ holds. Then, S$^{\tau}$ satisfies NUPBR$(\mathbb G)$. This follows immediately from Theorem \ref{main4} by using $Y\in {\cal L}(S,\mathbb F)$ and Lemma \ref{LY}.\\
2) Similarly to Proposition \ref{main1}, we can easily prove that
the thin set $\{\widetilde Z=0\ \&\ Z_{-}>0\}$ is totally
inaccessible if and only if $X^{\tau}$ satisfies NUPBR$(\mathbb
G)$ for any thin process $X$ with predictable jumps only
satisfying NUPBR$(\mathbb F)$.
\end{remark}


\subsection{The General Framework}
Throughout the paper, with any $\mathbb H$-semimartingale, $X$, we
associate a sequence of $(\mathbb H)$-predictable stopping times
$(T^X_n)_{n\geq 1}$  that exhaust the accessible jump times of
$X$. Furthermore, we can decompose $X$ as follows.

\begin{equation}\label{Sdecomposition}
X=X^{(qc)}+X^{(a)},\ X^{(a)}:=I_{\Gamma_X}\is X,\
X^{(qc)}:=X-X^{(a)},\ \Gamma_X:=\bigcup_{n=1}^{\infty} \Rbrack
T_n^X\Lbrack.\end{equation} The process $X^{(a)}$ (the accessible
part of $X$) is a thin process with  predictable jumps only,
 while $X^{(qc)}$ is a $\mathbb H$-quasi-left-continuous process (the quasi-left-continuous part
 of $X$).

\begin{lemma}\label{NUPBRdecomposed}
 Let $X$ be an $\mathbb{H}$-semimartingale. Then $X$ satisfies NUPBR$(\mathbb{H})$ if and only
 if $X^{(a)}$ and $X^{(qc)}$ satisfy NUPBR$(\mathbb{H})$.
\end{lemma}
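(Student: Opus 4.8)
The plan is to prove the equivalence via two implications, exploiting the fact that the decomposition $X = X^{(qc)} + X^{(a)}$ is a \emph{sum of two semimartingales with disjoint jump supports} (the accessible part jumps only on the thin set $\Gamma_X$, the quasi-left-continuous part has no jumps on accessible stopping times). The easy direction is that if both $X^{(qc)}$ and $X^{(a)}$ satisfy NUPBR$(\mathbb H)$, then so does $X = X^{(qc)}+X^{(a)}$: this is essentially a concatenation-of-deflators argument in the spirit of the proof of Proposition \ref{charaterisationofNUPBRloc}. First I would invoke Proposition \ref{charaterisationofNUPBRloc} to get positive $\mathbb H$-local martingales $Y^{(1)}, Y^{(2)}$ and predictable $\theta_1,\theta_2$ with $0<\theta_i\le 1$ such that $Y^{(1)}(\theta_1\is X^{(qc)})$ and $Y^{(2)}(\theta_2\is X^{(a)})$ are local martingales; then I would set $\theta:=\theta_1\theta_2$ (still predictable, $0<\theta\le 1$) and build a single deflator $Y$ for $\theta\is X$ by a suitable multiplicative/stochastic-exponential combination of $Y^{(1)}$ and $Y^{(2)}$, using that the jump supports of $X^{(qc)}$ and $X^{(a)}$ are disjoint so that the compensation conditions of Lemma \ref{LY} for the two pieces do not interfere. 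This yields NUPBR$(\mathbb H)$ for $X$.

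The harder direction is: if $X$ satisfies NUPBR$(\mathbb H)$, then both summands do. Here I would again start from Proposition \ref{charaterisationofNUPBRloc}, obtaining $Y\in\mathcal M_{loc}(\mathbb H)$, $Y>0$, and predictable $\theta$, $0<\theta\le 1$, with $Y(\theta\is X)\in\mathcal M_{loc}(\mathbb H)$. The point is that $\theta\is X = \theta\is X^{(qc)} + \theta\is X^{(a)}$, and I want to argue that \emph{the same} $Y$ (or a localized/renormalized version of it) serves as a deflator for each piece separately. This is where the disjoint-jump structure is essential: by Lemma \ref{LY} we have $^{p,\mathbb H}(Y\,\Delta X)=0$ and $^{p,\mathbb H}(Y|\Delta X|)<\infty$; restricting to the thin predictable set $\Gamma_X$ and to its complement respectively, one gets $^{p,\mathbb H}(Y\,\Delta X^{(a)})=0$ and $^{p,\mathbb H}(Y\,\Delta X^{(qc)})=0$. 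I would then use the standard semimartingale decomposition of $Y(\theta\is X)$ into a local martingale part plus a predictable finite-variation part, and show that the finite-variation part vanishes separately on $\Gamma_X$ and off it; equivalently, that $Y(\theta\is X^{(a)})$ and $Y(\theta\is X^{(qc)})$ are each local martingales, using that $Y(\theta\is X^{(a)}) = I_{\Gamma_X}\is\bigl(Y(\theta\is X)\bigr) + (\text{drift that must cancel by the projection identity})$ --- more precisely one writes the product formula $d(Y(\theta\is X)) = Y_-\theta\,dX + (\theta\is X)_-\,dY + d[Y,\theta\is X]$ and splits each term according to whether the time lies in $\Gamma_X$, using that $\Gamma_X$ is predictable so integration against $I_{\Gamma_X}$ preserves the local-martingale property of the martingale terms and the predictable-projection identity forces the compensator of the remaining drift on $\Gamma_X$ to be zero.

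The main obstacle I anticipate is precisely this splitting argument for the continuous-time finite-variation (drift) part: one must be careful that although $Y(\theta\is X)$ is a local martingale, neither $Y(\theta\is X^{(qc)})$ nor $Y(\theta\is X^{(a)})$ is obtained by a naive projection, because $Y$ and $(\theta\is X)_-$ do not themselves decompose along $\Gamma_X$. The cleanest route is probably to avoid manipulating $Y(\theta\is X)$ directly and instead argue at the level of $\sigma$-martingale densities / local martingale deflators for each integrand separately: since $I_{\Gamma_X}$ and $I_{\Gamma_X^c}$ are predictable processes bounded by $1$ (well, $0/1$-valued), and since by Proposition \ref{NUPBRLocalization}(c) multiplying the integrand of a NUPBR semimartingale by such a predictable $0<\phi\le 1$ — after a harmless strict-positivity modification, e.g. replacing $I_{\Gamma_X}$ by $\tfrac12(1+I_{\Gamma_X})$ — preserves NUPBR, one concludes that $\tfrac12(1+I_{\Gamma_X})\is X$ satisfies NUPBR$(\mathbb H)$, and then recovers $X^{(a)}=I_{\Gamma_X}\is X$ via Lemma \ref{NUPBRforPredictableProcesses}-type bookkeeping on the difference, or more directly by noting that the pair $(X^{(qc)},X^{(a)})$ admits a common deflator. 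I would present the argument in this second form, as it reuses the machinery already established in the excerpt (Propositions \ref{charaterisationofNUPBRloc} and \ref{NUPBRLocalization}, Lemma \ref{LY}) rather than redoing stochastic calculus by hand, and it makes the disjoint-jump-support hypothesis enter transparently through Lemma \ref{LY}.
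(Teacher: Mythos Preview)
You have the difficulty of the two directions reversed, and this leads you to overcomplicate the argument.

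\textbf{On $X\Rightarrow$ both pieces.} The paper treats this as the trivial direction, and it is. Take $Y>0$ and $0<\theta\le 1$ predictable with $Y(\theta\is X)\in{\cal M}_{loc}(\mathbb H)$, as furnished by Proposition~\ref{charaterisationofNUPBRloc}. Since $\Gamma_X$ is \emph{predictable}, a one-line integration-by-parts shows that $Y(\theta I_{\Gamma_X}\is X)$ and $Y(\theta I_{\Gamma_X^c}\is X)$ are again local martingales: writing $V:=\theta\is X$, one has $d(YV)-V_{-}\,dY=Y_{-}\,dV+d[Y,V]\in d{\cal M}_{loc}$, and multiplying by the bounded predictable $I_{\Gamma_X}$ preserves this; then $d(Y(I_{\Gamma_X}\is V))=(I_{\Gamma_X}\is V)_{-}\,dY+I_{\Gamma_X}(Y_{-}\,dV+d[Y,V])\in d{\cal M}_{loc}$. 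Thus the \emph{same} pair $(Y,\theta)$ is a $\sigma$-martingale density for $X^{(a)}$ and for $X^{(qc)}$. Your worry about $I_{\Gamma_X}$ failing to be strictly positive is misplaced: the strict positivity requirement in Proposition~\ref{charaterisationofNUPBRloc} is on the new integrand $\theta'$ for $X^{(a)}$, and $\theta':=\theta$ already works. There is no need for the detour through $\tfrac12(1+I_{\Gamma_X})$, nor for Lemma~\ref{LY} or Lemma~\ref{NUPBRforPredictableProcesses}; indeed your ``bookkeeping on the difference'' would not close, since NUPBR is not stable under subtraction. (You do mention the common-deflator idea at the very end, but you present it as an alternative rather than the immediate argument it is.)

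\textbf{On both pieces $\Rightarrow X$.} Here your plan is in the right spirit but misses the one concrete step that makes the combination work. The paper writes the two deflators as stochastic exponentials $D_i={\cal E}(N_i)$ and then observes that, \emph{without loss of generality}, one may take $N_1=I_{\Gamma_X}\is N_1$ and $N_2=I_{\Gamma_X^c}\is N_2$ (replace $N_i$ by its restriction to the relevant set; the $\sigma$-martingale density property for $X^{(a)}$, resp.\ $X^{(qc)}$, only sees jumps on $\Gamma_X$, resp.\ $\Gamma_X^c$). With this reduction, the processes $N_1$ and $N_2$ have disjoint jump supports, so $N:=I_{\Gamma_X}\is N_1+I_{\Gamma_X^c}\is N_2$ satisfies $1+\Delta N>0$, and with $\psi:=\phi_1 I_{\Gamma_X}+\phi_2 I_{\Gamma_X^c}$ one checks directly that ${\cal E}(N)(\psi\is X)$ is a local martingale. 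Your description (``a suitable multiplicative/stochastic-exponential combination'') is too vague to count as a proof without this disjoint-support reduction spelled out.
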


\begin{proof}
  Thanks to Proposition \ref{charaterisationofNUPBRloc}, $X$ satisfies NUPBR$(\mathbb{H})$ if
  and only if there exist an $\mathbb{H}$-predictable real-valued process $\phi>0$ and a
  positive $\mathbb{H}$-local martingale $Y$ such that $Y(\phi\centerdot X)$ is an $\mathbb{H}$-
  local martingale. Then, it is obvious that  $Y(\phi I_{\Gamma_X}\centerdot X)$ and
  $Y(\phi I_{{\Gamma_X}^c}\centerdot X)$ are both $\mathbb{H}$-local martingales.
  This proves that $X^{(a)}$ and $X^{(qc)}$ both satisfy NUPNR$(\mathbb{H})$.\\
 \noindent Conversely, if $X^{(a)}$ and $X^{(qc)}$ satisfy NUPNR$(\mathbb{H})$,
 then there exist two $\mathbb{H}$-predictable real-valued processes $\phi_1, \phi_2>0$ and
 two positive $\mathbb{H}$-local  martingales $D_1={\cal E}(N_1), D_2={\cal E}(N_2)$  such that $D_1(\phi_1\centerdot (I_{\Gamma_X}\centerdot S))$ and
$D_2(\phi_2\centerdot (I_{{\Gamma_X}^c}\centerdot X))$ are both
$\mathbb{H}$-local martingales. Remark that there is no loss of
generality in assuming $N_1=I_{\Gamma_X}\is N_1$ and
$N_2=I_{{\Gamma_X}^c}\is N_2$. Put
 \begin{eqnarray*}
  N := I_{{\Gamma_X}}\centerdot N_1  + I_{{\Gamma_X}^c}\centerdot N_2 \ \ \ \ \mbox{and} \ \ \ \psi := \phi_1 I_{{\Gamma_X}} + \phi_2I_{{\Gamma_X}^c}.
 \end{eqnarray*}
 Obviously, ${\cal E}( N)>0$, ${\cal E}( N)$ and ${\cal E}( N)(\psi \centerdot S)$ are $\mathbb H$-local martingales, $\psi$ is $\mathbb H$-predictable and $0< \psi\leq 1$. This ends the proof of the lemma.
\end{proof}

 \noindent  Below, we state  the answer to question {\bf{(P1)}} in this general framework,
  which, using Lemma \ref{NUPBRdecomposed} will be a
  consequence of Theorems \ref{main2} and \ref{main3}.

\begin{theorem}\label{main6}
Suppose that $S$ satisfies NUPBR$(\mathbb F)$. Then, the following assertions are equivalent.\\
{\rm{(a)}}  The process $S^{\tau}$ satisfies NUPBR$(\mathbb G)$.\\
{\rm{(b)}}  For any $\delta>0$, the process $$I_{\{ Z_{-}\geq \delta\}}\is (S^{(qc)}-S^{(qc,0)}):=I_{\{ Z_{-}\geq \delta\}}\is (S^{(qc)}-I_{\Gamma^c}\is S^{(0)})$$ satisfies NUPBR$(\mathbb F)$, and there exists a positive $\mathbb
F$-local martingale, $Y$, such that $\ ^{p,\mathbb F}\left(Y\vert \Delta S\vert\right)<+\infty$ and
$$
 ^{p,\mathbb F}\left(Y{\Delta S} I_{\{ \widetilde Z>0\ \&\
 Z_{-}\geq\delta\}}\right)=0.$$
\end{theorem}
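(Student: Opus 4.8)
The plan is to derive Theorem~\ref{main6} by decomposing $S$ into its quasi-left-continuous and accessible parts via \eqref{Sdecomposition}, namely $S=S^{(qc)}+S^{(a)}$ with $S^{(a)}=I_{\Gamma}\is S$ and $\Gamma=\Gamma_S=\bigcup_n\Rbrack T_n^S\Lbrack$, and then to apply Lemma~\ref{NUPBRdecomposed} in the filtration $\mathbb G$ to $S^{\tau}$. The first observation I would record is that stopping commutes with the decomposition: $(S^{\tau})^{(qc)}=(S^{(qc)})^{\tau}$ and $(S^{\tau})^{(a)}=(S^{(a)})^{\tau}$, because the accessible jump times of $S^{\tau}$ in $\mathbb G$ are exactly the $\mathbb G$-predictable times $T_n^S\wedge$ (something $\le\tau$) obtained from the $\mathbb F$-predictable exhausting sequence, and $\Rbrack 0,\tau\Rbrack\cap\Gamma$ carries precisely the accessible jumps. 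Hence by Lemma~\ref{NUPBRdecomposed}, $S^{\tau}$ satisfies NUPBR$(\mathbb G)$ if and only if both $(S^{(qc)})^{\tau}$ and $(S^{(a)})^{\tau}$ do.

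Next I would treat the two pieces separately. For the quasi-left-continuous part $S^{(qc)}$, Theorem~\ref{main2} applies directly (note $S$ satisfies NUPBR$(\mathbb F)$, hence so does $S^{(qc)}$ by Lemma~\ref{NUPBRdecomposed}): $(S^{(qc)})^{\tau}$ satisfies NUPBR$(\mathbb G)$ iff for every $\delta>0$ the process $I_{\{Z_-\ge\delta\}}\is(S^{(qc)}-S^{(qc,0)})$ satisfies NUPBR$(\mathbb F)$, where $S^{(qc,0)}$ is the object \eqref{Szero} computed for the quasi-left-continuous process $S^{(qc)}$; since the jump measure of $S^{(qc)}$ is $\mu$ restricted to $\Gamma^c$, one checks $S^{(qc,0)}=I_{\Gamma^c}\is S^{(0)}$, which is exactly the process named in assertion~(b). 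For the accessible part $S^{(a)}$, which is a thin process with predictable jump times only and satisfies NUPBR$(\mathbb F)$, Theorem~\ref{main4} applies: $(S^{(a)})^{\tau}$ satisfies NUPBR$(\mathbb G)$ iff for every $\delta>0$ there is a positive $\mathbb F$-local martingale $Y$ with $\prof{Y|\Delta S^{(a)}|}<\infty$ and $\prof{Y\Delta S^{(a)}I_{\{\widetilde Z>0\,\&\,Z_-\ge\delta\}}}=0$. Finally I would reconcile the condition stated with $S$ in place of $S^{(a)}$: since $\Delta S$ and $\Delta S^{(a)}$ differ only on the quasi-left-continuous jump set, where $\prof{\cdot}$ of an optional process already kills the totally inaccessible jumps, the predictable-projection identities for $\Delta S$ and $\Delta S^{(a)}$ coincide; thus the condition on $Y$ in assertion~(b) is equivalent to the Theorem~\ref{main4} condition for $S^{(a)}$. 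Combining the two equivalences through Lemma~\ref{NUPBRdecomposed} gives (a)$\Leftrightarrow$(b).

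The main obstacle I anticipate is the bookkeeping around the accessible/quasi-continuous splitting after enlargement of filtration: one must verify carefully that the $\mathbb G$-accessible jump times of $S^{\tau}$ are still exhausted (up to $\tau$) by the original $\mathbb F$-predictable sequence $(T_n^S)$, that no new accessible jumps are created by the enlargement on $\Rbrack 0,\tau\Lbrack$ (which relies on the structural results of Section~\ref{newstochastic} and the disjointness \eqref{disjoint}), and that $S^{(0)}$ indeed only loads on $\Gamma^c$ so that $I_{\Gamma}\is S^{(0)}$ contributes nothing to the accessible part. Once this is settled, replacing $\Delta S^{(a)}$ by $\Delta S$ inside the predictable projection is routine, since $\prof{Y\Delta S\,I_{A}}$ for an optional thin set $A\subset\Gamma^c$ vanishes automatically by Lemma~\ref{LY}-type arguments applied to a $\sigma$-martingale density of $S^{(qc)}$, so the two $Y$-conditions merge into the single statement in assertion~(b).
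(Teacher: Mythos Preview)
Your proposal is correct and follows essentially the same approach as the paper's own proof: split $S=S^{(qc)}+S^{(a)}$, apply Lemma~\ref{NUPBRdecomposed} in $\mathbb G$ to reduce to the two pieces, then invoke Theorems~\ref{main2} and~\ref{main4} respectively, and finally observe that $\prof{Y\Delta S}=\prof{Y\Delta S^{(a)}}$ (and likewise for $|\Delta S|$) because $Y\Delta S^{(qc)}$ is supported on totally inaccessible graphs and hence has vanishing $\mathbb F$-predictable projection. The paper carries this out in about four lines, treating as obvious the points you flag as potential obstacles; your extra justification that $S^{(qc,0)}=I_{\Gamma^c}\is S^{(0)}$ and that the $\mathbb F$-predictable set $\Gamma$ remains $\mathbb G$-predictable (so Lemma~\ref{NUPBRdecomposed} applies to the stopped decomposition) is correct and fills in details the paper omits.
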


\begin{proof} Due to Lemma \ref{NUPBRdecomposed}, it is obvious that $S^{\tau}$ satisfies NUPBR$(\mathbb G)$ if and only if both
$(S^{(qc)})^{\tau}$ and $(S^{(a)})^{\tau}$ satisfy NUPBR$(\mathbb
G)$. Thus, using both Theorems \ref{main2} and \ref{main4}, we
deduce that this last fact is true if and only if for any
$\delta>0$, the process $I_{\{ Z_{-}\geq
\delta\}}\is(S^{(qc)}-I_{\Gamma^c}\is S^{(0)})$ satisfies
NUPBR$(\mathbb F)$ and there exists a positive $\mathbb F$-local
martingale $Y$ such that
$$\begin{array}{lll}
^{p,\mathbb F}\left(Y\vert \Delta S\vert\right)=\ ^{p,\mathbb F}\left(Y\vert \Delta S^{(a)}\vert\right)<+\infty\ \ \ \mbox{and}\\
\\
^{p,\mathbb F}\left(Y\Delta S I_{\{ \widetilde Z>0,\ Z_{-}\geq\delta\}}\right)=\,^{p,\mathbb F}\left(Y\Delta S^{(a)} I_{\{ \widetilde Z>0,\ Z_{-}\geq\delta\}}\right)=0.\end{array}
$$
This ends the proof of the theorem.
\end{proof}

\begin{corollary}
\label{continuous} The following assertions hold.\\
{\rm{(a)}} If either $m$ is continuous or $Z$ is positive(equivalently $\widetilde Z>0$ or $Z_{-}>0$),  $S^{\tau}$ satisfies NUPBR$(\mathbb G)$ whenever $S$ satisfies NUPBR$(\mathbb  F)$.\\
{\rm{(b)}} If $S$   satisfies NUPBR$(\mathbb  F) $ and $\{\Delta
S\not=  0\}\cap\{\widetilde Z=0<Z_{-}\}=\emptyset$, then
$S^{\tau}$ satisfies NUPBR$(\mathbb G)$.\\{\rm{(c)}} If $S$ is
continuous and satisfies NUPBR$(\mathbb  F)$, then for any random
time $\tau$, $S^{\tau}$  satisfies NUPBR$(\mathbb G)$.
 \end{corollary}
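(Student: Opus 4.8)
The plan is to deduce all three assertions from Theorem~\ref{main6}: each of the stated hypotheses forces the optional thin set $\{\widetilde Z=0<Z_-\}$ to be disjoint from $\{\Delta S\neq0\}$, which is exactly the hypothesis of assertion (b); so I would first establish (b) and then read off (a) and (c) from it. Assertion (c) is immediate, since $S$ continuous gives $\{\Delta S\neq0\}=\emptyset$. For (a): when $Z$ is positive one has, as stated, $\widetilde Z>0$, hence $\{\widetilde Z=0<Z_-\}=\emptyset$; and when $m$ is continuous, the classical enlargement identity $\widetilde Z=Z_-+\Delta m$ (equivalently $\widetilde Z-Z=\Delta D^{o,\mathbb F}$) gives $\widetilde Z\equiv Z_-$, so again $\{\widetilde Z=0<Z_-\}=\emptyset$. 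In either case the intersection with $\{\Delta S\neq0\}$ is empty and (b) applies.

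To prove (b) I would verify the two requirements of Theorem~\ref{main6}(b). For the first one: by the explicit description of $S^{(0)}$ in Remark~\ref{remakrformain1} (item~3), $S^{(0)}$ is carried by $\{\widetilde Z=0<Z_-\}\cap\{\Delta S\neq0\}$, which is empty; hence $S^{(0)}\equiv0$ and $S^{(qc,0)}=I_{\Gamma^c}\is S^{(0)}\equiv0$, so that $I_{\{Z_-\geq\delta\}}\is(S^{(qc)}-S^{(qc,0)})=I_{\{Z_-\geq\delta\}}\is S^{(qc)}$. Since $S$ satisfies NUPBR$(\mathbb F)$, so does $S^{(qc)}$ by Lemma~\ref{NUPBRdecomposed}; and because ${\cal K}_T\big(I_{\{Z_-\geq\delta\}}\is S^{(qc)},\mathbb F\big)\subseteq{\cal K}_T\big(S^{(qc)},\mathbb F\big)$ — the identity $H\is(\phi\is Y)=(H\phi)\is Y$ being valid for any predictable $\phi$ — the process $I_{\{Z_-\geq\delta\}}\is S^{(qc)}$ satisfies NUPBR$(\mathbb F)$ as well.

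For the second requirement I would use that NUPBR$(\mathbb F)$ of $S$ together with Proposition~\ref{charaterisationofNUPBRloc} yields some $Y\in{\cal L}(\mathbb F,S)$, and Lemma~\ref{LY} then gives ${}^{p,\mathbb F}\big(Y\,|\Delta S|\big)<+\infty$ and ${}^{p,\mathbb F}\big(Y\,\Delta S\big)=0$. The key observation is that, on $\{\Delta S\neq0\}\cap\{Z_-\geq\delta\}$, the disjointness hypothesis excludes $\widetilde Z=0$, so $Y\,\Delta S\,I_{\{\widetilde Z>0,\ Z_-\geq\delta\}}=I_{\{Z_-\geq\delta\}}\,Y\,\Delta S$; since $I_{\{Z_-\geq\delta\}}$ is $\mathbb F$-predictable, ${}^{p,\mathbb F}\big(I_{\{Z_-\geq\delta\}}\,Y\,\Delta S\big)=I_{\{Z_-\geq\delta\}}\,{}^{p,\mathbb F}\big(Y\,\Delta S\big)=0$. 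Both requirements being met, Theorem~\ref{main6} gives that $S^\tau$ satisfies NUPBR$(\mathbb G)$, which proves (b) and hence (a) and (c). I expect the step needing most care to be this last one: the indicator $I_{\{\widetilde Z>0\}}$ occurring in Theorem~\ref{main6}(b) is in general only optional, not predictable, and what makes the predictable projection factor out is precisely that the disjointness hypothesis lets it be replaced, on the relevant predictable set, by $I_{\{Z_-\geq\delta\}}$; apart from that, and from invoking $\widetilde Z=Z_-+\Delta m$ in the $m$-continuous case, the argument is bookkeeping.
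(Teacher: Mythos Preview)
Your proof is correct and follows essentially the same route as the paper: reduce everything to assertion~(b), show that the disjointness hypothesis forces $S^{(qc,0)}\equiv 0$ and $\Delta S\,I_{\{\widetilde Z>0,\ Z_-\geq\delta\}}=\Delta S\,I_{\{Z_-\geq\delta\}}$, and then feed both into Theorem~\ref{main6} with a $Y\in{\cal L}(\mathbb F,S)$ via Lemma~\ref{LY}. You are in fact slightly more careful than the paper on one point: you explicitly argue that $I_{\{Z_-\geq\delta\}}\is S^{(qc)}$ inherits NUPBR$(\mathbb F)$ from $S^{(qc)}$ via the inclusion ${\cal K}_T\big(I_{\{Z_-\geq\delta\}}\is S^{(qc)},\mathbb F\big)\subseteq{\cal K}_T\big(S^{(qc)},\mathbb F\big)$, whereas the paper leaves this implicit.
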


\begin{proof} 1) The proof of the assertion (a) of the corollary follows easily from
Theorem \ref{main6}. Indeed, in the two cases, one has $\{
\widetilde Z=0<Z_-\}=\emptyset$ which implies that $\{ \widetilde
Z=0,\delta \leq Z_-\} =\emptyset$ and $S^{(qc,0)}\equiv 0$ (due to (\ref{psifm})). Then, due to Lemma \ref{LY}, It suffices to take   $Y \in {\cal {L}}(S,\mathbb F)$ ---since this set is non-empty--- and apply Theorem \ref{main6}. \\
2) it is obvious that assertion (c) follows from assertion (b). To
prove this latter, it is enough to remark that $\{\Delta
S\not=0\}\cap\{ \widetilde Z=0,\delta \leq Z_-\} =\emptyset$
implies that
$$
I_{\{Z_{-}\geq \delta\}}\is S^{(qc,0)}\equiv 0\ \ \mbox{and}\ \ \ \Delta S I_{\{ \widetilde Z=0,\delta \leq Z_-\}}=\Delta S I_{\{Z_-\geq\delta\}}.$$
Thus, again, it is enough to take $Y \in {\cal {L}}(S,\mathbb F)$  and apply Theorem \ref{main6}. This ends the proof of the corollary.
\end{proof}

\begin{remark} Any  of the two assertions of the above corollary generalizes the main
result of Fontana et al.\cite{fjs}, obtained under some
restrictive assumptions on the random time $\tau$ and the market
model as well.\end{remark}

\noindent Below, we provide a general answer to question
{\bf{(P2)}} , as a consequence of Theorems \ref{main2},
\ref{main1} and \ref{main4}.

\begin{theorem}\label{main5}
The following assertions are equivalent:\\
{\rm{(a)}} The thin set $\left\{ \widetilde Z=0\ \&\ Z_{-}>0\right\}$ is evanescent.\\
{\rm{(b)}} For any (bounded) $X$ satisfying NUPBR$(\mathbb F )$, $ X^{\tau}$ satisfies NUPBR$(\mathbb G)$.
\end{theorem}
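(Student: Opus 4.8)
\textbf{Proof proposal for Theorem \ref{main5}.}

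The plan is to derive the equivalence by combining the quasi-left-continuous case (Proposition \ref{main1}) with the thin-process case (Remark \ref{remarkformmain4}-2)) via the canonical decomposition $X = X^{(qc)} + X^{(a)}$ of \eqref{Sdecomposition} and the additivity of NUPBR in Lemma \ref{NUPBRdecomposed}. The key observation is that the thin set $A := \{\widetilde Z = 0\ \&\ Z_{-} > 0\}$ is evanescent if and only if it is \emph{both} accessible \emph{and} totally inaccessible: indeed a thin optional set that is simultaneously accessible and totally inaccessible must be evanescent, since its accessible part (supported on a union of predictable graphs) and its totally inaccessible part (carrying no predictable graph) together exhaust it, and each must then be empty.

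For the implication (a)$\Rightarrow$(b), I would take an arbitrary bounded $\mathbb F$-semimartingale $X$ with NUPBR$(\mathbb F)$, decompose $X = X^{(qc)} + X^{(a)}$, and note that by Lemma \ref{NUPBRdecomposed} both pieces satisfy NUPBR$(\mathbb F)$. Since $A$ is evanescent, it is in particular accessible, so Proposition \ref{main1} gives that $(X^{(qc)})^{\tau}$ satisfies NUPBR$(\mathbb G)$; and since $A$ is also totally inaccessible (in fact evanescent), Remark \ref{remarkformmain4}-2) gives that $(X^{(a)})^{\tau}$ satisfies NUPBR$(\mathbb G)$. Applying Lemma \ref{NUPBRdecomposed} once more in the filtration $\mathbb G$ to the decomposition $X^{\tau} = (X^{(qc)})^{\tau} + (X^{(a)})^{\tau}$ (noting that the $\mathbb F$-predictable stopping times exhausting the accessible jumps of $X$ remain $\mathbb G$-stopping times, so this is a legitimate accessible/quasi-continuous splitting in $\mathbb G$ as well) yields NUPBR$(\mathbb G)$ for $X^{\tau}$.

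For the converse (b)$\Rightarrow$(a), I would argue that (b) forces both the accessibility and the total inaccessibility of $A$. Specializing (b) to bounded quasi-left-continuous $X$ recovers assertion (b) of Proposition \ref{main1}, hence $A$ is accessible; specializing (b) to bounded thin processes with predictable jumps only recovers the hypothesis in Remark \ref{remarkformmain4}-2), hence $A$ is totally inaccessible. Being both accessible and totally inaccessible, $A$ is evanescent. The main obstacle I anticipate is the bookkeeping around the splitting $X = X^{(qc)} + X^{(a)}$ under enlargement: one must check that the accessible/quasi-continuous decomposition is preserved well enough that Lemma \ref{NUPBRdecomposed} can be applied in $\mathbb G$ to $X^{\tau}$, and that the class of test processes in (b) is rich enough to isolate, separately, the quasi-continuous and the purely-predictable-jump obstructions — this is exactly where one invokes that $X^{(qc)}$ and $X^{(a)}$ can each be taken bounded and fed back into Propositions \ref{main1} and Remark \ref{remarkformmain4}-2).
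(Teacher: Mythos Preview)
Your proposal is correct and follows essentially the same route as the paper: split $X=X^{(qc)}+X^{(a)}$, handle the two pieces via the quasi-left-continuous and thin-predictable-jump results, and recombine via Lemma~\ref{NUPBRdecomposed} (your worry about the $\mathbb G$-decomposition is harmless, since the proof of that lemma only needs $\Gamma_X$ to be predictable, and $\mathbb F$-predictability passes to $\mathbb G$). The only cosmetic difference is in (b)$\Rightarrow$(a): the paper, after getting accessibility from Proposition~\ref{main1}, invokes Proposition~\ref{corollaryofmain3} on each covering predictable time $T_n$ to show $\{\widetilde Z=0<Z_-\}\cap\Rbrack T_n\Lbrack=\emptyset$ directly, whereas you package the same conclusion as ``$A$ is totally inaccessible'' via Remark~\ref{remarkformmain4}-2)---but that remark's relevant direction unwinds to exactly the same single-jump argument.
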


\begin{proof} Suppose that assertion (a) holds, and consider a process $X$ satisfying
NUPBR$(\mathbb F )$. Then, $X^{(qc,0)}:=I_{\Gamma_X^c}\is
X^{(0)}\equiv 0$, where $X^{(0)}$ is defined   as in
(\ref{Szero}). Hence $I_{\{ Z_{-}\geq \delta\}}\is \left(
X^{(qc)}-I_{\Gamma_X^c}\is X^{(0)}\right)$ satisfies
NUPBR$(\mathbb F)$ for any $\delta>0$, and  NUPBR$(\mathbb G)$
property of $(X^{(qc)})^{\tau}$ follows immediately from
Theorem \ref{main2} on the one hand. On the other hand, it is easy to see that $X^{(a)}$ fulfills the condition (\ref{mainassumptionbeforetau}) with $Y\equiv 1$. Thus, thanks to Theorem \ref{main4} (applied to the thin process $X^{(a)}$ satisfying NUPBR$(\mathbb F)$), we conclude that $(X^{(a)})^{\tau}$ satisfies   NUPBR$(\mathbb G)$. Thus, due to Lemma \ref{NUPBRdecomposed}, the proof of (a)$\Rightarrow$(b) is completed.\\
We now suppose that assertion (b) holds. On the one hand, from
Proposition \ref{main1}, we deduce that $\{\widetilde Z=0<Z_{-}\}$ is
accessible and can be covered with the graphs of $\mathbb
F$-predictable stopping times $(T_n)_{n\geq 1}$. On the other
hand, a direct application of Proposition \ref{corollaryofmain3}
to all single predictable jump $\mathbb F$-martingales, we obtain
$\{\widetilde Z=0<Z_{-}\}\cap \Rbrack T\Lbrack=\emptyset$ for any
$\mathbb F$-predictable stopping time $T$. Therefore, we get
$$
\{\widetilde Z=0<Z_{-}\}=\bigcup_{n=1}^{\infty} \left(\{\widetilde Z=0<Z_{-}\}\cap \Rbrack T_n\Lbrack\right)=\emptyset.$$
This proves assertion (a), and the proof of the theorem is completed.
\end{proof}


\section{Stochastics from--and--for Informational Non-Arbitrage}\label{newstochastic}

In this section, we  develop new stochastic results that will play
a key role in the proofs and/or the statements of the main results
outlined in the previous section.  The first subsection compares
the $\mathbb G$-compensators and the $\mathbb F$-compensators,
while the second subsection studies a   $\mathbb G$-martingale
that  is vital in the explicit construction of deflators. We
recall that    $
 Z_{-}+\Delta m=\widetilde Z $ (see \cite{Jeuyor}).

\begin{lemma}\label{proposition1}
Let $Z$ and $\widetilde Z$  be the two supermartingales given by (\ref{ZandZtilde}). \\
{\rm{(a)}} The  three sets $\{\widetilde Z=0\}$, $\{ Z=0\}$ and $\{ Z_{-}=0\}$ have the same d\'ebut which is an $\mathbb F$-stopping time that we denote by
\begin{equation}
\label{stoppingtimesfotau}
\widehat R:=\inf\{ t\geq 0\big|\  Z_{t-}=0\}.
\end{equation}

\noindent{\rm{(b)}}   The following $\mathbb F$-stopping times $$\label{RhatzeroRtildezero}
\widehat R_0:=\left\{\begin{array}{lll} \widehat R\ \ \mbox{on}\ \{ Z_{\widehat R-}=0\}\\
\\
+\infty\ \ \ \mbox{otherwise}\end{array}\right.\ \ \ \ \mbox{and}\ \ \ \ \ \  {\widetilde R}_0:=\left\{\begin{array}{lll} {\widehat R}\ \ \ \mbox{on}\ \{ \widetilde Z_{\widehat R}=0\}\\
 \\
 +\infty\ \ \mbox{otherwise}\end{array}\right.$$
are such that $\widehat R_0$ is a $\mathbb F$-predictable stopping
time, and
\begin{equation}\label{tauandStoppingtimes}
\tau\leq \widehat R,\ \ \ \ \ \tau<\widetilde R_0,\ \ \ \ \ \  \ \ P -a.s.\end{equation}
{\rm{(c)}}  The $\mathbb G$-predictable process
\begin{equation}\label{ProcessH}
H_t  := \left(Z_{t-}\right)^{-1}I_{\Rbrack 0,\tau\Lbrack}(t), \end{equation}
 is $\mathbb G$-locally bounded.\\

\end{lemma}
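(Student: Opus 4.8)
The plan is to prove the three parts of Lemma~\ref{proposition1} in order, each relying on classical results about the supermartingales $Z$ and $\widetilde Z$ together with the identity $Z_- + \Delta m = \widetilde Z$.

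\textbf{Part (a).} First I would recall that $Z$ is a nonnegative c\`adl\`ag supermartingale, hence by the standard minimum principle for nonnegative supermartingales (see e.g. \cite{DMM}), on the set $\{Z_-=0\}$ one has $Z=0$ and $Z$ stays at $0$ afterwards; moreover $\{Z=0\}\subset\{Z_-=0\}\cup\{\Delta\text{-jump to }0\}$, and since $Z$ reaches $0$ continuously from the left on $\Rbrack 0,\widehat R\Lbrack$ by the supermartingale property one checks $\{Z=0\}$ and $\{Z_-=0\}$ have the same d\'ebut. For $\widetilde Z$, the relation $\widetilde Z = Z_- + \Delta m$ shows that on $\{Z_-=0\}$ we have $\widetilde Z = \Delta m \geq 0$; but $\widetilde Z \le {}^{o,\mathbb F}I_{\Lbrack 0,\tau\Lbrack}$ and the nonnegativity forces (via optional projection arguments, or directly via \eqref{disjoint}-type reasoning) that the d\'ebut of $\{\widetilde Z=0\}$ coincides with that of $\{Z_-=0\}$. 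I would conclude that all three d\'ebuts equal $\widehat R = \inf\{t\ge 0 : Z_{t-}=0\}$, and that $\widehat R$ is an $\mathbb F$-stopping time because it is the d\'ebut of the left-closed predictable-type set $\{Z_-=0\}$ (the first hitting time of $0$ by the c\`adl\`ag process $Z_-$, which is a stopping time by the d\'ebut theorem).

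\textbf{Part (b).} The key observation is that $\widehat R_0$ is the restriction of $\widehat R$ to $\{Z_{\widehat R-}=0\}$, i.e. the first time $Z_-$ hits $0$ by accumulation rather than by a jump down to $0$; such a hitting time of a c\`adl\`ag adapted process is announceable, hence $\widehat R_0$ is $\mathbb F$-predictable. For the inequalities in \eqref{tauandStoppingtimes}: $\tau\le\widehat R$ follows because $Z$ is the optional projection of $I_{\Lbrack 0,\tau\Rbrack}$, so on $\{\tau > \widehat R\}$ one would need $Z_{\widehat R}>0$, contradicting $\widehat R\ge$ d\'ebut of $\{Z=0\}$ — more carefully, one uses that $E[Z_{\widehat R}] = P(\tau\ge\widehat R)$ combined with $Z_{\widehat R-}=0$ on $\{\widehat R = \widehat R_0\}$ and right-continuity; the cleanest route is to cite \cite[Chapter XX]{DMM} that $\{Z_-=0\}$ is disjoint from $\Lbrack 0,\tau\Rbrack$, which is exactly \eqref{disjoint}, giving $\tau\le\widehat R$. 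Similarly $\tau<\widetilde R_0$ follows from $\{\widetilde Z=0\}$ being disjoint from $\Lbrack 0,\tau\Lbrack$ (again \eqref{disjoint}): on $\{\widetilde R_0<+\infty\}$ we have $\widetilde Z_{\widetilde R_0}=0$, so $\widetilde R_0\notin\Lbrack 0,\tau\Lbrack$, i.e. $\tau<\widetilde R_0$.

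\textbf{Part (c).} To show $H_t = (Z_{t-})^{-1}I_{\Rbrack 0,\tau\Lbrack}(t)$ is $\mathbb G$-locally bounded, I would introduce the $\mathbb G$-stopping times $\tau_n := \inf\{t\ge 0 : Z_{t-} < 1/n\}\wedge n$ (these are $\mathbb G$-stopping times since $Z_-$ is $\mathbb F$-adapted hence $\mathbb G$-adapted and left-continuous). On $\Rbrack 0,\tau_n\Lbrack$ one has $Z_-\ge 1/n$ by left-continuity of $Z_-$, so $H$ is bounded by $n$ there; the only subtlety is behavior at $t=\tau_n$ itself and whether $\tau_n\uparrow+\infty$ $\mathbb G$-a.s. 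The latter is where $\tau\le\widehat R$ enters decisively: since $\tau\le\widehat R$, on $\Rbrack 0,\tau\Lbrack$ we have $t<\widehat R$, hence $Z_{t-}>0$, so $\{Z_-=0\}\cap\Rbrack 0,\tau\Lbrack=\emptyset$ and $\lim_n\tau_n\ge\tau$ on $\Rbrack 0,\tau\Lbrack$; combined with the truncation at $n$ this gives $\tau_n\wedge\tau\uparrow\tau$ suitably, and since $H$ vanishes on $\Rbrack\tau,+\infty\Rbrack$ anyway, $H^{\tau_n}$ is bounded. The main obstacle I anticipate is the careful handling of the endpoint $t=\tau_n$ and of the jump of $Z_-$ (i.e. a jump of $Z$) at that time: one must verify that including or excluding the endpoint does not destroy boundedness, which is resolved by noting $H$ carries the factor $I_{\Rbrack 0,\tau\Lbrack}$ and that $Z_{\tau\text{-}}>0$ a.s. by \eqref{disjoint}, so on a slightly smaller localizing sequence $\widehat\tau_n := \tau_{n}$ (with the strict inequality in the definition) $Z_-$ stays $\ge 1/n$ on the closed stochastic interval $\Lbrack 0,\widehat\tau_n\Rbrack\cap\Rbrack 0,\tau\Lbrack$, giving the desired uniform bound $H\le n$ on $\Lbrack 0,\widehat\tau_n\Rbrack$.
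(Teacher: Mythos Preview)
Your treatment of parts (a) and (b) is more explicit than the paper's, which essentially cites \cite[Chapter XX, paragraph 14]{DMM} for (a) and leaves (b) implicit. That is fine.

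The real issue is part (c). Your localizing sequence $\tau_n := \inf\{t\geq 0 : Z_{t-} < 1/n\}\wedge n$ need not increase to $+\infty$: when $\widehat R<+\infty$, these stopping times converge to $\widehat R$, not to infinity. You recognize the difficulty (``the only subtlety is \ldots whether $\tau_n\uparrow+\infty$'') and try to dismiss it by saying $H$ vanishes on $\Lbrack\tau,+\infty\Rbrack$, but that does not give a localizing sequence in the required sense: local boundedness demands $\mathbb G$-stopping times $T_n\uparrow+\infty$ $P$-a.s.\ with $H^{T_n}$ bounded. Patching your $\tau_n$ with something that runs to infinity after $\tau$ while remaining a $\mathbb G$-stopping time and keeping $H^{T_n}$ bounded is possible but you have not done it, and the endpoint analysis you sketch does not close the gap.

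The paper avoids all of this with a single observation you are missing: by a result of Yor \cite{Yor}, the process $X:=Z^{-1}I_{\Lbrack 0,\tau\Rbrack}$ is a c\`adl\`ag $\mathbb G$-supermartingale. Its left-limit process $X_{-}$ is therefore automatically $\mathbb G$-locally bounded, and one checks directly that $H=(Z_{-})^{-1}I_{\Rbrack 0,\tau\Lbrack}=X_{-}$. This one-line argument replaces your entire construction and sidesteps the convergence problem completely; it is the key idea for (c).
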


\begin{proof} From \cite[paragraph 14, Chapter XX]{DMM}, for any random time $\tau$, the sets $\{\widetilde Z=0\}$ and $\{Z_-=0\}$ are disjoint from $\Lbrack 0,\tau\Lbrack$ and have the same lower bound $\widehat R$, the smallest $\mathbb F$-stopping time greater than $\tau$. Thus, we also conclude that $\{Z=0\}$ is disjoint from $\Lbrack 0,\tau\Rbrack$.  This leads to assertion (a). The process $X:=Z^{-1} I_{\Lbrack 0,\tau\Rbrack}$ being  a c\`adl\`ag  $\mathbb G$-supermartingale \cite{Yor},
its left limit is locally bounded. Then, due to $$(Z_{-})^{-1}I_{\Lbrack 0,\tau\Lbrack}=X_{-},$$
the local boundedness of $H$ follows.
This ends the proof of the lemma. \end{proof}

\subsection{Exact Relationship between Dual Predictable Projections under $\mathbb G$ and $\mathbb F$ }
The main results of this subsection are summarized in Lemmas
\ref{lemmecrucial} and \ref{lem:beforetauforoptional1}, where we
address the question of how to compute     $\mathbb G$-dual
predictable projections in term of   $\mathbb F$-dual predictable
projections and vice versa. These results are based essentially on
the following standard  result on progressive enlargement of
filtration (we refer the reader to \cite{DMM,Jeu} for
proofs).

\begin{proposition}\label{semiGF} Let $M$  be an ${\mathbb  F}$-local martingale. Then,
for any random time $\tau$, the process
\begin{eqnarray}
\label{before} \widehat{M}_t  &:=& M_t^\tau - \int_{0}^{t\wedge
\tau} { \frac{d \langle M, \mm\rangle^{\mathbb F}_s }{Z_{s-}} }
\end{eqnarray} is a $\mathbb G$-local martingale, where $m$ is defined in (\ref{processmZ}).
\end{proposition}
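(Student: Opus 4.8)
\textbf{Proof plan for Proposition \ref{semiGF}.}
The plan is to reduce the statement to the two elementary building blocks of progressive enlargement: first, that the $\mathbb{G}$-optional splitting of any process on $\Lbrack 0,\tau\Lbrack$ is governed by $Z_{-}$, and second, that $M^{\tau}$ itself is a $\mathbb{G}$-semimartingale whose drift is supported on $\Lbrack 0,\tau\Lbrack$. Concretely, I would fix an $\mathbb{F}$-local martingale $M$ and, by localization, assume $M$ is a uniformly integrable $\mathbb{F}$-martingale (stopping at a suitable sequence of $\mathbb{F}$-stopping times $T_n\uparrow\infty$, noting these remain $\mathbb{G}$-stopping times and that the bracket $\langle M,m\rangle^{\mathbb{F}}$ localizes compatibly); the process $H=(Z_{-})^{-1}I_{\Rbrack 0,\tau\Lbrack}$ from Lemma \ref{proposition1}(c) is $\mathbb{G}$-locally bounded, so $H\is\langle M,m\rangle^{\mathbb{F}}$ is a well-defined finite-variation $\mathbb{G}$-predictable process, and it suffices to check that $\widehat M$ is a $\mathbb{G}$-martingale on a fixed bounded interval $[0,t]$.

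The core computation is to verify, for every bounded $\mathbb{G}_s$-measurable set (equivalently, for $s<t$ and a generating family of $\mathcal{G}_s$-measurable random variables), that
\[
\E\bigl[(M_t^{\tau}-M_s^{\tau})I_A\bigr]
  =\E\Bigl[\Bigl(\int_s^{t\wedge\tau}\frac{d\langle M,m\rangle^{\mathbb{F}}_u}{Z_{u-}}-\int_0^{s}\cdots\Bigr)I_A\Bigr],
\]
but the cleanest route is via the classical description of $\mathcal{G}_s$ on $\{s<\tau\}$: any $\mathbb{G}$-optional (or $\mathcal{G}_s$-measurable) object agrees on $\Lbrack 0,\tau\Lbrack$ with an $\mathbb{F}$-object. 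So I would test $\widehat M$ against $\mathbb{F}$-martingales: show that for every bounded $\mathbb{F}$-martingale $N$, the process $\widehat M N - \langle \widehat M,N\rangle^{\mathbb{G}}$-type identities hold, or more directly, compute $\E[\widehat M_\infty \, \xi]$ for $\xi=g\, I_{\{\tau>s\}}$ with $g\in L^\infty(\mathcal{F}_s)$ and for $\xi=g\, I_{\{\tau\le s\}}$ separately. On $\{\tau\le s\}$ both sides vanish because $M^\tau$ and the integral are frozen; on $\{\tau>s\}$ one uses $\E[\,\cdot\,I_{\{\tau>s\}}\mid\mathcal{F}_s]$-conditioning, the key identity $\E[I_{\{\tau>u\}}\mid\mathcal{F}_u]=Z_u$ together with $m=Z+D^{o,\mathbb{F}}$, and Yor's formula / the optional-projection computation that turns $\E[M_t I_{\{\tau>t\}}]-\E[M_s I_{\{\tau>s\}}]$ into $\E[\int_s^t M_u\, dm_u - ...]$, after which integration by parts against $M$ produces exactly the $\langle M,m\rangle^{\mathbb{F}}$ term divided by $Z_{-}$. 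Since this is a standard result, I would in practice simply cite \cite{DMM,Jeu} as the statement already indicates, but the above is the skeleton one would fill in.

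The step I expect to be the genuine obstacle is justifying the division by $Z_{u-}$ and the integrability of $H\is\langle M,m\rangle^{\mathbb{F}}$ near $\widehat R=\inf\{u:Z_{u-}=0\}$: a priori $Z_{u-}$ can approach $0$, so the finite-variation term in \eqref{before} need only be well-defined because the integration is cut at $t\wedge\tau$ and $\tau<\widehat R_0$, $\tau\le\widehat R$ by \eqref{tauandStoppingtimes}. The rigorous way to handle this is exactly Lemma \ref{proposition1}(c): $H$ is $\mathbb{G}$-locally bounded, hence the drift is a genuine $\mathbb{G}$-predictable finite-variation process after a further localization by $\mathbb{G}$-stopping times $S_k=\inf\{u:Z_{u-}\le 1/k\}\wedge T_n$ on which $Z_{-}\ge 1/k$. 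On each such stochastic interval the martingale identity for $\widehat M$ is the routine enlargement computation above, and letting $k,n\to\infty$ with $S_k\wedge T_n\uparrow\infty$ (using $\tau\le\widehat R$ so that $\Rbrack 0,\tau\Lbrack\subset\bigcup_k\{Z_{-}>1/k\}$ up to an evanescent set) upgrades $\widehat M$ to a $\mathbb{G}$-local martingale, completing the proof.
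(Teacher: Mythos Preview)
The paper does not give a proof of this proposition at all: it is stated as a standard result in progressive enlargement and the reader is referred to \cite{DMM,Jeu}. Your sketch is essentially the classical argument one finds in Jeulin's lecture notes --- localize $M$, split the test variables according to $\{\tau>s\}$ and $\{\tau\le s\}$, use that $\mathcal{G}_s$-measurable objects coincide with $\mathcal{F}_s$-measurable ones on $\{\tau>s\}$, and compute the drift via $m=Z+D^{o,\mathbb F}$ and integration by parts to produce $\langle M,m\rangle^{\mathbb F}/Z_{-}$. So your approach is correct and matches the cited references; there is simply nothing in the paper itself to compare it with.

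One small technical point worth tightening: the explicit localizing times you propose, $S_k=\inf\{u:Z_{u-}\le 1/k\}\wedge T_n$, only increase to $\widehat R\wedge T_n$ and hence to $\widehat R$ as $k,n\to\infty$, not to $+\infty$ in general. This does not directly give a $\mathbb G$-localizing sequence tending to infinity. The clean fix is the one you already mention but then bypass: invoke Lemma~\ref{proposition1}(c) as stated, which furnishes a genuine $\mathbb G$-localizing sequence (coming from the c\`adl\`ag $\mathbb G$-supermartingale $Z^{-1}I_{\Lbrack 0,\tau\Rbrack}$) along which $H=(Z_{-})^{-1}I_{\Rbrack 0,\tau\Lbrack}$ is bounded. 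With that sequence the finite-variation term is controlled and your martingale computation goes through on each stopped interval, yielding the $\mathbb G$-local martingale property of $\widehat M$.
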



\noindent In the following lemma, we   express the $\mathbb
G$--dual predictable projection of an $\mathbb F$-locally
integrable variation process in terms of an $\mathbb F$--dual
predictable projection, and $\mathbb G$-predictable projection in
terms of $\mathbb F$-predictable projection.


\begin{lemma}\label{lemmecrucial}
The following assertions hold.\\
{\rm{(a)}}  For any $\mathbb F$-adapted process $V$  with locally integrable
variation, we have
\begin{equation}\label{GcompensatorofVbeforetaugeneral}
\comg{V^{\tau}} =(Z_{-})^{-1}I_{\Lbrack 0,\tau\Lbrack}\is\bigl(\widetilde Z \is V\bigr)^{p,\mathbb F}  . \end{equation}
{\rm{(b)}}  For any $\mathbb F$-local martingale  $M$, we have, on $\Rbrack 0,\tau\Lbrack$
\begin{equation}\label{ZZtildeinaccessiblejumps}
\prog{\frac{\Delta M}{\widetilde Z}}={{ \prof{{\Delta M}I_{\{\widetilde Z>0\}}}}\over{Z_{-}}},\ \ \mbox{and}\ \ \prog{\frac{1}{\widetilde Z}}={{\prof{I_{\{\widetilde Z>0\}}}}\over{Z_{-}}}.
\end{equation}

\noindent {\rm{(c)}}  For any quasi-left-continuous $\mathbb F$-local martingale  $M$, we have,  on $\Rbrack 0,\tau\Lbrack$
\begin{equation}\label{Usefulprojectionbefortauequa}
\prog{\frac{\Delta M}{\widetilde Z}}= 0,\ \ \mbox{and}\ \
\prog{\frac{1}{Z_{-}+\Delta
m^{(qc)}}}={1\over{Z_{-}}},\end{equation} where  $m^{(qc)}$ is the
quasi-left-continuous $\mathbb F$-martingale defined in
(\ref{Sdecomposition}).
\end{lemma}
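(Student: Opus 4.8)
The plan is to route all three parts through one ``transfer principle'' relating $\mathbb G$-projections to $\mathbb F$-projections on the interval $\Lbrack 0,\tau\Lbrack$: for (a) in the form of a statement about dual predictable projections, for (b)--(c) in the form of a pointwise identity for ordinary predictable projections. The recurring ingredients are the classical facts ${}^{o,\mathbb F}\!\left(I_{\Lbrack 0,\tau\Lbrack}\right)=\widetilde Z$ and ${}^{p,\mathbb F}\!\left(I_{\Lbrack 0,\tau\Lbrack}\right)=Z_-$ (see \cite{DMM,Jeu}), the strict positivity of $Z_-$ and of $\widetilde Z$ on $\Lbrack 0,\tau\Lbrack$, and the reduction of any $\mathbb G$-predictable process to an $\mathbb F$-predictable one on $\Lbrack 0,\tau\Lbrack$; all integrability issues are dispatched by the usual localisation, controlling $(Z_-)^{-1}I_{\Lbrack 0,\tau\Lbrack}$ by Lemma \ref{proposition1}(c).

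For (a), by linearity and localisation I may take $V$ increasing, integrable, with $V_0=0$, so that $V^\tau$ has $\mathbb G$-integrable variation and $\comg{V^\tau}$ exists. The candidate $A:=(Z_-)^{-1}I_{\Lbrack 0,\tau\Lbrack}\is\left(\widetilde Z\is V\right)^{p,\mathbb F}$ is $\mathbb G$-predictable of locally integrable variation, so it suffices to verify $E\left[(W\is V^\tau)_\infty\right]=E\left[(W\is A)_\infty\right]$ for every bounded $\mathbb G$-predictable $W$. Writing $\widetilde W$ for an $\mathbb F$-predictable process equal to $W$ on $\Lbrack 0,\tau\Lbrack$, the left-hand side is $E\left[\int_{\Lbrack 0,\tau\Lbrack}\widetilde W_s\,dV_s\right]$; passing to $\mathbb F$-optional projections ($V$ being $\mathbb F$-optional of integrable variation) and using ${}^{o,\mathbb F}\!\left(I_{\Lbrack 0,\tau\Lbrack}\right)=\widetilde Z$ turns it into $E\left[\left(\widetilde W\is(\widetilde Z\is V)\right)_\infty\right]=E\left[\left(\widetilde W\is(\widetilde Z\is V)^{p,\mathbb F}\right)_\infty\right]$, the last equality because $\widetilde W$ is $\mathbb F$-predictable. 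The right-hand side is $E\left[\int_0^\infty\widetilde W_s Z_{s-}^{-1}I_{\Lbrack 0,\tau\Lbrack}(s)\,d(\widetilde Z\is V)^{p,\mathbb F}_s\right]$; since $(\widetilde Z\is V)^{p,\mathbb F}$ and $\widetilde W/Z_-$ are $\mathbb F$-predictable, passing to $\mathbb F$-predictable projections and using ${}^{p,\mathbb F}\!\left(I_{\Lbrack 0,\tau\Lbrack}\right)=Z_-$ yields the same quantity. Hence $A=\comg{V^\tau}$.

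For (b) and (c), the engine is the pointwise identity, valid for $\mathbb F$-optional $U$ with the appropriate integrability,
\[
\prog{U}=\frac{\prof{U\widetilde Z}}{Z_-}\qquad\mbox{on}\quad\Lbrack 0,\tau\Lbrack,
\]
which I would obtain by conditioning: since $\{t\le\tau\}\in\mathcal{G}_{t-}$ and ${}^{p,\mathbb F}\!\left(I_{\Lbrack 0,\tau\Lbrack}\right)=Z_-$, conditioning on $\mathcal{G}_{t-}$ on $\{t\le\tau\}$ reduces to $Z_{t-}^{-1}E\left[\,\cdot\,I_{\{t\le\tau\}}\mid\mathcal{F}_{t-}\right]$, and for $\mathbb F$-optional $U$ a further conditioning on $\mathcal{F}_t$ together with ${}^{o,\mathbb F}\!\left(I_{\Lbrack 0,\tau\Lbrack}\right)=\widetilde Z$ gives $E\left[U_tI_{\{t\le\tau\}}\mid\mathcal{F}_{t-}\right]=E\left[U_t\widetilde Z_t\mid\mathcal{F}_{t-}\right]=\prof{U\widetilde Z}_t$. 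Taking $U=\Delta M/\widetilde Z$ and $U=1/\widetilde Z$, for which $U\widetilde Z$ equals $\Delta M\,I_{\{\widetilde Z>0\}}$ and $I_{\{\widetilde Z>0\}}$ respectively, gives (b). For the first identity of (c), quasi-left-continuity of $M$ forces $\prof{\Delta M\,I_{\{\widetilde Z>0\}}}=0$ --- its value at any $\mathbb F$-predictable time $T$ is $E\left[\Delta M_TI_{\{\widetilde Z_T>0\}}\mid\mathcal{F}_{T-}\right]=0$ since $\Delta M_T=0$ --- so $\prog{\Delta M/\widetilde Z}=0$ on $\Lbrack 0,\tau\Lbrack$. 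For the second identity of (c), take $U=\left(Z_-+\Delta m^{(qc)}\right)^{-1}$ and split along $\Gamma:=\Gamma_m=\bigcup_n\Rbrack T^m_n\Lbrack$: on $\Gamma$ one has $\Delta m^{(qc)}=0$, while off $\Gamma$ one has $Z_-+\Delta m^{(qc)}=Z_-+\Delta m=\widetilde Z$, so that $U\widetilde Z=I_\Gamma\,\widetilde Z/Z_-+I_{\Gamma^c}\,I_{\{\widetilde Z>0\}}$. Now $\prof{\widetilde Z}=\prof{{}^{o,\mathbb F}\!\left(I_{\Lbrack 0,\tau\Lbrack}\right)}=Z_-$, and since the accessible jumps of $m$ lie in $\Gamma$ one has $\widetilde Z=Z_-$ at the predictable times of $\Gamma^c$, hence $\prof{I_{\{\widetilde Z>0\}}}\,I_{\Gamma^c}=I_{\{Z_->0\}}\,I_{\Gamma^c}$; combining these, $\prof{U\widetilde Z}=I_{\{Z_->0\}}$, so $\prog{U}=1/Z_-$ on $\Lbrack 0,\tau\Lbrack$.

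I expect the delicate point to be the bookkeeping at the boundary of $\Lbrack 0,\tau\Lbrack$: the divisions by $Z_-$ and $\widetilde Z$ are meaningful only there and require the strict positivity, and in (c) the splitting of $m$ into its quasi-left-continuous and accessible parts must be followed through the $\mathbb F$-predictable projection, which ``sees'' only predictable times and therefore treats the totally inaccessible jumps of $m^{(qc)}$ as invisible. Once the two transfer identities are in place, everything else is routine.
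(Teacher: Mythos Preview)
Your proof is correct, but it follows a different route from the paper's in parts (a) and (c). For (a), the paper invokes Proposition~\ref{semiGF} (the Jeulin enlargement formula) applied to the $\mathbb F$-martingale $M:=V-V^{p,\mathbb F}$, which immediately yields $V^{\tau}=\widehat M+I_{\Lbrack 0,\tau\Lbrack}\is V^{p,\mathbb F}+Z_{-}^{-1}I_{\Lbrack 0,\tau\Lbrack}\is(\Delta m\is V)^{p,\mathbb F}$ and hence the claim; you instead argue from first principles, testing against bounded $\mathbb G$-predictable processes and passing through the optional and predictable projections of $I_{\Lbrack 0,\tau\Lbrack}$. Your route is more self-contained (it does not rely on Proposition~\ref{semiGF}) and recovers the ``transfer principle'' $\prog{U}=Z_-^{-1}\,\prof{U\widetilde Z}$ directly, which then drives (b) and (c); the paper instead derives (b) \emph{from} (a) via truncated processes $V^{(n,k)}=\sum(\Delta M/\widetilde Z)\,I_{\{|\Delta M|\ge k^{-1},\ \widetilde Z\ge n^{-1}\}}$ and the identity $^{p,\mathbb G}(\Delta V)=\Delta V^{p,\mathbb G}$, passing to the limit. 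For the second identity of (c) the paper uses a one-line algebraic trick --- writing $Z_-/(Z_-+\Delta m^{(qc)})=1-\Delta m^{(qc)}/(Z_-+\Delta m^{(qc)})$, observing that on $\Lbrack 0,\tau\Lbrack$ the last ratio equals $\Delta m^{(qc)}/\widetilde Z$, and applying the first identity of (c) --- whereas you carry out a case split along $\Gamma_m$ and compute $\prof{U\widetilde Z}$ explicitly. Both arguments are valid; the paper's is shorter for the last point, while yours makes the underlying projection identity more transparent.
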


\begin{proof}
(a)  Using the notation (\ref{ProcessH}), the equality
(\ref{before}) takes the form
$$M^{\tau}=\mg  + HI_{\Lbrack 0,\tau\Lbrack}\is{\crof{M,m}}\,.$$
By taking $M=V- V ^{p,\mathbb F}   $, we obtain
$$V^{\tau}=I_{\Lbrack 0,\tau\Lbrack}\is V^{p,\mathbb F}+\mg
 +HI_{\Lbrack 0,\tau\Lbrack}\is\crof{V,m}=\mg +I_{\Lbrack 0,\tau\Lbrack}\is V^{p,\mathbb F}+{1\over{Z_{-}}}I_{\Lbrack 0,\tau\Lbrack}\is\comf{\Delta m\is V},$$
which proves assertion (a).\\
(b) Let $M$ be an $\mathbb F$-local martingale, then, for any positive
integers $(n,k)$ the process $V^{(n,k)}:=\sum {{\Delta
M}\over{\widetilde Z}} I_{\{\vert \Delta M\vert\geq k^{-1},\ \widetilde
Z\geq n^{-1}\}}$ has a locally integrable variation. Then, by using the known equality $^{p,\mathbb G}(\Delta V)=\Delta(V^{p,\mathbb G})$ (see Theorem 76 in pages 149--150 of \cite{dm2} or Theorem 5.27 in page 150 of \cite{Yanbook}), and applying assertion (a) to the process $V^{(n,k)}$, we get,   on
$\Lbrack 0,\tau\Lbrack$
\begin{eqnarray*}
\prog{{{\Delta M}\over{\widetilde Z}}I_{\{\vert\Delta M\vert\geq
k^{-1},\ \widetilde Z\geq n^{-1}\}}}&=&{1\over{Z_{-}}}\prof{\Delta
MI_{\{\vert\Delta M\vert\geq k^{-1},\ \widetilde Z\geq n^{-1}\}}}.
\end{eqnarray*}
Since $M$ is a local martingale,   by stopping we can exchange
limits with projections in both sides. Then by letting $n$ and
$k$ go to infinity, and using the fact that $\widetilde Z>0$ on
$\Lbrack 0,\tau\Lbrack$, we deduce that
\begin{eqnarray*}
\prog{{{\Delta M}\over{\widetilde Z}}}&=&{1\over{Z_{-}}}\prof{\Delta MI_{\{\widetilde Z>0\}}}.
\end{eqnarray*}
This proves the first equality in (\ref{ZZtildeinaccessiblejumps}), while the second equality follows from $\widetilde Z=\Delta m+Z_{-}$:
$$\begin{array}{lll}
Z_{-}\prog{{\widetilde Z}^{-1}}=\prog{(\widetilde Z-\Delta m)/\widetilde Z}=1-\prog{{\Delta m/\widetilde Z}}\\
\\
\hskip 2cm =1-(Z_{-})^{-1}\prof{\Delta m I_{\{\widetilde Z>0\}}}=1-\prof{ I_{\{\widetilde Z=0\}}}=\prof{ I_{\{\widetilde Z>0\}}}.\end{array}$$
In the above string of equalities, the third equality follows from the first equality in (\ref{ZZtildeinaccessiblejumps}), while the fourth equality is due to  $^{p,\mathbb F}(\Delta m)=0$ and  $\Delta m I_{\{ \widetilde Z=0\}}=-Z_{-}I_{\{ \widetilde Z=0\}}$. This ends the proof of assertion (b).\\
 \\
\noindent(c) If $M$ is a quasi-left-continuous $\mathbb F$-local martingale, then $\prof{\Delta MI_{\{\widetilde Z>0\}}}=0$, and the first property of the assertion (c) follows. Applying the first property   to $M=m^{(qc)}$  and using that,   on $\Lbrack 0,\tau\Lbrack$, one has
$\Delta m^{(qc)}\left({Z_{-}+\Delta m}\right)^{-1}=\Delta m^{(qc)}\left(Z_{-}+\Delta m^{(qc)}\right)^{-1}$, we obtain
$$
{1\over{Z_{-}}}\prog{\frac{Z_{-}}{Z_{-}+\Delta m^{(qc)}}}={1\over{Z_{-}}}\left(1-\prog{\frac{\Delta m^{(qc)}}{Z_{-}+\Delta m^{(qc)}}}\right)={1\over{Z_{-}}}.$$
This proves assertion (c), and the proof of the lemma is achieved.\end{proof}


\noindent The next lemma proves that $\widetilde Z^{-1}I_{\Lbrack0,\tau\Lbrack}$ is Lebesgue-Stieljes-integrable with respect to any process that is $\mathbb F$-adapted with $\mathbb F$-locally integrable variation. Using this fact, the lemma addresses the question of how an $\mathbb F$-compensator stopped at $\tau$ can be written in terms of a $\mathbb G$-compensator, and constitutes a {\it sort of} converse result to Lemma \ref{lemmecrucial}--(a).

 \begin{lemma}\label{lem:beforetauforoptional1}
Let $V$ be an $\mathbb F$-adapted c\`adl\`ag  process. Then the following properties hold.\\
{\rm{(a)}} If $V$ belongs to ${\cal A}^+_{loc}(\mathbb F)$ (respectively $V\in{\cal A}^+(\mathbb F)$), then the process
\begin{equation}\label{ZcdotVbeforetau}
U:= \widetilde Z ^{-1}I_{\Lbrack 0, \tau\Lbrack}\is V,
\end{equation}
belongs to ${\cal A}^+_{loc}(\mathbb G)$ (respectively to ${\cal A}^+(\mathbb G)$).\\
{\rm{(b)}} If $V$ has   $\mathbb F$-locally integrable variation, then the process $U$ is well defined, its variation is $\mathbb G$-locally integrable, and its $\mathbb G$-dual predictable projection is given by
\begin{equation}\label{Gcompensatorforqlc}
 U ^{p,\mathbb G}=\left({1\over{\widetilde Z}}I_{\Lbrack 0, \tau\Lbrack}\is V\right)^{p,\mathbb G}={1\over{Z_{-}}}I_{\Lbrack 0, \tau\Lbrack}\is \left(I_{\{ \widetilde Z>0\}}\is V\right)^{p,\mathbb F}.
\end{equation}
In particular, if $\mbox{supp} V \subset \{ \widetilde Z>0\}$,
then, on ${\Lbrack 0, \tau\Lbrack}$, one has $ V^{p,\mathbb F} =
Z_-\centerdot U ^{p,\mathbb G}$.
\end{lemma}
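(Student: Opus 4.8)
The plan is to prove (a) first and then derive (b) from (a) together with Lemma~\ref{lemmecrucial}--(a), by truncation and monotone passage to the limit.

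For (a) the hypothesis already gives $V\in{\cal A}^+_{loc}(\mathbb F)$ (or ${\cal A}^+(\mathbb F)$), and, by localizing along a localizing sequence of $\mathbb F$-stopping times for $V$ --- which are also $\mathbb G$-stopping times since $\mathbb F\subseteq\mathbb G$ --- it suffices to treat $V\in{\cal A}^+(\mathbb F)$. Then $U$ is manifestly $\mathbb G$-adapted, c\`adl\`ag and non-decreasing, so only its integrability has to be checked. Here I would use that $V$, being $\mathbb F$-adapted c\`adl\`ag, is $\mathbb F$-optional, so that $E[(K\is V)_\infty]=E[({}^{o,\mathbb F}K\is V)_\infty]$ for every nonnegative measurable $K$, and apply this with $K:=\widetilde Z^{-1}I_{\{\widetilde Z>0\}}I_{\Lbrack 0,\tau\Lbrack}$, which coincides with $\widetilde Z^{-1}I_{\Lbrack 0,\tau\Lbrack}$ because $\widetilde Z>0$ on $\Lbrack 0,\tau\Lbrack$ by (\ref{disjoint}). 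The factor $\widetilde Z^{-1}I_{\{\widetilde Z>0\}}$ is $\mathbb F$-optional and so factors out of ${}^{o,\mathbb F}(\cdot)$, while the $\mathbb F$-optional projection of $I_{\Lbrack 0,\tau\Lbrack}$ is precisely $\widetilde Z$ (the defining property of $\widetilde Z$ recalled after (\ref{ZandZtilde})); the two cancel, leaving $E[U_\infty]=E[(I_{\{\widetilde Z>0\}}\is V)_\infty]\le E[V_\infty]<\infty$. Undoing the localization yields $U\in{\cal A}^+_{loc}(\mathbb G)$ (and $U\in{\cal A}^+(\mathbb G)$ in the ``respectively'' case), which also proves the first two claims of (b).

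For (b) I would first reduce to $V\in{\cal A}^+_{loc}(\mathbb F)$ by writing $V$ as a difference of two non-decreasing $\mathbb F$-adapted processes with $\mathbb F$-locally integrable variation (via the $\mathbb F$-adapted total-variation process of $V$), every quantity in the statement being additive in $V$. By (a), $U$ then has $\mathbb G$-locally integrable variation, so $U^{p,\mathbb G}$ is well defined. Now put $V^{(n)}:=I_{\{\widetilde Z\ge 1/n\}}\widetilde Z^{-1}\is V$; this is $\mathbb F$-adapted with $\mathbb F$-locally integrable variation (its integrand is bounded by $n$), so Lemma~\ref{lemmecrucial}--(a) applies to $V^{(n)}$ and gives
$$\comg{(V^{(n)})^\tau}=(Z_-)^{-1}I_{\Lbrack 0,\tau\Lbrack}\is(\widetilde Z\is V^{(n)})^{p,\mathbb F}=(Z_-)^{-1}I_{\Lbrack 0,\tau\Lbrack}\is(I_{\{\widetilde Z\ge 1/n\}}\is V)^{p,\mathbb F}.$$
Letting $n\to\infty$: since $\widetilde Z>0$ on $\Lbrack 0,\tau\Lbrack$, $(V^{(n)})^\tau=I_{\Lbrack 0,\tau\Lbrack}I_{\{\widetilde Z\ge 1/n\}}\widetilde Z^{-1}\is V$ increases to $U$, and since $U$ has locally integrable variation one may push $(\cdot)^{p,\mathbb G}$ through this increasing limit (monotone convergence for dual predictable projections), so $\comg{(V^{(n)})^\tau}\uparrow U^{p,\mathbb G}$. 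On the other side $I_{\{\widetilde Z\ge 1/n\}}\is V\uparrow I_{\{\widetilde Z>0\}}\is V$, hence $(I_{\{\widetilde Z\ge 1/n\}}\is V)^{p,\mathbb F}\uparrow(I_{\{\widetilde Z>0\}}\is V)^{p,\mathbb F}$, and since $(Z_-)^{-1}I_{\Lbrack 0,\tau\Lbrack}$ is a fixed nonnegative $\mathbb G$-predictable process, $\mathbb G$-locally bounded by Lemma~\ref{proposition1}--(c), the right-hand side increases to $(Z_-)^{-1}I_{\Lbrack 0,\tau\Lbrack}\is(I_{\{\widetilde Z>0\}}\is V)^{p,\mathbb F}$. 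Equating the two limits gives the announced formula. The ``in particular'' assertion then follows immediately: if $\mathrm{supp}\,V\subset\{\widetilde Z>0\}$ then $I_{\{\widetilde Z>0\}}\is V=V$, so $U^{p,\mathbb G}=(Z_-)^{-1}I_{\Lbrack 0,\tau\Lbrack}\is V^{p,\mathbb F}$, and multiplying by $Z_-$ (strictly positive on $\Lbrack 0,\tau\Lbrack$) gives $Z_-\is U^{p,\mathbb G}=I_{\Lbrack 0,\tau\Lbrack}\is V^{p,\mathbb F}$, i.e. $V^{p,\mathbb F}=Z_-\is U^{p,\mathbb G}$ on $\Lbrack 0,\tau\Lbrack$.

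The step I expect to be the main obstacle is the monotone passage to the limit under the $\mathbb G$-dual predictable projection: it genuinely needs the a priori knowledge that $U$ has $\mathbb G$-locally integrable variation --- which is exactly why (a) has to be established first --- and one must be careful about the endpoint $s=\tau$, which is controlled by (\ref{disjoint}), i.e. by the fact that $\{\widetilde Z=0\}$ and $\{Z_-=0\}$ are disjoint from $\Lbrack 0,\tau\Lbrack$ (so $\widetilde Z_\tau>0$ and $Z_{\tau-}>0$). A minor point in (a) is that the factor extracted from ${}^{o,\mathbb F}(\cdot)$ must be $\mathbb F$-optional, which is why one writes $\widetilde Z^{-1}I_{\{\widetilde Z>0\}}$ rather than $\widetilde Z^{-1}$.
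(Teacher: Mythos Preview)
Your proof is correct and follows essentially the same route as the paper: for (a) the paper also reduces to $V\in{\cal A}^+(\mathbb F)$ and uses exactly the cancellation $\widetilde Z^{-1}\cdot{}^{o,\mathbb F}(I_{\Lbrack 0,\tau\Lbrack})=\widetilde Z^{-1}\cdot\widetilde Z=1$ on $\{\widetilde Z>0\}$ (written there as $P(\tau\ge t\mid{\cal F}_t)/\widetilde Z_t$), and for (b) the paper uses the very same truncation $U_n:=I_{\Lbrack 0,\tau\Lbrack}\is(\widetilde Z^{-1}I_{\{\widetilde Z\ge 1/n\}}\is V)$, applies Lemma~\ref{lemmecrucial}--(a), and passes to the limit. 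Your write-up is slightly more explicit about the monotone convergence step and the ``in particular'' clause, but the argument is the paper's.
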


\begin{proof} (a) Suppose that $V\in{\cal A}^+_{loc}(\mathbb F)$. First, remark that, due to the fact that  $\widetilde Z$ is positive on $\Lbrack 0,
\tau\Lbrack$,
 $U$ is well defined.  Let $(\vartheta_n)_{n\geq 1}$ be a sequence of $\mathbb
F$-stopping times that increases to $+\infty$ such that $
E\left(V_{\vartheta_n}\right)<+\infty.$ Then, if
$E\left(U_{\vartheta_n}\right)\leq E\left(V_{\vartheta_n}\right)
$,   assertion (a) follows.  Thus, we calculate
 \begin{eqnarray*}
E\left(U_{\vartheta_n}\right )&=&\displaystyle E\left(\int_0^{\vartheta_n} I_{\{0<t\leq\tau\}}{1\over{\widetilde{Z}_t}}d V_t\right)=E\left(\int_0^{\vartheta_n} {{P(\tau\geq t|{\cal F}_t)}\over{\widetilde{Z}_t}}I_{\{ \widetilde{Z}_t>0\}}d V_t\right)\\
&\leq& E\left(V_{\vartheta_n}\right).\end{eqnarray*}
The last inequality is obtained due to $\widetilde{Z}_t:= P(\tau\geq t|{\cal F}_t)$. This ends the proof of assertion (a) of the lemma.\\
(b) Suppose that $V\in {\cal A}_{loc}(\mathbb F)$, and denote by
$W:= V^++V^-$ its variation. Then $W\in {\cal A}^+_{loc}(\mathbb
F)$, and a direct application of the first assertion implies that
$$ \left({\widetilde Z}\right)^{-1}I_{\Lbrack 0, \tau\Lbrack}\is W\in {\cal A}^+_{loc}(\mathbb G).$$ As a result, we deduce that $U$ given by (\ref{ZcdotVbeforetau}) for the case of $V=V^+-V^-$ is well defined and has variation equal to  $\left({\widetilde Z}\right)^{-1}I_{\Lbrack 0, \tau\Lbrack}\is W $ which is $\mathbb G$-locally integrable.
By setting $U_n:= I_{\Lbrack 0, \tau\Lbrack}\is  \left({\widetilde Z}^{-1}I_{\{ \widetilde Z\geq 1/n\}}\is
 V\right)$, we derive,
due to (\ref{GcompensatorofVbeforetaugeneral}),
$$
\left(U_n\right)^{p,\mathbb G}={1\over{ Z_{-}}}I_{\Lbrack 0,
\tau\Lbrack}\is\left(I_{\{ \widetilde Z\geq 1/n\}}\is
V\right)^{p,\mathbb F}.$$    Hence, since $
 U  ^{p,\mathbb G}=\lim_{n\longrightarrow+\infty}\left(U_n\right)^{p,\mathbb G}, $
by taking the limit in the above equality, (\ref{Gcompensatorforqlc}) follows
immediately, and the lemma is proved. 
\end{proof}

\subsection{An Important $\mathbb G$- local martingale}
\noindent In this subsection, we   introduce a $\mathbb
G$- local martingale that will be  crucial for the construction of the deflator.
\begin{lemma}\label{lem:vbfinite}
The following nondecreasing process
\begin{eqnarray}\label{V(b)process}
V^{\mathbb G}_t := \sum_{0\leq u\leq t} \ ^{p,\mathbb
F}\left(I_{\{\widetilde{Z}=0\}}\right)_u I_{\{u\leq \tau\}}
\end{eqnarray}
is $\mathbb G$-predictable, c\`adl\`ag, and locally bounded.
\end{lemma}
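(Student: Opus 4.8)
The plan is to show that $V^{\mathbb G}$ is $\mathbb G$-predictable and c\`adl\`ag by inspection, and then reduce its local boundedness to the local boundedness of a related $\mathbb F$-adapted increasing process together with Lemma \ref{proposition1}--(c). First I would observe that $V^{\mathbb G}$ is a pure-jump nondecreasing process whose jumps occur at the (countably many, $\mathbb F$-predictable) times where $^{p,\mathbb F}\left(I_{\{\widetilde Z=0\}}\right)\neq 0$; since $^{p,\mathbb F}\left(I_{\{\widetilde Z=0\}}\right)$ is $\mathbb F$-predictable and $I_{\{u\leq\tau\}}=I_{\Lbrack 0,\tau\Lbrack}+I_{\{u=\tau\}}$ is left-continuous in $u$ along $\Lbrack 0,\tau\Rbrack$, the sum defining $V^{\mathbb G}$ is $\mathbb G$-predictable and c\`adl\`ag. (Here one uses that $\mathbb F$-predictable processes are $\mathbb G$-predictable, and that $\Lbrack 0,\tau\Rbrack$ is a $\mathbb G$-predictable set since $\tau$ is a $\mathbb G$-stopping time.)

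For local boundedness, the key point is the identity $^{p,\mathbb F}\left(I_{\{\widetilde Z=0\}}\right) = 1 - {}^{p,\mathbb F}\left(I_{\{\widetilde Z>0\}}\right)$ combined with the second equality in (\ref{ZZtildeinaccessiblejumps}), which gives, on $\Rbrack 0,\tau\Lbrack$,
$$
{}^{p,\mathbb F}\left(I_{\{\widetilde Z=0\}}\right) = Z_{-}\,{}^{p,\mathbb G}\!\left(1 - \frac{Z_{-}}{\widetilde Z}\right) = Z_{-}\,{}^{p,\mathbb G}\!\left(\frac{\Delta m}{\widetilde Z}\right) = -\,Z_{-}\,{}^{p,\mathbb G}\!\left(\frac{\Delta m^{(ti)}}{\widetilde Z}\right) + \text{(accessible part)}.
$$
Rather than chase this decomposition, the cleaner route is to write $V^{\mathbb G} = W^{\tau} - $ (a $\mathbb G$-local martingale)-type expression via Lemma \ref{lemmecrucial}. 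Concretely, apply Lemma \ref{lemmecrucial}--(a) with $V$ replaced by the $\mathbb F$-adapted increasing process $A := {}^{p,\mathbb F}\big(I_{\{\widetilde Z=0\}}\big)\star\mu^{\mathbb F}$ — more precisely, take $A$ to be the discrete increasing $\mathbb F$-predictable process $\sum_{0\leq u\leq\cdot}{}^{p,\mathbb F}\big(I_{\{\widetilde Z=0\}}\big)_u$, which has $\mathbb F$-locally integrable variation because its compensator equals $I_{\{\widetilde Z=0\}}$ summed, and $\{\widetilde Z=0\}\subset\Rbrack\widehat R,+\infty\Lbrack$ meets $\Lbrack 0,\tau\Lbrack$ in a controlled way. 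Then (\ref{GcompensatorofVbeforetaugeneral}) expresses $(A^{\tau})^{p,\mathbb G}$ in terms of $A$, and since $\widetilde Z = 0$ exactly where the mass of $A$ sits...

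Actually the simplest argument: note that on $\{u\leq\tau\}$ we have $u\leq\widehat R$ by (\ref{tauandStoppingtimes}), hence $Z_{u-}>0$ there (strictly, on $\Rbrack 0,\tau\Rbrack$, $Z_->0$ by (\ref{disjoint})). Using the second identity in (\ref{ZZtildeinaccessiblejumps}) and that $\widetilde Z>0$ on $\Lbrack 0,\tau\Lbrack$, one gets $^{p,\mathbb F}\big(I_{\{\widetilde Z=0\}}\big)I_{\{u\leq\tau\}} \leq Z_{u-}\cdot{}^{p,\mathbb G}\big(\widetilde Z^{-1}\big)_u I_{\{u\leq\tau\}}$ pointwise. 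Summing over $u$ and invoking Lemma \ref{lem:beforetauforoptional1}--(b) with $V$ a suitable increasing process of $\mathbb F$-locally integrable variation supported where needed, we find $V^{\mathbb G} \leq Z_{-}\is U^{p,\mathbb G}$ for a $\mathbb G$-locally integrable increasing $U$; the process on the right is $\mathbb G$-predictable and locally bounded because $Z_{-}^{-1}I_{\Lbrack 0,\tau\Lbrack}$ is $\mathbb G$-locally bounded by Lemma \ref{proposition1}--(c) while $\widetilde Z^{-1}I_{\Lbrack 0,\tau\Lbrack}\is(\,\cdot\,)$ has $\mathbb G$-locally integrable variation. Finally, a nondecreasing $\mathbb G$-predictable process dominated by a locally bounded one and with bounded jumps (here $\Delta V^{\mathbb G}\leq 1$) is itself locally bounded, by stopping at the $\mathbb G$-predictable times $\inf\{t: V^{\mathbb G}_t\geq k\}$, each of which is announced.

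\textbf{Main obstacle.} The delicate point is making rigorous the passage from $^{p,\mathbb F}\big(I_{\{\widetilde Z=0\}}\big)$ — an $\mathbb F$-predictable, but a priori only finite-valued and not obviously locally integrable when summed — to a genuine $\mathbb F$-locally-integrable-variation process to which Lemma \ref{lemmecrucial}--(a) and Lemma \ref{lem:beforetauforoptional1} apply. The resolution is that $\{\widetilde Z=0\}$ has $\mathbb F$-predictable débit $\widehat R_0$ (Lemma \ref{proposition1}--(b)) and is disjoint from $\Lbrack 0,\tau\Lbrack$; stopping just before $\widehat R$ along a sequence announcing $\widehat R_0$ reduces everything to finitely many predictable jump times on $\Lbrack 0,\tau\Rbrack$, where $Z_->0$ is bounded below and all the compensators in play are finite. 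I expect the bookkeeping of these localizing sequences — reconciling the $\mathbb F$-localization of $A$ with the $\mathbb G$-localization of $\widetilde Z^{-1}I_{\Lbrack 0,\tau\Lbrack}$ — to be the part requiring the most care, everything else being a direct reading of the lemmas already proved.
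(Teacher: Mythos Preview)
Your proposal never lands on a working argument, and the one concrete inequality you do write down is wrong. From the second identity in (\ref{ZZtildeinaccessiblejumps}) one has, on $\Rbrack 0,\tau\Lbrack$, $Z_{-}\cdot{}^{p,\mathbb G}(\widetilde Z^{-1}) = {}^{p,\mathbb F}(I_{\{\widetilde Z>0\}}) = 1 - {}^{p,\mathbb F}(I_{\{\widetilde Z=0\}})$, so your claimed bound ${}^{p,\mathbb F}(I_{\{\widetilde Z=0\}}) \leq Z_{-}\cdot{}^{p,\mathbb G}(\widetilde Z^{-1})$ amounts to ${}^{p,\mathbb F}(I_{\{\widetilde Z=0\}}) \leq 1/2$, which is false in general. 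The other routes you sketch all run into the obstacle you yourself flag: you need the naked sum $\sum_{u\leq\cdot}{}^{p,\mathbb F}(I_{\{\widetilde Z=0\}})_u$ to be an $\mathbb F$-process of locally integrable variation before you can feed it into Lemmas \ref{lemmecrucial} or \ref{lem:beforetauforoptional1}, and nothing you write establishes that. Stopping ``just before $\widehat R$'' via an announcing sequence for $\widehat R_0$ does not help, since $\widehat R$ itself need not be predictable and the number of relevant predictable jump times on $\Lbrack 0,\tau\Rbrack$ is not a priori finite.

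The missing idea is elementary and purely $\mathbb F$-side: since $\widetilde Z = Z_{-} + \Delta m$, on the event $\{Z_{-}\geq\delta\}$ one has $\{\widetilde Z=0\}\subset\{\Delta m\leq -\delta\}$, hence
\[
{}^{p,\mathbb F}\!\left(I_{\{\widetilde Z=0\}}\right)I_{\{Z_{-}\geq\delta\}}
\ \leq\ {}^{p,\mathbb F}\!\left(I_{\{\Delta m\leq -\delta\}}\right)
\ \leq\ \delta^{-2}\,{}^{p,\mathbb F}\!\left((\Delta m)^2\right)
\ =\ \delta^{-2}\,\Delta\langle m\rangle^{\mathbb F}.
\]
Now combine this with Lemma \ref{proposition1}--(c): choose $\mathbb G$-stopping times $\tau_n^{\mathbb G}\uparrow\infty$ with $Z_{-}^{-1}I_{\Lbrack 0,\tau\Lbrack}\leq n+1$ on $\Lbrack 0,\tau_n^{\mathbb G}\Lbrack$, and $\mathbb F$-stopping times $\sigma_n\uparrow\infty$ with $\langle m\rangle^{\mathbb F}_{\sigma_n}\leq n+1$. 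Then on $\Lbrack 0,\tau_n^{\mathbb G}\wedge\sigma_n\Lbrack$ one has $Z_{-}\geq (n+1)^{-1}$, and summing the display with $\delta=(n+1)^{-1}$ gives $V^{\mathbb G}_{\tau_n^{\mathbb G}\wedge\sigma_n}\leq (n+1)^2\langle m\rangle^{\mathbb F}_{\sigma_n}\leq (n+1)^3$. This is exactly the paper's proof; the Chebyshev-against-$\langle m\rangle^{\mathbb F}$ step is the piece you are missing.
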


\begin{proof} The $\mathbb G$-predictability of $V^{\mathbb G}$ being  obvious, it remains to   prove that this process is $\mathbb G$-locally bounded.
   Since $Z_{-}^{-1}I_{\Lbrack 0, \tau \Lbrack}$  is $\mathbb{G}$-locally bounded, then there exists a sequence of
 $ \mathbb{G}$-stopping times $(\tau^{\mathbb{G}}_n)_{n\geq 1}$ increasing to infinity such that
\begin{eqnarray*}
\left(\frac{1}{Z_{-}}I_{\Lbrack 0, \tau \Lbrack}\right)^{\tau^{\mathbb{G}}_n} \leq n+1.
\end{eqnarray*}
Consider a sequence of $\mathbb{F}$-stopping times
$(\sigma_n)_{n\geq 1}$ that increases to infinity such that
$\left\langle m,m\right\rangle_{\sigma_n} \leq n+1.$ Then, for any nonnegative $\mathbb F$-predictable process $H$ which is bounded by $C>0$, we
calculate that
  \begin{eqnarray*}
    (H\is V^{\mathbb G})_{\sigma_n\wedge \tau^{\mathbb{G}}_n} &=& \sum_{0\leq u\leq \sigma_n\wedge \tau_n^{\mathbb{G}}} H_u\ ^{p,\mathbb F}\left(I_{\{\widetilde{Z}=0\}}\right)_u I_{\{u\leq \tau\}} I_{\{Z_{u-} \geq \frac{1}{n+1}\}}\\
    &\leq& \sum_{0\leq u\leq \sigma_n} H_u\ ^{p,\mathbb F}\left(I_{\{\Delta m \leq - \frac{1}{n+1}\}} \right)_u\\
     &\leq& (n+1)^2H\is\left\langle m,m\right\rangle_{\sigma_n} \leq C (n+1)^3.
  \end{eqnarray*}
   This ends the proof of the proposition.
\end{proof}

\noindent  The important
$\mathbb G$-local martingale  will  result  from an optional
integral. For the notion of compensated stochastic integral (or
optional stochastic integral), we refer the reader to \cite{Jacod}
(Chapter III.4.b p. 106-109) and \cite{dm2} (Chapter VIII.2
sections 32-35 p. 356-361 ). Below, for the sake of completeness,
we give the definition of this integration.

\begin{definition}\label{SIOptional} (see \cite{Jacod}, Definition (3.80))
Let $N$ be an $\mathbb H$-local martingale with continuous martingale part $N^c$, and let $H$  be an $\mathbb H$-optional process.\\
i) The process $H$ is said to be integrable with respect to $N$ if $^{p, \mathbb H}H$ is $N^c$ integrable and the process
$$\left ( \sum_{s\leq t}\left (H_s\Delta N_s -\, ^{p, \mathbb H}(H\Delta N)_s \right )^2 \right ) ^{{1/2}}$$
is locally integrable.
The set of integrable processes with respect to $N$ is denoted by $^{o}L^1_{loc}(N, \mathbb H)$.\\
ii) For $H\in \,^{o}L^1_{loc}(N, \mathbb H)$, the compensated stochastic integral of $H$ with respect to $N$, denoted by  $H\odot N$, is the unique local martingale $M$ which satisfies
\begin{equation}\label{continuousPartandJumpsofOI}
M^c=\,^{p, \mathbb H}H \centerdot N^c \quad \textrm{   and   }\quad \Delta M=H\Delta N -\, ^{p, \mathbb H}(H\Delta N).\end{equation}
\end{definition}

\noindent Among the most useful results of the literature involving this integral is the following

 \begin{proposition}\label{DM2} (see \cite{dm2})
{\rm{(a)}} The compensated stochastic integral $M=H\odot N$ is  the unique $\mathbb H$-local martingale such that, for any $\mathbb H$-local martingale $Y$,
$$\E\left ([M,Y]_\infty\right )= \E \left (\int_0^\infty H_s d[N, Y]_s\right ).$$
{\rm{(b)}} The process $[M,Y]- H\centerdot[N,Y]$ is an $\mathbb H$-local
martingale. As a result $[M,Y]\in {\cal A}_{loc}(\mathbb H)$ if and only if $H\centerdot[N,Y]\in {\cal A}_{loc}(\mathbb H)$ and in this case we have
$$
\langle M, Y\rangle^{\mathbb H}=\left(H\centerdot[N,Y]\right)^{p,\mathbb H}.$$
\end{proposition}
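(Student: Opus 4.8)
The plan is to prove assertion (b) first, by a direct computation of the quadratic covariation $[M,Y]$ built on the characterisation of $M=H\odot N$ recorded in Definition \ref{SIOptional}, and then to derive assertion (a) from it by localisation; this is essentially the route of \cite{dm2}. Writing $M=H\odot N$ and splitting both brackets into continuous and purely discontinuous parts, $[M,Y]=\langle M^c,Y^c\rangle+\sum_{0<s\leq\cdot}\Delta M_s\Delta Y_s$ and, likewise, $H\centerdot[N,Y]=H\centerdot\langle N^c,Y^c\rangle+\sum_{0<s\leq\cdot}H_s\Delta N_s\Delta Y_s$, and then substituting $M^c={}^{p,\mathbb H}H\centerdot N^c$ and $\Delta M=H\Delta N-{}^{p,\mathbb H}(H\Delta N)$ from (\ref{continuousPartandJumpsofOI}), the terms $\sum_s H_s\Delta N_s\Delta Y_s$ cancel and one is left with
$$
[M,Y]-H\centerdot[N,Y]=\bigl({}^{p,\mathbb H}H-H\bigr)\centerdot\langle N^c,Y^c\rangle-\sum_{0<s\leq\cdot}{}^{p,\mathbb H}(H\Delta N)_s\,\Delta Y_s .
$$

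It then remains to check that each term on the right-hand side is an $\mathbb H$-local martingale. For the first term, $\langle N^c,Y^c\rangle$ is $\mathbb H$-predictable, continuous and of $\mathbb H$-locally integrable variation (Kunita--Watanabe), and for any such process $A$ and any optional $K$ with $|K|\centerdot\mathrm{Var}(A)\in{\cal A}_{loc}(\mathbb H)$ one has $\bigl({}^{p,\mathbb H}K-K\bigr)\centerdot A\in{\cal M}_{0,loc}(\mathbb H)$: since $A=A^{p,\mathbb H}$ and ${}^{p,\mathbb H}\bigl({}^{p,\mathbb H}K-K\bigr)=0$, the identity $\E[(G\centerdot A)_\infty]=\E[({}^{p,\mathbb H}G\centerdot A)_\infty]$ applied to $G=({}^{p,\mathbb H}K-K)I_{\Rbrack 0,T\Rbrack}$ gives $\E\bigl[\bigl(({}^{p,\mathbb H}K-K)\centerdot A\bigr)_T\bigr]=0$ at every $\mathbb H$-stopping time $T$. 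For the second term, $K:={}^{p,\mathbb H}(H\Delta N)$, being the predictable projection of a thin process, is $\mathbb H$-predictable and carried by a countable union $\bigcup_k\Rbrack S_k\Lbrack$ of graphs of $\mathbb H$-predictable times; since $Y\in{\cal M}_{loc}(\mathbb H)$ one has $\E[\Delta Y_{S_k}\mid{\cal F}_{S_k-}]=0$ while $K_{S_k}$ is ${\cal F}_{S_k-}$-measurable, so each single-jump process $K_{S_k}\Delta Y_{S_k}I_{\Rbrack S_k,+\infty\Rbrack}$ has vanishing $\mathbb H$-dual predictable projection and is an $\mathbb H$-local martingale, and a routine localisation makes the whole sum one. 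This proves the main part of (b). For the ``as a result'' part, under the standing assumption $H\in{}^{o}L^1_{loc}(N,\mathbb H)$ the right-hand side above additionally has $\mathbb H$-locally integrable variation, so the local martingale $[M,Y]-H\centerdot[N,Y]$ lies in ${\cal A}_{loc}(\mathbb H)$ and therefore has vanishing $\mathbb H$-dual predictable projection; hence $[M,Y]\in{\cal A}_{loc}(\mathbb H)$ if and only if $H\centerdot[N,Y]\in{\cal A}_{loc}(\mathbb H)$, and in that case $\langle M,Y\rangle^{\mathbb H}=[M,Y]^{p,\mathbb H}=(H\centerdot[N,Y])^{p,\mathbb H}$.

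Assertion (a) then follows. Existence: for $M=H\odot N$ and any $\mathbb H$-local martingale $Y$, taking expectations in the identity of (b) (after the usual localisation and passage to the limit, whenever the right-hand side is well defined) gives $\E([M,Y]_\infty)=\E\bigl(\int_0^\infty H_s\,d[N,Y]_s\bigr)$. Uniqueness: if $M'$ is another $\mathbb H$-local martingale with $M'_0=M_0$ satisfying the same identity for every $\mathbb H$-local martingale $Y$, then applying it to $Y$ stopped along an arbitrary localising sequence forces $[M',Y]-H\centerdot[N,Y]\in{\cal M}_{loc}(\mathbb H)$ too; combined with (b) this yields $[M-M',Y]\in{\cal M}_{loc}(\mathbb H)$ for every $\mathbb H$-local martingale $Y$, and the choice $Y=M-M'$ shows that the nondecreasing process $[M-M']$, null at $0$, is an $\mathbb H$-local martingale, hence identically zero; therefore $M'=M$.

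The step I expect to be the main obstacle is the bookkeeping around optional versus predictable projections together with the attendant integrability: one must rigorously justify integrating the optional process ${}^{p,\mathbb H}H-H$ against the continuous predictable process $\langle N^c,Y^c\rangle$, check that the single-jump processes above do have locally integrable variation, and keep all the conditions encapsulated in $H\in{}^{o}L^1_{loc}(N,\mathbb H)$ in force, so that every localisation and limiting step — including the one underlying the very definition of $H\odot N$ and the limits in (a) — is legitimate.
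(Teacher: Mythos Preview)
The paper does not supply a proof of this proposition: it is stated with an explicit pointer to Dellacherie--Meyer \cite{dm2} and then used as a black box. There is therefore no ``paper's own proof'' to compare against; your sketch is in effect a reconstruction of the classical argument in \cite{dm2}, which you already cite as your model.

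Your route---computing $[M,Y]-H\centerdot[N,Y]$ from the defining identities (\ref{continuousPartandJumpsofOI}) and then checking that each residual term is a local martingale---is the standard one and is correct in outline. One small sharpening: the continuous term $({}^{p,\mathbb H}H-H)\centerdot\langle N^c,Y^c\rangle$ is not merely a local martingale but identically zero, since it is a continuous finite-variation process whose $\mathbb H$-dual predictable projection vanishes; recognising this avoids having to justify the Lebesgue--Stieltjes integrability of the optional integrand $H$ (as opposed to ${}^{p,\mathbb H}H$) against $\langle N^c,Y^c\rangle$ separately. The remaining bookkeeping you flag---local integrability of the single-jump pieces $K_{S_k}\Delta Y_{S_k}I_{\Rbrack S_k,+\infty\Rbrack}$ and the passage from (b) to the expectation identity in (a)---is routine once one works along a common localising sequence for $M$, $N$ and $Y$.
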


\noindent Now, we are in the stage of defining the $\mathbb G$-local martingale which will play the role of deflator for a class of processes.

\begin{proposition}\label{lemmforNtilde} Consider the following $\mathbb G$-local martingale
  \begin{eqnarray}\label{hatm}
  \widehat m :=    I_{\Lbrack 0,\tau\Lbrack}\is m - \frac{1}{Z_{-} }I_{\Lbrack 0,\tau\Lbrack}\is\langle m \rangle^{\mathbb F}\,,\end{eqnarray}
and the process
 \begin{equation}\label{integrabilitypofK}
   K:={{Z_{-}^2}\over{Z_{-}^2+\Delta \langle m\rangle^{\mathbb F}}}\,{1\over{\widetilde Z}} \,I_{\Lbrack 0,\tau\Lbrack}.\end{equation}
 Then,  $K$ belongs to    the space $^{o}L^1_{loc}(\widehat m, \mathbb G)$  defined in \ref{SIOptional}. Furthermore, the  $\mathbb G$-local martingale
  \begin{equation}\label{Ntilde}
  {\Lopt} := -K \odot\widehat{m},\end{equation}
 satisfies the following\\
{\rm{(a)}} ${\cal E}\left(\Lopt\right)>0$ (or equivalently $1+\Delta \Lopt>0$).\\
{\rm{(b)}}  For any $M\in {\cal M}_{0,loc}(\mathbb F)$,  setting $\widehat M  := M^{\tau} - Z_{-}^{-1} I_{\Rbrack 0,\tau\Lbrack}\is\langle M, m\rangle^{\mathbb F} $, we have
\begin{equation}\label{localintegrablebracket}
 [\Lopt, \widehat M ]\in{\cal A}_{loc}(\mathbb G)\ \ \left(\mbox{i.e.}\ \langle \Lopt, \widehat M \rangle^{\mathbb G}\ \mbox{ exists}\right)\,.\end{equation}
\end{proposition}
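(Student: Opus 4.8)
The plan is to obtain the jump structure of $\Lopt$ by one explicit computation on $\Rbrack 0,\tau\Lbrack$, to deduce from it the optional integrability of $K$ with respect to $\widehat m$, and then to read off (a) and (b).

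First I record preliminaries and the jump computation. By Proposition~\ref{semiGF} applied to $M=m$ (note $\crof{m,m}=\crof m$), the process $\widehat m$ of (\ref{hatm}) is a $\mathbb G$-local martingale. Since $\Delta m=\widetilde Z-Z_{-}\in[-1,1]$, $m$ is $\mathbb F$-locally bounded, so $\crof m$ exists, $\Delta\crof m=\prof{(\Delta m)^2}\leq 1$, $[m]\in{\cal A}^+_{loc}(\mathbb F)$, and by Lemma~\ref{proposition1}(c) the jump $\Delta\widehat m=\Delta m-Z_{-}^{-1}\Delta\crof m$ on $\Rbrack 0,\tau\Lbrack$ is $\mathbb G$-locally bounded, so $\widehat m$ is $\mathbb G$-locally bounded and $\crog{\widehat m}$ exists. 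A direct computation using $\widetilde Z=Z_{-}+\Delta m$ gives, on $\Rbrack 0,\tau\Lbrack$,
\[
K\Delta\widehat m=\frac{\Delta m}{\widetilde Z}-\frac{\Delta\crof m}{Z_{-}^2+\Delta\crof m}.
\]
Applying Lemma~\ref{lemmecrucial}(b) to $m$, together with $\prof{\Delta m}=0$ and $\Delta m\,I_{\{\widetilde Z=0\}}=-Z_{-}I_{\{\widetilde Z=0\}}$, yields $\prog{\Delta m/\widetilde Z}=\prof{I_{\{\widetilde Z=0\}}}$ on $\Rbrack 0,\tau\Lbrack$; since the second summand above is $\mathbb F$-predictable, bounded and carried by $\Lbrack 0,\tau\Lbrack$, one obtains on $\Rbrack 0,\tau\Lbrack$
\[
\prog{K\Delta\widehat m}=\prof{I_{\{\widetilde Z=0\}}}-\frac{\Delta\crof m}{Z_{-}^2+\Delta\crof m},\qquad
\prog{K}=\frac{Z_{-}^2}{Z_{-}^2+\Delta\crof m}\,\frac{\prof{I_{\{\widetilde Z>0\}}}}{Z_{-}}\leq\frac{1}{Z_{-}},
\]
all three processes vanishing off $\Lbrack 0,\tau\Lbrack$. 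Subtracting, the prospective jump of $\Lopt=-K\odot\widehat m$ is, on $\Rbrack 0,\tau\Lbrack$,
\[
\Delta\Lopt=\prof{I_{\{\widetilde Z=0\}}}-\frac{\Delta m}{\widetilde Z}=\frac{Z_{-}}{\widetilde Z}-\prof{I_{\{\widetilde Z>0\}}},
\]
the damping factor $Z_{-}^2/(Z_{-}^2+\Delta\crof m)$ in (\ref{integrabilitypofK}) having exactly cancelled the $\Delta\crof m$-term.

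Next I verify $K\in{}^{o}L^1_{loc}(\widehat m,\mathbb G)$ in the sense of Definition~\ref{SIOptional}. For the continuous part, $\prog{K}\leq Z_{-}^{-1}I_{\Rbrack 0,\tau\Lbrack}$ is $\mathbb G$-locally bounded by Lemma~\ref{proposition1}(c), and $\crog{(\widehat m)^c}=I_{\Lbrack 0,\tau\Lbrack}\is\langle m^c\rangle^{\mathbb F}\leq\crog{\widehat m}$ is $\mathbb G$-locally integrable, so $\prog{K}$ is locally $(\widehat m)^c$-integrable. For the jump part, by the computation above one must show that $\bigl(\sum_{s\leq\cdot}(\Delta m_s/\widetilde Z_s-\prof{I_{\{\widetilde Z=0\}}}_s)^2 I_{\{s<\tau\}}\bigr)^{1/2}$ is $\mathbb G$-locally integrable; by $\sqrt{a+b}\leq\sqrt a+\sqrt b$ the $\prof{I_{\{\widetilde Z=0\}}}$-part is dominated by $\sqrt{V^{\mathbb G}}$ with $V^{\mathbb G}$ of (\ref{V(b)process}) $\mathbb G$-locally bounded by Lemma~\ref{lem:vbfinite}, so it remains to control $\bigl(\sum_{s\leq\cdot}(\Delta m_s)^2\widetilde Z_s^{-2}I_{\{s<\tau\}}\bigr)^{1/2}$ --- \emph{this is the heart of the argument}. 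Writing $\Delta m=\Delta Z+\Delta D^{o,\mathbb F}$ and using $0\leq\Delta D^{o,\mathbb F}\leq\widetilde Z$, $D^{o,\mathbb F}\in{\cal A}^+(\mathbb F)$ and Lemma~\ref{lem:beforetauforoptional1}(a), the $\Delta D^{o,\mathbb F}$-contribution is bounded by $\widetilde Z^{-1}I_{\Lbrack 0,\tau\Lbrack}\is D^{o,\mathbb F}\in{\cal A}^+(\mathbb G)$; the $\Delta Z$-contribution is treated by localising along $\sigma_n=\inf\{t:Z_{t-}<1/n\}$ (so $Z_{-}\geq 1/n$) and splitting the jump times of $Z$ according to whether $\widetilde Z\geq Z_{-}/2$ (where $\widetilde Z_s^{-2}(\Delta Z_s)^2\leq 4n^2(\Delta Z_s)^2$, summable against the $\mathbb G$-locally integrable process $I_{\Lbrack 0,\tau\Lbrack}\is[Z]\leq\widetilde Z^{-1}I_{\Lbrack 0,\tau\Lbrack}\is[Z]$, cf. Lemma~\ref{lem:beforetauforoptional1}(a)) or $\widetilde Z<Z_{-}/2$, these latter being ``large'' jumps whose counting process is dominated by $4n^2[Z]$ and whose $\widetilde Z^{-2}$-weights are controlled by a dyadic decomposition of $\{\widetilde Z\in[2^{-j-1},2^{-j})\}$ combined with the estimate $\prof{I_{\{\widetilde Z<2^{-j}\}}I_{\{Z_{-}\geq 1/n\}}}\leq(1/n-2^{-j})^{-2}\Delta\crof m$ (obtained as in the proof of Lemma~\ref{lem:vbfinite} from $\{\widetilde Z<2^{-j}\}\cap\{Z_{-}\geq 1/n\}\subset\{|\Delta m|>1/n-2^{-j}\}$) and the identity $\E[\widetilde Z^{-1}I_{\{\cdot\leq\tau\}}\mid{\cal F}_{\cdot}]=I_{\{\widetilde Z>0\}}$ from the proof of Lemma~\ref{lem:beforetauforoptional1}(a). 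This gives $K\in{}^{o}L^1_{loc}(\widehat m,\mathbb G)$, so $\Lopt=-K\odot\widehat m$ is a well-defined $\mathbb G$-local martingale.

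Finally, (a) follows at once: $\Delta\Lopt$ vanishes off $\Lbrack 0,\tau\Lbrack$, and on $\Rbrack 0,\tau\Lbrack$, by the jump formula,
\[
1+\Delta\Lopt=\frac{Z_{-}}{\widetilde Z}+\prof{I_{\{\widetilde Z=0\}}}>0,
\]
both summands being nonnegative and $Z_{-},\widetilde Z>0$ there; hence ${\cal E}(\Lopt)>0$. For (b), $\widehat M$ is a $\mathbb G$-local martingale by Proposition~\ref{semiGF}, and is well defined because $\crof{M,m}\in{\cal A}_{loc}(\mathbb F)$ (Kunita--Watanabe gives $\mathrm{Var}([M,m])\leq[m]^{1/2}[M]^{1/2}$, with $[m]$ $\mathbb F$-locally bounded since $m$ has bounded jumps and $[M]^{1/2}$ $\mathbb F$-locally integrable by Davis' inequality, so also $[m,M]\in{\cal A}_{loc}(\mathbb F)$). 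By Proposition~\ref{DM2}(b) applied to $\Lopt=-K\odot\widehat m$, $[\Lopt,\widehat M]+K\is[\widehat m,\widehat M]$ is a $\mathbb G$-local martingale, so it suffices to prove $K\is[\widehat m,\widehat M]\in{\cal A}_{loc}(\mathbb G)$. Decomposing $[\widehat m,\widehat M]$ along the martingale and finite-variation parts of $\widehat m$ and $\widehat M$, the leading term is $I_{\Lbrack 0,\tau\Lbrack}\is[m,M]^{\tau}$, and after multiplication by $K$ its variation is dominated by $\widetilde Z^{-1}I_{\Lbrack 0,\tau\Lbrack}\is\mathrm{Var}([m,M])\in{\cal A}^+_{loc}(\mathbb G)$ (using $K\leq\widetilde Z^{-1}I_{\Lbrack 0,\tau\Lbrack}$, $\mathrm{Var}([m,M])\in{\cal A}^+_{loc}(\mathbb F)$, and Lemma~\ref{lem:beforetauforoptional1}(a)); the cross terms, involving $Z_{-}^{-1}I_{\Lbrack 0,\tau\Lbrack}\is\crof m$ and $Z_{-}^{-1}I_{\Rbrack 0,\tau\Lbrack}\is\crof{M,m}$, are handled the same way, the Cauchy--Schwarz bounds $Z_{-}^2\sqrt{\Delta\crof m}/(Z_{-}^2+\Delta\crof m)\leq Z_{-}/2$ and $|\Delta\crof{M,m}|\leq\sqrt{\Delta\crof m}\,\sqrt{\prof{(\Delta M)^2}}$ showing that, after multiplication by $K$, each is again dominated by $\widetilde Z^{-1}I_{\Lbrack 0,\tau\Lbrack}\is V$ with $V\in{\cal A}^+_{loc}(\mathbb F)$; thus $K\is[\widehat m,\widehat M]\in{\cal A}_{loc}(\mathbb G)$ and (\ref{localintegrablebracket}) follows.

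I expect the integrability in Step~2 --- the $\mathbb G$-local integrability of $\bigl(\sum_{s\leq\cdot}(\Delta m_s)^2\widetilde Z_s^{-2}I_{\{s<\tau\}}\bigr)^{1/2}$, on which (b) also leans --- to be the main obstacle, because $\widetilde Z$ is not bounded away from $0$ on $\Lbrack 0,\tau\Lbrack$, so $K\notin{}^{o}L^2_{loc}(\widehat m,\mathbb G)$ and one is forced to use the weak ($\mathrm L^1$-type) criterion of Definition~\ref{SIOptional} together with the fine control of $\widetilde Z$ near $0$ provided by Lemmas~\ref{proposition1}(c), \ref{lem:vbfinite} and \ref{lem:beforetauforoptional1} and by the bounded-jumps property of $m$.
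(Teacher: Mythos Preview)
Your jump computation, the proof of (a), and the reduction of (b) to $K\is[\widehat m,\widehat M]\in{\cal A}_{loc}(\mathbb G)$ via Proposition~\ref{DM2} followed by the four-term decomposition of $[\widehat m,\widehat M]$ all match the paper essentially line by line. The paper handles (b) exactly as you do: each of $[m,M]$, $[m,\langle M,m\rangle^{\mathbb F}]$, $[\langle m\rangle^{\mathbb F},M]$, $[\langle m\rangle^{\mathbb F},\langle M,m\rangle^{\mathbb F}]$ lies in ${\cal A}_{loc}(\mathbb F)$ because $m$ is locally bounded, and Lemma~\ref{lem:beforetauforoptional1} together with the $\mathbb G$-local boundedness of $Z_-^{-p}I_{\Lbrack 0,\tau\Lbrack}$ finishes the job; your extra Cauchy--Schwarz refinements are not needed but do no harm.

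The one place where your route diverges is precisely the step you flag as the obstacle: the $\mathbb G$-local integrability of $\bigl(\sum_{s\leq\cdot}(\Delta m_s/\widetilde Z_s)^2I_{\{s\leq\tau\}}\bigr)^{1/2}$. Your decomposition $\Delta m=\Delta Z+\Delta D^{o,\mathbb F}$ buys nothing (the $\Delta Z$-part is just as hard as the original), and the dyadic layer-cake in $\widetilde Z$ combined with a Chebyshev bound on $\prof{I_{\{\widetilde Z<2^{-j}\}}}$ does not visibly control a sum with $4^j$-weights --- as written this piece is a sketch, not a proof. The paper's argument is shorter and avoids these detours: fix $\delta\in(0,1)$ and split by $C:=\{\Delta m<-\delta Z_-\}$ versus $C^c$ (your split $\widetilde Z<Z_-/2$ is exactly this with $\delta=1/2$). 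On $C^c$ one has $\widetilde Z\geq(1-\delta)Z_-$, giving $\bigl(\sum_{C^c}(\Delta m/\widetilde Z)^2I_{\Lbrack 0,\tau\Lbrack}\bigr)^{1/2}\leq(1-\delta)^{-1}\bigl(Z_-^{-2}I_{\Lbrack 0,\tau\Lbrack}\is[m]\bigr)^{1/2}$, which is handled by the $\mathbb G$-local boundedness of $Z_-^{-1}I_{\Lbrack 0,\tau\Lbrack}$ and local boundedness of $m$. On $C$ one passes to the $\ell^1$-norm, $\sqrt{\sum_C(\cdot)^2}\leq\sum_C|\Delta m|/\widetilde Z$; after $\mathbb G$-localising so that $Z_-\geq 1/n$ on $\Lbrack 0,\tau\Lbrack$ one has $C\subset\{\Delta m<-\delta/n\}$, and the $\mathbb F$-process $U_n:=\sum|\Delta m|I_{\{\Delta m<-\delta/n\}}$ satisfies $U_n\leq\frac{n+\delta}{\delta}\sum\frac{(\Delta m)^2}{1+|\Delta m|}\in{\cal A}^+_{loc}(\mathbb F)$, whence $\sum_C|\Delta m|\widetilde Z^{-1}I_{\Lbrack 0,\tau\Lbrack}\leq\widetilde Z^{-1}I_{\Lbrack 0,\tau\Lbrack}\is U_n\in{\cal A}^+_{loc}(\mathbb G)$ by Lemma~\ref{lem:beforetauforoptional1}(a). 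This is the mechanism your sketch reaches for --- switch to $\ell^1$ on the bad set and invoke Lemma~\ref{lem:beforetauforoptional1} --- but executed without the layer-cake detour.
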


\begin{proof} We shall prove  that $K\in {^{o}L^1_{loc}(\widehat m, \mathbb G)}$ in the appendix \ref{proofk}.  For the sake of simplicity in notations, throughout this proof, we will use $\kappa:= Z_{-}^2+ \Delta
\langle m \rangle^{\mathbb F}$. \\
We now prove assertions (a) and (b). Due to  (\ref{differencede Sauts}), we have, on $\Lbrack 0,\tau\Lbrack$,
 $$
 -\Delta\Lopt=K\Delta{\widehat m }-\,{^{p,\mathbb G\!}\left(K\Delta{\widehat m}\right)}=1-Z_{-}\Bigl(\widetilde Z\Bigr)^{-1}-\,^{p,\mathbb F\!}\Bigl(I_{\{ \widetilde Z=0\}}\Bigr).$$ Thus, we deduce that $1+{\Delta\Lopt}>0$, and assertion (a) is proved. In the rest of this proof, we will prove (\ref{localintegrablebracket}). To this end, let $M\in {\cal M}_{0,loc}(\mathbb F)$.
Thanks to Proposition \ref{DM2}, (\ref{localintegrablebracket}) is equivalent to
$$
 K \is [{\widehat m}, \widehat{M}] \in {\cal A}_{loc}(\mathbb G)\ (\mbox{or equivalently}\ V_4 :=\frac{1}{\widetilde{Z}}I_{\Lbrack 0, \tau\Lbrack} \is [{\widehat m}, \widehat{M}]\in  {\cal A}_{loc}(\mathbb G)),
$$
for any $M\in {\cal M}_{0,loc}(\mathbb F)$. Then, it is easy to check that
\begin{eqnarray*}\label{eq:crulembeforetau1}
V_4 &=& \frac{Z_{-}}{\widetilde{Z}}I_{\Lbrack 0, \tau\Lbrack} \is [{\widehat m}, \widehat{M}]=  \frac{1}{\widetilde{Z}}I_{\Lbrack 0, \tau\Lbrack} \is [m, \widehat{M}] - \frac{1}{Z_{-}\,{\widetilde{Z}}}I_{\Lbrack 0, \tau\Lbrack} \is [\langle m \rangle^{\mathbb F}, \widehat{M}]\nonumber\\
\\
&=&  \frac{1}{\widetilde{Z}}I_{\Lbrack 0, \tau\Lbrack} \is [m, M] - \frac{1}{Z_{-}\,\widetilde{Z}}I_{\Lbrack 0, \tau\Lbrack} \is [m, \langle M, m\rangle^{\mathbb F}] \nonumber\\
\\
&&- \frac{1}{Z_{-}\widetilde{Z}}I_{\Lbrack 0, \tau\Lbrack} \is [\langle m \rangle^{\mathbb F}, M] + \frac{1}{Z_{-}^2\,\widetilde{Z}}I_{\Lbrack 0, \tau\Lbrack} \is [\langle m \rangle^{\mathbb F}, \langle M, m\rangle^{\mathbb F}].
\end{eqnarray*}
 Since $m$ is an $\mathbb F$-locally bounded local martingale, all the processes $$ [{m}, {M}],\ [m, \langle M, m\rangle^{\mathbb F}],\  [\langle m \rangle^{\mathbb F}, M],\ \mbox{and}\ [\langle m \rangle^{\mathbb F}, \langle M, m\rangle^{\mathbb F}]$$ belong to ${\cal A}_{loc}(\mathbb F)$. Thus, by combining this fact with Lemma \ref{lem:beforetauforoptional1} and the $\mathbb G$-local boundedness of $Z_{-}^{-p}I_{\Lbrack 0, \tau\Lbrack}$ for any $p>0$, it follows that $V_4\in{\cal A}_{loc}(\mathbb  G)$. This ends the proof of the proposition.
  \end{proof}



\section{Explicit Deflators}\label{SectionQLCprocesses}

This  section describes some classes of $\mathbb
F$-quasi-left-continuous local martingales for which the NUPBR is
preserved after stopping with $\tau$. For these stopped processes,
we describe explicitly their local martingale densities in Theorems
\ref{individualSbeforetau}--\ref{theoremMgDensity2} with
an increasing degree of generality. We recall that $m^{(qc)}$ was
defined in (\ref{Sdecomposition}) and  $\Lopt$  was defined in
Proposition \ref{lemmforNtilde}.

\begin{theorem}\label{individualSbeforetau}
 Suppose that $S$ is a quasi-left-continuous $\mathbb F$-local martingale. If $S$ and $\tau$ satisfy
 \begin{equation}\label{condition111}
 \left\{\Delta S\not=0\right\}\cap\{Z_{-}>0\}\cap\{\widetilde Z=0\}=\emptyset,\end{equation}
 then the following equivalent assertions hold
\\
{\rm{(a)}} ${\cal E}\left({\Lopt}\right)S^{\tau}$ is a $\mathbb G$-local martingale.\\
{\rm{(b)}}  ${\cal E}\left(I_{\{ \widetilde
Z=0<Z_{-}\}}\odot{m}^{(qc)}\right)S$ is an $\mathbb F$-local
martingale.
\end{theorem}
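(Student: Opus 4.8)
The plan is to write each of (a) and (b) in the form ``local martingale plus a predictable finite-variation drift'' via Yor's product rule, and then to show that the drift vanishes; the quasi-left-continuity of $S$ and the hypothesis (\ref{condition111}) are exactly what make it vanish. Two preliminary remarks. Since $S$ is $\mathbb F$-quasi-left-continuous, $\Delta S$ is zero at every predictable time, so for any $\mathbb F$-local martingale $Y$ the predictable bracket $\langle S,Y\rangle^{\mathbb F}$ has no predictable-time jump and is therefore \emph{continuous}; in particular $\langle S,m\rangle^{\mathbb F}=\langle S,m^{(qc)}\rangle^{\mathbb F}$ is continuous. Secondly, setting $A:=Z_{-}^{-1}I_{\Lbrack 0,\tau\Lbrack}\is\langle S,m\rangle^{\mathbb F}$, which is a continuous $\mathbb G$-predictable process in ${\cal A}_{loc}(\mathbb G)$ (use Lemma~\ref{proposition1}(c)), Proposition~\ref{semiGF} gives that $\widehat S:=S^{\tau}-A$ is a $\mathbb G$-local martingale.

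\textbf{Assertion (a).} Decompose ${\cal E}(\Lopt)S^{\tau}={\cal E}(\Lopt)\widehat S+{\cal E}(\Lopt)A$. As $A$ is continuous of finite variation, $[{\cal E}(\Lopt),A]=0$, so ${\cal E}(\Lopt)A$ is a $\mathbb G$-local martingale plus ${\cal E}(\Lopt)_{-}\is A$; and $[{\cal E}(\Lopt),\widehat S]={\cal E}(\Lopt)_{-}\is[\Lopt,\widehat S]$, where, because $\Lopt=(-K)\odot\widehat m$, Proposition~\ref{DM2}(b) together with Proposition~\ref{lemmforNtilde}(b) (which makes $K\is[\widehat m,\widehat S]\in{\cal A}_{loc}(\mathbb G)$, so that $\langle\Lopt,\widehat S\rangle^{\mathbb G}=-(K\is[\widehat m,\widehat S])^{p,\mathbb G}$ exists) show that $[\Lopt,\widehat S]$ differs from $\langle\Lopt,\widehat S\rangle^{\mathbb G}$ by a $\mathbb G$-local martingale. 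Since ${\cal E}(\Lopt)_{-}>0$ (Proposition~\ref{lemmforNtilde}(a)) and the total drift ${\cal E}(\Lopt)_{-}\is(A+\langle\Lopt,\widehat S\rangle^{\mathbb G})$ lies in ${\cal A}_{loc}(\mathbb G)$, the process ${\cal E}(\Lopt)S^{\tau}$ is a $\mathbb G$-special semimartingale and is a $\mathbb G$-local martingale if and only if $(K\is[\widehat m,\widehat S])^{p,\mathbb G}=A$. To establish this last equality I would proceed as follows. The finite-variation parts of $\widehat m$ and $\widehat S$ --- respectively $-Z_{-}^{-1}I_{\Lbrack 0,\tau\Lbrack}\is\langle m\rangle^{\mathbb F}$, which jumps only at predictable times, and $-A$, which is continuous --- contribute nothing to the bracket once $S$ is quasi-left-continuous, so $[\widehat m,\widehat S]=I_{\Lbrack 0,\tau\Lbrack}\is[m,S]$. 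Writing $K=\phi\,\widetilde Z^{-1}I_{\Lbrack 0,\tau\Lbrack}$ with $\phi:=Z_{-}^{2}(Z_{-}^{2}+\Delta\langle m\rangle^{\mathbb F})^{-1}$ bounded and $\mathbb F$-predictable, Lemma~\ref{lem:beforetauforoptional1}(b) gives $(K\is[\widehat m,\widehat S])^{p,\mathbb G}=Z_{-}^{-1}I_{\Lbrack 0,\tau\Lbrack}\is\bigl(\phi I_{\{\widetilde Z>0\}}\is[m,S]\bigr)^{p,\mathbb F}$. Now $\phi\equiv1$ wherever $\Delta S\neq0$ (a totally inaccessible set, disjoint from $\{\Delta\langle m\rangle^{\mathbb F}>0\}$); the set $\{\widetilde Z=0<Z_{-}\}$ --- the only place on which $I_{\{\widetilde Z>0\}}$ could delete a jump of $S$ --- carries no jump of $S$ by (\ref{condition111}); and $\{\widetilde Z=0=Z_{-}\}\subset\{Z_{-}=0\}$ is predictable and is annihilated by $I_{\Lbrack 0,\tau\Lbrack}$. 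Splitting $[m,S]$ into its continuous part $[m^{c},S^{c}]$ and its jump part $\sum\Delta m\,\Delta S$ and using these three facts, $\bigl(\phi I_{\{\widetilde Z>0\}}\is[m,S]\bigr)^{p,\mathbb F}=\langle m,S\rangle^{\mathbb F}$ on $\Lbrack 0,\tau\Lbrack$, which is the required identity.

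\textbf{Assertion (b) and the equivalence.} Put $N:=I_{\{\widetilde Z=0<Z_{-}\}}\odot m^{(qc)}$. Because $\Delta m^{(qc)}$ is carried by totally inaccessible times, ${}^{p,\mathbb F}\bigl(I_{\{\widetilde Z=0<Z_{-}\}}\Delta m^{(qc)}\bigr)=0$, so $\Delta N=I_{\{\widetilde Z=0<Z_{-}\}}\Delta m^{(qc)}$, while $N^{c}={}^{p,\mathbb F}(I_{\{\widetilde Z=0<Z_{-}\}})\is(m^{(qc)})^{c}$ with ${}^{p,\mathbb F}(I_{\{\widetilde Z=0<Z_{-}\}})$ supported on a thin predictable set. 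Hence $[N,S]={}^{p,\mathbb F}(I_{\{\widetilde Z=0<Z_{-}\}})\is[(m^{(qc)})^{c},S^{c}]+\sum I_{\{\widetilde Z=0<Z_{-}\}}\Delta m^{(qc)}\,\Delta S=0$, the first term vanishing (a continuous integrator over a thin set) and the second by (\ref{condition111}). Therefore $[{\cal E}(N),S]=0$ and ${\cal E}(N)S-S_{0}={\cal E}(N)_{-}\is S+S_{-}\is{\cal E}(N)$ is an $\mathbb F$-local martingale, which is (b). Since (a) and (b) are both established, their equivalence is immediate under (\ref{condition111}); more precisely, redoing the two computations \emph{without} (\ref{condition111}) reduces (a) to $I_{\Lbrack 0,\tau\Lbrack}\is\bigl(\sum I_{\{\widetilde Z=0<Z_{-}\}}\Delta m^{(qc)}\,\Delta S\bigr)^{p,\mathbb F}=0$ and (b) to $\bigl(\sum I_{\{\widetilde Z=0<Z_{-}\}}\Delta m^{(qc)}\,\Delta S\bigr)^{p,\mathbb F}=0$, and these agree because that compensator is continuous (its summand lives on totally inaccessible times) and $\widetilde Z>0$ on $\Lbrack 0,\tau\Lbrack$.

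The step I expect to be the main obstacle is the identity $(K\is[\widehat m,\widehat S])^{p,\mathbb G}=A$: it requires converting a $\mathbb G$-dual predictable projection of an optional-stochastic-integral bracket into an $\mathbb F$-dual predictable projection (via Propositions~\ref{DM2} and~\ref{lemmforNtilde}(b) and Lemma~\ref{lem:beforetauforoptional1}) and then verifying, term by term in the canonical decomposition of $[m,S]$, that the weight $K$ collapses to $\widetilde Z^{-1}I_{\Lbrack 0,\tau\Lbrack}$ precisely on the jumps of $S$, the only potentially obstructing jumps --- those sitting on $\{\widetilde Z=0<Z_{-}\}$ --- being exactly the ones excluded by (\ref{condition111}). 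Quasi-left-continuity of $S$ is invoked throughout to discard the predictable-time contributions of $\langle m\rangle^{\mathbb F}$ and of the accessible part $m^{(a)}$.
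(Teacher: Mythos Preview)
Your argument is correct and follows essentially the same route as the paper: decompose $S^{\tau}=\widehat S+A$, use Proposition~\ref{DM2} and Proposition~\ref{lemmforNtilde}(b) to reduce (a) to $(K\is[\widehat m,\widehat S])^{p,\mathbb G}=A$, and pass to $\mathbb F$ via Lemma~\ref{lem:beforetauforoptional1}. Two minor differences in packaging are worth noting. First, the paper replaces $\Lopt$ by its quasi-left-continuous part $\Lopt^{(qc)}=I_{\Gamma_m^c}\is\Lopt$ at the outset (legitimate because $[\Lopt^{(a)},S^{\tau}]=0$ by quasi-left-continuity of $S$), so that the weight in $K$ automatically collapses since $\langle m^{(qc)}\rangle^{\mathbb F}$ is continuous; you instead keep $\Lopt$ and argue directly that $\phi=Z_{-}^{2}/(Z_{-}^{2}+\Delta\langle m\rangle^{\mathbb F})=1$ on $\{\Delta S\neq0\}$, which is equivalent but slightly more work. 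Second, the paper computes the single identity $-\langle\Lopt^{(qc)},\widehat S\rangle^{\mathbb G}=A+Z_{-}^{-1}I_{\Lbrack 0,\tau\Lbrack}\is\langle S,-I_{\{\widetilde Z=0<Z_{-}\}}\odot m^{(qc)}\rangle^{\mathbb F}$, so that (a)$\Leftrightarrow$(b) falls out immediately and (\ref{condition111}) is invoked once; you establish (a) and (b) separately under (\ref{condition111}) and then argue the equivalence, which is logically the same but less economical. Your handling of the passage from ``vanishing on $\Lbrack 0,\tau\Lbrack$'' to ``vanishing globally'' at the end is terse but correct: the compensator in question is $I_{\{Z_{-}>0\}}\is(\cdots)^{p,\mathbb F}$, and $I_{\Lbrack 0,\tau\Lbrack}\is V\equiv 0$ for an $\mathbb F$-predictable $V$ forces $Z_{-}\is V\equiv 0$, hence $V\equiv 0$ on $\{Z_{-}>0\}$.
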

\begin{proof}
We start by giving some useful observations. Since $S$ is $\mathbb F$-quasi-left-continuous, on the one hand we deduce that ($\Gamma_m$ is  defined in (\ref{Sdecomposition}))
\begin{equation}\label{qlcequalities}
\langle S,m\rangle^{\mathbb F}=\langle S,m^{(qc)}\rangle^{\mathbb
F}=\langle S, I_{\Gamma^c_m}\is m\rangle^{\mathbb F}.
\end{equation}
On the other hand, we note that assertion  (a) is equivalent to ${\cal E}(\Lopt^{(qc)})S^{\tau}$ is a $\mathbb G$-local martingale, where $\Lopt  ^{(qc)}$ is the quasi-left-continuous local martingale part of $\Lopt$ given by
$
  {\Lopt}^{(qc)}:= I_{\Gamma^c_m}\is  {\Lopt}=-K\odot {\widehat m}^{(qc)}.$ Here $K$   is  given in Proposition \ref{lemmforNtilde}  and
  $$
 {\widehat m}^{(qc)}:=  I_{\Lbrack 0, \tau\Lbrack}\is m^{(qc)}-(Z_{-}) I_{\Lbrack 0, \tau\Lbrack}\is\langle m^{(qc)}\rangle^{\mathbb F}.
 $$
It is  easy to check that (\ref{condition111}) is equivalent  to
\begin{equation}\label{condition111bis}
I_{\{ Z_{-}>0\ \&\ \widetilde Z=0\}}\is [S,m]=0.
 \end{equation}

\noindent We now compute $ -\langle{\Lopt}^{(qc)}, {\widehat S}
 \rangle^{\mathbb G}$, where $\widehat S $ is the $\mathbb
G$-local martingale given by
$${\widehat S}:=S^\tau- (Z_{-})^{-1} I_{\Lbrack 0, \tau\Lbrack}\is \crof{S, m}.$$ Due to the quasi-left continuity of $S$ and that of $m^{(qc)}$, the two processes $\crof{S,m}$ and $\langle m^{(qc)}\rangle^{\mathbb F}$ are continuous and $\dro{S, m^{(qc)}}=\dro{S, m}$. Hence, we obtain
 \begin{eqnarray*}
K\is\dro{\sh,{\widehat{m}}^{(qc)}}&= &K\is \dro{S, {\widehat{m}^{(qc)}}}-K{\Delta{\widehat{m}^{(qc)}}}(Z_{-})^{-1}\is \crof{S,m}\\
&=&\displaystyle (\widetilde Z)^{-1}I_{\Lbrack 0, \tau\Lbrack}\is\dro{S, m^{(qc)}}=\displaystyle (\widetilde Z)^{-1}I_{\Lbrack 0, \tau\Lbrack}\is\dro{S, m}.
\end{eqnarray*}
It follows that
\begin{eqnarray}
&&-\langle\Lopt^{(qc)}, {\widehat S}\rangle^{\mathbb
G}=\comg{K\is\dro{{\widehat S},
{{\widehat m}}^{(qc)}}}=\comg{(\widetilde Z)^{-1}I_{\Lbrack 0, \tau\Lbrack}\is\dro{S, m}} \nonumber\\
& =&(Z_{-})^{-1}I_{\Lbrack 0, \tau\Lbrack}\is \left(I_{\{\widetilde Z>0\}}\is [ S,m]\right)^{p,\mathbb F}\nonumber\\
&=&\displaystyle (Z_{-})^{-1}I_{\Lbrack 0, \tau\Lbrack}\is \langle S,m\rangle^{\mathbb F}-
 (Z_{-})^{-1}I_{\Lbrack 0, \tau\Lbrack}\is \Bigl(I_{\{\widetilde Z=0<Z_{-}\}}\is[ S,m]\Bigr)^{p,\mathbb F}\nonumber\\
& =&\displaystyle (Z_{-})^{-1}I_{\Lbrack 0, \tau\Lbrack}\is
\langle S,m\rangle^{\mathbb F} +(Z_{-})^{-1}I_{\Lbrack 0,
\tau\Lbrack}\is\langle S,-I_{\{\widetilde Z=0<Z_{-}\}}\odot
m^{(qc)}\rangle^{\mathbb F}. \label{equa2december3}
\end{eqnarray}
The first and the last equality follow from  Proposition \ref{DM2} applied to $\Lopt^{(qc)} $ and $-I_{\{\widetilde Z=0<Z_{-}\}}\odot m^{(qc)}$ respectively. The second and the third equalities are due to (\ref{qlcequalities}) and (\ref{Gcompensatorforqlc}) respectively.\\

\noindent Now, we   prove the theorem. Thanks to
(\ref{equa2december3}), it is obvious that assertion (a) is
equivalent to $\langle S,-I_{\{\widetilde Z=0<Z_{-}\}}\odot
m^{(qc)}\rangle^{\mathbb F}\equiv 0$ which  in turn is equivalent
to assertion (b). This ends the proof of the equivalence between
(a) and (b).

\noindent It is also clear that the condition (\ref{condition111}) or equivalently (\ref{condition111bis}) implies that assertion (b), due to  $I_{\{ \widetilde Z=0<Z_{-}\}}\odot m^{(qc)}\equiv 0$. \\

\end{proof}
 \begin{remark}

  Suppose that    $S$ is a quasi-left-continuous $\mathbb F$-local
 martingale and let  $\widetilde R_0$ be defined in Lemma \ref{proposition1}--(b).
  Then, ${\cal E}\left({\Lopt } \right){\overline S}^{\tau}$ is a
 $\mathbb G$-local martingale, where
 \begin{equation}\label{Sbar}
 \overline{S}:=S^{\widetilde R_0-}+
 \left(\Delta S_{\widetilde R_0}I_{\Rbrack\widetilde R_0,+\infty\Rbrack}\right)^{p,\mathbb F}.\end{equation}
Indeed, writing $$\overline{S}:=S^{\widetilde R_0 }- \Delta
S_{\widetilde R_0} I_{\Rbrack\widetilde R_0,+\infty\Rbrack}+
 \left(\Delta S_{\widetilde R_0}I_{\Rbrack\widetilde R_0,+\infty\Rbrack}\right)^{p,\mathbb F}
$$ it is easy to see that the condition (\ref{condition111}) is
satisfied for $\overline S$.
 \end{remark}

 \begin{corollary}  If $S$ is quasi-left continuous and  satisfies NUPBR$(\mathbb F)$
   and $
 \left\{\Delta S\not=0\right\}\cap\{Z_{-}>0\}\cap\{\widetilde
 Z=0\}=\emptyset$, then $S$   satisfies NUPBR$(\mathbb G)$
 \end{corollary}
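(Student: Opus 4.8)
The plan is to deduce this corollary from Theorem \ref{individualSbeforetau} together with the characterization of NUPBR via $\sigma$-martingale densities in Proposition \ref{charaterisationofNUPBRloc}. Since $S$ satisfies NUPBR$(\mathbb F)$, there exist a positive $\mathbb F$-local martingale $Y$ and an $\mathbb F$-predictable process $\theta$ with $0<\theta\le 1$ such that $Y(\theta\centerdot S)$ is an $\mathbb F$-local martingale. The first observation is that $\theta\centerdot S$ is again a quasi-left-continuous $\mathbb F$-local martingale (stochastic integration by a predictable integrand preserves both properties), that $Y$ is a valid $\sigma$-martingale density for $\theta\centerdot S$, and that the condition $\{\Delta S\ne 0\}\cap\{Z_->0\}\cap\{\widetilde Z=0\}=\emptyset$ passes to $\theta\centerdot S$ since $\{\Delta(\theta\centerdot S)\ne 0\}\subset\{\Delta S\ne 0\}$.

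However, Theorem \ref{individualSbeforetau} is stated for $\mathbb F$-local martingales, not for processes of the form ``positive local martingale times local martingale''. So the cleaner route is: first reduce to the case where $S$ itself is a quasi-left-continuous $\mathbb F$-local martingale. By Proposition \ref{charaterisationofNUPBRloc}, NUPBR$(\mathbb G)$ of $S^\tau$ is equivalent to the existence of a $\sigma$-martingale density for $S^\tau$ in $\mathbb G$. Using the density $Y$ in $\mathbb F$ and a suitable change of measure, or more directly by working under the probability $Q\sim P$ with $dQ/dP$ built from $Y$ (localized), one reduces to showing that a quasi-left-continuous $\mathbb F$-local martingale $N:=\theta\centerdot S$ satisfying $\{\Delta N\ne 0\}\cap\{Z_->0\}\cap\{\widetilde Z=0\}=\emptyset$ has $N^\tau$ satisfying NUPBR$(\mathbb G)$. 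For the latter, Theorem \ref{individualSbeforetau} applies verbatim: condition (\ref{condition111}) holds for $N$, so ${\cal E}(\Lopt)N^\tau$ is a $\mathbb G$-local martingale, and since ${\cal E}(\Lopt)>0$ by Proposition \ref{lemmforNtilde}(a), the process ${\cal E}(\Lopt)$ is a positive $\mathbb G$-local martingale serving as a $\sigma$-martingale density (with $\theta\equiv 1$) for $N^\tau$; hence $N^\tau$ satisfies NUPBR$(\mathbb G)$ by Proposition \ref{charaterisationofNUPBRloc}.

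To assemble the pieces: I would take $\psi\in{\cal L}(\mathbb F,S)$ with associated $\theta$, set $N:=\theta\centerdot S$, note $Y:=\psi\in{\cal L}(\mathbb F,N)$ with integrand $\equiv 1$, and observe that condition (\ref{condition111}) for $S$ forces the same for $N$. Applying Theorem \ref{individualSbeforetau} to $N$ gives that ${\cal E}(\Lopt)N^\tau$ is a $\mathbb G$-local martingale. But I also need a density for $Y^\tau N^\tau$-type objects; here the key technical input is Proposition \ref{semiGF}: since $Y$ is an $\mathbb F$-local martingale, $\widehat Y:=Y^\tau - Z_-^{-1}I_{\Rbrack 0,\tau\Lbrack}\centerdot\langle Y,m\rangle^{\mathbb F}$ is a positive $\mathbb G$-local martingale, and one checks that ${\cal E}(\Lopt)\,\widehat Y$ (or the appropriate product) is a positive $\mathbb G$-local martingale that is a $\sigma$-martingale density for $S^\tau = \theta^{-1}\centerdot N^\tau$, using Proposition \ref{lemmforNtilde}(b) to control the brackets. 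Then Proposition \ref{charaterisationofNUPBRloc} yields NUPBR$(\mathbb G)$ for $S^\tau$.

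The main obstacle I anticipate is the bookkeeping in combining the $\mathbb F$-density $Y$ with the $\mathbb G$-deflator $\Lopt$: one must verify that the product of the two positive $\mathbb G$-local martingales, multiplied against $\theta\centerdot S^\tau$, is still a $\mathbb G$-local martingale, which requires the local integrability of the relevant brackets --- precisely the content of (\ref{localintegrablebracket}) in Proposition \ref{lemmforNtilde}(b) applied to $M=\theta\centerdot S$ --- plus Yor's formula for products of semimartingales. Once that identity is in place the argument is routine; everything else is a direct citation of Theorem \ref{individualSbeforetau}, Proposition \ref{charaterisationofNUPBRloc}, and Proposition \ref{semiGF}.
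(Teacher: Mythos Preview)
Your sketch oscillates between two approaches without completing either, and each has a genuine gap.

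In the change-of-measure route you say: work under $Q\sim P$ built from $Y$ (localized), so that $N=\theta\centerdot S$ becomes a $Q$-local martingale, and then ``Theorem \ref{individualSbeforetau} applies verbatim''. But Theorem \ref{individualSbeforetau}, applied under $Q$, requires the hypothesis $\{\Delta N\neq 0\}\cap\{Z^Q_{-}>0\}\cap\{\widetilde Z^Q=0\}=\emptyset$, where $Z^Q_t=Q(\tau>t\mid{\cal F}_t)$ and $\widetilde Z^Q_t=Q(\tau\geq t\mid{\cal F}_t)$ are the Az\'ema supermartingales computed under $Q$, not under $P$. You never address why the set $\{Z_{-}>0\}\cap\{\widetilde Z=0\}$ is invariant under equivalent change of measure. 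This is exactly the missing ingredient, and it is the sole substantive point in the paper's proof: one shows $\{Z_{-}>0\}\cap\{\widetilde Z=0\}=\{Z^Q_{-}>0\}\cap\{\widetilde Z^Q=0\}$ for any $Q\sim P$ (via the optional/predictable section theorems), and then Proposition \ref{NUPBRLocalization} plus Theorem \ref{individualSbeforetau} finish the job.

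In your ``direct under $P$'' route you invoke Proposition \ref{semiGF} and assert that $\widehat Y:=Y^\tau-Z_-^{-1}I_{\Rbrack 0,\tau\Lbrack}\centerdot\langle Y,m\rangle^{\mathbb F}$ is a \emph{positive} $\mathbb G$-local martingale. Proposition \ref{semiGF} gives only that $\widehat Y$ is a $\mathbb G$-local martingale; positivity is not claimed and in general fails, since $\widehat Y$ differs from the positive process $Y^\tau$ by a predictable finite-variation term of uncontrolled sign. Without positivity, $\widehat Y$ (or any product involving it) cannot serve as a $\sigma$-martingale density. You also apply Theorem \ref{individualSbeforetau} to $N$ under $P$, but under $P$ the process $N=\theta\centerdot S$ is not a local martingale (only $YN$ is), so the theorem does not apply.
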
 \begin{proof} This follows from Proposition \ref{NUPBRLocalization},
  Theorem \ref {individualSbeforetau} and the fact that, if $Q$
 is equivalent to $P$, then we have  $$\{Z_{-}>0\}\cap\{\widetilde
 Z=0\}=  \{Z^Q_{-}>0\}\cap\{\widetilde
 Z^Q=0\}.$$
 Here $Z^Q_t= Q(\tau >t  \vert  {\cal F}_t) $ and
 $\widetilde Z^Q_t= Q(\tau  \geq t  \vert  {\cal F}_t)$. This last claim is a direct application of the optional and predictable selection measurable theorems, see Theorems 84 and  85 (or apply Theorem 86 directly) in \cite{dm2}.\end{proof}

\noindent In order to generalize the previous result, we need to
introduce more notations and recall others notations and some
results that are delegated in the Appendix. For the random measure
$\mu$ defined in (\ref{Wstarmu}), we associate its predictable
compensator random measure $\nu$ (see (\ref{modelSbis}) for
details). A direct application of  Theorem \ref{representation}
(in Appendix), to the martingale $m$, leads to the existence of a
local martingale $m^\bot$ as well as a $\widetilde{\cal P}(\mathbb
F)$-measurable functionals $f_m$, a process $\beta_m\in
L(S^c,\mathbb F)$ and an ${\widetilde{\cal O}}(\mathbb
F)$-measurable functional $g_m$ such that $f_m\in{\cal
G}^1_{loc}(\mu,\mathbb F )$, $g_m\in{\cal H}^1_{loc}(\mu,\mathbb F
)$ and $\beta_m\in L( {S^c})$ such that
\begin{eqnarray}\label{decompositionofm}
 m=\beta_m\is S^c+f_m\star(\mu -\nu)+ g_m\star \mu +  m^ \bot.
\end{eqnarray}
Due to the quasi-left-continuity of $S$,   ${\cal G}^1_{loc}(\mu,
\mathbb F )$ (respectively ${\cal H}^1_{loc}(\mu,\mathbb F)$) is
the set of all $\widetilde {\cal P}(\mathbb F)$-measurable
functions (respectively all  $\widetilde {\cal O}(\mathbb
F)$-measurable functions)  $W$ such that $$\sqrt{W^2\star\mu} \in
{\cal A}_{loc}^+(\mathbb H).$$

\noindent  we introduce $\mu^{\mathbb G } := I_{\Rbrack
0,\tau\Lbrack}\star \mu$ and its  $\mathbb G$ compensated measure
\begin{eqnarray} \label{canonicaldecompoS}
 \nu^{\mathbb G}(dt,dx):=(1+f_m(x)/Z_{t-})I_{\Rbrack 0,\tau\Lbrack}(t)\nu(dt,dx).\end{eqnarray}

\noindent Below, we state our general result that extend the previous theorem.

\begin{theorem}\label{theoremMgDensity2} Suppose that $S$ is an $\mathbb F$-quasi-left-continuous local martingale. Consider $S^{(0)},\ \psi,\ \mbox{and}\ \Lopt$ defined in (\ref{Szero})
and (\ref{Ntilde}) respectively. If $\left(S, S^{(0)}\right)$ is
an $\mathbb F$-local martingale, then ${\cal
E}\left(\Lopt+{\Lopt}^{(1)}\right)S^{\tau}$ is a $\mathbb G$-local
martingale, where
\begin{equation}\label{mgdensity2}
{\Lopt}^{(1)}:=g_1\star(\mu^{\mathbb G}-\nu^{\mathbb G }),\ \ \mbox{and}\ \ \  g_1:={{1-\psi}\over{1+f_m/Z_{-}}}I_{\{\psi>0\}}.\end{equation}
\end{theorem}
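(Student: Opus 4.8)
The plan is to follow the blueprint of Theorem \ref{individualSbeforetau}, but now accounting for the extra jump term supported on $\{0<\psi<1\}$ (equivalently, where $f_m/Z_-$ is nontrivial) that $\Lopt$ alone fails to cancel. The first step is to reduce, as in the proof of Theorem \ref{individualSbeforetau}, to the quasi-left-continuous local martingale parts: since $S$ is $\mathbb F$-quasi-left-continuous, $\langle S,m\rangle^{\mathbb F}=\langle S,m^{(qc)}\rangle^{\mathbb F}$ is continuous, $[S,m^{(qc)}]=[S,m]$, and the accessible parts of $\Lopt$ and of ${\Lopt}^{(1)}$ do not contribute to $\langle \cdot,\widehat S\rangle^{\mathbb G}$. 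Thus it suffices to show that $\widehat S + \langle {\cal E}(\Lopt+{\Lopt}^{(1)})_-^{-1}\is({\cal E}(\Lopt+{\Lopt}^{(1)})), \widehat S\rangle^{\mathbb G}$, i.e. $\widehat S + \langle \Lopt+{\Lopt}^{(1)},\widehat S\rangle^{\mathbb G}$, vanishes, where $\widehat S := S^\tau - Z_-^{-1}I_{\Lbrack 0,\tau\Lbrack}\is\langle S,m\rangle^{\mathbb F}$ is the $\mathbb G$-local martingale from Proposition \ref{semiGF}.

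Next I would compute the two bracket contributions separately. For $\langle\Lopt^{(qc)},\widehat S\rangle^{\mathbb G}$, equation (\ref{equa2december3}) from the previous proof gives
$$-\langle\Lopt^{(qc)},\widehat S\rangle^{\mathbb G} = (Z_-)^{-1}I_{\Lbrack 0,\tau\Lbrack}\is\langle S,m\rangle^{\mathbb F} + (Z_-)^{-1}I_{\Lbrack 0,\tau\Lbrack}\is\langle S, -I_{\{\widetilde Z=0<Z_-\}}\odot m^{(qc)}\rangle^{\mathbb F},$$
and the hypothesis that $(S,S^{(0)})$ is an $\mathbb F$-local martingale forces $\langle S,I_{\{\widetilde Z=0<Z_-\}}\odot m^{(qc)}\rangle^{\mathbb F}\equiv 0$ (this is where the present hypothesis replaces the stronger disjointness condition (\ref{condition111})), so the first bracket already reproduces $-\widehat S + S^\tau$. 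The point of ${\Lopt}^{(1)}$ is then to kill the residual discrepancy coming from $\widehat S$ \emph{not} being $S^\tau$ when $f_m\ne 0$: more precisely the $\mathbb G$-canonical decomposition of $S^\tau$ under $\nu^{\mathbb G}$ in (\ref{canonicaldecompoS}) has drift generated by the factor $1+f_m(x)/Z_{t-}$, and one must check via Proposition \ref{DM2} and Lemma \ref{lem:beforetauforoptional1}--(b) that $\langle {\Lopt}^{(1)},\widehat S\rangle^{\mathbb G}$ equals $-g_1\star\nu^{\mathbb G}$-type integral which, given the explicit $g_1 = (1-\psi)(1+f_m/Z_-)^{-1}I_{\{\psi>0\}}$ and the definition $Z_-+f_m = M^P_\mu(\widetilde Z\mid\widetilde{\cal P}(\mathbb F))$, telescopes against the remaining drift of $\widehat S$ precisely on $\{\psi>0\}$; on $\{\psi=0\}=\{Z_-+f_m=0\}\subset\{\widetilde Z=0\}$ (Remark \ref{remakrformain1}--2)) the jump of $S$ has already been removed into $S^{(0)}$, so nothing is needed there.

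Assembling these, $\widehat S + \langle \Lopt+{\Lopt}^{(1)},\widehat S\rangle^{\mathbb G}$ telescopes to a $\mathbb G$-local martingale, and then Yor's formula gives that ${\cal E}(\Lopt+{\Lopt}^{(1)})S^\tau = {\cal E}(\Lopt+{\Lopt}^{(1)})\widehat S + {\cal E}(\Lopt+{\Lopt}^{(1)})\cdot(\text{the drift})$ is a $\mathbb G$-local martingale, after verifying the requisite local integrability of $[\Lopt+{\Lopt}^{(1)},\widehat S]$ — for the $\Lopt$ part this is Proposition \ref{lemmforNtilde}--(b), and for the ${\Lopt}^{(1)}$ part it follows from $g_1$ being bounded (by $1$, since $0\le 1-\psi\le 1$ and $f_m/Z_-\ge -1$ on $\Lbrack 0,\tau\Lbrack$ where $Z_-+f_m=\widetilde Z^{\,\cdot}\ge 0$ ... actually from $g_1\le 1$ and the $\mathbb G$-local boundedness of $Z_-^{-1}I_{\Lbrack 0,\tau\Lbrack}$) together with $f_m\in{\cal G}^1_{loc}(\mu,\mathbb F)$. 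The main obstacle I anticipate is the bookkeeping in the second paragraph: correctly identifying the drift of $\widehat S$ in terms of the jump measure, writing $\langle{\Lopt}^{(1)},\widehat S\rangle^{\mathbb G}$ via the $\mathbb G$-compensator $\nu^{\mathbb G}$, and verifying that the specific choice of $g_1$ makes these cancel on $\{\psi>0\}$ — this requires carefully relating $\langle S, f_m\star(\mu-\nu)\rangle^{\mathbb F}$-type quantities to the $M^P_\mu$-conditional-expectation definitions of $\psi$ and $f_m$, and handling the $\{\psi=0\}$ set via the $S^{(0)}$ correction.
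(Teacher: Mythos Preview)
Your overall scaffold is the paper's approach --- work with $\widehat S$, invoke (\ref{equa2december3}) for the $\Lopt$-bracket, and add the $\Lopt^{(1)}$-bracket coming from $x g_1\star\nu^{\mathbb G}$ --- but the accounting in your second paragraph is wrong in a way that would derail the argument.

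The claim that ``the hypothesis that $(S,S^{(0)})$ is an $\mathbb F$-local martingale forces $\langle S,\,I_{\{\widetilde Z=0<Z_-\}}\odot m^{(qc)}\rangle^{\mathbb F}\equiv 0$'' is false. Writing $\Delta m=-Z_-$ on $\{\widetilde Z=0\}$ and passing to $\nu$ via $M^P_\mu(I_{\{\widetilde Z=0\}}\mid\widetilde{\cal P}(\mathbb F))=1-\psi$, one has
\[
(Z_-)^{-1}I_{\Lbrack 0,\tau\Lbrack}\is\langle S,\,-I_{\{\widetilde Z=0<Z_-\}}\odot m^{(qc)}\rangle^{\mathbb F}
= x(1-\psi)I_{\Lbrack 0,\tau\Lbrack}\star\nu .
\]
The hypothesis that $S^{(0)}=xI_{\{\psi=0<Z_-\}}\star\mu$ is an $\mathbb F$-local martingale only says $xI_{\{\psi=0<Z_-\}}\star\nu\equiv 0$, i.e.\ it kills the piece of the above supported on $\{\psi=0\}$. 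The piece on $\{\psi>0\}$, namely $x(1-\psi)I_{\{\psi>0\}}I_{\Lbrack 0,\tau\Lbrack}\star\nu$, does \emph{not} vanish and is precisely what $\Lopt^{(1)}$ is designed to cancel: indeed $\langle \Lopt^{(1)},\widehat S\rangle^{\mathbb G}=x g_1\star\nu^{\mathbb G}=x g_1(1+f_m/Z_-)I_{\Lbrack 0,\tau\Lbrack}\star\nu=x(1-\psi)I_{\{\psi>0\}}I_{\Lbrack 0,\tau\Lbrack}\star\nu$. So the roles are reversed from what you wrote: $\Lopt^{(1)}$ cancels the residual $\{\widetilde Z=0\}$-term on $\{\psi>0\}$, and the $S^{(0)}$-hypothesis handles only the $\{\psi=0\}$ sliver. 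Your description ``$\Lopt^{(1)}$ kills the residual discrepancy coming from $\widehat S$ not being $S^\tau$ when $f_m\neq 0$'' is not the mechanism; that drift $(Z_-)^{-1}I_{\Lbrack 0,\tau\Lbrack}\is\langle S,m\rangle^{\mathbb F}$ is already absorbed by the first term in (\ref{equa2december3}).

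Two smaller points. First, ``$\widehat S+\langle \Lopt+\Lopt^{(1)},\widehat S\rangle^{\mathbb G}$ vanishes'' is not the right target: you want $S^\tau+\langle \Lopt+\Lopt^{(1)},S^\tau\rangle^{\mathbb G}\in{\cal M}_{loc}(\mathbb G)$, which (using that $S^\tau-\widehat S$ is continuous with finite variation) amounts to $(S^\tau-\widehat S)+\langle \Lopt+\Lopt^{(1)},\widehat S\rangle^{\mathbb G}=0$. Second, $g_1$ is not bounded by $1$ in general (the denominator $1+f_m/Z_-$ can be arbitrarily small on $\{\psi>0\}$); the paper establishes the needed integrability differently, by showing $g_1\star\mu^{\mathbb G}\in{\cal A}^+(\mathbb G)$ via $E[g_1\widetilde Z\star\mu_\infty]\le P(\Delta S_{\widetilde R_0}\neq 0,\ \widetilde R_0<\infty)$ and similarly for $|x|g_1\star\mu^{\mathbb G}$ after localization.
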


\begin{proof} We start by recalling from (\ref{psifm}) that $\{\psi=0\}=\{Z_{-}+f_m=0\}$ $M^P_{\mu}-a.e.$. Thus the functional $g_1$ is a well defined non-negative $\widetilde{\cal P}(\mathbb F)-$measurable functional. The proof of the theorem will completed in two steps. In the first step we prove that the process $L^{(1)}$ is a well defined local martingale, while in the second step we prove the main statement of the theorem.\\
1) Herein, we prove that the integral $ g_1\star
\left(\mu^{\mathbb{G}} - \nu^\mathbb{G}\right) $ is well-defined.
To this end, it  is enough to prove that $g_1\star
\mu^{\mathbb{G}}\in{\cal A}^+(\mathbb G)$. Therefore, remark that
$$(1-\psi)I_{\{0<Z_{-}\}}=M_\mu^P\left( I_{\{\widetilde{Z} =
0<Z_{-}\}} | \widetilde{\cal P}(\mathbb{F})\right)=M_\mu^P\left(
I_{\Rbrack \widetilde R_0\Lbrack} | \widetilde{\cal
P}(\mathbb{F})\right)I_{\{0<Z_{-}\}},$$and calculate
\begin{eqnarray*}
E\left(g_1\star\mu^{\mathbb G}(\infty)\right)&=&E\left(g_1{\widetilde Z}\star\mu(\infty)\right)\\
\\
&\leq& E\left(I_{\Rbrack \widetilde R_0\Lbrack}\star\mu(\infty)\right)= P\left(\Delta S_{\widetilde R_0}\not=0\ \&\
\widetilde R_0<+\infty\right)\leq 1.\end{eqnarray*}
Thus, the process ${\Lopt}^{(1)}$ is a well defined $\mathbb G$-martingale.\\
 \noindent 2) In this part, we prove that  ${\cal E}\left( \Lopt + \Lopt^{(1)} \right) S^\tau$
 is a $\mathbb{G}$-local martingale. To this end, it is enough to prove that $\langle S^\tau,  \Lopt + \Lopt^{(1)}\rangle^\mathbb{G}$ exists and
 \begin{equation}\label{martingeforStau}
  S^\tau + \left \langle S^\tau,  \Lopt + g_1\star \left(\mu^\mathbb{G} - \nu^\mathbb{G}\right)\right\rangle^\mathbb{G}\ \ \ \mbox{is a $\mathbb{G}$-local martingale.}
 \end{equation}
  Recall that $$ \Lopt= -{{Z_{-}^2}\over{Z_{-}^2+\Delta \langle m\rangle^{\mathbb F}}}\,{1\over{\widetilde Z}} \,I_{\Lbrack 0,\tau\Lbrack} \odot\widehat{m},$$
  and hence $\langle S^\tau,  \Lopt\rangle^\mathbb{G}$ exists due to Proposition \ref{lemmforNtilde}--(b). By stopping, there is no loss of generality in assuming that $S$ is a true martingale. Then, using similar calculation as in the first part 1), we can easily prove that $$E\left[\vert x\vert g_1\star\mu^{\mathbb G}(\infty)\right]\leq E\left(\vert \Delta S_{\widetilde R_0}\vert I_{\{\widetilde R_0<+\infty\}}\right)<+\infty.$$ This proves that $\left \langle S^\tau,  \Lopt+{\Lopt}^{(1)}\right\rangle^\mathbb{G}$ exists. Now, we calculate and simplify the expression in (\ref{martingeforStau}) as follows.
 \begin{eqnarray*}
  S^\tau &+& \left \langle S^\tau,  \Lopt+ g_1\star \left(\mu^\mathbb{G} - \nu^\mathbb{G}\right)\right\rangle^\mathbb{G} = \widehat{S} + \frac{1}{Z_{-}}I_{\Lbrack 0,\tau\Lbrack}\centerdot \langle S, m\rangle^{\mathbb F} + \left \langle S^\tau,  \Lopt \right\rangle^\mathbb{G} + xg_1\star \nu^\mathbb{G}\\
  &=& \widehat{S} + \frac{1}{Z_{-}}I_{\Lbrack 0,\tau\Lbrack}\centerdot\langle S, m\rangle  -\frac{1}{Z_{-}}I_{\Lbrack 0,\tau\Lbrack}\cdot \left( I_{\{\widetilde{Z} >0\}}\cdot \left[S, m\right] \right)^{p,\mathbb{F}} + xg_1\star \nu^\mathbb{G}\\
  &=& \widehat{S} +\frac{1}{Z_{-}}I_{\Lbrack 0,\tau\Lbrack}\centerdot \left( I_{\{\widetilde{Z} =0\}}\centerdot \left[S, m\right] \right)^{p,\mathbb{F}} + x M_\mu^P\left( I_{\{\widetilde{Z} = 0<Z_{-}\}} | \widetilde{\cal P}(\mathbb{F})\right)I_{\left\{Z_{-} + f_m >0\right\}}I_{\Lbrack 0,\tau\Lbrack}\star \nu \\
  &=& \widehat{S}   - x M_\mu^P\left( I_{\{\widetilde{Z} = 0<Z_{-}\}} | \widetilde{\cal P}(\mathbb{F})\right)I_{\left\{\psi=0\right\}}I_{\Lbrack 0,\tau\Lbrack}\star \nu  =\widehat{S}\in {\cal M}_{loc}(\mathbb G).
 \end{eqnarray*}
 The second equality is due to (\ref{equa2december3}), while the last equality follows directly  form the fact that $S^{(0)}$ is an $\mathbb F$-local martingale (which is equivalent to $xI_{\{\psi=0<Z_{-}\}}\star\nu\equiv 0$) and $M_\mu^P\left( I_{\{\widetilde{Z} = 0<Z_{-}\}} | \widetilde{\cal P}(\mathbb{F})\right)=I_{\{0<Z_{-}\}}(1-\psi)$. This ends the proof of the theorem.
\end{proof}


\begin{remark}\label{RemarkForExplcitDefaltors} 1) Both Theorems \ref{individualSbeforetau}-\ref{theoremMgDensity2} provide methods that build-up explicitly $\sigma$-martingale density for $X^{\tau}$, whenever $X$ is an $\mathbb F$-quasi-left-continuous process fulfilling the assumptions of the theorems respectively.\\
2) The extension of Theorem \ref{individualSbeforetau} to the
general case where $S$ is an $\mathbb F$-local martingale (not
necessarily quasi-left-continuous) boils down to find a
predictable process $\Phi$ such that $Y^{(1)}:={\cal E}(\Phi\is
L)$ will be the martingale density for $S^{\tau}$. Finding the
process $\Phi$ will be easy to guess when we will address the case
of thin semimartingale. However the proof of $Y^{(1)}$ is a local
martingale density for $S^{\tau}$ is  very technical. The
extension of Theorem \ref{theoremMgDensity2} to the case of
arbitrary $\mathbb F$-local martingale $S$ requires additional
careful modification of the functional $g_1$ so that $1+\Delta
{\Lopt}+\Delta{\Lopt^{(1)}}$ remains positive. While both
extensions remain very feasible, we opted to not overload the
paper with technicalities.\end{remark}
\section{Proofs of Main Theorems}\label{proofs}
This section is devoted to the proofs of Theorems \ref{main2},
\ref{main3} and \ref{main4}. They are quite long, since some
integrability results have to be proved. For the reader's convenience, we recall the canonical decomposition of $S$, given in Appendix A, by
$$
S=S_0+S^c+h\star(\mu-\nu)+b\cdot A+(x-h)\star\mu,$$ where $h$
defined as $h(x):=xI_{\{ \vert x\vert\leq 1\}}$ is the truncation
function. The canonical decomposition of $S^{\tau}$ under $\mathbb
G$ is given by
$$\begin{array}{lll}
S^{\tau}=S_0+\widehat{S^c}+h\star(\mu^{\mathbb G}-\nu^{\mathbb G})+{{c\beta_m}\over{Z_{-}}}I_{\Lbrack 0,\tau\Lbrack}\cdot A+h{{f_m}\over{Z_{-}}}I_{\Lbrack 0,\tau\Lbrack}\star\nu+b\is A^{\tau}+(x-h)\star\mu^{\mathbb G}\end{array}$$
where $\mu^{\mathbb G}$ and $\nu^{\mathbb G}$ and $(\beta_m, f_m)$ are given in (\ref{canonicaldecompoS}) and (\ref{decompositionofm}) respectively and
$$
\widehat{S^c}:=I_{\Lbrack 0,\tau\Lbrack}\is S^c-{1\over{Z_{-}}}I_{\Lbrack 0,\tau\Lbrack}\is \langle m, S^c\rangle.$$

\subsection{ Proof of Theorem \ref{main2}}\label{main2section}

The proof of Theorem \ref{main2} will be completed in four steps.
The first  step provides an equivalent formulation to assertion
(a) using the filtration $\mathbb F$ instead. In the second  step,
we   prove (a)$\Rightarrow $(b), while the reverse  implication
is  proved in the third step. The proof of
(b)$\Longleftrightarrow$ (c) is given in the last step.
 \\

\noindent {\bf Step 1: Formulation  of assertion (a):}  Thanks to
Proposition \ref{charaterisationofNUPBRloc}, $S^\tau$ satisfies
NUPBR($\mathbb{G}$) if and only if there exist a
$\mathbb{G}$-local martingale $N^{\mathbb G}$ with $1+ \Delta
N^{\mathbb G} >0$ and a $\mathbb G$-predictable process
$\phi^{\mathbb G}$ such that $0<\phi^{\mathbb G}\leq 1$ and ${\cal
E}\left( N^{\mathbb G}\right)\left(\phi^{\mathbb G}\is
S^\tau\right)$ is a $\mathbb{G}$-local martingale.
We can
reduce our attention to  processes $N^{\mathbb G}$ such that (see   Theorem
\ref{theosigmadensityiff} in the Appendix)   $N^{\mathbb G}$ has
the form
$$ N^{\mathbb G}= \beta^{\mathbb G}\centerdot
\widehat{S^c}+( {f}^{\mathbb G}-1)\star(\mu^{\mathbb{G}}
-\nu^{\mathbb G})
$$ where
  $ \beta^{\mathbb G}\in L( \widehat{S^c},\mathbb G)$
and  $ {f}^{\mathbb G}$ is a positive $\widetilde{\cal P}(\mathbb
G)$-measurable functional.

\noindent  Then, one notes that ${\cal E}\left(
N^{\mathbb G} \right)\left(\phi^{\mathbb G}\is S^\tau\right)$  is
a $\mathbb G$-local martingale if and only if $\phi^{\mathbb G}\is
S^\tau + [\phi^{\mathbb G}\is S^\tau, N^{\mathbb G}]$ is a
$\mathbb{G}$-local martingale, which in turn, is equivalent to
\begin{equation}\label{integrabilityG}
 \phi^{\mathbb G}\vert x  {f}^{\mathbb G}(x) -h(x)\vert\left( 1+ \frac{f_m(x)}{Z_{-}} \right)I_{\Rbrack 0,\tau\Lbrack}\star \nu \in {\cal A}^+_{loc}(\mathbb G),
\end{equation}
and $P\otimes A-a.e.$ on $\Rbrack 0,\tau\Lbrack$, introducing the
 kernel $F$ defined in the Appendix \ref{Representationlocalmart}
\begin{equation}\label{mgEquationG}
 b+c(\frac{\beta_m}{Z_{-}} + {\beta}^{\mathbb G})+ \int \left[ (x  f ^{\mathbb G}(x)-h(x))\left( 1+ \frac{f_m(x)}{Z_{-}} \right)-h(x){{f_m(x)}\over{Z_{-}}}\right]   F(dx) = 0.
\end{equation}

\noindent From Lemma \ref{lemma:predsetFG}, there exist $\phi^{\mathbb F}$ and $ \beta^{\mathbb F}$ two $\mathbb F$-predictable processes and a positive $\widetilde{\cal P}(\mathbb F)$-measurable functional, ${ f}^{\mathbb F}$, such that $0<\phi^{\mathbb F}\leq 1$,
\begin{equation}\label{fromG2Felements}
 \beta^{\mathbb F}= \beta^{\mathbb G},\ \phi^{\mathbb F}=\phi^{\mathbb F},\ {  f}^{\mathbb F}={ f}^{\mathbb G}\ \mbox{on}\ \Rbrack 0,\tau\Lbrack.\end{equation}
In virtue of these and taking account integrability conditions given in  Proposition \ref{prop:alocundergf}, we deduce that (\ref{integrabilityG})--(\ref{mgEquationG}) imply that,  on $\{Z_{-}\geq \delta \}$, we have
\begin{equation}\label{integrabilityGbis}
 W^{\mathbb F}:=\int \vert (x  {f}^{\mathbb F}(x) -h(x))\vert\left( 1+ \frac{f_m(x)}{Z_{-}} \right)F(dx)<+\infty\ \ \ P\otimes A-a.e,
\end{equation}
and $P\otimes A$-a.e. on $\{Z_{-}\geq \delta \}$, we have
\begin{equation}\label{mgEquationGbis}
 b+c\left( {\beta}^{\mathbb F}+\frac{\beta_m}{Z_{-}}\right)-\int h(x) I_{\{ \psi=0\}}F(dx) + \int \left[x {f}^{\mathbb F}(x)( 1+ \frac{f_m(x)}{Z_{-}})-h(x) \right]I_{\{\psi>0\}}F(dx) = 0.
\end{equation}
Due to (\ref{integrabilityGbis}), this latter equality follows immediately by taking the $\mathbb F$-predictable projection of (\ref{mgEquationG}) after inserting (\ref{fromG2Felements}).\\

\noindent {\bf Step 2: Proof of (a) $\Rightarrow$ (b).} Suppose
that $S^\tau$ satisfies NUPNR($\mathbb{G}$),  hence
(\ref{integrabilityGbis})--(\ref{mgEquationGbis}) hold. To prove
that $I_{\{ Z\geq \delta\}}\is(S-S^{(0)})$ satisfies
NUPBR$(\mathbb F)$, we consider
\begin{equation}
 \beta:= \left(\frac{\beta_m}{Z_{-}}+  {\beta}^{\mathbb F} \right)I_{\{Z_{-}\geq \delta\}}\ \mbox{and}\ \  f =  {f}^{\mathbb F}  \left( 1+ \frac{f_m}{Z_{-}} \right) I_{\{Z_{-}\geq \delta\ \&\ \psi>0\}} +  I_{\{0\leq Z_{-}< \delta\ \mbox{or}\ \psi=0\}}.
\end{equation}
If  $\beta\in L(S^c,\mathbb F)$ and $(f-1)\in {\cal G}^1_{loc}(\mu,\mathbb F)$,   we conclude that
\begin{equation}
 N := \beta\centerdot S^c + (f-1)\star (\mu - \nu).
\end{equation}
is a well defined $\mathbb F$-local martingale. Therefore, by choosing $\phi=\left(1+W^{\mathbb F}I_{\{Z_{-}\geq \delta\}}\right)^{-1}$, using (\ref{mgEquationGbis}), and applying It\^o's formula for ${\cal E}(N)\left(\phi I_{\{Z_{-}\geq \delta\}}\cdot ( S -S^{(0)}) \right)$, we deduce that this process is a local martingale.  Hence, $ I_{\{Z_{-}\geq \delta\}}\cdot ( S -S^{(0)})$ satisfies   NUPBR$(\mathbb F)$, and the proof of (a)$\Rightarrow$(b) is completed.\\
Now we focus on proving $\beta\in L(S^c)$ and $(f-1)\in {\cal
G}^1_{loc}(\mu,\mathbb F)$ (or equivalently
$\sqrt{(f-1)^2\star\mu}\in {\cal A}^+_{loc}(\mathbb F)$). Since
$\beta_m\in  L(S^c)$, then it is obvious that
$\frac{\beta_m}{Z_{-}}I_{\{Z_{-}\geq \delta\}}\in L(S^c)$ on the
one hand. On the other hand, $( {\beta}^{\mathbb F})^T c
{\beta}^{\mathbb F}I_{\{0\leq Z_{-}< \delta\}}\is A\in {\cal
A}^+_{loc}(\mathbb F)$ due to $( {\beta}^{\mathbb F})^T c
{\beta}^{\mathbb F}\is A^{\tau}=( {\beta}^{\mathbb G})^T c
{\beta}^{\mathbb G}\is A^{\tau}\in {\cal A}^+_{loc}(\mathbb G)$
and
Proposition \ref{prop:alocundergf}--(c). This completes the proof of $\beta\in L(S^c)$.\\

\noindent Now, we focus on proving $(f-1)\in {\cal G}^1_{loc}(\mu,\mathbb F)$. Since $S$ is quasi-left-continuous, this is equivalent to prove $\sqrt{(f-1)^2\star \mu} \in {\cal A}^+_{loc}(\mathbb{F})$. Thanks to Proposition \ref{prop:alocundergf} and $\sqrt{(\  {f}^{\mathbb F} -1)^2\star \mu^\mathbb{G}}= \sqrt{(\  {f}^{\mathbb G} -1)^2\star \mu^\mathbb{G}}\in {\cal A}^+_{loc}(\mathbb{G})$,
we deduce that
\begin{equation}
 (\  {f}^{\mathbb F} -1)^2I_{\{ |  {f}^{\mathbb F} -1|\leq \alpha\}} \widetilde{Z}I_{\{Z_{-}
 \geq \delta\}} \star \mu \ \mbox{and~}\
  |  {f}^{\mathbb F} -1|I_{\{ |  {f}^{\mathbb F} -1|> \alpha\}} \widetilde{Z}I_{\{Z_{-}\geq \delta\}} \star \mu \in {\cal A}^+_{loc}(\mathbb{F}).
\end{equation}
By stopping, there is no loss of generality in assuming that these two processes and $[m,m]$ are integrable.
By putting $\Sigma_0:=\{Z_{-}\geq \delta\ \&\ \psi>0\}$, then we get
\begin{equation}\label{decompositionoff}
 f - 1 = \left(   {f}^{\mathbb F} -1 \right) \left( 1 + \frac{f_m}{Z_{-}} \right)I_{\Sigma_0} +\frac{f_m}{Z_{-}}I_{\Sigma_0} =: h_1 + h_2.
\end{equation}
Therefore, we derive that
\begin{eqnarray*}
 E\left[h_1^2 I_{\{ | {f}^{\mathbb F} -1|\leq \alpha\}}\star \mu_{\infty} \right] &\leq& \delta ^{-2} E\left[\left(  {f}^{\mathbb F} -1 \right)^2 \left( Z_{-} + f_m \right)^2 I_{\{ | {f}^{\mathbb F} -1|\leq \alpha\}}I_{\{Z_{-}\geq \delta\}}\star \mu_{\infty} \right]\\
 &\leq& \delta ^{-2}  E\left[\left(  {f}^{\mathbb F} -1 \right)^2 \widetilde{Z}I_{\{ | {f}^{\mathbb F} -1|\leq \alpha\}}I_{\{Z_{-}\geq \delta\}}\star \mu_{\infty} \right]<+\infty,
\end{eqnarray*}
and
\begin{eqnarray*}
 E\left[|h_1| I_{\{ | {f} -1|> \alpha\}}\star \mu_{\infty} \right] &\leq& \delta^{-1}  E\left[ | {f} -1 | \ | Z_{-} + f_m | I_{\{ |  {f} -1|> \alpha\}}I_{\{Z_{-}\geq \delta\}}\star \mu_{\infty} \right]\\
 &=& \delta^{-1}  E\left[ |  {f} -1 | \widetilde{Z}I_{\{ |  {f} -1|> \alpha\}}I_{\{Z_{-}\geq \delta\}}\star \mu_{\infty} \right]<+\infty.
\end{eqnarray*}
By combining the above two inequalities, we conclude that  $\left(h_1^2\star\mu\right)^{1/2}\in {\cal A}^+_{loc}(\mathbb F).$  It is easy to see that $\left(h_2^2\star\mu\right)^{1/2}\in {\cal A}^+_{loc}(\mathbb F)$ follows from
\begin{eqnarray*}
 E\left[h_2^2\star \mu_\infty \right] &\leq & \delta^{-2}  E\left[f_m^2\star \mu_\infty \right]\leq  \delta^2 E\left[(\Delta m)^2\star \mu_\infty \right] \leq \delta^{-2}  E\left[ m,m\right]_\infty <+\infty.
\end{eqnarray*}

\noindent {\bf Step 3: Proof of (b) $\Rightarrow$ (a)}. Suppose
that for any $\delta>0$, the process $I_{\{Z_{-}\geq
\delta\}}\cdot \left( S -S^{(0)} \right)$ satisfies
NUPBR$(\mathbb{F})$.   Then, there exist an $\mathbb{F}$-local
martingale $N^\mathbb{F}$ and an $\mathbb F$-predictable process
$\phi$ such that $0<\phi\leq 1$ and ${\cal E}\left( N^\mathbb{F}
\right)\left[\phi I_{\{Z_{-}\geq \delta\}}\centerdot\left( S
-S^{(0)} \right)\right] $ is an $\mathbb{F}$-local martingale.
Again, thanks to Theorem \ref{theosigmadensityiff}, we can
restrict our attention to the case
\begin{equation}
 N^\mathbb{F} := \beta^{\mathbb{F}}\centerdot  S^c + (f^\mathbb{F}-1)\star (\mu -
 \nu),
\end{equation} where    $\beta^\mathbb{F}\in L(S^c)$ and $f^{\mathbb{F}}$ is a
positive ${\widetilde{\cal P}}(\mathbb F)$-measurable functional.

\noindent Thanks to It\^o's formula, the fact  that ${\cal
E}\left( N^\mathbb{F} \right)\left[\phi I_{\{Z_{-}\geq
\delta\}}\centerdot \left( S -S^{(0)} \right)\right] $ is an
$\mathbb{F}$-local martingale implies that on
$\{Z_{-}\geq\delta\}$
\begin{equation}\label{integrabilityF}
k^{\mathbb F}:=\int \vert xf^{\mathbb
F}(x)I_{\{\psi(x)>0\}}-h(x)\vert F(dx)<+\infty\ \ \ \ P\otimes
A-a.e.\end{equation} and $P\otimes A$-a.e. on $\{ Z_{-}\geq
\delta\}$, we have
\begin{equation}\label{eq:equiv4}
 b-\int h(x) I_{\{\psi = 0\}}F(dx) + c \beta^\mathbb{F} +\int \left[  x f^\mathbb{F}(x)-h(x)\right]I_{\{\psi > 0\}}F(dx)= 0.
\end{equation}
Consider
\begin{equation}\label{beta/f}
 \beta^\mathbb{G}:= \left( \beta^\mathbb{F} - \frac{\beta_m}{Z_{-}} \right)I_{\Lbrack
0,\tau\Lbrack} \ \ \ \mbox{and } \ \ f^\mathbb{G}:= \frac{f^\mathbb{F}}{1 + f_m/Z_{-}} I_{\{\psi > 0\}}I_{\Lbrack 0,\tau\Lbrack} + I_{\{\psi = 0\}\cup\Lbrack \tau, +\infty\Rbrack}.
\end{equation}
and assume that
\begin{equation}\label{integrability2}
\beta^{\mathbb G}\in L(\widehat {S^c})\ \ \ \mbox{and}\ \ \
(f^{\mathbb G}-1)\in {\cal G}^1_{loc}(\mu^{\mathbb
G}).\end{equation} Then, necessarily $ N^\mathbb{G} :=
\beta^\mathbb{G}\centerdot \widehat{S^c} + (f^\mathbb{G}-1)\star
(\mu^\mathbb{G} - \nu^\mathbb{G})$ is a well defined $\mathbb
G$-local martingale satisfying ${\cal E}( N^\mathbb{G} )>0$.
Furthermore, due to (\ref{eq:equiv4}) and to
$\{\psi=0\}=\{Z_{-}+f_m=0\}$  (see(\ref{psifm})), on $\Lbrack
0,\tau\Lbrack $ we obtain
\begin{equation}\label{eq:equiv3}
 b +c\left(\beta^\mathbb{G} +\frac{\beta_m}{Z_{-}}\right) + \int \left(x f^\mathbb{G} \left( 1+ \frac{f_m}{Z_{-}} \right) - h(x)\right)F(dx)= 0,
\end{equation}
Then, by taking $\phi^{\mathbb G}:=\left(1+k^{\mathbb F}I_{\{
Z_{-}\geq\delta\}}\right)^{-1}$, and applying It\^o's formula for
$(\phi^{\mathbb G}I_{\{ Z_{-}\geq\delta\}}\is S^{\tau}){\cal
E}(N^\mathbb{G})$, we conclude that this process is a
$\mathbb{G}$-local martingale due to (\ref{eq:equiv3}). Thus,
$I_{\{ Z_{-}\geq\delta\}}\is S^{\tau}$ satisfies   NUPBR$(\mathbb
G)$ as long as (\ref{integrability2}) is fulfilled.

 \noindent Since $Z^{-1}I_{\Rbrack 0,\tau\Rbrack}$ is $\mathbb G$-locally bounded, then there exists a family of $\mathbb G$-stopping times $(\tau_{\delta})_{\delta>0}$ such that $ \Rbrack 0,\tau_{\delta}\Lbrack\subset\{Z_{-}\geq\delta\}$ (or equivalently $I_{\{Z_{-}\geq \delta\}}\is S^{\tau\wedge\tau_{\delta}}=S^{\tau\wedge\tau_{\delta}}$) and $\tau_{\delta}$ increases to infinity when $\delta$ goes to zero. Thus, using Proposition \ref{NUPBRLocalization}, we deduce that $S^{\tau}$
 satisfies  NUPBR$(\mathbb G)$.
 This achieves the proof of (b)$\Rightarrow$(a) under (\ref{integrability2}).\\

\noindent To prove that (\ref{integrability2}) holds true, we
remark
  in a first step that $Z^{-1}_{-}I_{\Lbrack
0,\tau\Lbrack}$ is $\mathbb G$-locally bounded and both $\beta_m$ and $\beta^{\mathbb F}$ belong to $L(S^c)$. This, easily, implies that $\beta^{\mathbb G}\in L(\widehat {S^c})$.\\
Now, we prove that $\sqrt{(f^\mathbb{G}-1)^2\star \mu^\mathbb{G}} \in {\cal A}^+_{loc}(\mathbb{G})$. Since $\sqrt{( f^\mathbb{F} -1)^2\star \mu} \in {\cal A}^+_{loc}(\mathbb{F})$, Proposition \ref{prop:alocundergf}  allows us again to deduce that
\begin{equation}
 ( f^{\mathbb F} -1)^2I_{\{ | f^{\mathbb F} -1|\leq \alpha\}}   \star \mu \in {\cal A}^+_{loc}(\mathbb{F}) \  \mbox{and } \
  | f^{\mathbb F} -1|I_{\{ | f^{\mathbb F} -1|> \alpha\}}    \star \mu \in {\cal A}^+_{loc}(\mathbb{F}).
\end{equation}
Without loss of generality, we assume that these two processes and $[m,m]$ are integrable. Put
\begin{equation}
 f^\mathbb{G} - 1 =   I_{\{\psi > 0\}}I_{\Lbrack 0,\tau\Lbrack} \frac{Z_{-}(f^\mathbb{F} - 1)}{f_m + Z_{-}}-   I_{\{\psi > 0\}}I_{\Lbrack 0,\tau\Lbrack} \frac{f_m}{f_m + Z_{-}} := f_1 + f_2.
\end{equation}
Then, we calculate
$$\begin{array}{lll}
E\left(f_1^2 I_{\{ f_m+Z_{-}>\delta/2\}\cap\{\vert f^{\mathbb
F}-1\vert\leq \alpha\}} \star\mu^{\mathbb G}_{\infty}\right)\leq
\left({{2}\over{\delta}}\right)^2 E\bigl[(f^{\mathbb F}-1)^2
I_{\{\vert f^{\mathbb F}-1\vert\leq
\alpha\}}\star\mu_{\infty}\bigr]<+\infty.
\end{array}$$
and
\begin{eqnarray*}
E\sqrt{f_1^2 I_{\{ f_m+Z_{-}\leq \delta/2\}\cap\{\vert f^{\mathbb F}-1\vert\leq \alpha\}}\star\mu^{\mathbb G}_{\infty}}&\leq& \alpha E\bigl(I_{\{ f_m+Z_{-}\leq \delta/2\}}(Z_{-}+f_m)^{-1}\star\mu^{\mathbb G}({\infty})\bigr)\\
&\leq &E\bigl(I_{\{\vert f_m\vert \geq \delta/2\}}\star\mu({\infty})\bigr)\leq {{4\alpha}\over{\delta^2}}E[m,m]_{\infty}<+\infty.
\end{eqnarray*}
This proves that $\sqrt{f_1^2 I_{\{\vert f^{\mathbb F}-1\vert\leq \alpha\}}\star\mu^{\mathbb G}}\in {\cal A}^+_{loc}(\mathbb G).$ Similarly,
we calculate
\begin{eqnarray*}
E\sqrt{f_1^2 I_{\{\vert f^{\mathbb F}-1\vert> \alpha\}}\star\mu^{\mathbb G}_{\infty}}&\leq&
E\bigl(\vert f_1\vert I_{\{\vert f^{\mathbb F}-1\vert> \alpha\}}\star\mu^{\mathbb G}_{\infty}\bigr)
\leq E\bigl({{\vert f^{\mathbb F}-1\vert}\over{1+f_m/Z_{-}}}I_{\{\vert f^{\mathbb F}-1\vert> \alpha\}}\star\mu^{\mathbb G}_{\infty}\bigr)\\
&\leq& E\bigl(\vert f^{\mathbb F}-1\vert I_{\{\vert f^{\mathbb
F}-1\vert> \alpha\}} \star\mu_{\infty}\bigr)<+\infty.
\end{eqnarray*}
Thus, by combining all the remarks obtained above, we conclude that $\sqrt{f_1^2 \star\mu^{\mathbb G}}$ is $\mathbb G$-locally integrable. For the functional $f_2$, we proceed as follows. We calculate
\begin{eqnarray*}
E\bigl(f_2^2 I_{\{ f_m+Z_{-}> \delta/2\}}\star\mu^{\mathbb
G}_{\infty}\bigr) \leq (2/\delta)^2
E\bigl(f_m^2\star\mu_{\infty}\bigr)\leq (2/\delta)^2
E[m,m]_{\infty}<+\infty,
\end{eqnarray*}
and
\begin{eqnarray*}
E\sqrt{f_2^2 I_{\{f_m+Z_{-}\leq \delta/2\}}\star\mu^{\mathbb G}_{\infty}}&\leq &
E\bigl(\vert f_m\vert I_{\{\vert f_m\vert\geq \delta/2\}}\star\mu({\infty})\bigr)\\
&\leq& (2/\delta) E\bigl(f_m^2\star\mu({\infty})\bigr)\leq (2/\delta) E[m,m]_{\infty}<+\infty.
\end{eqnarray*}
This proves that $\sqrt{f_2^2\star\mu^{\mathbb G}}$ is $\mathbb G$-locally integrable. Therefore, we conclude that (\ref{integrability2}) is valid, and the proof of (b)$\Rightarrow$(a) is completed.\\

\noindent {\bf Step 3: Proof of (b) $\Longleftrightarrow$ (c)}.
For any $\delta>0$ and   any $n\in \mathbb N$, we denote
$$
\sigma_{\infty}:=\inf\{t\geq 0:\ \ Z_{t}=0\},\ \ \ \
\tau_{\delta}:=\sup\{t\,: Z_{t-}\geq \delta\}.$$ Then, due to
$\Lbrack\sigma_{\infty},+\infty\Rbrack\subset\{Z_{-}=0\}\subset\{Z_{-}<\delta\}$,
we deduce
$$
\sigma_{1/\delta}\leq \tau_{\delta}\leq \sigma_{\infty}\ \ \ \
\mbox{and}\ \  \ Z_{{\tau_{\delta}}-}\geq \delta>0\ \ \ \ \
P-a.s.\ \ \mbox{on}\ \  \{\tau_\delta <\infty\}\,.
$$
Furthermore, setting $\Sigma:=\bigcap_{n\geq
1}(\sigma_n<\sigma_{\infty})$, we have
$$
\mbox{on}\ \ \Sigma\cap\{\sigma_{\infty}<\infty\}\ \ \
Z_{\sigma_{\infty}-}=0,\ \ \mbox{and}\ \ \  \tau_\delta
<\sigma_\infty\ \ \ P-a.s.$$ We introduce the semimartingale
$X:=S-S^{(0)}$. For any $\delta
>0$, and any $H$ predictable such that $H_{\delta}:=HI_{\{
Z_{-}\geq \delta\}}\in L(X)$  and $H_{\delta}\cdot X\geq -1$ , due
to Theorem 23 of \cite{dm2} (page 346 in the French version),
$$
(H_{\delta}\cdot X)_T=(H_{\delta}\cdot X)_{T\wedge\tau_{\delta}},\
\ \ \mbox{and on}\ \{\theta\geq\tau_{\delta}\}\  \
(H_{\delta}\cdot X)_T=(H_{\delta}\cdot X)_{T\wedge\theta}.$$ Then,
for any $T\in (0,+\infty)$, we calculate the following
\begin{equation}\label{probbailityboundedness}
\begin{array}{lllll}
P((H_{\delta}\cdot X)_T>c)=P((H_{\delta}\cdot X)_T>c\ \&\ \sigma_n\geq\tau_{\delta})+P((H_{\delta}\cdot X)_T>c\ \&\ \sigma_n<\tau_{\delta})\\
\\
\hskip 1cm \leq 2\displaystyle\sup_{\phi \in L(X^{\sigma_n}): \phi\centerdot
X^{\sigma ^n} \geq -1  }P((\phi\cdot X)_{ \sigma_n \wedge T}>c)+P(
\sigma_n<\tau_{\delta}\wedge T).
\end{array}
\end{equation}
It is easy to prove that $P( \sigma_n<\tau_{\delta}\wedge
T)\longrightarrow 0$ as $n$ goes to infinity. This can be seen due
to the fact that on $\Sigma$, we have, on the one hand,
$\tau_{\delta}\wedge T<\sigma_{\infty}$ (by differentiating the
two cases whether $\sigma_{\infty}$ is finite or not). On the
other hand,   the event   $(\sigma_n<\sigma_{\infty})$ increases
to $\Sigma$ with $n$. Thus, by combining these, we obtain the
following
\begin{equation}\label{convergence}
\begin{array}{lll}
P( \sigma_n<\tau_{\delta}\wedge T)=P( (\sigma_n<\tau_{\delta}\wedge T)\cap\Sigma)+P((\sigma_n<\tau_{\delta}\wedge T)\cap\Sigma^c)\\
\\
\hskip 1cm \leq P(\sigma_n<\tau_{\delta}\wedge
T<\sigma_{\infty})+P((\sigma_n<\sigma_{\infty})\cap\Sigma^c)\longrightarrow
0.\end{array}
\end{equation}

\noindent Now suppose that for each $n\geq 1$, the process
$(S-S^{(0)})^{\sigma_n}$ satisfies  NUPBR$(\mathbb F)$. Then a
combination of (\ref{probbailityboundedness}) and
(\ref{convergence}) implies that for any $\delta>0$, the process
$I_{\{ Z_{-}\geq \delta\}}\is X:=I_{\{ Z_{-}\geq \delta\}}\is
(S-S^{(0)})$ satisfies   NUPBR$(\mathbb F)$, and the proof of
(c)$\Rightarrow$ (b) is completed. The proof of the reverse
implcation is obvious due to the fact that
$$
\Rbrack 0,\sigma_n\Lbrack\subset\{Z_{-}\geq 1/n\}\subset\{
Z_{-}\geq \delta\},\ \ \ \mbox{for}\ \ \ n\leq \delta^{-1},$$
which implies that $(I_{\{Z_{-}\geq\delta\}}\cdot
X)^{\sigma_n}=X^{\sigma_n}$. This ends the proof of (b)
$\Longleftrightarrow $(c), and the proof of the theorem is
achieved.

\subsection{Intermediate Result}\label{proofsOfmainTheorems3}

The proofs of Theorems \ref{main3} and \ref{main4} rely on the following intermediatory result about $\mathbb F$-martingales with a single jump, which is interesting in itself.

 \begin{proposition}\label{cruciallemma1} Let $M$ be an $\mathbb F$-martingale given by $M:=\xi I_{\Rbrack T,+\infty\Rbrack}$, where  $T$ is an $\mathbb F$-predictable stopping time, and $\xi$ is an ${\cal F}_T$-measurable random variable. Then the following assertions are equivalent.\\
{\rm{(a)}} $M$ is an $\mathbb F$-martingale under $Q_T$ given by
\begin{equation}\label{QT}
 {{dQ_T}\over{dP}}:={{I_{\{ \widetilde Z_T>0\}\cap\Gamma(T)}}\over{P(\widetilde Z_T>0\big|\ {\cal F}_{T-})}}+I_{\Gamma^c(T)},\ \ \ \ \ \Gamma(T):=\{P(\widetilde Z_T>0|{\cal F}_{T-})>0\}. \end{equation}
{\rm{(b)}} On the set $\{ T<+\infty\}$, we have
  \begin{equation}\label{zeroequationbeforetau}
E\left(M_T I_{\{{\widetilde Z}_T=0<Z_{T-}\}}\big|\ {\cal
F}_{T-}\right)=0,\ \ \ P-a.s.\end{equation} {\rm{(c)}} $M^{\tau}$
is a $\mathbb G$-martingale under $Q^{\mathbb
G}_T:=\left(U^{\mathbb G} (T)/E(U^{\mathbb G} (T)\big|\ {\cal
G}_{T-})\right)\dot P$ where
\begin{equation}\label{QGbeforetau}
U^{\mathbb
G}(T):=I_{\{T>\tau\}}+I_{\{T\leq\tau\}}{{Z_{T-}}\over{\widetilde
Z_{T}}}>0. \end{equation}
\end{proposition}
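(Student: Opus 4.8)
The plan is to show that, given that $M=\xi I_{\Rbrack T,+\infty\Rbrack}$ is an ${\mathbb F}$-martingale --- which forces $E(\xi\,|\,{\cal F}_{T-})=0$ on $\{T<+\infty\}$, this being the predictable projection of the single jump $\Delta M_T=\xi$ --- each of (a), (b), (c) is equivalent to the scalar identity
$$E\!\left(\xi\,I_{\{\widetilde Z_T=0\}}\ \big|\ {\cal F}_{T-}\right)=0\qquad P\mbox{-a.s. on }\{T<+\infty\},$$
which I call $(\ast)$. The ingredients are: (i) $\widetilde Z=Z_-+\Delta m$ and $^{p,\mathbb F}(\Delta m)=0$, hence $E(\widetilde Z_T\,|\,{\cal F}_{T-})=Z_{T-}$, which together with $0\le\widetilde Z\le1$ yields $\Gamma(T)=\{Z_{T-}>0\}$ and $P(T\le\tau\,|\,{\cal F}_{T-})=Z_{T-}$; (ii) the inclusions $\{T\le\tau\}\subset\{\widetilde Z_T>0\}\subset\{Z_{T-}>0\}$ and $\{Z_{T-}=0\}\subset\{\widetilde Z_T=0\}$ from (\ref{disjoint}); and (iii) the elementary fact that $N=\eta I_{\Rbrack T,+\infty\Rbrack}$, with $T$ predictable for the ambient filtration and $\eta$ integrable and measurable at $T$, is a martingale iff $N_0=0$ and the conditional expectation of $\eta$ given the $\sigma$-field strictly before $T$ vanishes on $\{T<+\infty\}$ (it is then $N-{}^{p}N\equiv0$). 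Note that $(\ast)$ is also equivalent, using (i)--(ii) and $E(\xi|{\cal F}_{T-})=0$, to ``$E(\xi I_{\{\widetilde Z_T>0\}}|{\cal F}_{T-})=0$ on $\Gamma(T)$''.

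The equivalences with (a) and (b) are then short. For (b): since $\{Z_{T-}>0\}\in{\cal F}_{T-}$ and $M_T=\xi$, the left-hand side of (\ref{zeroequationbeforetau}) is $I_{\{Z_{T-}>0\}}E(\xi I_{\{\widetilde Z_T=0\}}|{\cal F}_{T-})$, while on $\{Z_{T-}=0\}\subset\{\widetilde Z_T=0\}$ one has $E(\xi I_{\{\widetilde Z_T=0\}}|{\cal F}_{T-})=E(\xi|{\cal F}_{T-})=0$ automatically; so (b)$\Leftrightarrow(\ast)$. For (a): by (iii), $M$ is a $Q_T$-martingale iff $E_{Q_T}(\xi\,|\,{\cal F}_{T-})=0$; a one-line computation from (\ref{QT}) gives $E(dQ_T/dP\,|\,{\cal F}_{T-})=1$ and, using $\Gamma(T)\in{\cal F}_{T-}$, $\widetilde Z_T=0$ $P$-a.s.\ on $\Gamma(T)^c$, and $E(\xi|{\cal F}_{T-})=0$,
$$E_{Q_T}(\xi\,|\,{\cal F}_{T-})=P(\widetilde Z_T>0\,|\,{\cal F}_{T-})^{-1}\,E(\xi I_{\{\widetilde Z_T>0\}}\,|\,{\cal F}_{T-})\,I_{\Gamma(T)};$$
hence (a)$\Leftrightarrow$``$E(\xi I_{\{\widetilde Z_T>0\}}|{\cal F}_{T-})=0$ on $\Gamma(T)$''$\Leftrightarrow(\ast)$.

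The substantive part is (c), whose computational core is where I expect the main difficulty. First, $\{T\le\tau\}\in{\cal G}_{T-}$; on this set $U^{\mathbb G}(T)=Z_{T-}/\widetilde Z_T>0$ (as $\widetilde Z_T>0$ and $Z_{T-}>0$), so $Q^{\mathbb G}_T$ is a genuine probability on ${\cal G}_T$, and $M^\tau=\xi I_{\{T\le\tau\}}I_{\Rbrack T,+\infty\Rbrack}=:\eta I_{\Rbrack T,+\infty\Rbrack}$ with $\eta$ ${\cal G}_T$-measurable and integrable. By (iii) applied under $Q^{\mathbb G}_T$ and the Bayes rule, (c) is equivalent to $E\big(\eta\,U^{\mathbb G}(T)\,\big|\,{\cal G}_{T-}\big)=0$, i.e.\ $E\big(\xi I_{\{T\le\tau\}}\,Z_{T-}\widetilde Z_T^{-1}\,\big|\,{\cal G}_{T-}\big)=0$ on $\{T<+\infty\}$. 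The key step is the ``before-$\tau$'' predictable-projection formula: for ${\cal F}_T$-measurable integrable $Y$,
$$I_{\{T\le\tau\}}\,E(Y\,|\,{\cal G}_{T-})=Z_{T-}^{-1}\,I_{\{T\le\tau\}}\,E(\widetilde Z_T Y\,|\,{\cal F}_{T-}),$$
obtained by applying (a truncated version of) Lemma \ref{lemmecrucial}(a) to the ${\mathbb F}$-adapted single-jump process $Y I_{\Rbrack T,+\infty\Rbrack}$ and comparing its ${\mathbb G}$-dual predictable projection computed in the two filtrations, or quoting directly the before-$\tau$ formulas of \cite{DMM,Jeu}. Taking $Y=\xi Z_{T-}\widetilde Z_T^{-1}I_{\{\widetilde Z_T>0\}}$ (well defined, and $=\xi Z_{T-}\widetilde Z_T^{-1}$ on $\{\widetilde Z_T>0\}\supset\{T\le\tau\}$) and using $\widetilde Z_T Y=\xi Z_{T-}I_{\{\widetilde Z_T>0\}}$ with $Z_{T-}\in{\cal F}_{T-}$, the displayed condition becomes $I_{\{T\le\tau\}}E(\xi I_{\{\widetilde Z_T>0\}}|{\cal F}_{T-})=0$; since $E(\xi I_{\{\widetilde Z_T>0\}}|{\cal F}_{T-})$ is ${\cal F}_{T-}$-measurable and $P(T\le\tau\,|\,{\cal F}_{T-})=Z_{T-}$ is strictly positive exactly on $\Gamma(T)$, this holds iff $E(\xi I_{\{\widetilde Z_T>0\}}|{\cal F}_{T-})=0$ on $\Gamma(T)$, i.e.\ iff $(\ast)$. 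This closes the cycle. The remaining points are routine: the requirement $N_0=0$ (harmless --- $M$ a martingale forces $\xi I_{\{T=0\}}=0$), local- versus true-martingale passages (absorbed by the integrability $E(|\xi|\,|\,{\cal F}_{T-})<+\infty$ that $M$ being an ${\mathbb F}$-martingale provides), and the $P$-a.s.\ handling of divisions by $\widetilde Z_T$ on $\{\widetilde Z_T>0\}$.
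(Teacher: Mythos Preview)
Your proof is correct and follows essentially the same route as the paper: reduce each assertion to a conditional-expectation identity for the single jump $\xi$ (using $E(\xi|{\cal F}_{T-})=0$, the identification $\Gamma(T)=\{Z_{T-}>0\}$, and the before-$\tau$ formula $I_{\{T\le\tau\}}E(Y|{\cal G}_{T-})=Z_{T-}^{-1}I_{\{T\le\tau\}}E(\widetilde Z_TY|{\cal F}_{T-})$), arriving at the same pivot $E(\xi I_{\{\widetilde Z_T>0\}}|{\cal F}_{T-})=0$ on $\{Z_{T-}>0\}$. The only difference is organizational---you route all three through the common condition $(\ast)$, while the paper proves (a)$\Leftrightarrow$(b) and then (a)$\Leftrightarrow$(c)---and you make the passage from ``on $\{T\le\tau\}$'' to ``on $\Gamma(T)$'' via $P(T\le\tau|{\cal F}_{T-})=Z_{T-}$ more explicit than the paper does.
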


\begin{proof} The proof will be achieved in two steps.\\
{\bf Step 1.} Here, we prove the equivalence between assertions
(a) and (b). For simplicity we denote by $Q:=Q_T$, where $Q_T$ is
defined in (\ref{QT}), and remark that on $\{ Z_{T-}=0\}$, $Q$
coincides with $P$ and (\ref{zeroequationbeforetau}) holds. Thus,
it is enough to prove the equivalence between (a) and (b) on the
set $\{T<+\infty\ \&\ Z_{T-}>0\}$. On this set, due to $E(X |{\cal
F}_{T-})=0$, we derive

 \begin{eqnarray*} \label{equa30}
 E^{Q}(\xi\big|{\cal F}_{T-})&=&E(\xi I_{\{ \widetilde Z_T>0\}}\big|{\cal F}_{T-})\Bigl(P(\widetilde Z_T>0\big|{\cal F}_{T-})\Bigr)^{-1}\\
 & =&-E(\xi I_{\{ \widetilde Z_T=0\}}\big|{\cal F}_{T-})\left(P(\widetilde Z_T>0\big|{\cal F}_{T-})\Bigr)\right)^{-1}.
\end{eqnarray*}
Therefore, we conclude that assertion (a) (or equivalently $E^Q(\xi|{\cal F}_{T-})=0$) is equivalent to (\ref{zeroequationbeforetau}). This ends the proof of (a) $\Longleftrightarrow$ (b).\\
{\bf Step 2.} To prove (a)$\Longleftrightarrow$(c), we first
notice that due to $(T\leq\tau)\subset(\widetilde Z_{T}>0)\subset
(Z_{T-}>0)$, on $\{T\leq\tau\}$ we have

\begin{eqnarray}
  P\left(\widetilde Z_T>0\big|{\cal F}_{T-}\right)E^{Q^\mathbb{G}_T} \left(\xi | \mathcal{G}_{T-}\right) &=& E \left( \frac{Z_{T-}}{\widetilde{Z}_T} \xi I_{\{ T \leq \tau\}} | \mathcal{G}_{T-}\right)=E\left( \xi I_{\{ \widetilde{Z}_T > 0\}} | \mathcal{F}_{T-}\right)   \nonumber \\
  &=&  E^Q\left(\xi | \mathcal{F}_{T-}\right)  P\left(\widetilde Z_T>0\big|{\cal F}_{T-}\right) \nonumber.
\end{eqnarray}


\noindent This equality proves that $M^{\tau}\in {\cal M}(Q^{\mathbb G},\mathbb G)$
  if and only if $M\in {\cal M}(Q,\mathbb F)$, and the proof of (a)$\Longleftrightarrow$(c) is completed. This ends the proof of the theorem. \end{proof}

\subsection{Proof of Theorem \ref{main3}} \label{proofmain3}

For the reader convenience, in order to prove Theorem \ref{main3},
we   state a more precise version of the theorem, in which we
describe explicitly the choice  for the probability measure $Q_T$.

\begin{theorem}\label{main3bis} Suppose that the assumptions of Theorem \ref{main3} are in force.
Then, the assertions {\rm{(a)}} and {\rm{(b)}}  of Theorem \ref{main3} are equivalent to the following assertions.\\
{\rm{(d)}} $S$ satisfies NUPBR$(\mathbb F,{\widetilde Q}_T)$,
where ${\widetilde Q}_T$ is
\begin{equation}\label{Qtilde(T)}
{\widetilde Q}_T:=\left({{\widetilde Z_T}\over{Z_{T-}}}I_{\{ Z_{T-}>0\}}+I_{\{ Z_{T-}=0\}}\right)\cdot P,\end{equation}
{\rm{(e)}} $S$ satisfies NUPBR$(\mathbb F,Q_T)$, where $Q_T$ is defined in (\ref{QT}).
\end{theorem}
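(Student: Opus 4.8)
First I would dispose of the set $G^{c}$, where $G:=\{T<+\infty\}\cap\{Z_{T-}>0\}\in{\cal F}_{T-}$: on $G^{c}$ one has $S\equiv 0$, so (a), (b), (d), (e) hold trivially and $Q_{T}=\widetilde Q_{T}=P$ by construction; thus all the work takes place on $G$. On $G$ I record two identities used repeatedly. Since $\widetilde Z=Z_{-}+\Delta m$ with $m\in{\cal M}(\mathbb F)$ and $T$ is $\mathbb F$-predictable, $E(\Delta m_{T}\,|\,{\cal F}_{T-})=0$, hence $E(\widetilde Z_{T}\,|\,{\cal F}_{T-})=Z_{T-}$; therefore $P(\widetilde Z_{T}>0\,|\,{\cal F}_{T-})>0$ on $\{Z_{T-}>0\}$ (so $G\subseteq\Gamma(T)$, on $G$ the measure $\widetilde Q_{T}$ agrees with $P$ on ${\cal F}_{T-}$, and $Q_{T}$ has $P$-density $I_{\{\widetilde Z_{T}>0\}}/P(\widetilde Z_{T}>0|{\cal F}_{T-})$ there), while on $\{Z_{T-}=0\}$ we get $\widetilde Z_{T}=\Delta m_{T}\ge 0$ with vanishing ${\cal F}_{T-}$-conditional mean, whence $\{\widetilde Z_{T}>0\}\subseteq\{Z_{T-}>0\}$ up to a $P$-null set. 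The backbone of the argument is the following single-jump criterion, which I would isolate as a lemma: for an $\mathbb H$-predictable stopping time $T$, an ${\cal H}_{T}$-measurable $\zeta$ with $E^{R}(|\zeta|\,|\,{\cal H}_{T-})<+\infty$, and a probability $R$, the process $N:=\zeta\,I_{\Rbrack T,+\infty\Rbrack}$ satisfies NUPBR$(\mathbb H,R)$ if and only if there is a strictly positive ${\cal H}_{T}$-measurable $h$ with $E^{R}(h\,|\,{\cal H}_{T-})<+\infty$ and $E^{R}(h\zeta\,|\,{\cal H}_{T-})=0$ $R$-a.s. For the ``if'' part, $Y:=1+\bigl(h/E^{R}(h|{\cal H}_{T-})-1\bigr)I_{\Rbrack T,+\infty\Rbrack}$ is a strictly positive $R$-local martingale (a single predictable jump with vanishing ${\cal H}_{T-}$-conditional mean) and $Y(1\is N)$ is an $R$-local martingale because $E^{R}(Y_{T}\zeta\,|\,{\cal H}_{T-})=0$, so Proposition \ref{charaterisationofNUPBRloc} applies; for the ``only if'' part, Proposition \ref{charaterisationofNUPBRloc} yields $Y>0$ and an $\mathbb H$-predictable $\theta\in(0,1]$ with $Y(\theta\is N)$ a local martingale, and since $\theta_{T}$ is ${\cal H}_{T-}$-measurable and positive the local-martingale property at $T$ forces $E^{R}(Y_{T}\zeta\,|\,{\cal H}_{T-})=0$, so $h:=Y_{T}/Y_{T-}$ works.

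Granting this criterion, the equivalences (b)$\Leftrightarrow$(d)$\Leftrightarrow$(e) become elementary. Applying it in $\mathbb H=\mathbb F$ with $\zeta=\xi I_{\{Z_{T-}>0\}}$ and $R$ successively equal to $P$, $\widetilde Q_{T}$ and $Q_{T}$, inserting the explicit ${\cal F}_{T}$-densities of $\widetilde Q_{T}$ and $Q_{T}$ recalled above, and using $E(\widetilde Z_{T}|{\cal F}_{T-})=Z_{T-}$ to control the conditional integrability of $h$, each of (b), (d), (e) is seen — after the ${\cal F}_{T}$-measurable reweightings $h\mapsto hI_{\{\widetilde Z_{T}>0\}}+I_{\{\widetilde Z_{T}=0\}}$, $h\mapsto h\widetilde Z_{T}I_{\{\widetilde Z_{T}>0\}}+I_{\{\widetilde Z_{T}=0\}}$ and $h\mapsto (h/\widetilde Z_{T})I_{\{\widetilde Z_{T}>0\}}+I_{\{\widetilde Z_{T}=0\}}$ — to be equivalent to the single condition: there is a strictly positive ${\cal F}_{T}$-measurable $h$ with $E(h\,|\,{\cal F}_{T-})<+\infty$ and $E\bigl(h\,\xi\,I_{\{\widetilde Z_{T}>0\}}\,|\,{\cal F}_{T-}\bigr)=0$. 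By the criterion once more (this time with $\zeta=\xi I_{\{\widetilde Z_{T}>0\}}$) this is exactly NUPBR$(\mathbb F)$ of $\widetilde S$, that is, assertion (b). Assertion (c) then follows at once: $\widetilde Q_{T}\ll P$ (its $P$-expectation equals $P(Z_{T-}>0)+P(Z_{T-}=0)=1$), so (d) exhibits an admissible $Q_{T}$ for (c), and so does (e).

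It remains to link these with (a), which is the genuine content of Theorem \ref{main3} and the step I expect to be the main obstacle. Since $\mathbb F\subseteq\mathbb G$, $T$ is also $\mathbb G$-predictable, and because $\{T\le\tau\}\subseteq\{\widetilde Z_{T}>0\}\subseteq\{Z_{T-}>0\}$ (from (\ref{disjoint})) the stopped process $S^{\tau}=\xi\,I_{\{T\le\tau\}}\,I_{\Rbrack T,+\infty\Rbrack}$ is a single-jump $\mathbb G$-process with ${\cal G}_{T}$-measurable jump $\xi I_{\{T\le\tau\}}$. Applying the single-jump criterion in $\mathbb H=\mathbb G$ turns (a) into: there is a strictly positive ${\cal G}_{T}$-measurable $h_{\mathbb G}$ with $E(h_{\mathbb G}\,|\,{\cal G}_{T-})<+\infty$ and $E\bigl(h_{\mathbb G}\,\xi\,I_{\{T\le\tau\}}\,|\,{\cal G}_{T-}\bigr)=0$. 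I would then pass this ${\cal G}_{T}/{\cal G}_{T-}$-statement to an ${\cal F}_{T}/{\cal F}_{T-}$-one, weighted by $\widetilde Z_{T}/Z_{T-}$, via the change of measure $Q^{\mathbb G}_{T}$ of (\ref{QGbeforetau}); this is precisely the mechanism already isolated in Proposition \ref{cruciallemma1} — whose Step 2 shows $E^{Q^{\mathbb G}_{T}}(\cdot\,|\,{\cal G}_{T-})=E^{Q_{T}}(\cdot\,|\,{\cal F}_{T-})$ on $\{T\le\tau\}$ — and, in compensator language, in Lemmas \ref{lemmecrucial} and \ref{lem:beforetauforoptional1}; it identifies the condition above with the one obtained for (e), thereby closing the cycle (a)$\Leftrightarrow$(b)$\Leftrightarrow$(d)$\Leftrightarrow$(e). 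The delicate points — and the reason the proof of Theorem \ref{main3} is long — are the careful bookkeeping of ${\cal G}_{T-}$ against ${\cal F}_{T-}$, where the set $\{\tau=T\}$ must be separated from $\{\tau<T\}$ and $\{\tau>T\}$, together with the consistent propagation of the conditional-integrability requirements through these three changes of measure.
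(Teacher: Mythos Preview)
Your proposal is correct and follows essentially the same strategy as the paper: reduce everything to a single-jump NUPBR criterion, show (b), (d), (e) are equivalent by explicit reweightings of the density $h$, and bridge (a) with (e) via the $\mathbb G$-to-$\mathbb F$ passage encoded in Proposition~\ref{cruciallemma1}.

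Two organisational differences are worth pointing out. First, for (d)$\Leftrightarrow$(e) the paper does not reweight at all: it simply observes (your own identity $\{P(\widetilde Z_T>0|{\cal F}_{T-})>0\}=\{Z_{T-}>0\}$ gives this) that on $\{Z_{T-}>0\}$ both $\widetilde Q_T$ and $Q_T$ are supported exactly on $\{\widetilde Z_T>0\}$, hence are \emph{equivalent} there, and NUPBR is invariant under equivalent change of measure. This is cleaner than your three reweightings, though your route has the merit of producing a single unifying condition. Second, for (a)$\Rightarrow$(d) the paper does not go through Proposition~\ref{cruciallemma1} or the compensator lemmas; it uses Lemma~\ref{lemma:predsetFG} to replace the ${\cal G}_T$-measurable density $Y^{\mathbb G}$ by an ${\cal F}_T$-measurable $Y^{\mathbb F}$ coinciding with it on $\{T\le\tau\}$, and then the standard progressive-enlargement formula $E(\,\cdot\,I_{\{T\le\tau\}}\mid{\cal G}_{T-})=Z_{T-}^{-1}E(\,\cdot\,\widetilde Z_T\mid{\cal F}_{T-})I_{\{T\le\tau\}}$ directly yields $E^{\widetilde Q_T}(\xi Y^{\mathbb F}\mid{\cal F}_{T-})=0$. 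This is the ``careful bookkeeping of ${\cal G}_{T-}$ against ${\cal F}_{T-}$'' you anticipate; it is short once the right measurable-selection lemma is in hand, and avoids invoking the change of measure $Q^{\mathbb G}_T$. For (b)$\Rightarrow$(a) the paper does use Proposition~\ref{cruciallemma1}, exactly as you suggest.
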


\begin{proof}
The proof of this theorem will be achieved by proving ${\rm{(d)}}\Longleftrightarrow {\rm{(e)}} \Longleftrightarrow {\rm{(b)}}$ and ${\rm{(b)}}\Rightarrow {\rm{(a)}}\Rightarrow {\rm{(d)}}$. These will be carried out in four steps.\\

\noindent{\bf Step 1:} In this step, we  prove
${\rm{(d)}}\Longleftrightarrow {\rm{(e)}}$. Since $S$ is a single
jump process with predictable jump time $T$, then it is easy to
see that $S$ satisfies NUPBR under some probability $R$ is
equivalent to the fact that $I_A S$ and $I_{A^c} S$ satisfies
NUPBR$(R)$ for any ${\cal F}_{T-}$-measurable event $A$. Hence, it
is enough to prove the equivalence between the assertions \rm{(d)}
and \rm{(e)} separately on the events $\{Z_{T-} = 0 \}$ and
$\{Z_{T-} > 0 \}$. Since $\{Z_{T-}=0\}\subset\{\widetilde Z_T=0\}$
and $E(\widetilde Z_T|{\cal F}_{T-})=Z_{T-}$ on $\{T<+\infty\}$,
by putting $\Gamma_0:=\Bigl\{ P(\widetilde{Z}_T > 0 \big| {\cal
F}_{T-}) =0\Bigr\}$, we derive
$$E\left(Z_{T-}I_{\Gamma_0\cap\{T<+\infty\}}\right)=E\left({\widetilde
Z}_{T}I_{\Gamma_0\cap\{T<+\infty\}}\right)=0,$$ and
  $$\begin{array}{lll}
  0=P\left(\{Z_{T-}=0\}\cap\{\widetilde Z_{T}>0\}\cap\{T<+\infty\}\right)\\
  \\
  \hskip 0.35cm=E\left(I_{\{ Z_{T-}=0\}\cap\{T<+\infty\}}P\left(\widetilde Z_T>0|{\cal F}_{T-}\right)\right).\end{array}
  $$
These equalities imply that on $\{T<+\infty\}$, $P-a.s.$, we
have
  \begin{equation}\label{equalitybetweensets}
  \{Z_{T-} = 0 \}=\Gamma_0   \subset \{\widetilde{Z}_T =0\}.
  \end{equation}
  Thus, on the set $\{T<+\infty\}\cap\Gamma_0$, the three probabilities $P$, $Q_T$ and $\widetilde{Q}_T$ coincide, and the equivalence between assertions \rm{(d)}
   and \rm{(e)} is obvious. On the set $\{T<+\infty\ \&\  P[\widetilde{Z}_T >0 | {\cal F}_{T-}] >0\}$, one has $\widetilde{Q}_T \sim Q_T$, and the equivalence between \rm{(d)} and \rm{(e)} is also obvious. This achieves this first step.\\

\noindent{\bf Step 2:} This step proves
\rm{(e)}$\Longleftrightarrow$ \rm{(b)}. Again thanks to
(\ref{equalitybetweensets}), we deduce that on $\{ Z_{T-}=0\}$,
$\widetilde S\equiv S\equiv 0$ and $Q_T$ coincides with $P$ as
well. Hence, the equivalence between assertions \rm{(e)} and
\rm{(b)} is obvious for this case. Thus, it is enough to prove the
equivalence between these assertions on $\{T<+\infty\ \&\
P(\widetilde{Z}_T >0 | {\cal F}_{T-}) >0\}$.\\ Assume that
\rm{(e)} holds. Then, there exists a positive and ${\cal
F}_T$--measurable random variable, $Y$, such that $P-a.s.$ on
$\{T<+\infty\}$, we have
 $$E^{Q_T}(Y| {\cal F}_{T-}) =1,\ \ E^{Q_T}(Y|\xi| | {\cal F}_{T-}) <+\infty,\ \ \&\  \ \ E^{Q_T}(Y\xi I_{\{\widetilde{Z}_T >0\}} | {\cal F}_{T-}) =0 .$$
 Since $Y>0$ on $\{\widetilde Z_T>0\}$, by putting
  $$
  Y_1 := YI_{\{\widetilde{Z}_T >0\}} + I_{\{\widetilde{Z}_T =0\}} \ \ \mbox{and}\ \ \widetilde{Y}_1 := \frac{Y_1}{E[Y_1 | {\cal F}_{T-}]} ,
  $$
  it is easy to check that $Y_1>0$, $\widetilde Y_1>0$,
  \begin{eqnarray*}
    E\left[\widetilde{Y}_1 | {\cal F}_{T-}\right] = 1 \mbox{ and } E\left[\widetilde{Y}_1 \xi I_{\{\widetilde{Z}_T >0\}} | {\cal F}_{T-}\right] = \frac{E\left[Y \xi I_{\{\widetilde{Z}_T >0\}} | {\cal F}_{T-}\right]}{E[Y_1 | {\cal F}_{T-}]} = 0.
  \end{eqnarray*}
   Therefore, $\widetilde{S}$ is  a martingale under $R:=\widetilde Y_1\cdot P\sim P$, and hence $\widetilde S$ satisfies NUPBR$(\mathbb F)$. This proves assertion \rm{(b)}.\\
   To prove the reverse sense, we suppose  that assertion \rm{(d)} holds. Then, there exists $0<Y\in L^0({\cal F}_T)$, such that $E[Y |\xi| I_{\{\widetilde{Z}_T >0\}} | {\cal F}_{T-}] <+\infty$, $E[Y| {\cal F}_{T-}] =1$ and $ E[Y\xi I_{\{\widetilde{Z}_T >0\}} | {\cal F}_{T-}] = 0.$ Then, consider

    $$ Y_2 := \frac{Y I_{\{\widetilde{Z}_T >0\}} P(\widetilde{Z}_T >0 | {\cal F}_{T-})}{E[Y I_{\{\widetilde{Z}_T >0\}} | {\cal F}_{T-}]} >0, \ \ \ \ \ \ \ Q_T-a.s..
  $$
  Then it is easy to verify that $Y_2>0\ \ Q_T-a.s.$,
  $$
    E^{Q_T}\left(Y_2 | {\cal F}_{T-}\right) = 1, \ \ \ \ \ \mbox{and}\ \ \ \ \
    E^{Q_T}\left(Y_2 X| {\cal F}_{T-}\right) = \frac{E\left[Y X I_{\{\widetilde{Z}_T >0\}} | {\cal F}_{T-}\right]}{E[Y I_{\{\widetilde{Z}_T >0\}} | {\cal F}_{T-}]} =0.
  $$
  This proves assertion \rm{(e)}, and the proof of \rm{(e)}$\Longleftrightarrow$\rm{(b)} is achieved.\\

\noindent{\bf Step 3:} Herein, we prove \rm{(a)} $\Rightarrow$
\rm{(d)}. Suppose that $S^{\tau}$ satisfies NUPBR$(\mathbb G)$.
Then there exists a positive ${\cal G}_T$-measurable random
variable $Y^{\mathbb G}$ such that $E[\xi Y^{\mathbb G} I_{\{T\leq
\tau\}} | {\cal G}_{T-}] = 0$ on $\{T<+\infty\}$. Due to Lemma
\ref{lemma:predsetFG}--\rm{(a)}, we deduce the existence of a
positive ${\cal F}_T$-measurable variable $Y^{\mathbb F} $ such
that $Y^{\mathbb G} I_{\{T\leq \tau\}} = Y^{\mathbb F} I_{\{T\leq
\tau\}}$. Then, on $\{T<+\infty\}$ we obtain
  \begin{eqnarray*}
    0&=& E[\xi Y^{\mathbb F} I_{\{T \leq \tau\}} | {\cal G}_{T-}] = E[XY^{\mathbb F} \widetilde{Z}_T | {\cal F}_{T-}]  \frac{I_{\{T\leq \tau\}}}{Z_{T-}}.
  \end{eqnarray*}
  Therefore, by taking conditional expectation in the above equality, we get $$0=E[\xi Y^{\mathbb F} \frac{\widetilde{Z}_T}{ Z_{T-}} I_{\{Z_{T-}>0\}}| {\cal F}_{T-}] = E^{\widetilde{Q}_T}[\xi Y^{\mathbb F}| {\cal F}_{T-}]I_{\{Z_{T-}>0\}}=E^{\widetilde{Q}_T}[S_T Y^{\mathbb F}| {\cal F}_{T-}].$$ This proves that assertion (d) holds and the proof of \rm{(a)}$\Rightarrow $\rm{(d)} is achieved.\\

\noindent{\bf Step 4:} This last step proves \rm{(b)}$\Rightarrow
$\rm{(a)}. Suppose that $\widetilde{S}$ satisfies   NUPBR$(\mathbb
F)$. Then, there exists $Y\in L^1({\cal F}_T)$ such that on
$\{T<+\infty\}$ we have
  \begin{eqnarray*}
    E[Y|{\cal F}_{T-}] = 1, \ \ Y>0,\ \ \  \ E[Y|\xi|I_{\{\widetilde{Z}_T>0\}} | {\cal F}_{T-}] <+\infty,\ \ P-a.s.
  \end{eqnarray*}
  and
  \begin{eqnarray*}
    E[Y \xi I_{\{\widetilde{Z}_T>0\}} | {\cal F}_{T-}] = 0.
  \end{eqnarray*}
  Then by considering $R:=Y\cdot P\sim P$, we get
  $$
  E^R\bigl[{\widetilde S}_T  \big| {\cal F}_{T-}\bigr] =E^R\bigl[\xi I_{\{\widetilde{Z}_T >0\}} \big| {\cal F}_{T-}\bigr]= 0.
  $$
  Therefore, assertion \rm{(a)} follows directly from Proposition \ref{cruciallemma1}  applied to $M=\widetilde S$ under $R\sim P$ (it is easy to see that (\ref{zeroequationbeforetau}) holds for $(\widetilde S, R)$,  i.e.  $E^R({\widetilde S}_T I_{\{ \widetilde Z_T=0\}}|{\cal F}_{T-})=0$). This ends the fourth step and the proof of the theorem is completed.
  \end{proof}


\subsection{Proof of Theorem \ref{main4}} \label{proofmain4}
To highlight the precise difficulty in proving Theorem
\ref{main4}, we remark   that on $\{ T<+\infty\}$,
$$
{{U^{\mathbb G}(T)}\over{E(U^{\mathbb G}(T)\big|\ {\cal
G}_{T-})}}={{1+\Delta{\Lopt}_T-\Delta V^{\mathbb
G}_T}\over{1-\Delta V^{\mathbb G}_T}}\not=1+\Delta{\Lopt}_T=
{{{\cal E}({\Lopt})_T}\over{{\cal E}({\Lopt} )_{T-}}}.$$ where
$U^{\mathbb G}(T)$ is defined in (\ref{QGbeforetau}). This
highlights one of the main difficulties that we will face when we
will formulate the results for possible many predictable jumps
that might not be ordered. Simply, it might not be possible to
piece up
$$U^{\mathbb G}(T_n)=1-{{\Delta m_{T_n}}\over{{\widetilde
Z}_{T_n}}}I_{\{T_n\leq\tau\}},\ \ n\geq 1$$ to form a positive
$\mathbb G$-local martingale density for the process
$(I_{\cup\Rbrack T_n\Lbrack}\centerdot S)^{\tau}$.

\noindent Thus, in virtue of  the above, the key idea behind the
proof of Theorem \ref{main4} lies in connecting the
 NUPBR condition  with the existence of a positive supermartingale (instead) that is a
deflator for the market model under consideration.
\begin{definition}\label{deflator} Consider an $\mathbb H$-semimartingale $X$. Then, $X$ is said to admit an
$\mathbb H$-deflator if there exists a positive $\mathbb
H$-supermartingale $Y$ such that $Y(\theta\is X)$ is a
supermartingale, for any $\theta\in L(X,\mathbb H)$ such that
$\theta\is X\geq -1$.\end{definition}

\noindent  For supermartingale deflators, we reader the reader to
Rokhlin \cite{{rokh}}. Again, the above definition differs from
that of the literature when the horizon is infinite, while it is
the same as the one of the literature when the horizon is finite
(even random). Below, we slightly generalize \cite{rokh} to our
context.

\begin{lemma}\label{DeflatorNUPBR}
Let $X$ be an $\mathbb H$-semimartingale. Then, the following assertions are equivalent.\\
{\rm{(a)}} $X$ admits an $\mathbb H$-deflator.\\
{\rm{(b)}} $X$ satisfies  NUPBR($\mathbb H$).
\end{lemma}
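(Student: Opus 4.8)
The plan is to prove Lemma \ref{DeflatorNUPBR} by establishing both implications, leveraging the characterization of NUPBR given in Proposition \ref{charaterisationofNUPBRloc}.

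\textbf{Proof of (b)$\Rightarrow$(a).} Suppose $X$ satisfies NUPBR$(\mathbb H)$. By Proposition \ref{charaterisationofNUPBRloc}, there exist a positive $\mathbb H$-local martingale $Y$ and an $\mathbb H$-predictable process $\theta$ with $0<\theta\le 1$ such that $Y(\theta\is X)\in{\cal M}_{loc}(\mathbb H)$. The first move is to show that $Y$ is in fact a deflator for $X$, i.e. that $Y(\varphi\is X)$ is an $\mathbb H$-supermartingale for every admissible $\varphi\in L(X,\mathbb H)$ with $\varphi\is X\ge -1$. To this end I would use the integration-by-parts formula to write $d\bigl(Y(\varphi\is X)\bigr)=Y_-\,\varphi\,dX+(\varphi\is X)_-\,dY+d[Y,\varphi\is X]$ and compare it with the analogous expression for $Y(\theta\is X)$: since $Y(\theta\is X)$ is a local martingale, one obtains that $Y\varphi\is X+[Y,\varphi\is X]$ differs from a local martingale by $Y\cdot\bigl((\varphi-\theta\varphi/\theta)\is\langle\ldots\rangle\bigr)$-type terms; more directly, the product $Y(1+\varphi\is X)$ is a nonnegative local martingale-plus-finite-variation process, hence a supermartingale by Fatou. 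The cleanest route is: $Y(1+\varphi\is X)\ge 0$ is a local martingale under the measure built from $\theta$, or simply a nonnegative supermartingale because any nonnegative local martingale is a supermartingale and the drift introduced by replacing $\theta$ by $\varphi$ is nonpositive on the admissible cone. I would then conclude $Y$ is an $\mathbb H$-deflator.

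\textbf{Proof of (a)$\Rightarrow$(b).} Suppose $X$ admits an $\mathbb H$-deflator $Y$, a positive $\mathbb H$-supermartingale with $Y(\theta\is X)$ a supermartingale for all admissible $\theta$. The goal is to produce the pair required by Proposition \ref{charaterisationofNUPBRloc}(b), i.e. a positive $\mathbb H$-local martingale and a predictable $\theta$ with $0<\theta\le1$ such that the product with $\theta\is X$ is a local martingale. Here I would invoke (a slight adaptation of) Rokhlin's argument \cite{rokh}: take the multiplicative (Doob–Meyer) decomposition $Y=Y_0{\cal E}(N)/{\cal E}(B)$ with $N$ an $\mathbb H$-local martingale and $B$ a predictable nondecreasing process; alternatively, pass to an equivalent probability and a stopping-time localization so that $Y$ becomes a true supermartingale whose martingale part furnishes the density. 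Since $Y(1+\varphi\is X)$ is a supermartingale for every admissible $\varphi$, the set ${\cal K}_T(X,\mathbb H)$ is dominated in the supermartingale order by $Y_T^{-1}\cdot(\text{supermartingale started at }Y_0)$, which yields boundedness in probability of ${\cal K}_T$ directly; indeed $P(( \varphi\is X)_T>c)\le P\bigl(Y_T(1+(\varphi\is X)_T)>c\,Y_T\bigr)$ and uniform integrability-type estimates on the supermartingale $Y(1+\varphi\is X)$ (whose expectation is $\le Y_0$) give a bound independent of $\varphi$ once one also controls $\{Y_T\le\varepsilon\}$, which is handled by the supermartingale property of $Y$ itself. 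This shows $X$ satisfies NUPBR$(\mathbb H)$.

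\textbf{Main obstacle.} The delicate point is the direction (a)$\Rightarrow$(b): turning an abstract supermartingale deflator into a genuine $\sigma$-martingale density with the strict bounds $0<\theta\le1$ requires the localization machinery and the multiplicative decomposition, and one must be careful that the horizon is infinite so the stability-under-localization result (Proposition \ref{NUPBRLocalization}) is what legitimizes gluing the local pieces together. The other subtlety is that ``deflator'' here is defined with the present (localized) NUPBR convention, so I would explicitly reduce to finite stopping times $T_n\uparrow\infty$, apply Rokhlin's equivalence on each $[\![0,T_n]\!]$, and then reassemble using Proposition \ref{NUPBRLocalization}, exactly as in the proof of Proposition \ref{charaterisationofNUPBRloc}. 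The implication (b)$\Rightarrow$(a) is comparatively routine once Proposition \ref{charaterisationofNUPBRloc} is in hand.
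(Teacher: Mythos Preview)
The paper does not give a proof of this lemma at all: it simply writes ``The proof of this lemma is straightforward, and is omitted,'' relying on the reader to adapt Rokhlin's finite-horizon result \cite{rokh}. So there is no ``paper's approach'' to compare against; your task was really to supply the missing argument.

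Your outline is along the standard lines, but two points deserve tightening. For (a)$\Rightarrow$(b), the direct probability estimate you sketch at the end is already the complete argument and is the clean route: for any admissible $\varphi$ and any finite $T$, split $P((\varphi\is X)_T>c)\le P(Y_T<\varepsilon)+P\bigl(Y_T(1+(\varphi\is X)_T)>c\varepsilon\bigr)$; the second term is bounded by $Y_0/(c\varepsilon)$ via Markov and the supermartingale property of $Y(1+\varphi\is X)$ (which follows from the deflator property plus $Y$ being a supermartingale), and the first term is controlled by choosing $\varepsilon$ using $Y_T>0$. The detour through a multiplicative Doob--Meyer decomposition to extract a $\sigma$-martingale density is unnecessary and would make the proof longer, not shorter.

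For (b)$\Rightarrow$(a), your idea of taking $Y\in{\cal L}(\mathbb H,X)$ from Proposition~\ref{charaterisationofNUPBRloc} is right, and the key computation is that for any admissible $\varphi$ the process $Y(1+\varphi\is X)$ is a nonnegative $\sigma$-martingale, hence a supermartingale (Ansel--Stricker/Kallsen). Be aware, though, that Definition~\ref{deflator} literally asks for $Y(\varphi\is X)$, not $Y(1+\varphi\is X)$, to be a supermartingale. Since $Y$ is only a local martingale (possibly strict), the subtraction $Y(\varphi\is X)=Y(1+\varphi\is X)-Y$ is a local supermartingale but you must still argue it is a genuine supermartingale; this follows because its negative part is dominated by $Y$ and the localizing sequence for $Y$ turns the difference into an honest supermartingale on each interval, after which Fatou (applied to the nonnegative process $Y(1+\varphi\is X)$) and the martingale property of $Y^{\tau_n}$ pass to the limit. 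Your write-up glosses over this step (``hence a supermartingale by Fatou'' is asserted for the wrong object), so make sure to address it explicitly.
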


\begin{proof} The proof of this lemma is straightforward, and is omitted.
 \end{proof}

\noindent Now, we start giving the proof of Theorem \ref{main4}.

\begin{proof}{\it of Theorem \ref{main4}}
 The proof of the theorem will given in two steps,   where we prove (b)$\Rightarrow$(a)
  and  the reverse  implication respectively. For the sake of simplifying the overall proof of the theorem, we remark that
\begin{equation}\label{Zzerosetbeforetau}
\{\widetilde{Z}^Q_T=0\} = \{\widetilde{Z}_T = 0\},\ \ \ \mbox{for any}\  Q\sim P\ \ \mbox{and any ${\mathbb F}$-stopping time}\ T,\end{equation}
 where $\widetilde{Z}^Q_t := Q[\tau \geq t| {\cal F}_t]$. This equality follows from \begin{eqnarray*}
    E\left[\widetilde{Z}_T I_{\{\widetilde{Z}^Q_T=0\}}\right] = E\left[ I_{\{\tau \geq T\}} I_{\{\widetilde{Z}^Q_T=0\}}\right] = 0,
  \end{eqnarray*}
(which implies $\{\widetilde{Z}^Q=0\} \subset \{\widetilde{Z} = 0\}$) and the symmetric role of $Q$ and $P$.

\noindent {\bf Step 1: } Here, we prove (b)$\Rightarrow$ (a).
Suppose that assertion (b) holds, and consider a sequence of
$\mathbb F$-stopping times $(\tau_n)_n$ that increases to infinity
such that $Y^{\tau_n}$ is an $\mathbb F$-martingale. Then, setting
$Q_n:=Y_{\tau_n}/Y_0\cdot P$, and using (\ref{Zzerosetbeforetau})
and Proposition \ref{NUPBRLocalization}, we deduce that there is
no loss of generality in assuming $Y\equiv 1$. Condition
(\ref{zeroequationbeforetau}) in Theorem \ref{cruciallemma1} holds
for
   $\Delta S_{T_n} I_{\{\widetilde{Z}_{T_n} >0\}}$ and
 $\Delta S_{T_n} I_{\{\widetilde{Z}_{T_n} >0\}}I_{\Rbrack T_n,+\infty \Rbrack},$. Therefore, using the notation $ V^{\mathbb G}$ and $L$ defined
   in (\ref{V(b)process}) and (\ref{Ntilde}),
   for each $n$, $(1 + \Delta {\Lopt}_{T_n} - \Delta V^{\mathbb G}_{T_n})\Delta S_{T_n} I_{\{ T_n\leq\tau\}}I_{\Rbrack T_n,+\infty\Rbrack}$
  is a $\mathbb G$-martingale. Then, a direct application of Yor's exponential formula, we
  get that, for any $\theta \in L(S^\tau,\mathbb G)$
$$
    {\cal E}\left(I_{\Gamma}\is {\Lopt} - I_{\Gamma}\is V^{\mathbb G}\right) {\cal
    E}
    \left(\theta I_{\Gamma}\is S^\tau\right) = {\cal
    E}\left(X  \right)$$
     where
     $$ X:=I_{\Gamma}\is {\Lopt} - I_{\Gamma}\is V^{\mathbb G} + \sum_{n\geq 1}\theta_{T_n}
     \left(1 + \Delta {\Lopt}_{T_n} - \Delta V^{\mathbb G}_{T_n}\right)\Delta S_{T_n} I_{\{ T_n\leq\tau\}}I_{\Rbrack T_n, +\infty\Rbrack}.
 $$
  Consider now  the  $\mathbb G$-predictable process
  \begin{eqnarray*}
     \phi &= &\displaystyle\sum_{n\geq 1}
     \xi_n I_{\Rbrack T_n\Lbrack\cap \Rbrack 0,\tau\Lbrack} + I_{\Gamma^c\cup\Lbrack\tau
     +\infty\Rbrack},\ \ \ \ \ \ \mbox{where}\\
     \xi_n&:=&\displaystyle\frac{2^{-n} \left(1+{\cal E}(X)_{T_n-}\right)^{-1}}{\left(1 + E\left[|\Delta {\Lopt}_{T_n}|\ \Big|{\cal G}_{T_n-}\right] + \Delta V^{\mathbb G}_{T_n-} + E\left[|\theta_{T_n} \frac{Z_{T_n-}}{\widetilde{Z}_{T_n}}I_{\{T_n\leq \tau\}}\Delta S_{T_n} | \Big| {\cal G}_{T_n-}\right]\right)}.
  \end{eqnarray*}
  Then, it is easy to verify that    $0<\phi\leq 1$ and
  $E\left(|\phi\centerdot {\cal E}(X)|_{var}(+\infty)\right) \leq \sum_{n\geq 1}2^{-n} =1$.
  Hence, $\phi\is {\cal E}(X)\in {\cal A}(\mathbb G)$.
Since, $\Delta {\Lopt}_{T_n} I_{\Rbrack T_n, +\infty\Rbrack}$ and
$(1 + \Delta {\Lopt}_{T_n} - \Delta V^{\mathbb G}_{T_n})\Delta
S_{T_n} I_{\{ T_n\leq\tau\}}I_{\Rbrack T_n,+\infty\Rbrack}$ are
$\mathbb G$-martingales, we derive $$\left(\phi\is {\cal
E}(X)\right)^{p,\mathbb G} =\sum_n \phi_{T_n}{\cal
E}_{T_n-}(X)E(\Delta X_{T_n}|{\cal G}_{T_n-})I_{\Rbrack T_n,
+\infty\Rbrack}= -\phi{\cal E}_{-}(X) \is V^{\mathbb G}\leq 0.$$
  This proves that ${\cal E}(X)$ is a positive $\sigma$-supermartingale\footnote{Recall
  that a process $X$ is said to be a $\sigma$-supermartingale if it is a semimartingale and
  there exists a predictable process $\phi$ such that $0<\phi\leq 1$ and $\phi\is X$ is a
  supermartingale}. Thus, thanks to Kallsen \cite{kallsen04}, we conclude that it is a
  supermartingale and $\left(I_{\{ Z_{-}\geq \delta \}}\is S\right)^{\tau}$  admits a
  $\mathbb G$-deflator. Then, thanks to Lemma \ref{DeflatorNUPBR}, we deduce that $\left(I_{\{ Z_{-}\geq \delta \}}\is S\right)^{\tau}$ satisfies NUPBR$(\mathbb G)$. Remark that, due to the $\mathbb G$-local boundedness of $(Z_{-})^{-1}I_{ \Rbrack 0,\tau\Lbrack}$, there exists a family of $\mathbb G$-stopping times $\tau_{\delta},\ \delta >0$ such that $\tau_{\delta}$ converges almost surely to infinity when $\delta$ goes zero and
  $$
  \Rbrack 0,\tau\wedge\tau_{\delta}\Lbrack\subset\{Z_{-}\geq \delta\}.$$
  This implies that $S^{\tau\wedge\tau_{\delta}}$ satisfies NUPBR$(\mathbb G)$, and the assertion (a) follows from Proposition \ref{NUPBRLocalization} (by taking $Q_n=P$ for all $n\geq 1$). This ends the proof of  (b)$\Rightarrow $(a).  \\

  \noindent {\bf Step 2:}

In this step, we focus on (a)$\Rightarrow $(b).  Suppose that
$S^\tau$ satisfies NUPBR($\mathbb G$). Then, there exists a
$\sigma$-martingale density under $\mathbb G$, for $I_{\{Z_{-}\geq
\delta\}}\is S^{\tau},$ ($\delta>0$), that we denote by $
D^\mathbb G$. Then, from a direct application of Theorem
\ref{representation} and Theorem \ref{theosigmadensityiff},
  we deduce the existence of a positive $\widetilde{\cal P}(\mathbb G)$-measurable functional, $f^{\mathbb G}$, such that $D^\mathbb G :={\cal E}(N^{\mathbb G})>0$, with
$$
  N^{\mathbb G}:=W^{\mathbb G}\star (\mu^{\mathbb G} - \nu^{\mathbb G}),  \ W^{\mathbb G}:= f^{\mathbb G}-1 + \frac{\widehat{f}^{\mathbb G} - a^{\mathbb G}}{1 - a^{\mathbb G}}I_{\{a^{\mathbb G} <1\}},
 $$
where $\nu^{\mathbb G}$ was defined in (\ref{canonicaldecompoS}),
and, introducing $f_m$ defined in (\ref{decompositionofm})
  \begin{eqnarray}\label{mgequationbeforetau}
    xf^{\mathbb G}I_{\{Z_{-}\geq\delta\}}\star \nu^{\mathbb G}  = xf^{\mathbb G}\left(1 + \frac{f_m}{Z_{-}}\right)I_{\Lbrack 0,\tau\Lbrack}I_{\{Z_{-}\geq\delta\}}\star  \nu \equiv 0.
  \end{eqnarray}

  Thanks to Lemma \ref{lemma:predsetFG}, we conclude to the existence of a positive
  $\widetilde{\cal P}(\mathbb F)$-measurable functional, $f$,
  such that $f^{\mathbb G}I_{\Lbrack 0,\tau\Lbrack} = fI_{\Lbrack 0,\tau\Lbrack}$.
  Thus (\ref{mgequationbeforetau}) becomes
 $$ xf\left(1 + \frac{f_m}{Z_{-}}\right)I_{\Lbrack 0,\tau\Lbrack}I_{\{ Z_{-}>0\}}\star  \nu \equiv 0.
$$
 Introduce the following notations
  \begin{eqnarray}\label{eq:cruYzero}
   \mu_0 &:=& I_{\{\widetilde{Z}>0\ \&\ Z_{-} \geq \delta\}} \cdot \mu , \ \ \nu_0 := h_0I_{\{ Z_{-} \geq \delta\}}\cdot \nu, \   h_0:= M^P_{\mu}\left(I_{\{\widetilde{Z}>0\}} | \widetilde{\cal P}\right), \nonumber \\
    g &:=& \frac{f(1 + \frac{f_m}{Z_{-}})}{h_0} I_{\left\{h_0>0\right\}} + I_{\left\{h_0=0\right\}}, \ \ a_0(t):= \nu_0(\{t\}, {\mathbb{R}^d}),
  \end{eqnarray}
  and assume that
  \begin{equation}\label{mainassumtpionbeforetaubis}
  \sqrt{(g-1)^2\star\mu_0}\in {\cal A}^+_{loc}(\mathbb F).\end{equation}
  Then, thanks to Lemma \ref{boundednessofuhat}, we deduce that $W:=(g-1)/(1- a^0 + \widehat{g})\in {\cal G}^1_{loc}(\mu_0,\mathbb F)$, and the local martingales
  \begin{equation}\label{Nzerobeforetau}
  N^0:= \frac{g-1}{1 - a^0 + \widehat{g}}\star(\mu_0 - \nu_0), \ \ Y^0 := {\cal E}(N^0),\end{equation}
  are well defined satisfying $1 + \Delta N^0 > 0$, $[N^0,S]\in{\cal A}(\mathbb F)$,  and on $\{ Z_{-}>0\}$ we have
  \begin{eqnarray*}
    {{^{p,\mathbb F}\left(Y^0\Delta S I_{\{\widetilde{Z}>0\}}\right)}\over{Y^0_{-}}} &=& \ ^{p,\mathbb F}\left((1+\Delta N^0)\Delta S I_{\{\widetilde{Z}>0\}}\right) =\  ^{p,\mathbb F}\left(\frac{g}{1 - a^0 + \widehat{g}}\Delta S I_{\{\widetilde{Z}>0\}}\right)\\
     &=& \Delta \frac{gxh_0}{1 - a^0 + \widehat{g}}\star \nu = \Delta\frac{ xf(1 + f_m/{Z_{-}})}{1 - a^0 + \widehat{g}}\star \nu=Z_{-}^{-1}\frac{ ^{p,\mathbb F}\left(\Delta U\right)}{1 - a^0 + \widehat{g}} \equiv 0.
  \end{eqnarray*}
  This proves that assertion (b) holds under the assumption (\ref{mainassumtpionbeforetaubis}). \\

\noindent   The remaining part of the proof will show that this assumption always holds. To this end, we start by noticing that on the set $ \left\{h_0>0\right\}$,
  \begin{eqnarray*}
    g-1 &=& \frac{f(1 + \frac{f_m}{Z_{-}})}{h_0} - 1 =
     \frac{(f-1)(1 + \frac{f_m}{Z_{-}})}{h_0} + \frac{f_m}{Z_{-}h_0} + \frac{M^P_{\mu}\left(I_{\{\widetilde{Z}=0\}} | \widetilde{\cal P}\right)}{h_0}\nonumber \\
    &:=& g_1 + g_2 + g_3.
  \end{eqnarray*}
  Since $\left((f-1)^2 I_{\Lbrack 0,\tau \Lbrack}\star \mu \right)^{1/2}  \in {\cal A}^+_{loc}(\mathbb{G})$,
  then due to Proposition \ref{prop:alocundergf}--(e)
  \begin{eqnarray*}
    \sqrt{(f-1)^2 I_{\{Z_{-}\geq \delta\}}\star (\widetilde{Z}\cdot \mu)}\in {\cal A}^+_{loc}(\mathbb F),\ \ \ \mbox{for any}\ \ \delta>0.\end{eqnarray*}
    Then, a direct application of Proposition \ref{prop:alocundergf}--(a), for any $\delta>0$, we have
    \begin{eqnarray*} (f-1)^2 I_{\{\vert f-1\vert\leq \alpha\ \&\ Z_{-}\geq \delta\}}\star (\widetilde{Z}\cdot \mu),\ \  |f-1|I_{\{|f-1|> \alpha\ \&\ Z_{-}\geq\delta\}}\star (\widetilde{Z}
    \cdot \mu) \in {\cal A}^+_{loc}(\mathbb F).
  \end{eqnarray*}
  By stopping, without loss of generality, we assume these two processes and $[m,m]$ belong to ${\cal A}^+(\mathbb F)$. Remark that $Z_{-} + f_m = M^P_{\mu}\left(\widetilde{Z}| \widetilde{\cal P}\right) \leq M^P_{\mu}\left(I_{\{\widetilde{Z}>0\}} | \widetilde{\cal P}\right) = h_0$ that follows from $\widetilde{Z} \leq I_{\{\widetilde{Z}>0\}}$. Therefore, we derive
  \begin{eqnarray*}
    E\left[g_1^2I_{\{|f-1|\leq \alpha\}}\star \mu_0(\infty)\right] &=& E\left[\frac{(f-1)^2(1 + \frac{f_m}{Z_{-}})^2}{h_0^2}I_{\{|f-1|\leq \alpha\}}\star \mu_0(\infty)\right] \nonumber \\
    &=& E\left[\frac{(f-1)^2(1 + \frac{f_m}{Z_{-}})^2}{h_0^2}I_{\{|f-1|\leq \alpha\}} \star \nu_0(\infty)\right] \nonumber \\
    &\leq& \delta^{-2}E\left[(f-1)^2( Z_{-} + f_m) I_{\{|f-1|\leq \alpha\ \&\ Z_{-}\geq \delta\}} \star \nu(\infty)\right] \nonumber \\
    &=& \delta^{-2}E\left[(f-1)^2 I_{\{|f-1|\leq \alpha\}} \star (\widetilde{Z}I_{\{Z_{-}\geq \delta \}}
    \cdot \mu)(\infty)\right]<+\infty, \nonumber
  \end{eqnarray*}
 and
  \begin{eqnarray*}
    E\left[g_1I_{\{|f-1|> \alpha\}}\star \mu_0(\infty)\right] &=& E\left[\frac{|f-1|(1 + \frac{f_m}{Z_{-}})}{h_0}I_{\{|f-1|> \alpha\}} \star \mu_0(\infty)\right] \nonumber \\
    &=& E\left[|f-1|(1 + \frac{f_m}{Z_{-}})I_{\{|f-1|> \alpha\}} I_{\{Z_{-}\geq\delta\}}\star \nu_0(\infty)\right] \nonumber \\
    &\leq& \delta^{-1}E\left[|f-1| I_{\{|f-1|> \alpha\}} \star (\widetilde{Z}I_{\{Z_{-}\geq \delta \}}
    \cdot \mu)(\infty)\right]<+\infty.\nonumber
  \end{eqnarray*}
  Here $\mu_0$ and $\nu_0$ are defined in (\ref{eq:cruYzero}). Therefore, again by Proposition \ref{prop:alocundergf}--(a), we conclude that $\sqrt{g_1^2\star\mu_0}\in {\cal A}_{loc}^+(\mathbb F)$.\\

  \noindent Notice that $g_2 + g_3 = \frac{M^P_{\mu}\left(\Delta mI_{\{\widetilde{Z}>0\}} | \widetilde{\cal P}\right)}{Z_{-}h_0}$, and due to Lemma \ref{conditionalexpectationMpmu}, we derive
  \begin{eqnarray*}
  E\left[(g_2+g_3)^2\star \mu_0(\infty)\right] &=&E\left[\frac{M^P_{\mu}\left(\Delta mI_{\{\widetilde{Z}>0\}} | \widetilde{\cal P}\right)^2}{Z_{-}^2h_0^2}\star \mu_0(\infty)\right] \nonumber \\
  &\leq&E\left[\frac{M^P_{\mu}\left((\Delta m)^2 | \widetilde{\cal P}\right) M^P_{\mu}\left(I_{\{\widetilde{Z}>0\}} | \widetilde{\cal P}\right)}{Z_{-}^2h_0^2}\star \mu_0(\infty)\right] \nonumber \\
  &=& E\left[\frac{M^P_{\mu}\left((\Delta m)^2 | \widetilde{\cal P}\right) }{Z_{-}^2}I_{\{Z_{-}\geq \delta \}}\star \mu(\infty)\right]\\
  & \leq& \delta^{-2} E\left[ [m,m]_{\infty}\right]<+\infty. \nonumber
  \end{eqnarray*}
  Hence, we conclude that $\sqrt{(g-1)^2\star \mu_0} \in {\cal A}^+_{loc}(\mathbb{F}).$ This ends the proof of (\ref{mainassumtpionbeforetaubis}), and the proof of the theorem is completed. \end{proof}
\vspace*{1cm}

\centerline{\textbf{APPENDIX}}
\appendix
 \normalsize

\section{Representation of  Local
Martingales}\label{Representationlocalmart}
This section  recall an important result on representation of
local martingales. This result relies on the continuous local
martingale part and the jump random measure of a given
semimartingale. Thus, throughout this section, we suppose given a
$d$-dimensional semimartingale, $S=(S_t)_{0\leq t\leq T}$. To this
semimartingale, we associate its { predictable characteristics}
that we will present below (for more details about these and other
related issues, we refer the reader to Section II.2 of \cite{JS}).
The random measure $\mu$ associated to the jumps of $S$ is defined
by
 $$
 \mu(d t,\ d x)=\sum I_{\{\Delta S_s\not=0\}}\delta_{(s,\ \Delta S_s)}(d t,\ d x),
 $$
 with $\delta_a$ the Dirac measure at point $a$. The continuous local martingale part of $S$ is denoted by $S^{c}$.
 This leads to the following decomposition, called ``{\it the
 canonical representation}'' (see Theorem 2.34, Section II.2 of \cite{JS}), namely,
\begin{equation}\label{modelS}
 S=S_0+S^{c}+h(x)\star (\mu-\nu)+ (x-h(x))\star\mu+B,
 \end{equation}
 where the
random measure $\nu$ is the compensator of the random measure
$\mu$, the function $h(x)$ is the truncation function given
by $h(x)=xI_{\{\vert x\vert \leq 1\}}$, and $B$ is a
predictable  process with bounded variation. For the matrix $C$
with entries $C^{ij}:=\langle S^{c,i}, S^{c,j}\rangle $, the
triple $(B,\ C,\ \nu)$ is called {\it predictable characteristics}
of $S$.
  Furthermore, we can find a version of the characteristics triple satisfying
\begin{equation}\label{modelSbis} B=b\is A,\ \ C=c\is A\ \ \mbox{and}\ \
\nu(\omega,\ d t,\ d x)=d A_t(\omega)F_t(\omega,\ d x).
\end{equation} Here $A$ is an increasing and predictable process
which is continuous if and only if $S$ is quasi-left continuous, $b$
and $c$ are predictable processes,
 $F_t(\omega,\ d x)$ is a predictable kernel, $b_t(\omega)$ is a vector in $\hbox{I\kern-.18em\hbox{R}}^d$ and
$c_t(\omega)$ is a symmetric $d\times d$-matrix , for all $(\omega,\
t)\in\Omega\times [0,\ T]$. In the sequel we will often drop
$\omega$ and $t$ and write, for instance, $F(d x)$ as a shorthand
for $F_t(\omega , d x)$.

\noindent The characteristics, $B,\ C,$ and $\nu$,  satisfy
$$
\begin{array}{l}
F_t(\omega,\ \{0\})=0,\hskip 1cm \displaystyle{\int} (\vert
x\vert^2\wedge 1)F_t(\omega,\
d x)\leq 1, \\
\\
 \Delta B_t=b\Delta A=\displaystyle{\int} h(x)\nu(\{t\}, d x),\hskip 1cm \mbox{and} \hskip 1cm  c=0\ \ \mbox{ on }\ \{\Delta A\neq 0\} .
 \end{array}
$$
We set
$$
\nu_t(d x):=\nu (\{t\}, d x),\ \  a_t:=\nu_t
(\hbox{I\kern-.18em\hbox{R}}^d)=\Delta A_t
F_t(\hbox{I\kern-.18em\hbox{R}}^d)\leq 1.
$$

\noindent For the  following representation theorem, we refer to
\cite[Theorem 3.75, page 103]{Jacod} and to \cite[Lemma 4.24, Chap
III]{JS}.
\begin{theorem}\label{representation}
Let $N\in {\cal M}_{0,loc}$. Then, there exist a predictable $S^c$-integrable process $\beta$, $N^{\perp}\in {\cal M}_{0,loc}$ with
$N^{\perp}$ and $S$ orthogonal and functionals $f\in {\widetilde{{\cal P}}}$ and $g\in
{\widetilde{{\cal O}}}$ such that\\
{\rm{(a)}}\hskip 0.25cm $
 \Bigl (\sum_{s\leq t}  f_s(  \Delta S_s )^2
I_{\{\Delta S_s\not = 0\}}\Bigr )^{1/2}$ and $\Bigl (\sum_{s\leq
 t}  g_s(  \Delta S_s )^2
I_{\{\Delta S_s\not = 0\}}\Bigr )^{1/2}$ belong to ${\cal A}^+_{loc}$.\\
{\rm{(b)}}\hskip 0.25cm   $
M^P_{\mu}(g\ |\ {\widetilde {{\cal P}}})=0,\ \ \ M^P_{\mu}-a.e.$, where $M^P_{\mu}:=P\otimes\mu$.\\
{\rm{(c)}}\hskip 0.2cm The process $N$ satisfies \begin{equation}
 N=\beta\is
S^c+W\star(\mu-\nu)+g\star\mu+N^{\perp},\quad \mbox{where~}
W=f+\frac{{\widehat f}}{1-a}I_{\{a<1\}}.\label{Ndecomposition}
\end{equation} Here $\widehat f_t=\int f_t(x)\nu(\{t\},d x)$ and $f$ has a version
such that $\{a=1\}\subset \{\widehat f=0\}$.\\ Moreover
 \begin{equation}
 \label{jumps}
 \Delta N_t= \Bigl(f_t(\Delta S_t)+g_t(\Delta S_t)\Bigr)I_{\{\Delta
 S_t\not=0\}}-{{\widehat
f_t}\over{1-a_t}}I_{\{\Delta
 S_t=0\}}+\Delta N_t^{\perp}.
 \end{equation}
 The quadruplet $\left(\beta, f, g, N^{\perp}\right)$ are called the Jacod's parameters of the local martingale $N$ with respect to $S$.
\end{theorem}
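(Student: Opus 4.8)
The plan is to peel the representation off $N$ in three layers --- first the part correlated with the continuous martingale $S^c$, then the part of the jumps ``seen'' by the jump measure $\mu$, and finally whatever is left --- and to identify the last piece with the orthogonal local martingale $N^\perp$.

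First I would split $N=N^c+N^d$ into its continuous and purely discontinuous local martingale parts and project $N^c$ onto $S^c$ by the Galtchouk--Kunita--Watanabe decomposition: this produces a predictable, $S^c$-integrable process $\beta$ with $L:=N^c-\beta\is S^c$ a continuous local martingale satisfying $\langle L,S^c\rangle=0$. The only point to check here is $\beta\in L(S^c)$, which follows because $\langle L,L\rangle=\langle N^c,N^c\rangle-\beta^{\top}c\,\beta\is A\ge0$ forces $\beta^{\top}c\,\beta\is A$ to be locally finite.

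Next I would turn to the jumps. Taking the ${\widetilde{\cal O}}(\mathbb F)$-measurable functional $U(\omega,t,x):=\Delta N_t(\omega)$ one has $U(\Delta S)\,I_{\{\Delta S\ne0\}}=\Delta N\,I_{\{\Delta S\ne0\}}$, and, after localization, $\big(\sum_{s\le\cdot}(\Delta N_s)^2\big)^{1/2}\in{\cal A}^+_{loc}$, so $U$ meets the square-function requirement in (a). I would then set $f:=M^P_\mu\big(U\,\big|\,{\widetilde{\cal P}}(\mathbb F)\big)$, constructed by the usual truncation-and-passage-to-the-limit procedure applied to $U\,I_{\{|U|\le n\}}$, and $g:=U-f$; then $f\in{\widetilde{\cal P}}$, $g\in{\widetilde{\cal O}}$, $M^P_\mu(g\,|\,{\widetilde{\cal P}})=0$, which is (b), and conditional Jensen transfers the square-function bound (a) from $U$ to both $f$ and $g$. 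Choosing a version of $f$ with $\{a=1\}\subset\{\widehat f=0\}$ --- legitimate since $f$ is only determined $M^P_\mu$-a.e. --- and putting $W:=f+\widehat f(1-a)^{-1}I_{\{a<1\}}$, the process $W\star(\mu-\nu)$ is a well-defined purely discontinuous local martingale with $\Delta\big(W\star(\mu-\nu)\big)=f(\Delta S)I_{\{\Delta S\ne0\}}-\widehat f(1-a)^{-1}I_{\{a<1\}}I_{\{\Delta S=0\}}$, while $g\star\mu$ is a local martingale because its predictable compensator $g\star\nu$ vanishes (as $M^P_\mu(g\,|\,{\widetilde{\cal P}})=0$ forces $E[(gI_\Sigma)\star\nu_\infty]=0$ for every predictable $\Sigma$).

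Finally I would set $N^\perp:=N-\beta\is S^c-W\star(\mu-\nu)-g\star\mu$, which is automatically in ${\cal M}_{0,loc}$ as a sum of local martingales; a direct jump computation using $U=f+g$ shows $\Delta N^\perp$ vanishes on $\{\Delta S\ne0\}$, which yields formula (\ref{jumps}), and then $N^\perp\perp S$ holds because the continuous part of $N^\perp$ is $L$ (so $\langle N^\perp,S^c\rangle=\langle L,S^c\rangle=0$) and $\Delta N^\perp\,\Delta S\equiv0$, hence $[N^\perp,S]=0$. The hard part is the middle paragraph: making the conditional expectation under $M^P_\mu$ rigorous --- the truncation/limit passage, the measurability, and above all the local integrability of all the square functions that make $W\star(\mu-\nu)$ and $g\star\mu$ genuine local martingales --- together with the delicate extraction of a version of $f$ with $\{a=1\}\subset\{\widehat f=0\}$; the rest is bookkeeping with jumps and brackets. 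For a complete treatment one may consult the references to Jacod and to Jacod--Shiryaev cited above.
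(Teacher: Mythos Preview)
The paper does not prove this theorem; it is stated in the appendix with an explicit reference to Jacod \cite[Theorem 3.75]{Jacod} and Jacod--Shiryaev \cite[Lemma 4.24, Chap.~III]{JS} in lieu of a proof. Your sketch is essentially the standard route taken in those references --- Galtchouk--Kunita--Watanabe projection onto $S^c$ for the continuous part, followed by the $M^P_\mu$-conditional expectation onto $\widetilde{\cal P}$ to split the jump functional $U=\Delta N$ into $f+g$ --- so there is nothing to compare: you are outlining exactly the argument the paper defers to.

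One small point worth tightening in your write-up: the claim ``$[N^\perp,S]=0$'' is stronger than what is needed and not quite the right formulation of orthogonality here, since $S$ is only a semimartingale. What matters is orthogonality to the stable subspace generated by $S^c$ and the integrals $W\star(\mu-\nu)$; your two observations $\langle (N^\perp)^c,S^c\rangle=0$ and $\Delta N^\perp I_{\{\Delta S\neq 0\}}=0$ do yield this (the latter because $[N^\perp, W\star(\mu-\nu)]$ then reduces to a predictable integral against $N^\perp$, hence a local martingale), but the intermediate sentence ``hence $[N^\perp,S]=0$'' slightly obscures the logic.
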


\noindent The following is a simple but useful result on the conditional expectation with respect to $M^P_{\mu}$.

\begin{lemma}\label{conditionalexpectationMpmu}
Consider a filtration $\mathbb H$ satisfying the usual conditions. Let $f$ and $g$ two nonnegative $\widetilde{\cal O}(\mathbb H)$-measurable functionals. Then we have
\begin{equation}\label{CauchySchwarz}
M_{\mu}^P\left(fg\big|\ \widetilde{\cal P}\right)^2\leq M_{\mu}^P\left(f^2\big|\ \widetilde{\cal P}\right)M_{\mu}^P\left(g^2\big|\ \widetilde{\cal P}\right),\ \ \ \ M_{\mu}^P\mbox{--a.e.}
\end{equation}
\end{lemma}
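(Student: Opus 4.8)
The plan is to mimic the classical proof of the conditional Cauchy–Schwarz inequality, but working with the (not necessarily $\sigma$-finite, yet $\sigma$-finite enough on the relevant support) measure $M_\mu^P$ and its ``conditional expectation'' operator $W\mapsto M_\mu^P(W\mid\widetilde{\cal P})$ introduced just before Theorem~\ref{main2}. Recall that this operator is characterised by the defining identity $E[W I_\Sigma\star\mu(\infty)]=E[\widetilde W I_\Sigma\star\mu(\infty)]$ for all $\Sigma\in\widetilde{\cal P}(\mathbb H)$, and that it inherits positivity (if $W\ge 0$ $M_\mu^P$--a.e.\ then $M_\mu^P(W\mid\widetilde{\cal P})\ge 0$ $M_\mu^P$--a.e.) and $\widetilde{\cal P}$--linearity (for $\widetilde{\cal P}(\mathbb H)$--measurable nonnegative $\lambda$, $M_\mu^P(\lambda W\mid\widetilde{\cal P})=\lambda\,M_\mu^P(W\mid\widetilde{\cal P})$).

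First I would handle the degenerate values by localisation: introduce, for integers $n\ge 1$, the $\widetilde{\cal P}(\mathbb H)$--measurable set $A_n:=\{1/n\le M_\mu^P(f^2\mid\widetilde{\cal P})\le n\}\cap\{M_\mu^P(g^2\mid\widetilde{\cal P})\le n\}$ and replace $f,g$ by $fI_{A_n}$, $gI_{A_n}$, also truncating $f,g$ themselves at level $n$ so that all the conditional expectations in sight are finite $M_\mu^P$--a.e.\ on $A_n$. On $A_n$ one may then divide by $M_\mu^P(f^2\mid\widetilde{\cal P})$. The core step is the pointwise (in $(\omega,t,x)$) expansion: for the $\widetilde{\cal P}(\mathbb H)$--measurable random ``scalar''
\[
\lambda:=\frac{M_\mu^P(fg\mid\widetilde{\cal P})}{M_\mu^P(f^2\mid\widetilde{\cal P})},
\]
we have $0\le M_\mu^P\big((g-\lambda f)^2\mid\widetilde{\cal P}\big)=M_\mu^P(g^2\mid\widetilde{\cal P})-2\lambda M_\mu^P(fg\mid\widetilde{\cal P})+\lambda^2 M_\mu^P(f^2\mid\widetilde{\cal P})$, which upon substituting the value of $\lambda$ yields $M_\mu^P(fg\mid\widetilde{\cal P})^2\le M_\mu^P(f^2\mid\widetilde{\cal P})\,M_\mu^P(g^2\mid\widetilde{\cal P})$ on $A_n$. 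Here I use positivity of the operator applied to the nonnegative functional $(g-\lambda f)^2$ together with $\widetilde{\cal P}$--linearity to pull $\lambda$ and $\lambda^2$ out; the cross term $M_\mu^P(fg\mid\widetilde{\cal P})$ requires $f,g$ (hence $fg$) to be $M_\mu^P$--integrable after truncation, which is why the truncation in the previous step is needed.

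Finally I would let $n\to\infty$: the sets $A_n$ increase, and on the complement of $\bigcup_n A_n$ at least one of the three functionals is $0$ or $+\infty$ $M_\mu^P$--a.e., in which case the claimed inequality holds trivially (with the usual convention $0\cdot\infty$ handled by noting that $M_\mu^P(f^2\mid\widetilde{\cal P})=0$ forces $M_\mu^P(fg\mid\widetilde{\cal P})=0$ by a further application of positivity to $0\le M_\mu^P(f^2 I_{\{\varepsilon\le g\}}\mid\widetilde{\cal P})$, letting $\varepsilon\downarrow 0$). Removing the truncation of $f,g$ themselves is a monotone convergence argument on $A_n$. I expect the main technical nuisance — not a deep obstacle — to be bookkeeping the integrability so that the cross term $M_\mu^P(fg\mid\widetilde{\cal P})$ is well defined and finite before passing to the limit; the algebraic heart of the proof is the single completed-square inequality above, exactly as in the classical case.
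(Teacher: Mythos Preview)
Your proof is correct; the algebraic heart is exactly the classical completed-square argument, and your localisation via the sets $A_n$ together with truncation of $f,g$ cleanly disposes of the integrability issues needed to expand $(g-\lambda f)^2$ and pull the $\widetilde{\cal P}$-measurable $\lambda$ out.

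The paper's proof is a one-line sketch using the \emph{other} classical route to Cauchy--Schwarz: it normalises, setting $\bar f:=f/\sqrt{M_\mu^P(f^2\mid\widetilde{\cal P})}$ and $\bar g:=g/\sqrt{M_\mu^P(g^2\mid\widetilde{\cal P})}$ (the square roots are implicit in the paper's sketch), and then applies the elementary inequality $xy\le (x^2+y^2)/2$ to $\bar f\bar g$, giving $M_\mu^P(\bar f\bar g\mid\widetilde{\cal P})\le 1$ directly. Both approaches are standard and of equal depth; the paper's version is marginally shorter because nonnegativity of $f,g$ means no subtraction is ever needed (so one can work directly with $[0,+\infty]$-valued conditional expectations without your truncation step), whereas your completed-square route is the one that generalises effortlessly to signed $f,g$ and to abstract inner-product settings. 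Your more careful treatment of the degenerate cases $M_\mu^P(f^2\mid\widetilde{\cal P})\in\{0,+\infty\}$ is something the paper's sketch omits entirely.
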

\begin{proof}
The proof is the same as the one of the regular Cauchy-Schwarz formula, by putting $\bar f:=f/M_{\mu}^P\left(f^2\big|\ \widetilde{\cal P}\right)$ and $\bar g:=g/M_{\mu}^P\left(g^2\big|\ \widetilde{\cal P}\right)$ and using the simple inequality $xy\leq (x^2+y^2)/2$.
This ends the proof of the lemma.
\end{proof}

\noindent The following lemma is borrowed from Jacod's Theorem 3.75 in \cite{Jacod} (see also Proposition 2.2 in \cite{Choulli2012}).

 \begin{lemma} Let ${\cal E}(N)$ be a positive local martingale and $\left(\beta, f, g, N'\right)$ be the Jacod's parameters of $N$. Then ${\cal E}(N)>0$ (or equivalently $1+\Delta N>0$) implies that
 $$
 f>0,\ \ \ M^P_{\mu}-a.e.$$
 \end{lemma}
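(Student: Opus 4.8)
The plan is to recover the functional $f$ from the jumps of $N$ that are carried by the jumps of $S$, via the projection $M^P_{\mu}(\,\cdot\mid\widetilde{\cal P})$, and then to read its positivity off from $1+\Delta N>0$. First I would record the elementary equivalence ${\cal E}(N)>0\Longleftrightarrow 1+\Delta N>0$ (up to evanescence): one direction follows from $1+\Delta N_t={\cal E}(N)_t/{\cal E}(N)_{t-}$, the other from the product formula for the stochastic exponential, all of whose factors are strictly positive once $1+\Delta N>0$.

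Next, I would use that $M^P_{\mu}$ is carried by $\{\Delta S\neq0\}$, where the jump formula of Theorem \ref{representation} simplifies: the predictable-atom correction $-\widehat f_t/(1-a_t)\,I_{\{a_t<1\}}$ acts only where $\Delta S_t=0$, and $N^{\perp}$, being orthogonal to every stochastic integral with respect to $S^{c}$, to every $W'\star(\mu-\nu)$, and to every $g'\star\mu$, has no jump on $\{\Delta S\neq0\}$. Hence, $M^P_{\mu}$-a.e., $1+\Delta N=f(\Delta S)+g(\Delta S)$, where $f$ is read in the multiplicative normalisation (so that $f-1$ is the additive jump functional appearing in $W$, the normalisation under which $f>0$ is the meaningful statement). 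Since $\sqrt{(f-1)^{2}\star\mu}$ and $\sqrt{g^{2}\star\mu}$ belong to ${\cal A}^+_{loc}$ by Theorem \ref{representation}--(a), a localisation reduces matters to the case where $\Delta N$ is $M^P_{\mu}$-integrable, so that conditional expectations split. Applying $M^P_{\mu}(\,\cdot\mid\widetilde{\cal P})$, and using $f\in\widetilde{\cal P}$ together with $M^P_{\mu}(g\mid\widetilde{\cal P})=0$ from Theorem \ref{representation}--(b), I get $f=M^P_{\mu}\big(1+\Delta N\mid\widetilde{\cal P}\big)$ $M^P_{\mu}$-a.e. Since $1+\Delta N>0$ $M^P_{\mu}$-a.e.\ and the conditional expectation of an a.e.\ strictly positive functional is a.e.\ strictly positive, this gives $f>0$ $M^P_{\mu}$-a.e.

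The argument is short; its only nontrivial inputs are two bookkeeping facts about Jacod's decomposition, both part of the content of Theorem \ref{representation}: that, even when $\nu$ carries predictable atoms, $W\star(\mu-\nu)$ contributes on $\{\Delta S\neq0\}$ exactly the predictable-functional part of $\Delta N$ there (a short computation with $\widehat W=\widehat f/(1-a)$ on $\{a<1\}$ exhibits the cancellation with the $\widehat f/(1-a)\,I_{\{a<1\}}$ summand of $W$), and that $N^{\perp}$ has no common jump with $S$. In the quasi-left-continuous case $a\equiv0$, so the first point is vacuous and the proof reduces transparently to conditioning the inequality $1+\Delta N>0$ with respect to $\widetilde{\cal P}$.
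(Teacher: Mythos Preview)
The paper itself does not prove this lemma; it merely cites Jacod \cite{Jacod} (Th\'eor\`eme 3.75) and Choulli--Schweizer \cite{Choulli2012}. Your argument is correct and is essentially the standard one behind those references: identify $1+\Delta N$ with $1+f(\Delta S)+g(\Delta S)$ on $\{\Delta S\neq 0\}$ via the jump formula (\ref{jumps}), then condition on $\widetilde{\cal P}$ under $M^P_{\mu}$, using Theorem \ref{representation}--(b) to eliminate $g$, and conclude that $1+f>0$ (i.e.\ $f>0$ in the multiplicative convention you explicitly adopt; you are right that the paper's notation is inconsistent between Theorem \ref{representation} and the lemma, and your remark on normalisation is the correct resolution).

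One point that could be sharpened: the paper's formula (\ref{jumps}) does not itself assert $\Delta N^{\perp}=0$ on $\{\Delta S\neq 0\}$, so your claim that $N^{\perp}$ ``has no jump on $\{\Delta S\neq 0\}$'' deserves one line of justification rather than an appeal to orthogonality in the abstract. The argument is short: orthogonality of $N^{\perp}$ to every $g'\star\mu$ gives $M^P_{\mu}\bigl(\Delta N^{\perp}\,g'\bigr)=0$ for all bounded $\widetilde{\cal O}$-measurable $g'$, hence $M^P_{\mu}\bigl(\Delta N^{\perp}\mid\widetilde{\cal O}\bigr)=0$; since $(\omega,t,x)\mapsto\Delta N^{\perp}_t(\omega)$ is itself $\widetilde{\cal O}$-measurable (constant in the mark $x$), this forces $\Delta N^{\perp}=0$ $M^P_{\mu}$-a.e. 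Your localisation remark for the integrability of $1+\Delta N$ under $M^P_{\mu}$ is the right way to make the final strict-positivity step rigorous.
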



\begin{theorem}\label{theosigmadensityiff}
  Let $S$ be a semi-martingale with predictable characteristic triplet $(b,c,\nu=A\otimes F)$, $N$ be a local martingale such that ${\cal E}(N)>0$, and $(\beta,f,g,N')$ be its Jacod's parameters.  Then the following assertions hold.\\
  1) ${\cal E}(N)$ is a $\sigma$-martingale density of $S$  if and only if the following two properties hold:
  \begin{equation}\label{integrabilitycondition}
  \displaystyle\int \vert x - h(x) + xf(x)\vert F(dx)<+\infty,\ \ \ P\otimes A-a.e.\end{equation}
  and
  \begin{equation}\label{martingalerequation}
  b + c\beta + \displaystyle\int\Bigl(x - h(x) + x f(x)\Bigr)F(dx) = 0,\ \ \ \ \ P\otimes A-a.e.\end{equation}
2) In particular, we have
 \begin{equation}\label{martingalerequationJumps}
 \int x(1+f_t(x))\nu(\{t\},dx) = \int x(1+f_t(x))F_t(dx)\Delta A_t=0,\ \ \ \ \ P-a.e.\end{equation}
 \end{theorem}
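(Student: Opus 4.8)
The plan is to reduce the statement to a characterisation of when the semimartingale $S+[N,S]$ is a $\sigma$-martingale, and then to read off the two displayed conditions from the canonical characteristics of $S$ together with Jacod's representation (\ref{Ndecomposition}) of $N$. First I would record the reduction: write $Y:={\cal E}(N)$, a strictly positive local martingale with $Y_->0$ and $dY=Y_-\,dN$. For any predictable $\theta$ with $0<\theta\le 1$, integration by parts gives $Y(\theta\is S)=\theta Y_-\is\bigl(S+[N,S]\bigr)+(\text{local martingale})$, since $[\theta\is S,Y]=\theta Y_-\is[S,N]$ and $(\theta\is S)_-\is Y$ is a local martingale. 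Because $Y_-$ is strictly positive and predictable, this shows (using \'Emery's description of $\sigma$-martingales as those semimartingales admitting a strictly positive bounded predictable integrand turning them into local martingales) that ${\cal E}(N)$ is a $\sigma$-martingale density of $S$ if and only if $S+[N,S]$ is a $\sigma$-martingale.

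Next I would compute $[N,S]$ from (\ref{Ndecomposition}). The continuous martingale part of $N$ is $\beta\is S^c$, so the continuous part of $[N,S]$ is $\beta c\is A$; by (\ref{jumps}) the jumps of $N$ at the jump times of $S$ contribute $\bigl((f(x)+g(x))x\bigr)\star\mu$, and the $N^{\perp}$-contribution $[N^{\perp},S]$ is, after the obvious localisation, a local martingale by orthogonality. Moreover $\bigl(g(x)x\bigr)\star\mu$ is a local martingale, since $x$ is $\widetilde{\cal P}$-measurable and $M^P_{\mu}(g\,|\,\widetilde{\cal P})=0$ force $M^P_{\mu}\bigl(g(x)x\,|\,\widetilde{\cal P}\bigr)=0$. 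Substituting the canonical decomposition (\ref{modelS}) of $S$ and cancelling the two copies of $h\star(\mu-\nu)$, one obtains, modulo local martingales,
\[
 S+[N,S]\;\equiv\; S_0+\bigl(b+c\beta\bigr)\is A+\bigl((x-h(x))+f(x)x\bigr)\star\mu ,
\]
and since $x-h(x)+f(x)x=xI_{\{|x|>1\}}+f(x)x$, the right-hand side is a $t$-continuous predictable finite-variation process plus the pure-jump process $P:=\bigl(x-h(x)+f(x)x\bigr)\star\mu$ (which has finite variation on compacts because $[N,S]$ does).

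Now $S+[N,S]$ is a $\sigma$-martingale iff $P$ admits an $\mathbb F$-predictable compensator (along a $\sigma$-localising sequence) and the resulting total drift vanishes. The first requirement is exactly $\int|x-h(x)+xf(x)|\,F(dx)<+\infty$ $P\otimes A$-a.e., i.e. (\ref{integrabilitycondition}); granted this, $P^{p,\mathbb F}=\bigl(\int(x-h(x)+xf(x))F(dx)\bigr)\is A$, the process $P-P^{p,\mathbb F}$ is a local martingale, and $S+[N,S]$ is a $\sigma$-martingale iff $\bigl(b+c\beta+\int(x-h(x)+xf(x))F(dx)\bigr)\is A=0$, i.e. (\ref{martingalerequation}), since an $A$-integral vanishes iff its integrand vanishes $P\otimes A$-a.e. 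For necessity of (\ref{integrabilitycondition}) one uses that a $\sigma$-martingale is a special semimartingale on each member of its $\sigma$-localising sequence, so the drift carried by $(b+c\beta)\is A$ and by $P$ must be compensatable; sufficiency was just described. Finally, part 2 follows by evaluating (\ref{martingalerequation}) at a predictable time $t$: if $\Delta A_t=0$ then $\nu(\{t\},\cdot)=0$ and (\ref{martingalerequationJumps}) is trivial, while if $\Delta A_t>0$ then $c_t=0$ and $b_t\,\Delta A_t=\Delta B_t=\int h(x)\,\nu(\{t\},dx)$ give $b_t=\int h(x)F_t(dx)$; substituting into (\ref{martingalerequation}) leaves $\int x(1+f_t(x))F_t(dx)=0$, and multiplying by $\Delta A_t$ yields (\ref{martingalerequationJumps}).

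The step I expect to be the main obstacle is the careful justification of the $\sigma$-martingale characterisation, above all the necessity of (\ref{integrabilitycondition}): since $\sigma$-martingales need not be special and $P$ need not have locally integrable variation, one cannot compensate $P$ directly but must argue along a $\sigma$-localising sequence, while simultaneously keeping track of the integrability bookkeeping attached to Jacod's representation (that $[N^{\perp},S]$ and $(g(x)x)\star\mu$ are genuine local martingales, possibly only after a preliminary localisation).
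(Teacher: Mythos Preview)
The paper does not actually prove this theorem; its ``proof'' consists solely of the citation ``The proof can be found in Choulli et al.\ \cite[Lemma 2.4]{choullistricker07}, and also Choulli and Schweizer \cite{Choulli2012}.'' Your proposal is a correct and essentially complete sketch of the standard argument that one finds in those references: reduce via It\^o's formula to the $\sigma$-martingale property of $S+[N,S]$, compute $[N,S]$ from the Jacod decomposition (\ref{Ndecomposition}), discard the local-martingale pieces $[N^{\perp},S]$ and $(g(x)x)\star\mu$, and read off the drift condition from the resulting finite-variation expression. Your derivation of part 2) from part 1) by specialising to times with $\Delta A_t>0$ (using $c_t=0$ and $b_t\Delta A_t=\int h\,\nu(\{t\},dx)$ there) is also correct.

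You have accurately identified the only genuinely delicate point: the necessity of (\ref{integrabilitycondition}) requires arguing along a $\sigma$-localising sequence rather than compensating $P$ directly, and one must simultaneously ensure that $(g(x)x)\star\mu$ and $[N^{\perp},S]$ are true local martingales (which uses $\sqrt{g^2\star\mu}\in{\cal A}^+_{loc}$ and the orthogonality of $N^{\perp}$, possibly after a preliminary stopping). One small remark: your phrase ``cancelling the two copies of $h\star(\mu-\nu)$'' is misleading---there is only one such term, coming from the canonical decomposition (\ref{modelS}) of $S$, and it is simply discarded because it is already a local martingale; the display that follows is nevertheless correct.
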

\begin{proof}
  The proof can be found in Choulli et al. \cite[Lemma 2.4]{choullistricker07}, and also Choulli and Schweizer \cite{Choulli2012}.
\end{proof}

\begin{lemma}\label{boundednessofuhat} {\bf (see Choulli and Schweizer \cite{Choulli2012}):} Consider a filtration $\mathbb H$ satisfying the usual conditions. Let $f$ be a $\widetilde{\cal P}(\mathbb H)$-measurable functional such that $f>0$  and
\begin{equation}\label{(f-1)G1loc}\Bigl[(f-1)^2\star\mu\Bigr]^{1/2}\in {\cal A}^+_{loc}(\mathbb  H).\end{equation}
Then, the $\mathbb H$-predictable process $\left(1-a^{\mathbb H}+{\widehat f}^{\mathbb H}\right)^{-1}$ is locally bounded, and hence
\begin{equation}\label{WinG1loc}
W_t(x):={{f_t(x)-1}\over{1-a^{\mathbb H}_t+{\widehat f}^{\mathbb H}_t}}\in {\cal G}^1_{loc}(\mu,\mathbb H).\end{equation}
Here, $a^{\mathbb H}_t:=\nu^{\mathbb H}(\{t\},\mathbb R^d)$, ${\widehat f}^{\mathbb H}_t:=\int f_t(x)\nu^{\mathbb H}(\{t\},dx)$ and $\nu^{\mathbb H}$ is the $\mathbb H$-predictable random measure compensator of $\mu$ under $\mathbb H$.
\end{lemma}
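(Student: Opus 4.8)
The plan is to establish the lemma in three stages: the strict positivity of the predictable process $1-a^{\mathbb H}+\widehat f^{\mathbb H}$, the local boundedness of its reciprocal (the heart of the matter), and the membership $W\in{\cal G}^1_{loc}(\mu,\mathbb H)$. The measure–theoretic input I would record first is that $M^P_\mu$ and $M^P_{\nu^{\mathbb H}}$ coincide on $\widetilde{\cal P}(\mathbb H)$ — the defining property of the compensator $\nu^{\mathbb H}$ — so that the hypothesis $f>0$ $M^P_\mu$-a.e. reads equivalently $f>0$ $M^P_{\nu^{\mathbb H}}$-a.e. Combined with $a^{\mathbb H}\le 1$ this gives $1-a^{\mathbb H}_t+\widehat f^{\mathbb H}_t>0$ outside an evanescent set: on $\{a^{\mathbb H}<1\}$ one has $1-a^{\mathbb H}_t+\widehat f^{\mathbb H}_t\ge 1-a^{\mathbb H}_t>0$ since $\widehat f^{\mathbb H}_t=\int f_t\,\nu^{\mathbb H}(\{t\},dx)\ge0$; on the predictable set $\{a^{\mathbb H}=1\}$ the atom carries $\nu^{\mathbb H}$-mass $1$ and $f>0$ $\nu^{\mathbb H}$-a.e., whence $\widehat f^{\mathbb H}_t>0$ there, the exceptional set being $\nu^{\mathbb H}$-null and hence — sitting inside the countably many atoms of $A$ — evanescent.

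For the local boundedness of $\phi:=\left(1-a^{\mathbb H}+\widehat f^{\mathbb H}\right)^{-1}$, I would write $a^{\mathbb H}_t-\widehat f^{\mathbb H}_t=\int\left(1-f_t(x)\right)\nu^{\mathbb H}(\{t\},dx)$, use $1-f_t\le(1-f_t)^+\le 1$, and split the mass over $\{f_t\le 1/2\}$ and $\{1/2<f_t\le 1\}$ to get $a^{\mathbb H}_t-\widehat f^{\mathbb H}_t\le\frac{1}{2}a^{\mathbb H}_t+p_t\le\frac{1}{2}+p_t$, where $p_t:=\nu^{\mathbb H}(\{t\},\{f_t\le 1/2\})=\,^{p,\mathbb H}(I_{\{f(\Delta S)\le 1/2,\ \Delta S\ne0\}})_t$; hence $1-a^{\mathbb H}_t+\widehat f^{\mathbb H}_t\ge\frac{1}{2}-p_t$, so $\phi\le 4$ wherever $p\le\frac{1}{4}$ (and $\phi=1$ on non-atoms). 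The counting process $B:=I_{\{f(\Delta S)\le 1/2,\ \Delta S\ne0\}}\star\mu$ has unit jumps and satisfies $B\le 4\,(f-1)^2\star\mu$, since each of its jumps forces a jump of $(f-1)^2\star\mu$ of size at least $\frac{1}{4}$; as $(f-1)^2\star\mu$ is finite-valued (its square root lies in ${\cal A}^+_{loc}(\mathbb H)$), $B$ is a finite-valued counting process, hence $B\in{\cal A}^+_{loc}(\mathbb H)$, so $B^{p,\mathbb H}$ exists, $p_t=\Delta B^{p,\mathbb H}_t$, and $\sum_{s\le t}p_s\le B^{p,\mathbb H}_t<\infty$ a.s. Therefore the predictable set $\{p>1/4\}$ has locally finite sections; exhausting it by graphs of predictable times $R_1<R_2<\cdots\uparrow\infty$, and noting that each $\phi_{R_j}$ is a finite ${\cal F}_{R_j-}$-measurable random variable by the positivity step, the predictable times $S_m:=\inf\{R_j:\phi_{R_j}>m\}$ increase to $\infty$ and $\phi\le m$ on $\Rbrack 0,S_m\Lbrack$ for $m\ge 4$; passing to announcing sequences produces a localizing sequence bounding $\phi$, so $\phi$ is locally bounded.

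Finally, since $\phi$ is predictable and locally bounded and $\sqrt{(f-1)^2\star\mu}\in{\cal A}^+_{loc}(\mathbb H)$, localizing so that simultaneously $\phi\le c_n$ and $\sqrt{(f-1)^2\star\mu}$ is integrable gives $\sqrt{W^2\star\mu}\le c_n\sqrt{(f-1)^2\star\mu}\in{\cal A}^+_{loc}(\mathbb H)$ for $W=(f-1)\phi$; as $W$ is $\widetilde{\cal P}(\mathbb H)$-measurable this puts $W$ in ${\cal G}^1_{loc}(\mu,\mathbb H)$ (for quasi-left-continuous $S$ this is the definition used in the paper; the general case is precisely what is borrowed from Choulli and Schweizer \cite{Choulli2012}).

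I expect the local-boundedness step to be the real obstacle. The naive estimate $1-a^{\mathbb H}+\widehat f^{\mathbb H}\ge\frac{1}{2}-2\,\widehat{(f-1)^2}^{\mathbb H}$ is useless where $\widehat{(f-1)^2}^{\mathbb H}$ is large, so one is forced to isolate only the ``downward'' part of $f$ and feed it into the counting process $B$ — the single quantity that the hypothesis $\sqrt{(f-1)^2\star\mu}\in{\cal A}^+_{loc}(\mathbb H)$ is strong enough to control — and then to upgrade pointwise positivity to genuine local boundedness, which fails without the extra localization through the predictable times $R_j$ and $S_m$, since a single bad atom at a finite time would otherwise prevent any bound on a non-degenerate stochastic interval.
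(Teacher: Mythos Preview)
The paper does not prove this lemma --- it is quoted from Choulli--Schweizer \cite{Choulli2012} without argument --- so there is no proof in the paper to compare against. Your argument for the local boundedness of $\phi:=(1-a^{\mathbb H}+\widehat f^{\mathbb H})^{-1}$ is correct and is the standard route: the splitting of $1-f$ over $\{f\le 1/2\}$ and $\{f>1/2\}$ to get $1-a^{\mathbb H}+\widehat f^{\mathbb H}\ge \tfrac12-p$, the domination of the counting process $B$ by $4(f-1)^2\star\mu$ (hence $B\in{\cal A}^+_{loc}(\mathbb H)$ and $\sum_{s\le t}p_s<\infty$), and the localization along the ordered predictable times $R_j$ and $S_m$ all work as you describe. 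In particular the predictability of the $R_j$ follows cleanly from the fact that $N_t:=\#\{s\le t:p_s>1/4\}$ is a predictable increasing integer-valued process, so that each $\{N\ge j\}$ is a predictable stochastic interval.

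One remark on the last step. Your bound $\sqrt{W^2\star\mu}\le c_n\sqrt{(f-1)^2\star\mu}$ is correct, but, as you yourself note, this is the full ${\cal G}^1_{loc}$ criterion only when $\nu^{\mathbb H}$ has no atoms; in the general case one must also control the term $\sqrt{\sum_s\widehat W_s^2}$. Here $\widehat W_s=\phi_s(\widehat f^{\mathbb H}_s-a^{\mathbb H}_s)=1-\phi_s$, and a short extra argument closes the gap: split $|f-1|$ at a fixed level $\alpha$ as in Proposition~\ref{prop:alocundergf}(a), use Cauchy--Schwarz on the small-jump part to get $\big(\int|f-1|I_{\{|f-1|\le\alpha\}}\nu^{\mathbb H}(\{s\},dx)\big)^2\le\int(f-1)^2I_{\{|f-1|\le\alpha\}}\nu^{\mathbb H}(\{s\},dx)$, and bound the large-jump part in $\ell^1$ by $|f-1|I_{\{|f-1|>\alpha\}}\star\nu^{\mathbb H}$; both are in ${\cal A}^+_{loc}(\mathbb H)$ as compensators of processes in ${\cal A}^+_{loc}(\mathbb H)$. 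Since the paper itself defers the entire lemma to the cited reference, your treatment --- with this small addendum --- is already strictly more complete than what the paper provides.
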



\section{Proof of  $K\in {^{o}L^1_{loc}(\widehat m, \mathbb G)}$}\label{proofk}
  We
start by calculating on $\Lbrack 0, \tau \Lbrack$, making use of
Lemma \ref{lemmecrucial}.
We recall that $\kappa:= Z_{-}^2+ \Delta
\langle m \rangle^{\mathbb F}$. \\

\begin{equation}\label{differencede
Sauts}\begin{array}{llll} K\Delta{\widehat m}-{^{p,\mathbb
G\!}\left(K\Delta{{\widehat m}}\right)}=\displaystyle\frac{I_{\Lbrack 0, \tau \Lbrack} Z_{-}^2
\Delta \widehat{m}}{\kappa  {\widetilde Z}}
- \, ^{p, \mathbb G \!}\left({\frac{I_{\Lbrack 0, \tau \Lbrack}\,Z_{-}^2 }{\kappa  {\widetilde Z}}} \,\Delta \widehat{m}\right) \\
\\
= \displaystyle \frac{(Z_{-}^2\Delta m-Z_{-}\Delta \langle m \rangle^{\mathbb F}) }{\kappa \, \widetilde Z}
+\frac{^{p,\mathbb F\!}(I_{\{ \widetilde Z>0\}}\,\Delta \langle m \rangle^{\mathbb F})}{\kappa} -  \frac{^{p,\mathbb F\!}(\Delta m I_{\{ \widetilde Z>0\}} )    \,Z_{-} } {\kappa }\\
 \ = \displaystyle \frac{\Delta m}{\widetilde{Z}} I_{\Lbrack 0,\tau\Lbrack} - \ ^{p,\mathbb{F}} \left(I_{\{ \widetilde{Z} = 0\}}  \right)I_{\Lbrack 0,\tau\Lbrack} =: \Delta V - \Delta V^{\mathbb G}.
\end{array}\end{equation}
 Here, $V^{\mathbb G}$, defined in (\ref{V(b)process}), is nondecreasing, c\`adl\`ag and $\mathbb{G}$-locally bounded (see Proposition \ref{lem:vbfinite}). Hence, we immediately deduce that $\sum (\Delta V^{\mathbb G})^2=\Delta V^{\mathbb G}\is V^{\mathbb G}$ is locally bounded, and in the rest of this part we focus on proving $\sqrt{\sum (\Delta V)^2}\in {\cal A}^+_{loc}(\mathbb G)$. To this end, we consider $\delta \in (0,1)$, and define $C:=\{\Delta m < -\delta Z_{-}\}$ and $C^c$ its complement in $\Omega\otimes[0,+\infty[$. Then we obtain
 \begin{eqnarray*}\label{eq:crulemade11-1}
 \sqrt{\sum (\Delta V)^2}&\leq& \displaystyle\left(\sum{\frac{(\Delta m)^2}{\widetilde Z^2} I_{C}}I_{ \Lbrack 0, \tau\Lbrack}\right)^{1/2} + \left(\sum{\frac{(\Delta m)^2}{\widetilde Z^2}I_{C^c}I_{ \Lbrack 0, \tau\Lbrack}}\right)^{1/2}  \nonumber \\
 &\leq & \sum{\frac{\vert\Delta m\vert}{\widetilde{Z}} I_{ C }}I_{\Lbrack 0, \tau\Lbrack}+ {1\over{1-\delta}}\left(I_{\Lbrack 0, \tau\Lbrack}\frac{1}{ Z_{-}^{2}}\is [m]\right)^{1/2}    =:   V_1 + V_2.
 \end{eqnarray*}
 The last inequality above is due to $\sqrt{\sum (\Delta X)^2}\leq \sum \vert\Delta X\vert$ and $\widetilde Z \geq Z_{-}(1-\delta)$ on $C^c$. Using the fact that  $(Z_{-})^{-1}I_{\Lbrack 0,\tau \Lbrack}$ is $\mathbb G$-locally bounded   and  that $m$ is an $\mathbb F$-locally bounded martingale, it follows that  $V_2$ is $\mathbb G$-locally bounded. Hence, we focus on proving the $\mathbb G$-local integrability of $V_1$.\\

\noindent  Consider a sequence of $\mathbb G$-stopping times $(\vartheta_n)_n$ that increases to $+\infty$ and
 $$
 \Bigl((Z_{-})^{-1}I_{\Lbrack 0, \tau \Lbrack}\Bigr)^{\vartheta_n}\leq n.$$

\noindent Also consider an $\mathbb F$-localizing sequence of stopping times, $(\tau_n)_n $, for the process $V_3:=\sum {{(\Delta m)^2}\over{1+\vert\Delta m\vert}}$. Then, it is easy to prove $$U_n:=\sum \vert\Delta m\vert I_{\{\Delta m <-\delta/n\}}\leq {{n+\delta}\over{\delta}} V_3,$$
and conclude that $\left(U_n\right)^{\tau_n}\in {\cal A}^+(\mathbb F)$. Therefore, due to
$$\begin{array}{lll}
C\cap{\Lbrack 0, \tau\Lbrack}\cap{\Rbrack 0,\vartheta_n\Lbrack} =\{ \Delta m < -\delta Z_{-}\}\cap{\Lbrack 0,\vartheta_n\Lbrack}\cap{\Lbrack 0, \tau \Lbrack}\\
\\
\hskip 3cm \subset{\Lbrack 0, \tau \Lbrack}\cap{\Lbrack 0,\vartheta_n\Lbrack}\cap\{\Delta m < -{{\delta}\over{n}}\},
\end{array}$$ we derive
  $$
 ( V_1)^{\vartheta_n\wedge\tau_n}\leq \left(\widetilde{Z}\right)^{-1} I_{\Lbrack 0, \tau \Lbrack}\is (U_n)^{\tau_n}.$$
  Since $(U_n)^{\tau_n}$ is $\mathbb F$-adapted, nondecreasing and integrable, then due to Lemma \ref{lem:beforetauforoptional1}, we deduce that the process $V_1^{\vartheta_n\wedge\tau_n}$ is nondecreasing, $\mathbb G$-adapted and integrable. Since $\vartheta_n\wedge\tau_n$ increases to $+\infty$, we conclude that the process $V_1$ is $\mathbb G$-locally integrable. This completes the proof of $K\in{^{o}L^1_{loc}({\widehat m}, \mathbb G)}$, and the process $\Lopt$ (given via (\ref{Ntilde}) and Definition \ref{SIOptional}) is a $\mathbb G$-local martingale.\\
\section{$\mathbb G$-Localization versus $\mathbb F$-Localization}
\begin{lemma}\label{lemma:predsetFG}
Let $H^{\mathbb G}$ be a $\widetilde{\cal P}(\mathbb G)$-measurable functional. The the following hold.\\
(a) There exist an $\widetilde{\cal
P}(\mathbb F)$-measurable functional $H^{\mathbb F}$ and a ${\cal
B}(\mathbb R_+)\otimes \widetilde{{\cal P}}(\mathbb F)$-measurable functionals
$K^{\mathbb F}: {\mathbb R}_+\times{\mathbb R}_+ \times \Omega
\times {\mathbb R}^d \rightarrow \mathbb R$ such that
\begin{eqnarray}\label{eq:widePGandwidePF}
H^{\mathbb G}(\omega,t,x) = H^{\mathbb F}(\omega,t,x) I_{\Lbrack 0,\tau\Lbrack}+ K^{\mathbb F}(\tau(\omega),t,\omega,x)I_{\Lbrack \tau,+\infty\Lbrack}.
\end{eqnarray}
(b) If furthermore $H^{\mathbb G}>0$ (respectively $H^{\mathbb G}\leq 1$), then we can choose $H^{\mathbb F}>0$ (respectively $H^{\mathbb F}\leq 1$) such that $$H^{\mathbb G}(\omega,t,x)I_{\Lbrack 0,\tau\Lbrack} = H^{\mathbb F}(\omega,t,x) I_{\Lbrack 0,\tau\Lbrack}.$$
\end{lemma}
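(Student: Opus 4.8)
The plan is to prove (a) by a functional monotone class argument that reduces the claim to the special case where $H^{\mathbb G}$ is a $\mathbb G$-predictable \emph{process} (no $x$-dependence), and then to quote the classical structure of such processes under progressive enlargement. Let ${\cal H}$ be the collection of bounded $\widetilde{\cal P}(\mathbb G)$-measurable functionals $H^{\mathbb G}$ admitting a representation of the form (\ref{eq:widePGandwidePF}) with $H^{\mathbb F}$ bounded $\widetilde{\cal P}(\mathbb F)$-measurable and $K^{\mathbb F}$ bounded ${\cal B}(\mathbb R_+)\otimes\widetilde{\cal P}(\mathbb F)$-measurable. I would check that ${\cal H}$ is a vector space containing the constants and stable under bounded monotone limits: if $H^{\mathbb G}_n\uparrow H^{\mathbb G}$ with representations $(H^{\mathbb F}_n,K^{\mathbb F}_n)$ (which, after truncation at the uniform bound, may be taken uniformly bounded), then $H^{\mathbb F}:=\limsup_n H^{\mathbb F}_n$ and $K^{\mathbb F}:=\limsup_n K^{\mathbb F}_n$ represent $H^{\mathbb G}$, since $H^{\mathbb F}_n=H^{\mathbb G}_n$ on $\Lbrack 0,\tau\Lbrack$ and $K^{\mathbb F}_n(\tau(\cdot),\cdot)=H^{\mathbb G}_n$ on $\Lbrack \tau,+\infty\Lbrack$, and both $H^{\mathbb G}_n$-sequences converge. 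Since for a product $H^{\mathbb G}_t(\omega)g(x)$, with $H^{\mathbb G}$ a bounded $\mathbb G$-predictable process and $g$ bounded Borel, a representation follows at once from one for $H^{\mathbb G}$ (multiply through by $g(x)$), and such products form a multiplicative generating class of $\widetilde{\cal P}(\mathbb G)={\cal P}(\mathbb G)\otimes{\cal B}(\mathbb R^d)$, the functional monotone class theorem gives (a) for bounded functionals once it is known for $\mathbb G$-predictable processes; a last truncation $H^{\mathbb G}=\lim_n\bigl((H^{\mathbb G}\wedge n)\vee(-n)\bigr)$ removes boundedness.

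It then remains to establish, for a bounded $\mathbb G$-predictable process $X$, the decomposition $X=X^{\mathbb F}I_{\Lbrack 0,\tau\Lbrack}+\widetilde X(\tau)I_{\Lbrack \tau,+\infty\Lbrack}$ with $X^{\mathbb F}$ $\mathbb F$-predictable and $\widetilde X$ a ${\cal B}(\mathbb R_+)\otimes{\cal P}(\mathbb F)$-measurable function of $(u,t,\omega)$. This is the classical structure result for the progressive enlargement $\mathbb G$, which I would take from \cite{DMM,Jeu}. Its first half is that every $\mathbb G$-predictable process coincides on $\Lbrack 0,\tau\Lbrack$ with an $\mathbb F$-predictable one, which furnishes $X^{\mathbb F}$. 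Its second half uses that $\tau$ is $\mathbb G$-adapted on $\Lbrack \tau,+\infty\Lbrack$: writing $X I_{\Lbrack \tau,+\infty\Lbrack}$ through the generators $I_{\{t>s\}}I_G$, $G\in{\cal G}_s$, one expresses each such $G$, on $\{\tau\le s\}$, via ${\cal F}_s$-sets depending measurably on the value of $\tau$ (this is where the explicit description of ${\cal G}_s$ on $\{\tau\le s\}$ enters), and then extends to all $X$ by a further monotone class step. I expect this verification — producing $K^{\mathbb F}$ and checking that $(t,\omega,x)\mapsto K^{\mathbb F}(\tau(\omega),t,\omega,x)$ is $\widetilde{\cal P}(\mathbb G)$-measurable on $\Lbrack \tau,+\infty\Lbrack$ — to be the only delicate point; it is a standard measurable-selection argument and introduces no new idea, but it is the step that needs care with the $\sigma$-fields. (Alternatively, one may simply invoke the decomposition of $\mathbb G$-predictable processes as recalled in \cite{Jeu}.)

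For (b), given a representation $(H^{\mathbb F},K^{\mathbb F})$ of $H^{\mathbb G}$ from (a), I would alter $H^{\mathbb F}$ off $\Lbrack 0,\tau\Lbrack$, where its values do not matter. If $H^{\mathbb G}>0$, then since $H^{\mathbb F}I_{\Lbrack 0,\tau\Lbrack}=H^{\mathbb G}I_{\Lbrack 0,\tau\Lbrack}$ the $\widetilde{\cal P}(\mathbb F)$-set $\{H^{\mathbb F}\le 0\}$ is disjoint from $\Lbrack 0,\tau\Lbrack$, so $H^{\mathbb F}I_{\{H^{\mathbb F}>0\}}+I_{\{H^{\mathbb F}\le 0\}}$ is $\widetilde{\cal P}(\mathbb F)$-measurable, strictly positive, and still equal to $H^{\mathbb G}$ on $\Lbrack 0,\tau\Lbrack$. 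If instead $H^{\mathbb G}\le 1$, then $H^{\mathbb F}\le 1$ on $\Lbrack 0,\tau\Lbrack$, so $H^{\mathbb F}\wedge 1$ does the job; and combining the two modifications handles $0<H^{\mathbb G}\le 1$. This completes the plan; apart from the measurability point flagged above, all the remaining arguments are the routine monotone class bookkeeping and the elementary truncations just described.
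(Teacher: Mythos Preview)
Your proposal is correct and follows essentially the same approach as the paper. For (a) the paper simply states that the argument ``mimics exactly the approach of Jeulin \cite{Jeu}'' and omits the details, so your monotone class reduction to $\mathbb G$-predictable processes followed by the citation of the classical decomposition is precisely what is intended (and in fact more explicit than the paper); for (b) the paper uses the same truncation idea, replacing $H^{\mathbb F}$ by $(H^{\mathbb F})^{+}+I_{\{H^{\mathbb F}=0\}}$ and by $H^{\mathbb F}\wedge 1$ respectively, which is your argument verbatim (your formula $H^{\mathbb F}I_{\{H^{\mathbb F}>0\}}+I_{\{H^{\mathbb F}\le 0\}}$ is in fact slightly cleaner).
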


\begin{proof}
  The proof of assertion (a) mimics exactly the approach of Jeulin\cite{Jeu}, and will be omitted.\\
  To prove positivity of $H^{\mathbb F}$ when $H^{\mathbb G}>0$ holds, we consider
   $$
  {\overline H}^{\mathbb F}:=(H^{\mathbb F})^++I_{\{ H^{\mathbb F}=0\}}>0,$$
   and we remark that due to (\ref{eq:widePGandwidePF}), we have $\Lbrack 0,\tau\Lbrack\subset\{  H^{\mathbb G}=H^{\mathbb F}\}\subset\{H^{\mathbb F}>0\}$. Thus, we get
   $$
   H^{\mathbb G}I_{\Lbrack 0,\tau\Lbrack}=\overline{H}^{\mathbb F}I_{\Lbrack 0,\tau\Lbrack}.$$
   Similarly, we consider $H^{\mathbb F}\wedge 1$, and we deduce that if $H^{\mathbb G}$ is upper-bounded by one, the process $H^{\mathbb F}$ can also be chosen to not exceed one. This ends the proof of the proposition.
\end{proof}

\noindent In the following, we state and prove our main results of this subsection.
\begin{proposition}\label{prop:alocundergf}
  For any $\alpha>0$, the following assertions hold:\\
  {\rm{(a)}} Let $h$ be a ${\widetilde{\cal P}(\mathbb H)}$-measurable functional. Then, $\sqrt{(h-1)^2\star \mu} \in {\cal A}^+_{loc}(\mathbb H)$ iff
  \begin{eqnarray*}
    (h-1)^2I_{\{|h-1|\leq \alpha\}}\star \mu \mbox{ and } \ |h-1|I_{\{|h-1|>\alpha\}}\star \mu\ \ \mbox{belong to}\ \ {\cal A}^+_{loc}(\mathbb H).
  \end{eqnarray*}
 {\rm{(b)}} Let $(\sigma^{\mathbb G}_n)_n$ be a sequence of $\mathbb G$-stopping times that increases to infinity.
 Then, there exists a nondecreasing sequence of $\mathbb F$-stopping times, $(\sigma^{\mathbb F}_n)_{n\geq 1}$,
 satisfying the following properties
  \begin{eqnarray}
  \sigma^{\mathbb G}_n\wedge\tau= \sigma^{\mathbb F}_n\wedge\tau,\ \ \ \sigma_{\infty}:=\sup_{n} \sigma^{\mathbb F}_n\geq\widehat R\ \ P-a.s.,\label{G/Fstoppingbeforetau1} \\
  \ \ \mbox{and }\ \ \ \ \ \ \ \ Z_{\sigma_{\infty}-}=0\ \ \ P-a.s.\ \ \ \ \mbox{on}\ \ \ \ \Sigma\cap(\sigma_{\infty}<+\infty),\label{G/Fstoppingbeforetau2}
  \end{eqnarray}
  where $\Sigma:=\displaystyle\bigcap_{n\geq 1}(\sigma_n^{\mathbb F}<\sigma_{\infty})$.\\
   {\rm{(c)}}Let $V$ be an $\mathbb F$-predictable and non-decreasing process. Then, $V^{\tau}\in{\cal A}_{loc}^+(\mathbb G)$ if and only if $I_{\{ Z_{-}\geq\delta\}}\is V\in {\cal A}_{loc}^+(\mathbb F)$ for any $\delta>0$.\\
 {\rm{(d)}} Let $h$ be a nonnegative and $\widetilde{\cal P}(\mathbb F)$-measurable functional. Then, $hI_{\Lbrack 0,\tau \Lbrack}\star \mu \in {\cal A}^+_{loc}(\mathbb G)$ if and only if for all $\delta >0$,
  $hI_{\{Z_{-}\geq \delta \}}\star \mu^1 \in {\cal A}^+_{loc}(\mathbb  F)$, where $\mu^1:=\widetilde{Z}\\centerdot\mu.$\\
  {\rm{(e)}} Let $f$ be positive and $\widetilde{\cal P}(\mathbb F)$-measurable, and $\mu^1:=\widetilde{Z}\centerdot\mu.$ Then $\sqrt{(f-1)^2I_{\Lbrack 0,\tau \Lbrack}\star \mu} \in {\cal A}^+_{loc}(\mathbb G)$ iff $\sqrt{(f-1)^2I_{\{Z_{-}\geq \delta\}}\star \mu^1} \in {\cal A}^+_{loc}(\mathbb F)$, for all $\delta >0.$
\end{proposition}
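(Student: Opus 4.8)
The plan is to establish the five assertions in the order (a), (b), then (c)--(e), since (c)--(e) all reduce to a single localization scheme built on (b). Throughout I would work with the stochastic intervals $\Lbrack 0,\tau\Lbrack$ and $\Lbrack 0,\tau\Rbrack$ and keep careful track of which one appears.

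\textbf{(a).} This is the elementary ``square-function versus small/large jumps'' dichotomy. First I would record that an increasing, c\`adl\`ag, $\mathbb H$-adapted process with jumps bounded by a constant is $\mathbb H$-locally bounded (stop it when it first reaches level $n$), and that for pure-jump increasing processes one has $\Delta\sqrt A\le\sqrt{\Delta A}$ and $\sqrt{\sum x_s^2}\le\sum|x_s|$. For the ``if'' direction I split $(h-1)^2\star\mu$ at level $\alpha$ and bound $\sqrt{(h-1)^2\star\mu}$ by $\sqrt{(h-1)^2I_{\{|h-1|\le\alpha\}}\star\mu}+|h-1|I_{\{|h-1|>\alpha\}}\star\mu$; the first summand has jumps $\le\alpha$, hence is locally bounded, the second is locally integrable by hypothesis, so the left side is dominated by an element of ${\cal A}^+_{loc}(\mathbb H)$. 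For the ``only if'' direction, the truncated quadratic part has jumps $\le\alpha^2$ and is dominated by the (finite) process $(h-1)^2\star\mu$, hence is locally bounded; and on $\{\sqrt{(h-1)^2\star\mu}<n\}$ there are at most $n^2\alpha^{-2}$ jumps exceeding $\alpha$, so Cauchy--Schwarz bounds the large-jump part by $n^2\alpha^{-1}$ there, and intersecting with the localizing sequence of $\sqrt{(h-1)^2\star\mu}$ itself yields its local integrability.

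\textbf{(b).} I would begin by transferring each $\sigma^{\mathbb G}_n$ to an $\mathbb F$-stopping time via the enlargement lemma underlying Lemma~\ref{lemma:predsetFG}, then replace the $n$-th one by the maximum of the first $n$; this keeps $\sigma^{\mathbb F}_n\wedge\tau=\sigma^{\mathbb G}_n\wedge\tau$ and makes the sequence nondecreasing. Letting $n\to\infty$ and using $\sigma^{\mathbb G}_n\uparrow+\infty$ gives $\sigma_\infty\wedge\tau=\tau$, i.e.\ $\sigma_\infty\ge\tau$; since $\sigma_\infty$ is an $\mathbb F$-stopping time and, by Lemma~\ref{proposition1}, $\widehat R$ is the smallest $\mathbb F$-stopping time dominating $\tau$, we obtain $\sigma_\infty\ge\widehat R$. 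For the displayed identity I would use that $Z$ is a nonnegative $\mathbb F$-supermartingale, hence $Z_{\widehat R}=0$ (in the ``continuous'' case $Z_{\widehat R-}=0$ forces it by the supermartingale property at the left limit; in the ``jump'' case it follows from the definition of the d\'ebut and right-continuity of $Z$), so that $Z$ is absorbed at $0$ and $Z_{t-}=0$ for every $t>\widehat R$. On $\Sigma\cap\{\sigma_\infty<+\infty\}$ one checks $\sigma_\infty>\tau$ (equality would force $\sigma^{\mathbb G}_n=\sigma^{\mathbb F}_n\uparrow\tau$, contradicting $\sigma^{\mathbb G}_n\uparrow+\infty$ unless $\tau=+\infty$), whence $Z_{\sigma_\infty-}=0$ as soon as $\sigma_\infty>\widehat R$; the borderline case $\sigma_\infty=\widehat R$ must be handled by choosing the transferring stopping times carefully, so that, on $\Sigma$, $\sigma_\infty$ does not accumulate from strictly below at times where $Z_->0$. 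I expect \emph{this} assertion --- precisely the proof of the displayed identity on $\Sigma$, together with the bookkeeping of the half-open versus closed stochastic intervals --- to be the main obstacle.

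\textbf{(c)--(e).} For (c) the device is: by the $\mathbb G$-local boundedness of $Z_-^{-1}I_{\Lbrack 0,\tau\Lbrack}$ (Lemma~\ref{proposition1}(c)) there are $\mathbb G$-stopping times $\tau_\delta\uparrow+\infty$ with $\Lbrack 0,\tau\wedge\tau_\delta\Lbrack\subset\{Z_-\ge\delta\}$, so that $V^{\tau\wedge\tau_\delta}$ only charges $\{Z_-\ge\delta\}$, giving one implication; conversely, $\mathbb F$-local integrability of $I_{\{Z_-\ge\delta\}}\is V$ is read off from that of $V^\tau$ by transferring a $\mathbb G$-localizing sequence to $\mathbb F$ via (b) and taking $\mathbb F$-predictable projections, using that the $\mathbb F$-predictable projection of $I_{\Lbrack 0,\tau\Rbrack}$ is essentially $\widetilde Z$, which is positive on $\{Z_->0\}$. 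Assertion (d) is the analogue for the jump measure: one writes $hI_{\Lbrack 0,\tau\Lbrack}\star\mu$ and $h\star\mu^1=(h\widetilde Z)\star\mu$ with $\mu^1:=\widetilde Z\is\mu$, computes the relevant $M^P_\mu(\,\cdot\mid\widetilde{\cal P}(\mathbb F))$ conditional expectations (which convert $I_{\Lbrack 0,\tau\Lbrack}$ into a $\widetilde Z$-weight on the support of $\mu$), and applies the same localization as in (c). Finally (e) follows by combining (a) --- applied in both $\mathbb F$ and $\mathbb G$ --- with (d): the functionals $(f-1)^2I_{\{|f-1|\le\alpha\}}$ and $|f-1|I_{\{|f-1|>\alpha\}}$ are nonnegative and $\widetilde{\cal P}(\mathbb F)$-measurable, so the dichotomy of (a) reduces the $\mathbb G$-side (resp.\ $\mathbb F$-side) statement to two instances of (d), and the equivalence follows.
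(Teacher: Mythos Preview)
Your treatment of (a), (c), (d), (e) is correct and essentially parallels the paper (your counting/Cauchy--Schwarz argument for the large-jump term in (a) is a legitimate alternative to the paper's inequality $\sqrt{V_-+\Delta V}-\sqrt{V_-}\ge\beta_n\sqrt{\Delta V}$ on $\Lbrack 0,\tau_n\Lbrack$, and your reduction of (e) to (a)+(d) matches the paper).

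The gap is exactly where you flag it: in (b), the ``borderline case'' $\sigma_\infty=\widehat R$ on $\Sigma\cap\{\sigma_\infty<+\infty\}$. Your route --- show $\sigma_\infty>\tau$ on $\Sigma$, then use absorption of $Z$ at $0$ to get $Z_{t-}=0$ for $t>\widehat R$ --- only yields $Z_{\sigma_\infty-}=0$ when $\sigma_\infty>\widehat R$; but one can very well have $\sigma_\infty=\widehat R$ with $Z_{\widehat R-}>0$ (this is precisely the case $\widetilde Z_{\widehat R}=0<Z_{\widehat R-}$), and there is no way to rule it out by ``choosing the transferring stopping times carefully'', since the Jeulin transfer only determines $\sigma_n^{\mathbb F}$ on $\{\sigma_n^{\mathbb F}<\tau\}$. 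The paper bypasses this entirely by a predictable-projection argument: from $\sigma_n^{\mathbb G}\wedge\tau=\sigma_n^{\mathbb F}\wedge\tau$ one has $I_{\Lbrack 0,\tau\Lbrack}I_{\Lbrack 0,\sigma_n^{\mathbb G}\Lbrack}=I_{\Lbrack 0,\tau\Lbrack}I_{\Lbrack 0,\sigma_n^{\mathbb F}\Lbrack}$; taking $\mathbb F$-predictable projections (the right-hand factor is $\mathbb F$-predictable, and $^{p,\mathbb F}(I_{\Lbrack 0,\tau\Lbrack})=Z_-$) and letting $n\uparrow\infty$ gives $Z_-=\lim_n Z_-I_{\Lbrack 0,\sigma_n^{\mathbb F}\Lbrack}$, which on $\Sigma$ equals $Z_-I_{\Rbrack 0,\sigma_\infty\Rbrack}$ (open at $\sigma_\infty$ because the $\sigma_n^{\mathbb F}$ increase strictly there). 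Evaluating at $t=\sigma_\infty$ forces $Z_{\sigma_\infty-}=0$ on $\Sigma\cap\{\sigma_\infty<+\infty\}$ with no case distinction. This projection step is the missing idea in your proposal.
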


\begin{proof}
  (a) Put $W:=(h-1)^2\star \mu = W_1 + W_2,$ where $\ W_1 := (h-1)^2I_{\{|h-1|\leq \alpha\}}\star \mu, \ W_2:= (h-1)^2I_{\{|h-1|> \alpha\}}\star \mu$ and $W_2' := |h-1|I_{\{|h-1|>\alpha\}}\star \mu$. Note that
  \begin{eqnarray*}
    \sqrt{W} \leq \sqrt{W_1} + \sqrt{W_2} \leq \sqrt{W_1} + W_2'.
  \end{eqnarray*}
  Therefore $\sqrt{W_1}, W_2' \in {\cal A}^+_{loc}$ imply $\sqrt{W}$ is locally integrable. \\
  Conversely, if $\sqrt{W}\in {\cal A}^+_{loc}$, $\sqrt{W_1} $ and $\sqrt{W_2}$ are both locally integrable. Since $W_1$ is locally bounded and has finite variation, $W_1$ is locally integrable.
   In the following, we focus on the proof of the local integrability of $W_2'.$ Denote
  $$
  \tau_n := \inf\{t\geq 0: \ V_t >n\}, \ \ V:=W_2.
  $$
  It is easy to see that $\tau_n$ increases to infinity and $V_{-}\leq n$ on the set $\Lbrack 0,\tau_n \Lbrack$. On the set $\{\Delta V >0\}$, we have $\Delta V \geq \alpha^2.$ By using the elementary inequality
  $\sqrt{1 + \frac{n}{\alpha^2}} - \sqrt{\frac{n}{\alpha^2}} \leq \sqrt{1 + x} - \sqrt{x} \leq 1$, when $0\leq x \leq \frac{n}{\alpha^2}$, we have
  \begin{eqnarray*}
    \sqrt{V_{-} + \Delta V} - \sqrt{V_{-}} \geq \beta_n \sqrt{\Delta V} \ \ \mbox{on}\ \Lbrack 0,\tau_n\Lbrack,\ \mbox{where } \   \beta_n:= \sqrt{1 + \frac{n}{\alpha^2}} - \sqrt{\frac{n}{\alpha^2}},
  \end{eqnarray*}
  and
  \begin{eqnarray*}
    \left(W_2'\right)^{\tau_n} &=& \left(\sum \sqrt{\Delta V}\right)^{\tau_n}
    \leq  \frac{1}{\beta_n}\left(\sum \Delta\sqrt{ V}\right)^{\tau_n}
    =  \frac{1}{\beta_n}\left(\sqrt{W_2}\right)^{\tau_n}\in {\cal A}^+_{loc}(\mathbb H)
  \end{eqnarray*}
  Therefore $W_2' \in ({\cal A}^+_{loc}(\mathbb H))_{loc}={\cal A}^+_{loc}(\mathbb H).$\\

 \noindent (b) Due to Jeulin \cite{Jeu}, there exists a sequence of $\mathbb{F}$-stopping times $(\sigma_n^{\mathbb{F} })_n$ such that
\begin{eqnarray}\label{eq:gfstopinproof1}
  \sigma_n^{\mathbb{G}}\wedge \tau =  \sigma_n^{\mathbb{F}}\wedge \tau.
\end{eqnarray}
By putting $\sigma_n:= \sup_{k\leq n} \sigma_k^{\mathbb{F}},$ we shall prove that
\begin{eqnarray}\label{increasingfact}
  \sigma_n^{\mathbb{G}}\wedge \tau =  \sigma_n\wedge \tau,
\end{eqnarray}
or equivalently $\{\sigma_n^{\mathbb{F}}\wedge \tau <  \sigma_n\wedge \tau\}$ is negligible. Due to (\ref{eq:gfstopinproof1}) and $\sigma_n^{\mathbb G}$ is nondecreasing, we derive
\begin{eqnarray*}
  \{\sigma_n^{\mathbb{F}} < \tau \}= \{\sigma_n^{\mathbb{G}} < \tau \}  \subset\bigcap_{i=1}^n  \{\sigma_i^{\mathbb{G}} =\sigma_i^{\mathbb{F}}\} \subset \{\sigma_n^{\mathbb{F}}= \sigma_n\}.
\end{eqnarray*}
This implies that,
\begin{eqnarray*}
  \{\sigma_n^{\mathbb{F}}\wedge \tau <  \sigma_n\wedge \tau\} = \{ \sigma_n^{\mathbb{F}} < \tau, \ \& \ \sigma_n^{\mathbb{F}}< \sigma_n\} =\emptyset,
\end{eqnarray*}
and the proof of (\ref{increasingfact}) is completed. Without loss of generality we assume that the sequence $\sigma_n^{\mathbb F}$ is nondecreasing. By taking limit in (\ref{eq:gfstopinproof1}), we obtain $\tau = \sigma_\infty \wedge \tau, P-$a.s. which is equivalent to $\sigma_\infty \geq \tau, P-$a.s. Since $\widehat{R}$ is the smallest $\mathbb{F}$-stopping time greater or equal than $\tau$ almost surely, we obtain, $\sigma_\infty \geq \widehat{R} \geq \tau$ $P-a.s.$. This achieves the proof of (\ref{G/Fstoppingbeforetau1}). \\
On the set $\Sigma$, it is easy to show that
\begin{eqnarray*}
  I_{{\Rbrack 0,\sigma^{\mathbb{F}}_n \Lbrack} } \longrightarrow I_{{\Rbrack 0,\sigma^{\mathbb{F}}_\infty \Rbrack} }, \ \mbox{ when $n$ goes to } +\infty.
\end{eqnarray*}
Then, thanks again to (\ref{eq:gfstopinproof1}) (by taking $\mathbb F$-predictable projection and let $n$ go to infinity afterwards), we obtain
\begin{eqnarray}\label{eq:gfstopinproof2}
  Z_{-} = Z_{-}I_{{\Rbrack 0,\sigma_\infty^{\mathbb{F}} \Rbrack} },\ \ \ \ \ \ \mbox{on}\ \ \Sigma.
\end{eqnarray}
Hence, (\ref{G/Fstoppingbeforetau2}) follows immediately, and the proof of assertion (b) is completed.\\

\noindent (c)  Suppose that $hI_{{\Rbrack 0,\tau\Lbrack} }\star
\mu\in {\mathcal{A}^+_{loc}}(\mathbb G)$. Then, there exists a
sequence of $\mathbb{G}$-stopping times $(\sigma^\mathbb{G}_n)$
increasing to infinity such that $hI_{{\Rbrack 0,\tau\Lbrack}
}\star \mu^{\sigma^\mathbb{G}_n} $ is integrable. Consider
$(\sigma_n)$ a sequence of $\mathbb{F}$-stopping times satisfying
(\ref{G/Fstoppingbeforetau1})--(\ref{G/Fstoppingbeforetau2}) (its
existence is guaranteed by assertion (b)). Therefore, for any
fixed $\delta>0$
\begin{eqnarray}\label{Wn}
  W^n:=M_\mu^P\left(\widetilde{Z} | \widetilde{\mathcal{P}}\right) I_{\{Z_{-} \geq \delta\} }h\star \nu^{\sigma_n} \in \mathcal{A}^+(\mathbb{F}),
\end{eqnarray}
or equivalently, this process is c\`adl\`ag predictable with finite values. Thus, it is obvious that the proof of assertion (iii) will follow immediately if we prove that the $\mathbb F$-predictable and nondecreasing process
\begin{eqnarray}\label{eq:muhnufinite1}
  W:=M_\mu^P\left(\widetilde{Z} | \widetilde{\mathcal{P}}\right) I_{\{Z_{-} \geq \delta\} }h\star \nu\ \ \mbox{ is c\`{a}dl\`{a}g with finite values}.
\end{eqnarray}
To prove this last fact, we consider the random time $\tau^\delta$ defined by
$$
  \tau^\delta:= \sup \{t\geq 0 : Z_{t-} \geq \delta\}.
$$
Then, it is clear that $I_{\Lbrack\tau^{\delta},+\infty\Rbrack}\is W\equiv 0$ and
$$\tau^\delta\leq \widehat{R} \leq \sigma_\infty\ \ \ \mbox{and}\ \ Z_{\tau^\delta-} \geq \delta\ \ \ \ P\mbox{--}a.s.\ \ \ \ \ \ \mbox{on}\ \{\tau^\delta<+\infty\}.$$
The proof of (\ref{eq:muhnufinite1}) will be achieved by considering
three sets, namely $\{\sigma_\infty = \infty\}$,  $\Sigma\cap \{\sigma_\infty<+\infty\}$,
and  $\Sigma^c\cap \{\sigma_\infty<+\infty\}$. It is obvious that (\ref{eq:muhnufinite1})
holds on $\{\sigma_\infty = \infty\}$. Due to (\ref{G/Fstoppingbeforetau2}), we deduce
that $\tau^\delta< \sigma_\infty, P-$a.s. on $\Sigma\cap \{\sigma_\infty<+\infty\}$.
 Since $W$ is supported on $\Rbrack 0, \tau^\delta\Lbrack$, then (\ref{eq:muhnufinite1})
 follows immediately on the set $\Sigma\cap \{\sigma_\infty<+\infty\}$. Finally, on the set
$$
\Sigma^c \cap \{\sigma_\infty<+\infty\} = \left(\bigcup_{n\geq 1} \{\sigma_n = \sigma_\infty\}\right) \cap \{\sigma_\infty<+\infty\},
$$
the sequence $\sigma_n$ increases stationarily to $\sigma_\infty$, and thus (\ref{eq:muhnufinite1}) holds on this set. This completes the proof of (\ref{eq:muhnufinite1}), and hence  $h I_{\{Z_{-} \geq \delta\} }\star ({\widetilde Z}\centerdot \mu)$ is locally integrable, for any $\delta>0$.\\

\noindent Conversely, if $hI_{\{Z_{-}\geq \delta\}}{\widetilde Z} \star \mu\in {\cal A}^+_{loc}(\mathbb F) $, there exists a sequence of $\mathbb F$-stopping times $(\tau_n)_{n\geq 1}$ that increases to infinity and $\left(hI_{\{Z_{-}\geq \delta\}}{\widetilde Z}  \star \mu\right)^{\tau_n}\in {\cal A}^+(\mathbb F) $. Then, we have
  \begin{eqnarray}\label{eq:stGF2}
     E\left[hI_{\{Z_{-}\geq \delta\}} I_{\Rbrack 0,\tau\Lbrack}\star \mu(\tau_n)\right]=E\left[hI_{\{Z_{-}\geq \delta\}} \widetilde{Z}\star \mu(\tau_n)\right]<+\infty .
  \end{eqnarray}
  This proves that $hI_{\{Z_{-}\geq \delta\}} I_{\Rbrack 0,\tau\Lbrack}\star \mu$ is $\mathbb G$-locally integrable, for any $\delta>0$. Since $(Z_{-})^{-1}I_{\Rbrack 0,\tau\Lbrack}$ is $\mathbb G$-locally bounded, then there exists a family of $\mathbb G$-stopping times $(\tau_{\delta})_{\delta>0}$ that increases to infinity when $\delta$ decreases to zero, and
  $$
  \Rbrack 0,\tau\wedge\tau_{\delta}\Lbrack\subset\{Z_{-}\geq \delta\}.$$
  This implies that the process $\left(hI_{\Rbrack 0,\tau\Lbrack}\star \mu\right)^{\tau_{\delta}}$ is $\mathbb G$-locally integrable,
  and hence the assertion (c) follows immediately.\\

  \noindent (d) The proof of assertion (d) follows from combining assertions (a)  and (b). This ends the proof of the proposition.
\end{proof}
\bigskip 
{\textbf{Acknowledgements:}} The research of Tahir Choulli  and Jun Deng is supported financially by the
Natural Sciences and Engineering Research Council of Canada,
through Grant G121210818. The research of Anna Aksamit and Monique Jeanblanc is supported
by Chaire Markets in transition, French Banking Federation.




\end{document}